\def\BibTeX{{\rm B\kern-.05em{\sc i\kern-.025em b}\kern-.08em
    T\kern-.1667em\lower.7ex\hbox{E}\kern-.125emX}}
\pgfplotsset{compat=1.9}
\DeclareMathAlphabet{\mathcal}{OMS}{cmsy}{m}{n}
\definecolor{verylightgray}{rgb}{.97,.97,.97}
\lstdefinelanguage{Solidity}{
	keywords=[1]{anonymous, assembly, assert, balance, break, call, callcode, case, catch, class, constant, continue, constructor, contract, debugger, default, delegatecall, delete, do, else, emit, event, experimental, export, external, false, finally, for, function, gas, if, implements, import, in, indexed, instanceof, interface, internal, is, length, library, log0, log1, log2, log3, log4, memory, modifier, new, payable, pragma, private, protected, public, pure, push, require, return, returns, revert, selfdestruct, send, solidity, storage, struct, suicide, super, switch, then, this, throw, transfer, true, try, typeof, using, value, view, while, with, addmod, ecrecover, keccak256, mulmod, ripemd160, sha256, sha3}, 
	keywordstyle=[1]\color{blue}\bfseries,
	keywords=[2]{address, bool, byte, bytes, bytes1, bytes2, bytes3, bytes4, bytes5, bytes6, bytes7, bytes8, bytes9, bytes10, bytes11, bytes12, bytes13, bytes14, bytes15, bytes16, bytes17, bytes18, bytes19, bytes20, bytes21, bytes22, bytes23, bytes24, bytes25, bytes26, bytes27, bytes28, bytes29, bytes30, bytes31, bytes32, enum, int, int8, int16, int24, int32, int40, int48, int56, int64, int72, int80, int88, int96, int104, int112, int120, int128, int136, int144, int152, int160, int168, int176, int184, int192, int200, int208, int216, int224, int232, int240, int248, int256, mapping, string, uint, uint8, uint16, uint24, uint32, uint40, uint48, uint56, uint64, uint72, uint80, uint88, uint96, uint104, uint112, uint120, uint128, uint136, uint144, uint152, uint160, uint168, uint176, uint184, uint192, uint200, uint208, uint216, uint224, uint232, uint240, uint248, uint256, var, void, ether, finney, szabo, wei, days, hours, minutes, seconds, weeks, years},	
	keywordstyle=[2]\color{teal}\bfseries,
	keywords=[3]{block, blockhash, coinbase, difficulty, gaslimit, number, timestamp, msg, data, gas, sender, sig, value, now, tx, gasprice, origin},	
	keywordstyle=[3]\color{violet}\bfseries,
	identifierstyle=\color{black},
	sensitive=true,
	comment=[l]{//},
	morecomment=[s]{/*}{*/},
	commentstyle=\color{gray}\ttfamily,
	stringstyle=\color{red}\ttfamily,
	morestring=[b]',
	morestring=[b]"
}
\newcommand{\bnm}{\begin{newmath}}
\newcommand{\enm}{\end{newmath}}
\newcommand{\bea}{\begin{eqnarray*}}%
\newcommand{\eea}{\end{eqnarray*}}%
\newcommand{\bne}{\begin{newequation}}
\newcommand{\ene}{\end{newequation}}
\newcommand{\bal}{\begin{newalign}}
\newcommand{\eal}{\end{newalign}}
\newenvironment{newalign}{\begin{align}%
\setlength{\abovedisplayskip}{4pt}%
\setlength{\belowdisplayskip}{4pt}%
\setlength{\abovedisplayshortskip}{6pt}%
\setlength{\belowdisplayshortskip}{6pt} }{\end{align}}
\newenvironment{newmath}{\begin{displaymath}%
\setlength{\abovedisplayskip}{4pt}%
\setlength{\belowdisplayskip}{4pt}%
\setlength{\abovedisplayshortskip}{6pt}%
\setlength{\belowdisplayshortskip}{6pt} }{\end{displaymath}}
\newenvironment{newequation}{\begin{equation}%
\setlength{\abovedisplayskip}{4pt}%
\setlength{\belowdisplayskip}{4pt}%
\setlength{\abovedisplayshortskip}{6pt}%
\setlength{\belowdisplayshortskip}{6pt} }{\end{equation}}
\newcounter{ctr}
\newcounter{mytable}
\def\mytable{\begin{centering}\refstepcounter{mytable}}
\def\endmytable{\end{centering}}
\newcounter{myfig}
\def\myfig{\begin{centering}\refstepcounter{myfig}}
\def\endmyfig{\end{centering}}
\newlength{\saveparindent}
\newlength{\saveparskip}
\newcommand{\E}{{\rm I\kern-.3em E}}
\renewcommand{\eqref}[1]{\mbox{Equation~(\ref{#1})}}
\newcommand{\R}{\mathbb{R}}
\newcommand{\N}{\mathbb{N}}
\newcommand{\Z}{\mathbb{Z}}
\def \part {part}
\renewcommand{\paragraph}[1]{\smallskip \noindent\textbf{#1.}\;}
\def \blackslug{\hbox{\hskip 1pt \vrule width 4pt height 8pt
    depth 1.5pt \hskip 1pt}}
\def \qed{\quad\blackslug\lower 8.5pt\null\par}
\newcounter{mynote}[section]
\newcommand\ignore[1]{}
\newcounter{rcnote}[section]
\newcounter{mrnote}[section]
\newcounter{fknote}[section]
\newcounter{anote}[section]
\DeclareMathSymbol{\mlq}{\mathord}{operators}{``}
\DeclareMathSymbol{\mrq}{\mathord}{operators}{`'}
\newcommand{\rhf}[2]{R_{f, \gamma}}
\DeclareDocumentCommand{\edist}{o o}{
  \ensuremath{
    \IfNoValueTF{#1}{{d}}{{\sf d}(#1,#2)}
  }
}
\newcommand{\olrk}[1]{\ifx\nursymbol#1\else\!\!\mskip4.5mu plus 0.5mu\left(\mskip0.5mu plus0.5mu #1\mskip1.5mu plus0.5mu \right)\fi}
\NewDocumentCommand{\indseq}{ O{1} O{r} }{{#1}\ldots {#2}}
\newcommand{\run}{R}
\newcommand{\fulldigraph}{\mathcal{D} = (\mathcal{N}, \mathcal{A})}
\newcommand{\nodes}{\mathcal{N}}
\newcommand{\atg}{atomic transfer graph\xspace}
\newcommand{\atgs}{atomic transfer graphs\xspace}
\newcommand{\ATGs}{Atomic Transfer Graphs\xspace}
\newcommand{\atgacronym}{ATG\xspace}
\newcommand{\atgsacronym}{ATGs\xspace}
\newcommand{\hbe}{heterogenous blockchain ecosystem\xspace}
\newcommand{\hbeacronym}{HBE\xspace}
\newcommand{\swapgraphs}{swap graphs\xspace}
\newcommand{\gtree}{xtree\xspace}
\newcommand{\Gtree}{Xtree\xspace}
\newcommand{\GTree}{Xtree\xspace}
\newcommand{\gtrees}{xtrees\xspace}
\newcommand{\Gtrees}{Xtrees\xspace}
\newcommand{\GTrees}{Xtrees\xspace}
\newcommand{\tam}{TAM\xspace}
\newcommand{\tams}{TAMs\xspace}
\newcommand{\Tams}{TAMs\xspace}
\newcommand{\tamfull}{transfer agreement mechanism\xspace}
\newcommand{\tamsfull}{transfer agreement mechanisms\xspace}
\newcommand{\tammath}{\textit{tam}}
\newcommand{\chenvs}{\tam environments\xspace}
\newcommand{\twopartyaswap}{two-party atomic swap\xspace}
\newcommand{\actlc}{ethCTLC\xspace}
\newcommand{\nodesymbol}{\ensuremath{\mathcal{N}}\xspace}
\newcommand{\arcsymbol}{\ensuremath{\mathcal{A}}\xspace}
\newcommand{\graphsymbol}{\ensuremath{\mathcal{D}}\xspace}
\newcommand{\treesymbol}{\ensuremath{\mathcal{T}}\xspace}
\newcommand{\graphtext}{graph\xspace}
\newcommand{\graphstext}{graphs\xspace}
\newcommand{\arctext}{arc\xspace}
\newcommand{\arcstext}{arcs\xspace}
\newcommand{\nodetext}{node\xspace}
\newcommand{\nodestext}{nodes\xspace}
\newcommand{\edgetext}{edge\xspace}
\newcommand{\edgestext}{edges\xspace}
\newcommand{\Edgestext}{Edges\xspace}
\newcommand{\htlc}{HTLC\xspace}
\newcommand{\htlcs}{HTLCs\xspace}
\newcommand{\ctlc}{CTLC\xspace}
\newcommand{\ctlcs}{CTLCs\xspace}
\newcommand{\fundtext}{fund\xspace}
\newcommand{\fundstext}{funds\xspace}
\newcommand{\CTLClong}{Conditional Timelock Contract}
\newcommand{\CTLCslong}{Conditional Timelock Contracts}
\newcommand{\htlclong}{hashed timelock contract\xspace}
\newcommand{\ctlcprotocol}{\ctlc-based protocol\xspace}
\newcommand{\arcpaths}{E}
\newcommand{\levels}{\mathcal{L}}
\newcommand{\levelsecrets}[1]{\psi_{#1}}
\newcommand{\funconditions}{\textit{cond}}
\newcommand{\edgesecretfull}[2]{s_{#1}^{#2}}
\definecolor{darkred}{HTML}{E6194B}
\definecolor{lightblue}{HTML}{42D4F4}
\definecolor{middlegreen}{HTML}{3CB44B}
\definecolor{middleorange}{HTML}{F58231}
\definecolor{darkblue}{HTML}{000075}
\definecolor{kingblue}{HTML}{4363D8}
\newcommand{\honestuser}{B}
\newcommand{\ctlcargs}[3]{\textit{CTLC}(#1, #2, #3)}
\newcommand{\subctlcargs}[2]{\textit{sCTLC}(#1, #2)}
\newcommand{\htlcargs}[4]{\textit{HTLC}(#1, #2, #3, #4)}
\newcommand{\edgevar}{e}
\newcommand{\arc}{a}
\newcommand{\chainA}{\mathbb{A}}
\newcommand{\chainB}{\mathbb{B}}
\newcommand{\sender}{\textsf{S}}
\newcommand{\receiver}{\textsf{R}}
\newcommand{\contractAB}{{\color{darkred}c_{\textit{AB}}}}
\newcommand{\contractBA}{{\color{lightblue}{c_\textit{BA}}}}
\newcommand{\outcomes}[1]{\mathcal{O}^{\untree}_{#1}}
\newcommand{\level}{\ell}
\newcommand{\edgesecret}[2]{s^{\scriptscriptstyle{#2}}}
\newcommand{\ctlcvar}{c}
\newcommand{\subctlcvar}{\textit{sc}}
\newcommand{\ctlcsubcontracts}{\textit{SCs}}
\newcommand{\attackerstrategy}{\Sigma_{\mathcal{A}}}
\newcommand{\userstrategy}[1]{\Sigma_{#1}}
\newcommand{\sstep}[1]{\xrightarrow{#1}}
\newcommand{\conforms}{\vdash}
\newcommand*\colorcircled[2]{\tikz[baseline=(char.base)]{
            \node[shape=circle,draw,inner sep=1pt, color=#2, text=black] (char) {#1};}}
\newcommand*\colorrectboxed[2]{\tikz[baseline=(char.base)]{
                \node[shape=rectangle,draw,inner sep=1pt, color=#2, text=black] (char) {#1};}}
\newcommand{\edgeBAone}{\colorcircled{1}{darkred}}
\newcommand{\edgeCAone}{\colorcircled{6}{middlegreen}}
\newcommand{\edgeABtwo}{\colorcircled{2}{lightblue}}
\newcommand{\edgeBCtwo}{\colorcircled{8}{darkblue}}
\newcommand{\edgeABthree}{\colorcircled{9}{lightblue}}
\newcommand{\edgeCBthree}{\colorcircled{10}{kingblue}}
\newcommand{\subctlcABtwo}{\colorrectboxed{$\subctlcvar^2$}{lightblue}}
\newcommand{\subctlcABthree}{\colorrectboxed{$\subctlcvar^3$}{lightblue}}
\newif\iffullversion
\begin{document}

\title{Atomic Transfer Graphs: Secure-by-design Protocols for Heterogeneous Blockchain Ecosystems}

        \newtheorem{definition}{Definition}[section]
        \newtheorem{theorem}[definition]{Theorem}
        \newtheorem{corollary}[definition]{Corollary}
        \newtheorem{lemma}[definition]{Lemma}
        \newtheorem{prop}[definition]{Proposition}
        \newtheorem{remark}[definition]{Remark}
        \newtheorem{example}[definition]{Example}

\author{
    \IEEEauthorblockN{Stephan Dübler}
    \IEEEauthorblockA{\textit{MPI-SP} \\
    stephan.duebler@mpi-sp.org}
    \and
    \IEEEauthorblockN{Federico Badaloni}
    \IEEEauthorblockA{\textit{MPI-SP} \\
    federico.badaloni@mpi-sp.org}%
    \and
    \and
    \IEEEauthorblockN{Pedro Moreno-Sanchez}
    \IEEEauthorblockA{\textit{IMDEA Software Institute} \\
    \textit{VISA Research}\\
    \textit{MPI-SP}\\
    pedro.moreno@imdea.org}
    \and
    \IEEEauthorblockN{Clara Schneidewind}
    \IEEEauthorblockA{\textit{MPI-SP} \\
    clara.schneidewind@mpi-sp.org}
}

\maketitle

\begin{abstract}

The heterogeneity of the blockchain landscape has motivated the design of blockchain protocols tailored to specific blockchains and applications that, hence, require custom security proofs. 
We observe that many blockchain protocols share common security and functionality goals, which can be captured by an \emph{\atg} (\atgacronym) describing the structure of desired transfers. 
Based on this observation, we contribute a framework for generating secure-by-design protocols that realize these goals.
The resulting protocols build upon \emph{\CTLCslong} (\ctlcs), a novel minimal smart contract functionality that can be implemented in a large variety of cryptocurrencies with a restricted scripting language (e.g., Bitcoin), and payment channels. 
We show how \atgsacronym, in addition to enabling novel applications, capture the security and functionality goals of existing applications, including many examples from payment channel networks and complex multi-party cross-currency swaps among Ethereum-style cryptocurrencies.  
Our framework is the first to provide generic and provably secure protocols for all these use cases while matching or improving the performance of existing use-case-specific protocols. 

\end{abstract}

\IEEEpeerreviewmaketitle

\section{Introduction}
\label{sec:intro}

Many existing \emph{blockchain protocols}, at their core, rely on the atomic execution of a set of (financial) transfers. A canonical example of such protocols are multi-hop payment protocols in payment channel networks (PCNs)~\cite{multi-hop-locks}: PCNs are a widely used solution to address the scalability issues of cryptocurrencies~\cite{GudgeonMRMG20}. In PCNs, pairs of users lock funds into so-called payment channels (PCs), where they can renegotiate the ownership distribution of those funds. A renegotiation of the PC funds effectively realizes an \emph{off-chain} transfer between the PC parties, which is not recorded on the blockchain.
Multiple bilateral off-chain transfers can be securely chained into a multi-hop payment with the help of a cryptographic protocol~\cite{state-channel-networks,lightning-network,fulgor,multi-hop-locks,teechain}. 
Such a multi-hop payment protocol enables an off-chain payment between users who do not share a PC by routing the payment through a path of PCs that connects the sender and receiver in the PCN. 

For a multi-hop payment protocol to be secure, it needs to be guaranteed that the users acting as intermediaries on the payment paths (so those who forward the payment on their respective PCs) never lose funds: Whenever an honest intermediary transfers funds to their successor on the payment path, they also need to receive the transfer from their predecessor. In other words, from the perspective of honest intermediate users, the transfers should be executed atomically. 

\begin{figure}
\includegraphics[width=\columnwidth]{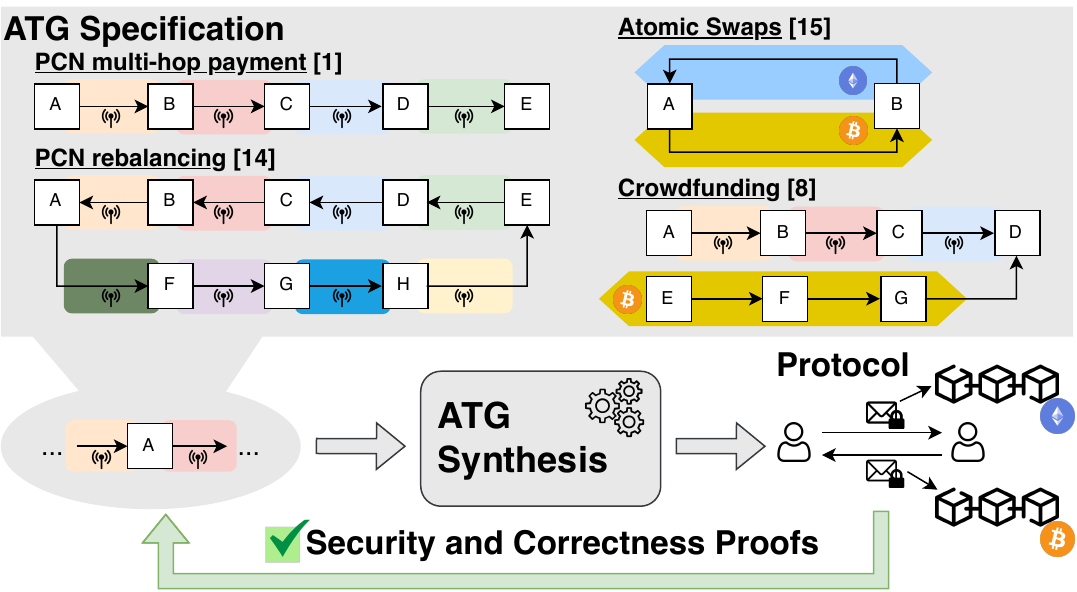}
\caption{Applications covered by our work can be specified as an ATG (gray box on top) and then, using our framework, synthesized into a protocol involving different transfer agreement mechanisms (e.g., blockchains or PCs). 
In ATGs, square boxes represent users, and arcs represent transfers. Round, colored boxes indicate PCs, and hexagons indicate a blockchain.}
\label{fig:intro-applications}
\end{figure}

A multi-hop payment, hence, can be specified as a linear graph of transfers, with nodes representing the users on the payment path and the arcs denoting the transfers on PCs between the users of the arc.
A graph for a multi-hop payment from user $A$ to $E$ is illustrated in~\cref{fig:intro-applications} (top left). 

Such a graph, which we will hereby call \emph{atomic transfer graph} (in short ATG), can be seen as a specification of the protocol in terms of functionality and security: The goal of the protocol execution is to execute all transfers of the graph (functionality). This goal must be achieved when the protocol users are honest. But even if all but one user is malicious, the protocol should still ensure that honest users obtain an outcome that is at least as good as the one resulting from an atomic execution of the transfers in the graph (security). 

In this work, we are interested in ATGs that feature diverse topologies (e.g., include cycles) and whose arcs may indicate transfers not only in payment channels but also in (arbitrary) cryptocurrencies or a combination thereof. 
To reflect that transfers may be realized using different agreement mechanisms (e.g., a blockchain or a payment channel built on-top of a blockchain), we say that an ATG specifies atomic transfers in a \hbe (short \hbeacronym). 
Such general ATGs serve as a specification format for applications beyond multi-hop payments, including crowdfunding~\cite{pathshuffle,amcu-crowdfunding}, where several users atomically fund a certain receiver; rebalancing in PCNs~\cite{revive,hide-and-seek,cycle,shaduf,amcu,zeta-rebalancing}, where a cycle payment is used to redistribute balances among the involved PCs; or atomic swaps~\cite{atomic-swaps,p2dex,bitcoin-wiki-htlc,htlc-bip,thyagarajan2022universal,BentovJ0BDJ19,imoto2023atomic}, where users intent to atomically exchange several assets of their interest held at different cryptocurrencies; and beyond. 

\paragraph{Problem} Despite their structural similarity, currently, all of the mentioned use cases are solved by custom protocols whose corresponding functionality and security notions are restated and adapted for each use case and then proven from scratch (requiring involved proofs in complex cryptographic proof frameworks~\cite{TairiMS23}). This is not only cumbersome and error-prone, as demonstrated by the security flaws found so far in blockchain protocols proposed by both academia~\cite{foundations-coin-mixing,foundations-adaptor-signatures} and industry~\cite{multi-hop-locks,flood-loot}, but also hinders the design of new protocols that aim to achieve similar functionality and security goals.

To help this, in this work, we want to answer the following research question: \textit{Can we generate secure-by-design blockchain protocols from an ATG specification?}

\paragraph{Challenges} So far, the only attempts towards generating blockchain protocols from a high-level specification format have been made in the area of atomic cross-chain swaps. The goal of an atomic cross-chain swap is to allow users with funds on different cryptocurrencies to securely exchange these funds (without involving a trusted party). While many concrete protocols exist for the two-party case, Herlihy, in \cite{atomic-swaps}, develops a generic protocol for more complex swap scenarios among multiple users, which can be expressed as strongly connected graphs (also called swap graphs).

These swap graphs are a special form of ATGs where the arcs represent on-chain transactions in different cryptocurrencies. However, the protocol proposed by Herlihy does not apply to our scenario for two fundamental reasons:

1) The protocol is inherently limited to cryptocurrencies that
allow for locking funds into complex smart contracts. 
Smart contracts are programs that govern funds in a cryptocurrency. 
Cryptocurrencies like Ethereum support rich, stateful smart contract languages, while others, e.g., Bitcoin, only allow for locking funds under simple payment conditions, such as passing a certain time (timelock) or providing the preimage of hash value (hashlock).
Herlihy explicitly states that it is \emph{an open research problem to develop a generic protocol that would enable cross-chain swaps based on simple payment conditions.}
Overcoming this limitation boosts protocol performance and, most importantly, enables swaps of funds in cryptocurrencies like Bitcoin. 
Supporting cryptocurrencies with simple payment conditions is a prerequisite for integrating off-chain solutions of these currencies (e.g., PCs on Bitcoin) and, hence, for realizing any of the aforementioned applications.

2) The protocol (and its security and correctness proofs) only covers strongly connected graphs. However, many interesting use cases (e.g., multi-hop payments) are characterized by ATGs that are not strongly connected.

\paragraph{Our Approach} We overcome the aforementioned challenges by providing the first framework for generating secure-by-design blockchain protocols from a large class of ATG specifications (c.f.~\cref{fig:intro-applications}). 
To capture a broad range of existing application scenarios, we abstract from the concrete transfer mechanism (e.g., a PC or a concrete cryptocurrency) and build our protocol on an abstraction layer that we call \emph{transfer agreement mechanism} (or \tam in short). 
As depicted in~\cref{fig:intro-applications} (bottom), our framework takes as input an ATG specification and it (1) generates the corresponding blockchain protocol that can be executed on any \tam that supports locking funds under simple payment conditions; (2) provides generic security and correctness notions (based on the ATG specification); (3) provides a generic security and correctness proof. 
As a by-product, our approach generates general cross-chain swap protocols that are more efficient than those presented in \cite{atomic-swaps}.

\paragraph{Our Contributions} We make the following contributions:

\begin{asparaitem}
\item 
For defining provably secure protocols for arbitrary \atgsacronym, we introduce the concept of a \emph{transfer tree} (short \gtree, c.f.~\cref{cha:trees}) as a general intermediate representation for blockchain protocols across different \tams. 
\Gtrees capture the flow of such blockchain protocols in terms of a multi-stage, multi-player \fundtext redistribution game and, as such, constitute a contribution of independent interest.
We show how to synthesize a \gtree for a given \atgacronym and prove the security and correctness of the resulting \gtree w.r.t. the \atgacronym. 

\item 
We provide a generic protocol to execute a given \gtree on \tams that support a novel building block called \emph{\CTLClong \, (\ctlc)} (c.f.~\cref{sec:CTLCs}). 
A \CTLC\ can be realized on any \tam that supports simple payment conditions (i.e., timelocks and transfer authorization based on digital signatures),  
so we solve the open research problem stated by Herlihy in~\cite{atomic-swaps}. 
We formally prove the security and correctness of our \ctlc-based protocol w.r.t. the original \gtree. 
We carry out this security analysis in a symbolic model of users interacting with multiple \tams in an adversarial environment, where the attacker controls the order of \tam interactions.

\item 
To demonstrate the practical applicability of our framework we
(1) show how to capture the functionality and security requirements of many existing cross-chain and off-chain applications in terms of \atgsacronym (c.f.~\cref{sec:applications}); 
(2) validate \ctlc-supporting \tams as an abstraction layer by demonstrating how to realize \ctlcs on virtually all existing cryptocurrencies and PCNs (c.f.~\cref{sec:ctlc-impl}).

\end{asparaitem}

\section{Key Ideas}
\label{sec:keyideas}

In this section, we overview our approach to systematically construct protocols that realize the functionality and security goals specified by an \atg (\atgacronym). 
We use atomic swaps as the running example application as they illustrate well the challenges of constructing such protocols and the design rationale for our approach. 

\paragraph{A Simple Atomic Swap Protocol}
We first revisit simple \twopartyaswap protocols~\cite{thyagarajan2022universal}, which are among the many supported applications of our framework.

\begin{figure}[b]
    \centering
    \includegraphics[width= 0.7\columnwidth]{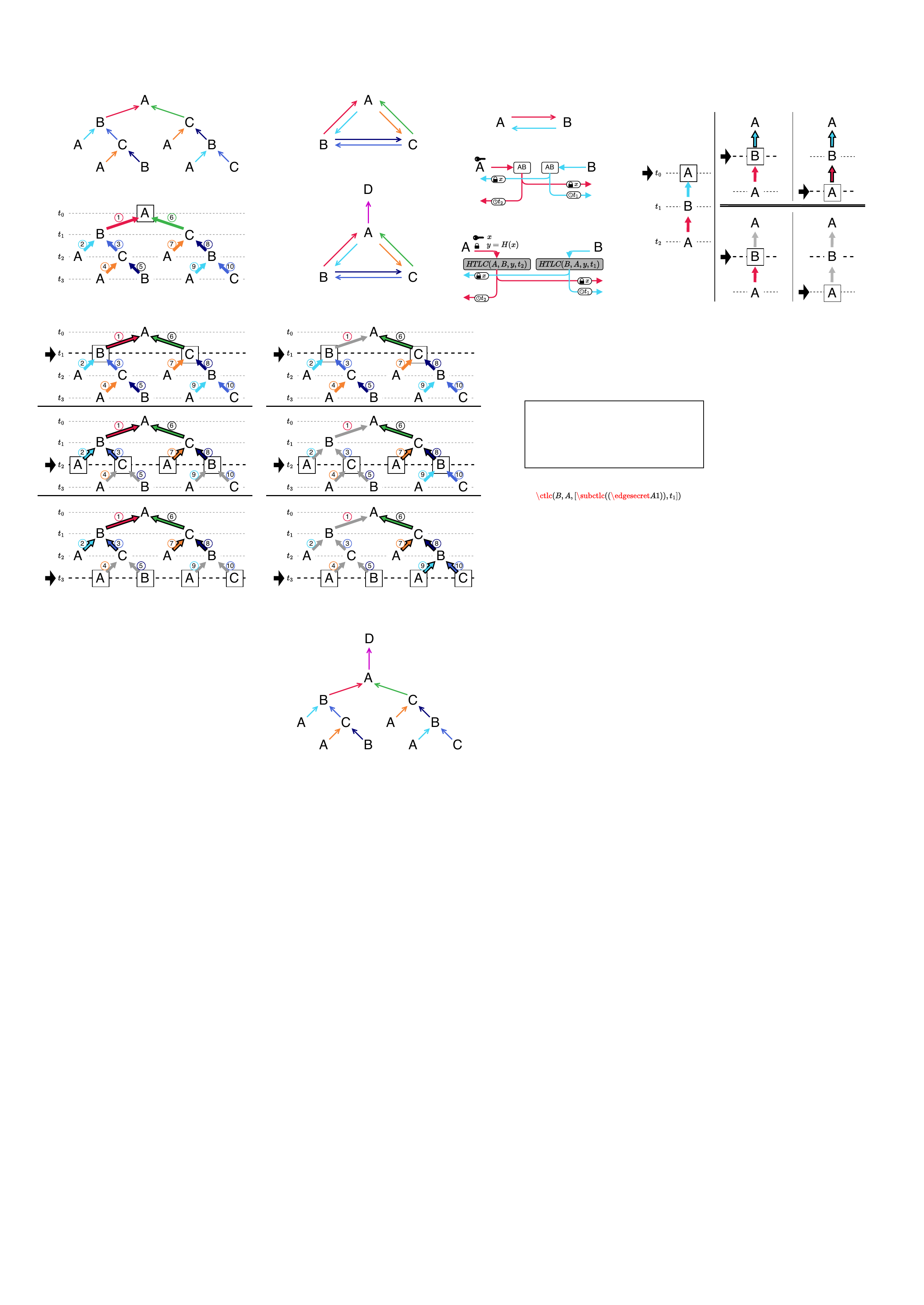}
    \caption{Two-party atomic swap protocol.}
    \label{fig:two-party-swap}
\end{figure}

In a \twopartyaswap protocol, two users, Alice (A) and Bob (B), want to exchange \fundstext that they own on different blockchains (called chain $\chainA$ and chain $\chainB$ here).
To this end, as illustrated in~\cref{fig:two-party-swap}, they first lock the corresponding \fundstext in both $\chainA$ and $\chainB$ by transferring them into a simple \emph{\htlclong} (\htlc). 
A \htlclong $\htlcargs{\sender}{\receiver}{y}{t}$ is a smart contract whose funds can only be withdrawn in two ways: 
1) 
The \fundstext are transferred to the receiver $\receiver$ given that a secret value $x$ for condition $y$ is provided such that $H(x) = y$ (for some fixed hash function $H$); we say the \htlc got \emph{claimed} in this case.
2) 
The \fundstext are transferred (back) to the sender $\sender$ after the time reaches timelock $t$; we say the \htlc got \emph{refunded} in this case.
To initiate an atomic swap, $A$ chooses a secret $x$ and transfers their funds into a contract $\contractAB = \htlcargs{A}{B}{y}{t_2}$ (with $y = H(x)$) on $\chainA$. 
Based on this, $B$ transfers their \fundstext into a contract $\contractBA = \htlcargs{B}{A}{y}{t_1}$ on $\chainB$ such that $t_1 < t_2$.
To initiate the swap, $A$ withdraws $B$'s \fundstext from $\contractBA$ before $t_1$, publishing secret $x$.
Bob, learning $x$, can claim the \fundstext from $\contractAB$ before $t_2$.
The timelocks $t_1$ and $t_2$ ensure that $A$'s and $B$'s \fundstext will not be indefinitely locked in the respective contracts.
It is crucial that there is time between $t_1$ and $t_2$ for Bob to safely refund their \fundstext before Alice can do so. 
Otherwise, a malicious $A$ could potentially withdraw the \fundstext from both $\contractAB$ and $\contractBA$ at time $t_2$.

\paragraph{Fund Redistribution Games} 
While the \twopartyaswap protocol above is general in that it relies on the \htlc primitive, which is known to be realizable in many cryptocurrencies (including Ethereum and Bitcoin), it cannot be easily generalized to more complicated scenarios involving additional users or transfers. 
To overcome this limitation, we take a more systematic view by observing that the protocol relies on two key elements: 
First, claiming $\contractAB$ is dependent on the claiming of $\contractBA$, ensuring that they can only be claimed in a predefined order (first $\contractBA$, then $\contractAB$).
Second, both \htlcs can be refunded in the very same order.

\begin{figure}
    \centering
    \includegraphics[width=0.75\columnwidth]{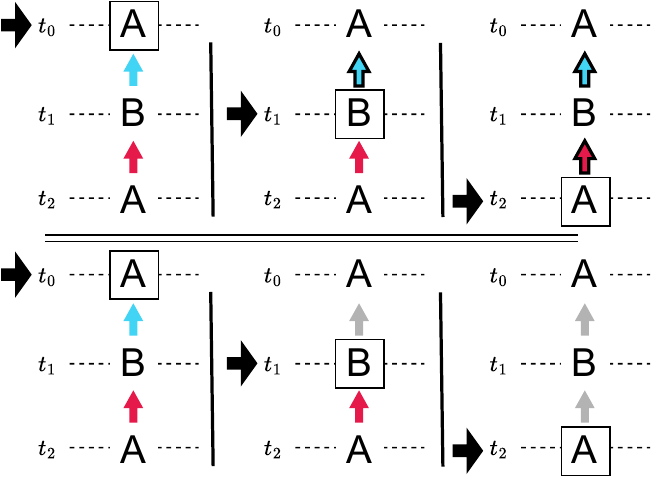}
    \caption{A \twopartyaswap as a round-based game. The black arrow indicates the current round. Players eligible to make a move are indicated by a square.
    Arrows with a black border indicate pulled edges; arrows without borders indicate available edges, and grayed-out arrows indicate disabled edges.
    The top picture depicts the game during an honest execution, and the lower picture depicts the game during an execution involving malicious user A.}
    \label{fig:two-party-game}
\end{figure}

Based on this observation, we can see the atomic swap as a round-based multi-player game represented as a simple linear \emph{transfer tree} (short \gtree) shown in~\cref{fig:two-party-game}.

Intuitively, the game starts on the top level of the \gtree and proceeds in rounds corresponding to the \gtree levels. In every round, the users located on the \gtree level have the option to pull their ingoing \edgestext in case their outgoing \edgetext has been pulled in the previous round. If this is not the case, they can \emph{disable} the outgoing \edgetext, preventing the user on top from pulling it later on.  
For the first round, there are no outgoing edges, so the initial ingoing \edgestext can be pulled immediately. 

Alice is located at the root, and hence, it is Alice's turn to make a move, where the only available move is pulling the ingoing \edgetext (indicating the claiming of $\contractBA$). 
In the second round, it is Bob's turn. 
Given that Bob's outgoing \edgetext got pulled before, Bob can now choose to pull its ingoing \edgetext (the claiming of $\contractAB$). 
Alternatively, if Alice chose not to pull their ingoing \edgetext in the first round, the move of pulling its ingoing \edgetext is not available to Bob in the second round. Instead, Bob can disable this \edgetext (corresponding to refunding $\contractBA$). Then, Alice could disable their outgoing \edgetext in round three (by refunding $\contractAB$). 

\paragraph {Blockchain Protocols as Games}
\label{sec:fromgraphstotreessubsection}
Introducing the concept of \gtrees, we can study how the goals of an \atgacronym specification can be met by users playing fund redistribution games and only later consider how the mechanics of these games can be realized with cryptographic protocols. 
More precisely, we will show how to directly transform an \atgacronym specification into such a game (represented as a \gtree), which satisfies the \atgacronym's functionality and security goals.
We hereby model an \atgacronym as a directed \graphtext $\graphsymbol$, 
consisting of a non-empty set $\nodesymbol$ of \nodestext (representing users) and a finite set $\arcsymbol$ of \arcstext (representing transfers between users).
To illustrate the transformation procedure, we use the example of a three-party swap \graphtext (c.f.~\cref{fig:three-party-tree}). 

\begin{figure}
    \centering
    \includegraphics[width=0.75\columnwidth]{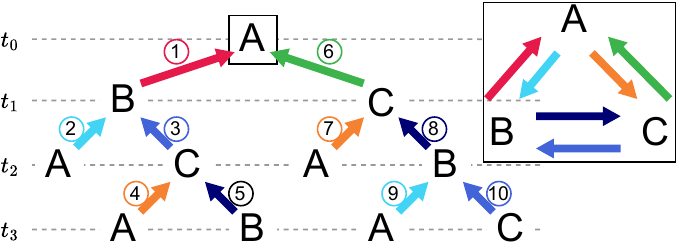 }
    \caption{Tree unfolding of the three-party graph in the top right. Edge numbers are given in pre-order and just serve to identify the edges. 
    Duplicated edges are depicted in the same color. 
    Player A is enclosed in a square to indicate that they will be the first player to make a move in the game.}
    \label{fig:three-party-tree}
\end{figure}

As in the two-player case, we choose an arbitrary user as the tree's root (hereby called \emph{leader}). 
The leader will be in the position to initiate the swap by pulling all their ingoing \arcstext from the \graphtext. 
In the further construction of the \gtree, it needs to be ensured that whenever the outgoing \arcstext of a user are pulled, they can also pull all ingoing \arcstext. 
Correspondingly, the \gtree is created: 
Starting from the leader, all ingoing \arcstext from the \graphtext are added as \edgestext to the tree. 
On the next level, for all users present at this level, all ingoing \arcstext will be appended as \edgestext, constituting the next tree level. 
This \graphtext unfolding stops whenever a user on a path is encountered for the second time (since otherwise, a user could claim an \arctext twice). 

We show the unfolding of the example \graphtext in~\cref{fig:three-party-tree}. 
Note that due to the original \graphtext's structure, several \arcstext appear multiple times in the \gtree, for example, \edgestext \edgeABtwo{} and \edgeABthree{} correspond to the same \arctext. 
This is as in the \graphtext in~\Cref{fig:three-party-tree}, there are multiple paths from C to the leader A. 
Such duplicate \edgestext (as indicated by the same color) should be considered representatives of the same \arctext.
Consequently, only one of these duplicate \edgestext can be executed. 
In the \gtree execution, this is reflected by duplicate \edgestext getting disabled (so becoming unavailable) once another representative has been executed. 
Intuitively, this is because the \fundstext to be transferred in this case have already been claimed otherwise (when executing the duplicate \edgetext).

\begin{figure}[tb]
    \centering
    \includegraphics[width=0.75\columnwidth]{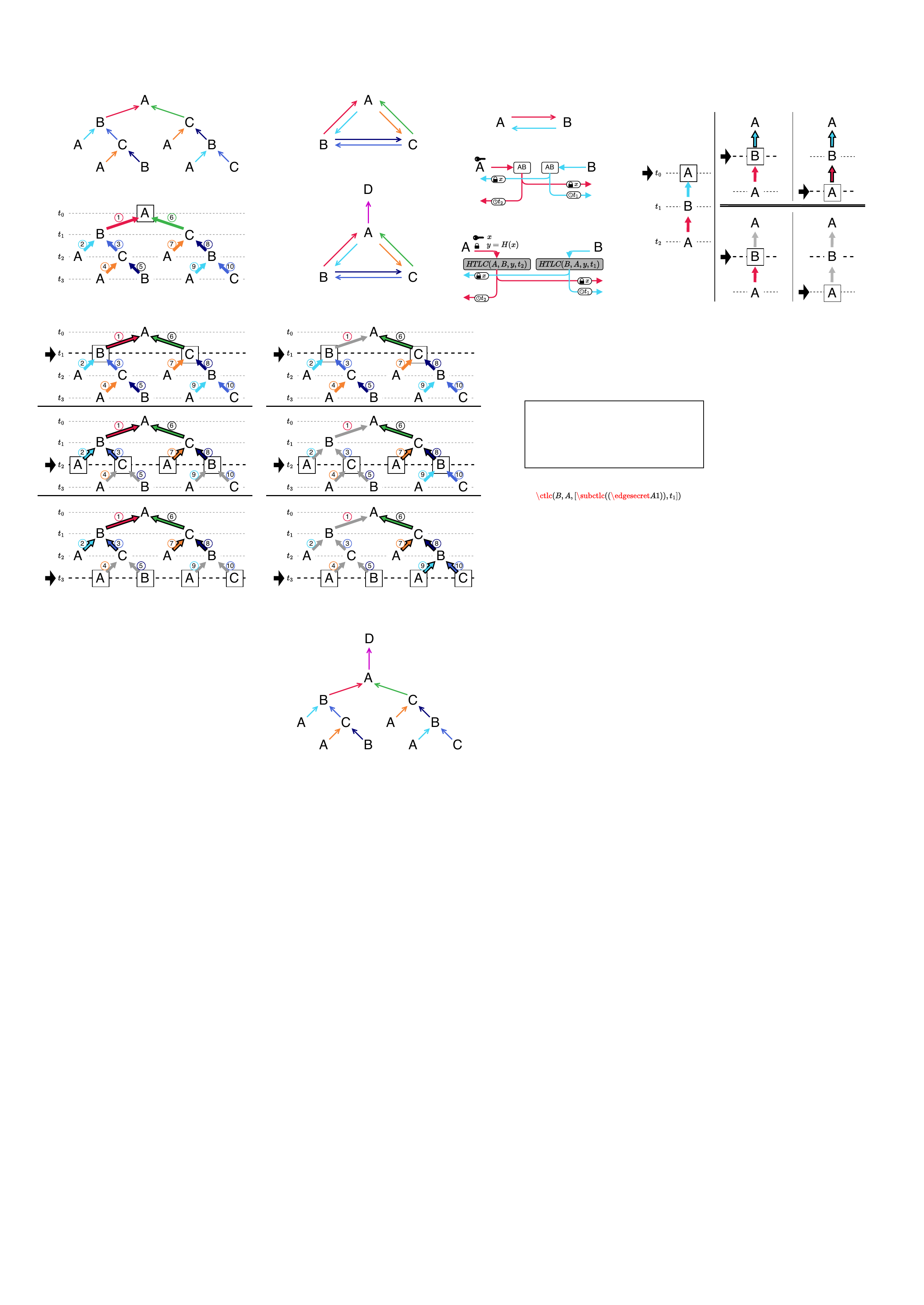}
    \caption{Honest execution of the game tree in~\Cref{fig:three-party-tree}. Execution states are depicted with the symbols in~\Cref{fig:two-party-game}.}
    \label{fig:three-party-game-honest}
\end{figure}

An honest execution of the \gtree in~\cref{fig:three-party-tree} is depicted in~\cref{fig:three-party-game-honest}: 
Suppose every user pulls all their ingoing \edgestext in every round. Then, the pulled \edgestext of the \gtree execution cover the \arcstext of the original \graphtext already after three levels, and all \edgestext on other levels (due to the existence of duplicate \edgestext) got disabled (indicated by grayed-out \edgestext).

The appearance of these duplicate \edgestext is crucial for the security of the game since it is not ensured that malicious users will always pull all their ingoing \edgestext.
For example, consider an execution of the \gtree (depicted in~\cref{fig:three-party-game-dishonest}) where A only pulls \edgetext \edgeCAone{} and C pulls \edgetext \edgeBCtwo{}. 
Without the ingoing \edgestext \edgeABthree{} and \edgeCBthree{} appearing in the right subtree, B would lose \fundstext now as they could not claim their \fundstext from A and C.  

\paragraph{Game Security}
We will formally prove in~\cref{cha:trees} that \gtrees constructed from an \atgacronym specification meet the \atgacronym goals. 
To this end, we characterize the possible outcomes that can result from an honest user \honestuser{} playing their honest strategy on an \gtree $\treesymbol$. 
The honest strategy of \honestuser{} consists of the user eagerly pulling all possible ingoing \edgestext (only possible when corresponding outgoing \edgestext have been pulled before) and disabling all possible outgoing \edgestext. 
An outcome of a \gtree execution will be represented by all \edgestext that got pulled during the \gtree execution. 
We will denote with $\outcomes{\honestuser}$ the set of outcomes that can result from $\honestuser$ following the honest strategy (while other users may behave arbitrarily).
Based on this notion, we will show that honest users, when playing the game, always enjoy the following local atomicity guarantee: 
\begin{theorem}[\Gtree Security (informal)]
Let $\treesymbol$ be an \gtree resulting from unfolding \graphtext $\graphsymbol = (\nodesymbol, \arcsymbol)$. 
All outcomes $\outcome \in \outcomes{\honestuser}$ of $\treesymbol$ for honest user $\honestuser \in \nodesymbol$ satisfy that if they contain an \arctext $(\honestuser, X) \in \arcsymbol$ (corresponding to an outgoing \arctext of $\honestuser$ in $\graphsymbol$), then also all ingoing \arcstext $(Y, \honestuser) \in \arcsymbol$ are contained in $\outcome$.
\end{theorem}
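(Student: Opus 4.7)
The plan is to exploit the round-based structure of the game together with the unfolding rule to localize the argument at a single, canonical occurrence of $\honestuser$ in $\treesymbol$. Concretely, I would fix an outcome $\singleoutcome \in \outcomes{\honestuser}$ with $(\honestuser, X) \in \singleoutcome$, pick any edge $e$ in $\treesymbol$ representing this arc that was pulled during the execution, write the root-to-$e$ path as $V_0, V_1, \ldots, V_i = \honestuser$ with $V_{i-1} = X$, and let $k$ be the smallest index such that $V_k = \honestuser$.

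The first step would be structural. By the unfolding rule, $V_k$ is the first appearance of $\honestuser$ on its root-path and is therefore a non-leaf node of $\treesymbol$, so for every $W$ with $(W, \honestuser) \in \arcsymbol$ there is a child edge attached to $V_k$ (added either as a non-leaf if $W \notin \{V_0, \ldots, V_{k-1}\}$, or as a truncated leaf otherwise). In particular, every ingoing arc $(W, \honestuser) \in \arcsymbol$ of $\honestuser$ in $\graphsymbol$ admits at least one representative edge at position $V_k$. This reduces the task to showing that the honest strategy forces $\honestuser$ to pull all of its ingoing edges at $V_k$.

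The main step would be a backward trace along the root-to-$e$ path. Because the game rule permits a node to pull an ingoing edge in round $j$ only if its own outgoing edge was pulled in round $j-1$, the fact that $V_{i-1}$ pulled $e$ in round $i-1$ forces $V_{i-1}$'s outgoing edge to have been pulled in round $i-2$, and so on down to $V_k$'s outgoing edge having been pulled in round $k-1$. The boundary case $k=0$ is immediate: then $\honestuser$ is the leader at the root, who by the honest strategy always pulls all of its ingoing edges in round $0$. By the honest strategy, $\honestuser$ at $V_k$ then pulls every one of its ingoing edges in round $k$; by the previous paragraph this covers a representative of every $(W, \honestuser) \in \arcsymbol$, and so every ingoing arc of $\honestuser$ lies in $\singleoutcome$.

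The subtle point, which I expect to be the main obstacle, is the case $k < i$: here $V_i$ itself is a truncated leaf, so a direct honest-strategy argument at $V_i$ is unavailable because its subtree has been pruned. Introducing the first-occurrence index $k$ is the trick that routes the argument to the unique non-leaf occurrence of $\honestuser$ on the path, where the ingoing edges are fully represented, and the backward chain of pulls then guarantees that the honest-strategy obligations at $V_k$ have already been triggered by the time $e$ is pulled. The whole proof thus reduces to combining (i) the unfolding rule at first occurrences with (ii) the round-based pull-propagation invariant.
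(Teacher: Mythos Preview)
Your proposal is correct and follows essentially the same route as the paper's proof of the formal version (Theorem~\ref{thm:unfolding-security}). Both arguments locate the highest occurrence of $\honestuser$ on the path from the pulled outgoing edge to the root, observe that every ingoing arc $(W,\honestuser)\in\arcsymbol$ has a tree edge attached there (this is your structural step and the paper's construction of $\walk'=\concatvec{\unitvec{(X,B)}}{\walk_2}$), and then invoke either $\predHonestRootP$ (your $k=0$, the paper's ``$\walk_2$ empty'') or $\predEagerPullP$ (your $k>0$) at that position. The paper organises the same three cases via a walk decomposition $\walk=\concatvec{\walk_1}{\concatvec{\unitvec{(Z,B)}}{\walk_2}}$ rather than your node indices $V_0,\dots,V_i$, but the content is identical.

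One remark on your ``backward trace'': you justify the chain of pulls from $V_{i-1}$ down to $V_k$ by invoking the game rule for arbitrary (possibly dishonest) intermediate users $V_j$. In the formal setting this is not a consequence of the three predicates on $\honestuser$'s behaviour but of the definition of $\partialtreeoutcomes(\untree)$ itself, which forces every outcome to be the edge set of a partial tree; hence if $(B,X)_\walk\in\outcome$ then every edge indexed by a suffix of $\walk$ is automatically in $\outcome$. The paper makes this explicit (``by definition of $\partialtreeoutcomes$ \dots all $(V,W)\in\walk$ are included in $\outcome$''). Your informal phrasing is fine given the informal theorem, but be aware that the step does not rely on any honesty assumption for the intermediate users.
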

This result ensures an honest user never loses \fundstext during the \gtree execution: If \fundstext are pulled from them, they can also claim all \fundstext that they should get according to $\graphsymbol$.\footnote{Note that an honest user may end up better off than specified by $\graphsymbol$:
A user can receive all \fundstext corresponding to their ingoing \arcstext in $\graphsymbol$ without spending all their \fundstext corresponding to their outgoing \arcstext. 
E.g., in the execution in~\cref{fig:three-party-game-dishonest}, \edgetext \edgeBAone{} is not pulled, indicating that user B does not need to spend their \fundstext intended for A.}

\begin{figure}[tb]
    \centering
    \includegraphics[width=0.75\columnwidth]{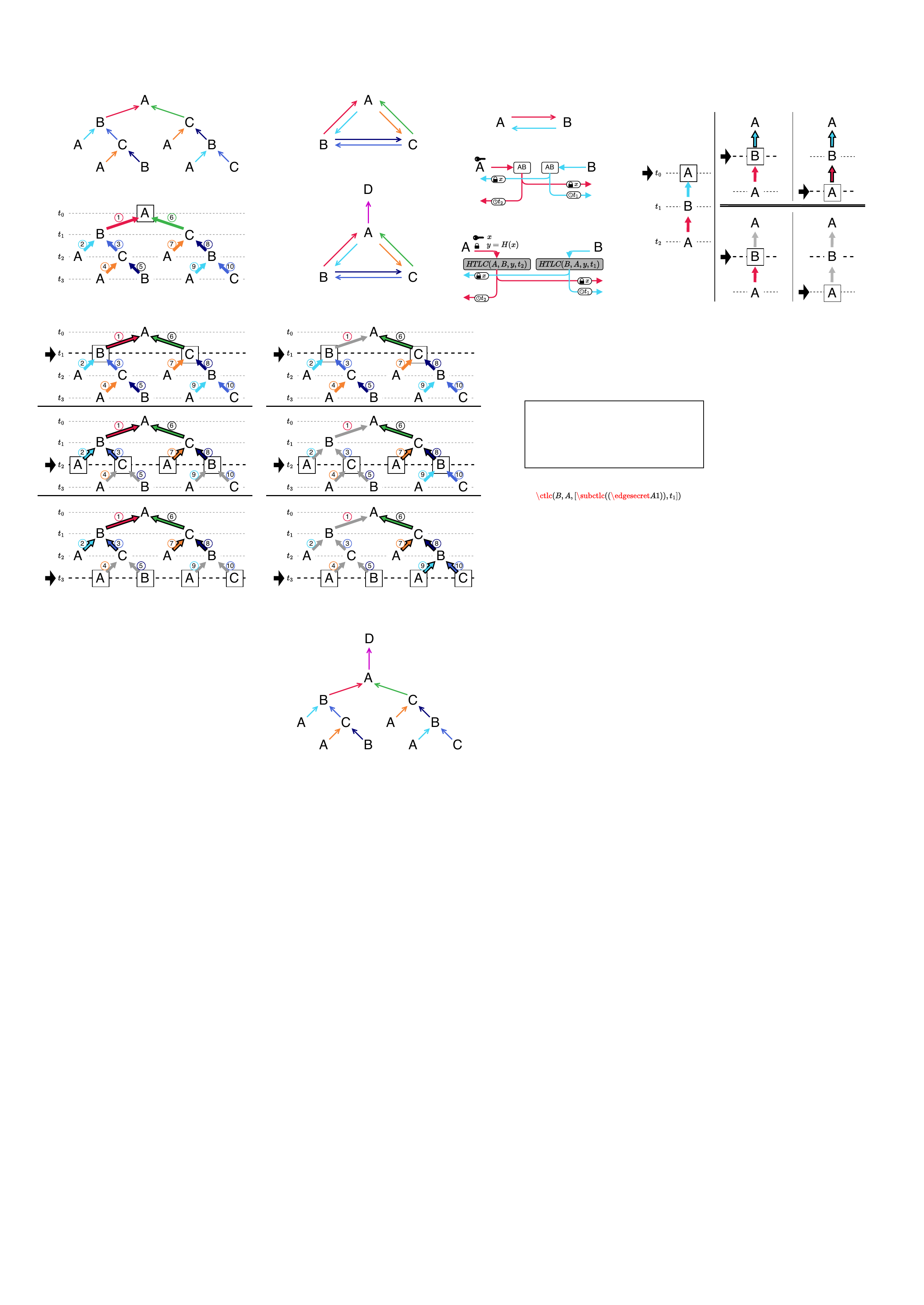}
    \caption{Dishonest execution of the game tree in~\Cref{fig:three-party-tree}. Execution states are depicted with the symbols in~\Cref{fig:two-party-game}.}
    \label{fig:three-party-game-dishonest}
\end{figure}

The described procedure of transforming graphs into \gtrees not only applies to strongly connected \graphstext as the one given in the example but to all \graphstext that are \emph{\insemi}. 
A graph $\graphsymbol = (\nodesymbol, \arcsymbol)$ is \insemi if it contains a node $n \in \nodesymbol$ that can be reached from every other node $n' \in \nodesymbol$. 
Each such node $n \in \nodesymbol$ is a possible leader in the \gtree construction. 
As we show in~\cref{sec:applications}, supporting \insemi graphs paves the way to cover many applications beyond atomic swaps. 

\paragraph{Protocols for \GTrees}
\label{From Trees to Protocols}
\Gtrees constitute a powerful intermediate representation to capture the essence of complex protocols realizing \atgsacronym.
We now show how to enforce the mechanics of \gtrees with the help of cryptographic protocols operating in a \hbe (\hbeacronym) that consists of different transfer agreement mechanisms (\tams). 
The challenge here lies in finding a building block that is powerful enough to support arbitrary \gtrees and at the same time sufficiently simple so that it can be realized on a large variety of \tams, including such with limited capabilities (e.g., cryptocurrencies like Bitcoin, which supports only checking simple payment conditions). 
To meet these requirements, we introduce \emph{\CTLClong~($\ctlc$)}, a generalization of the \htlc primitive used in the \twopartyaswap protocol (c.f.~\cref{fig:two-party-swap}).
Note that the \htlc there serves two purposes:
1) The condition of the \htlc establishes a dependency between the two \gtree \edgestext, ensuring that the second \edgetext can be pulled if and only if the first one was pulled, and, thus, enabling round-based pulling of \edgestext.
2) The consecutive timeouts enable a round-based disabling of \edgestext, ensuring that a user can disable an outgoing \edgetext before their ingoing \edgestext can get disabled. 

However, \htlcs do not suffice to enable the same properties for general \gtrees, since in \gtrees
1) the execution of an \edgetext depends on the prior execution of a whole path of \edgestext (and not only a single \edgetext) and
2) \gtrees contain duplicate edges, indicating alternative forms of spending the same \fundstext to the receiver in different phases of the protocol. 

\ctlcs address these limitations. 
Firstly, they support flexible conditions that can be composed of several secrets.
This is to reflect that each \edgetext of the \gtree $\treesymbol$ representing an \arctext $(X,Y)$ corresponds to a walk $\walk$ from $(X,Y)$ to the leader (the root) of $\treesymbol$ and hence should only be pulled once all other \edgestext along $\walk$ have been pulled.
To realize this, we identify each \edgetext $\e$ of $\treesymbol$ with this walk $\walk$ (so $\e = (X,Y)_{\walk}$) and assign it a unique secret $\edgesecret{(X,Y)}{\walk}$ owned by the receiver $Y$ of the edge (similar to how A owns $x$ in~\Cref{fig:two-party-swap}). 
The condition for pulling edge $(X,Y)_{\walk}$ will then require the knowledge of all secrets for the edges on the path $\walk$ to the root in $\treesymbol$. 
Second, \ctlc{s} generalize \htlc{s} in that they allow for nesting multiple \ctlc{s}, meaning that refunding a \ctlc{} can result in the funds being transferred to a follow-up \ctlc{} instead of returning them to the receiver.
More precisely, we will consider a \ctlc{} denoted by $\ctlcargs{\sender}{\receiver}{\ctlcsubcontracts}$ to contain a non-empty list $\ctlcsubcontracts$ of subcontracts of the form $\subctlcargs{\secretset}{t}$ where $\secretset$ represents a set of composed spending conditions and $t$ represents the timelock of the subcontract. 
For the sake of generality, we will not require the individual conditions $Y_i$ within a composed condition $\vec{Y} \in \secretset$ to be fixed to hash values for a specific hash function $H$ but will simply consider them to be witnesses for some hard relation $R$ (meaning that given $Y$, it is computationally hard to find $x$ such that $(x,Y) \in R$).
Further, we will require that the timelocks of all subcontracts in $\ctlcsubcontracts$ are strictly increasing.
Then $\ctlcargs{\sender}{\receiver}{\ctlcsubcontracts}$ can evolve in the following ways: 
If $\ctlcsubcontracts = [(\subctlcargs{\secretset_1}{t_1}), \dots]$ then the contract funds can be claimed (transferred to $\receiver$) when providing $\vec{x}$ such that there is a $\vec{Y} \in \secretset_1$ and for all $Y^i \in \vec{Y}$ it holds $(x^i, Y^i) \in R$.
Alternatively, after time $t_1$ the contract funds can be refunded to $\sender$ (if $\subctlcargs{\secretset_1}{t_1}$ was the last element in $\ctlcsubcontracts$), or spent to $\ctlcargs{\sender}{\receiver}{[(\subctlcargs{\secretset_2}{t_2}), \dots]}$. 

Duplicate edges can be realized through a \ctlc{} 
that contains subcontracts for all \edgestext representing the same \arctext in the graph in ascending order of their appearance in the \gtree and increasing timeouts according to their \gtree level\footnote{
Duplicate edges on the same level are modeled by different conditions $\vec{Y} \in \secretset$.
}.
For example, the duplicate edges \edgeABtwo{} and \edgeABthree{} 
from~\Cref{fig:three-party-tree} would be realized by the following \ctlc{} $\ctlcvar_{(A,B)}$:
\vspace{-2pt}
\begin{equation}
\hspace{-6pt}\scalebox{0.85}{$
    \ctlcargs{A}{B}{[ 
        \overbrace{\subctlcargs{\{ \textstyle (\edgesecret{(B,A)}{\text{\edgeBAone}}, \edgesecret{(A,B)}{\text{\edgeABtwo}})\}}{t_2}}^{\text{\subctlcABtwo}}, 
        \overbrace{\subctlcargs{\{ ( \edgesecret{(C,A)}{\text{\edgeCAone}}, \edgesecret{(B,C)}{\text{\edgeBCtwo}}, \edgesecret{(A, B)}{\text{\edgeABthree}})\}}{t_3}}^{\text{\subctlcABthree}}]}
        $} \hspace{-1pt} \label{eq:CTLC-running-example}
\end{equation}
Before time $t_2$, $B$ can claim the funds by providing secrets 
$\edgesecret{(B,A)}{\text{\edgeBAone}}$ and $\edgesecret{(A,B)}{\text{\edgeABtwo}}$. 
While $\edgesecret{(A,B)}{\text{\tiny\edgeABtwo}}$ is chosen by $B$, 
$B$ would obtain secret $\edgesecret{(B,A)}{\text{\edgeBAone}}$ if $A$ claims the funds from $\ctlcvar_{(B,A)}$ in edge \edgeBAone{}.
Consequently, providing these secrets to claim the funds of $\ctlcvar_{(A,B)}$ using subcontract \subctlcABtwo{} would correspond to $B$ pulling the edge \edgeABtwo{} given that edge \edgeBAone{} was pulled.

If edge \edgeBAone{} was not pulled at time $t_2$, $B$ already disabled \edgeBAone{} and, hence, also \subctlcABtwo{} (representing edge \edgeABtwo) can be safely discarded so that only \subctlcABthree{}  representing edge \edgeABthree{} is left.
Subcontract \subctlcABthree{} enables $B$ to claim the funds when providing the secrets $\edgesecret{(C,A)}{\text{\edgeCAone}}$, $\edgesecret{(B,C)}{\text{\edgeBCtwo}}$, and $\edgesecret{(A, B)}{\text{\edgeABthree}}$.
Again $\edgesecret{(C,A)}{\text{\edgeCAone}}$ and $\edgesecret{(B,C)}{\text{\edgeBCtwo}}$ will be learned from $C$ pulling edge \edgeBCtwo{}, while $\edgesecret{(A, B)}{\text{\edgeABthree}}$ was chosen by $B$.

Following these ideas, an \gtree $\untree$ can be translated into a set of \ctlc{s} representing its edges. 
Pulling edges in $\untree$ corresponds to claiming the \ctlc{} modeling the edge, and disabling an edge is reflected by removing a subcontract for this edge from the corresponding \ctlc{}. 
Based on this correspondence, we can characterize a general protocol $\paramstrategy{\honestuser}{\untree}$ that implements the \gtree strategy of honest user $\honestuser$ on the \gtree $\untree$ for the \ctlcs{} resulting from the translation of $\untree$. 

\paragraph{Protocol Security}
To formally prove that the mechanics of an \gtree $\untree$ are faithfully captured by the protocol $\paramstrategy{\honestuser}{\untree}$, we provide a formal symbolic model for the execution of \ctlc{}-based protocols:
We define how users can interact with an \hbeacronym consisting of different \tams that support \ctlcs. 
More precisely, we characterize all possible protocol runs $\run$ that can incur when executing a specific honest user protocol $\userstrategy{\honestuser}$ in such an ecosystem and write $\userstrategy{\honestuser} \conforms \run$ if $\run$ results from such an execution.
The symbolic execution model thereby takes security-relevant blockchain-specific characteristics into account, e.g., that the interactions of honest users with the \ctlcs on the different \tams get known to the attacker before execution and may be maliciously delayed or reordered.

We obtain the following security result:

\begin{theorem}[Protocol Security (Informal)] 
Let $\treesymbol$ be an \gtree and $\run$ be a run stemming from honest user $\honestuser$ executing $\paramstrategy{\honestuser}{\untree}$ ($\paramstrategy{\honestuser}{\untree} \conforms \run$) such that the timelocks of all subcontracts $\subctlcvar_\edgevar$ for edges in $\treesymbol$ have passed.
Then there exists an outcome $\outcome \in \outcomes{\honestuser}$ such that the subcontracts $\subctlcvar_\edgevar$ involving $\honestuser$ claimed in $\run$ correspond to the edges $\edgevar$ of $\honestuser$ in $\outcome$.
\end{theorem}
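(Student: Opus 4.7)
The plan is to establish a simulation between the symbolic protocol run $\run$ and a play of the tree game on $\treesymbol$ in which $\honestuser$ follows the honest strategy. Concretely, I would define a translation $\phi$ that maps events of $\run$ to moves in the game: claiming a subcontract $\subctlcvar_\edgevar$ corresponds to pulling edge $\edgevar$, and either refunding $\subctlcvar_\edgevar$ or letting it expire corresponds to disabling $\edgevar$. From $\phi(\run)$ one can read off a candidate outcome $\singleoutcome$ consisting of exactly those edges whose subcontracts are claimed in $\run$.

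The bulk of the proof is to show that (i) $\phi(\run)$ is a valid play of the tree game, and (ii) in that play $\honestuser$ acts according to the honest strategy; together these yield $\singleoutcome \in \outcomes{\honestuser}$. For (i) I would proceed by induction on the \gtree level $\level$: after all subcontracts at depth at most $\level$ have been resolved (claimed or expired), the set of claimed subcontracts corresponds exactly to a reachable configuration of the tree game after $\level$ rounds. The inductive step exploits the correspondence between the composed witnesses required by a \ctlc{} subcontract condition and the secrets associated with the edges along the walk encoding $\edgevar$: a subcontract at level $\level$ can be claimed if and only if every ancestor edge on this walk has been pulled, because the preimages required by its condition become available exactly when those ancestor subcontracts are claimed.

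For step (ii), I need to show that the actions prescribed by $\paramstrategy{\honestuser}{\untree}$ induce, via $\phi$, exactly the honest game moves of $\honestuser$. Whenever an outgoing edge of $\honestuser$ is pulled in $\phi(\run)$, $\honestuser$ has by that moment learned all secrets required to claim each of their ingoing subcontracts at that level; since all timelocks pass by assumption and $\paramstrategy{\honestuser}{\untree}$ eagerly claims, those subcontracts are in fact claimed and the corresponding ingoing edges appear in $\singleoutcome$, matching the eager pulling of the honest tree strategy. Dually, before any ingoing subcontract of $\honestuser$ could be refunded by its sender, $\honestuser$ has already refunded (or let expire) the matching outgoing subcontracts, so the corresponding outgoing edges are disabled in $\phi(\run)$, matching the eager disabling of the honest tree strategy.

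The main obstacle will be handling the adversarial reordering and delay of \tam interactions. The symbolic execution model lets the attacker observe pending honest interactions and schedule them arbitrarily, so the argument must exploit the strict stratification of subcontract timelocks across \gtree levels to guarantee that every honest claim and every honest refund lands inside its permitted window, regardless of how the adversary interleaves interactions across different \tams. A related delicate point concerns duplicate edges: several sibling \edgestext encoding the same \arctext live as subcontracts inside a single \ctlc{} with ascending timeouts, so I must argue that once the representative at one level is claimed, the attacker cannot also claim a representative at another level. This reduces to the hardness of the relation used in the \ctlc{} conditions combined with the fact that each edge of $\treesymbol$ is tagged with a freshly sampled secret along its walk, so the witnesses required for two distinct representatives are distinct and cannot both be forged by the adversary. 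Once these two points are handled, $\phi$ is total, $\phi(\run)$ is a valid honest play of $\honestuser$, and the claimed subcontracts involving $\honestuser$ coincide edge-for-edge with the outcome $\singleoutcome$ it induces.
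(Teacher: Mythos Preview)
Your high-level picture (extract an outcome from the claimed subcontracts and show it lies in $\outcomes{\honestuser}$) matches the paper's, but the decomposition you propose has two concrete gaps.

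First, the equivalence ``a subcontract at level $\level$ can be claimed if and only if every ancestor edge on this walk has been pulled'' is false in the adversarial model. In the paper's semantics, revealing a secret and claiming a subcontract are separate actions: $\honestuser$ first reveals $\funsecret(\e,id)$ and only then claims. A malicious scheduler that controls the intermediate users on the path can, upon seeing this reveal, front-run $\honestuser$'s claim and execute a \emph{deeper} subcontract that also needs this secret, so an outgoing edge of $\honestuser$ can be claimed before the corresponding ingoing edge is. The paper calls this out explicitly and handles it not by a level-by-level induction but by an induction on the length of $\run$ that maintains a family of ten named invariants ($Inv_{\text{in-secrets}}, Inv_{\text{secrets}}, Inv_{\text{in-schedule}}, Inv_{\text{levels}}, Inv_{\text{liveness}}, Inv_{\text{init-liveness}}, Inv_{\text{deposits}}, Inv_{\text{setup}}, Inv_{\text{tree}}, Inv_{\text{auth}}$) together with a consistent edge set $\partialEx$ satisfying $\run \IDrelation \partialExFamily$. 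These invariants track the gap between ``secret revealed'' and ``subcontract claimed'' and show that in a final run the two coincide; your level induction has no hook for this.

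Second, your treatment of duplicate edges via ``hardness of the relation'' and ``freshly sampled secrets'' is the wrong mechanism. The model is symbolic: there is no hardness assumption, and the adversary can legitimately learn \emph{both} secret sets for two representatives of the same arc (by having different paths executed). What prevents double-claiming is purely structural: all representatives $(X,Y)_{\walk}$ of the same arc are subcontracts of a \emph{single} \ctlc $\CTLCcontract_{(X,Y)}$, and once that \ctlc is claimed it is removed from $\environmentvecCTLCEnabled$ and (because the honest user authorizes any $\CTLCcontract$ only once before $t_0$) can never be re-enabled. The paper proves this as a separate lemma and uses it inside the $\DecideCo$ case of the run-length induction; a cryptographic argument would not suffice and is not available in the model.
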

This result ensures that whenever an execution of the protocol $\paramstrategy{\honestuser}{\untree}$ advanced enough (namely reached the timelocks of all subcontracts constructed for the \gtree $\treesymbol$), then the executed subcontracts correspond to the \edgestext of a valid outcome for the honest user $\honestuser$ in $\treesymbol$. 
Hence, the outcomes of $\honestuser$ in $\treesymbol$ as given by $\outcomes{\honestuser}$ soundly reflect how \fundstext of $\honestuser$ will be claimed in the protocol.

Intuitively, this result gives us end-to-end security 
guarantees:
A (sufficiently advanced) execution of the honest user protocol $\paramstrategy{\honestuser}{\untree}$ corresponds to an outcome $\outcome \in \outcomes{\honestuser}$ and such an outcome $\outcome$ for a tree $\treesymbol$ resulting from unfolding a graph $\graphsymbol$ was shown to ensure that $\honestuser$ does not lose funds (w.r.t. the \atgacronym specification $\graphsymbol$).

In summary, we have shown how to synthesize an \atgacronym specification $\graphsymbol$ into a game represented as an \gtree $\treesymbol$, and how to transform this \gtree $\treesymbol$ into a \ctlcprotocol.
From our security results for both of these transformations, we can show that for an honest user $\honestuser$, the resulting \ctlcprotocol is guaranteed to execute transfers that correspond to a beneficial trade of $\honestuser$ w.r.t. $\graphsymbol$.

\section{\GTree Unfolding}
\label{cha:trees}

In the following, we will give a more formal description of how to represent \atgs and how to transform them into \gtrees with the intended behavior.

\paragraph{\ATGs}
An \atg (\atgacronym) is a directed \graphtext (short digraph) $\graphsymbol = (\nodesymbol, \arcsymbol)$
where $\nodesymbol$ denotes the set of \nodestext (representing users) and $\arcsymbol$ denotes the set of \arcstext (representing transfers between users).
 Arcs $\arc$ are given as tuples $\arc = (A, B)$ with $A, B \in \nodesymbol$, and we call $A$ the sender (written $\funsender(\arc)$) and $B$ the receiver of $\arc$ (written $\funreceiver(\arc)$). 
 We call \emph{walk} a sequence of \arcstext $\walk = [\arc_{\size{\walk} - 1}, \arc_{\size{\walk} - 2}, ..., \arc_0] \in \arcsymbol^{\size{\walk}}$ where the receiver of each \arctext coincides with the sender of its predecessor.
 We use $\concatvec{\walk_1}{\walk_2}$ to denote the concatenation of two walks $\walk_1$ and $\walk_2$. 
Further, we use $\walk_2 \suffix \walk_1$ to denote that $\walk_1$ is a suffix of $\walk_2$ (so that $\exists \walk:~ \walk_2 = \concatvec{\walk}{\walk_1}$) and 
$\walk_2 \directsucc \walk_1$ to say that $\walk_2$ extends $\walk_1$ by one arc (so $\exists \arc:~\walk_2 = \concatvec{\unitvec{\arc}}{\walk_1}$). 

In this work, we are interested in the class of digraphs that we call \emph{\insemi}.
A \graphtext is \insemi if there is a \nodetext $A \in \nodesymbol$ (which we call \emph{leader}), which can be reached (with a walk) from every other \nodetext in $\nodesymbol$. 
Every strongly connected \graphtext is also \insemi while the contrary does not hold. A formal proof and a more detailed discussion of the underlying graph theory can be found in
\iffullversion
Appendix~\ref{sec:Graph Theory}.
\else 
\cite{extended-version}. 
\fi  

\paragraph{\GTrees Unfolding}
We next define how to transform an \atgacronym given as a digraph $\graphsymbol$ into an \gtree. 
To this end, we will represent \gtrees as sets of \emph{\edgestext}, where an \edgetext $\edgevar = (A, B)_{\walk}$ is an \arctext indexed with its walk $\walk$ to the root of the \gtree.

\begin{definition}[Tree unfolding]
    Let $\graphsymbol = (\nodesymbol, \arcsymbol)$ be a digraph that is in-semiconnected in $A \in \nodesymbol$. 
    The tree unfolding $\fununfold(\graphsymbol, A)$ of $\graphsymbol$ with leader $A$ is defined as follows:
\begin{align*}
    \small
        &\fununfold(\graphsymbol, A) :=  \\
        &~\{ 
            \arc_{\walk^{\textit{s}}} ~|~ \exists \walk: \walk \suffix \walk^{\scriptscriptstyle\textit{s}} ~\land~ \arc \in \arcsymbol ~\land~ a = \arc^{\scriptscriptstyle\textit{s}}_{\size{\walk^{\textit{s}}} -1} \\
             &~\land~
             \exists B \in \nodesymbol : \textit{noDupWalk}(\graphsymbol, \walk, B, A)\\
         &~\land~  (\exists j \leq \size{\walk} -2 : B = \funreceiver(a_j) \\
         &\hspace{21pt} ~\lor~ \nexists \arc' \in \arcsymbol: B = \funreceiver(\arc')  )
             \}
\end{align*}
where $\textit{noDupWalk}(\graphsymbol, \walk, B, A)$ denotes that $\walk$ is a walk from $B$ to $A$ in $\graphsymbol$ that does not contain the same \arctext twice.
\end{definition}

The definition states that the unfolding contains all \edgestext $\arc_{\walk^{\textit{s}}}$ on walks $\walk$ from \nodestext $B \in \nodesymbol$ to the leader $A$, which satisfy the following properties
(1) $\walk$ does not contain repeated \arcstext and either
 (2a) $\walk$ already contains an \arctext with $B$ in the position of a receiver or 
 (2b) $\walk$ could not be extended beyond $B$ because there is no $\arc' \in \arcsymbol$ with $B$ as receiver. 
This unfolding ensures for every user $B$ that if it occurs as a sender in a path of the resulting \gtree, then it also occurs as a receiver (if it has receiving \arcstext in $\graphsymbol$). 
We defer to 
\iffullversion
Appendix~\ref{sec:TtreeSize}
\else 
\cite{extended-version}. 
\fi  
an analysis of the scalability of the \gtrees unfolding mechanism.

\paragraph{Outcome Sets}
We provide a form of game semantics for \gtrees by introducing the outcome sets (written $\outcomeset{\untree}$) of a user $\honestuser$ when interacting with an \gtree $\untree$. 
Intuitively, an outcome $\outcome \in \outcomeset{\untree}$ corresponds to a partial execution of $\untree$ that may result from such an interaction. 
In particular, the outcomes reflect that the user $\honestuser$ can enforce certain minimal guarantees on the \gtree execution.
More precisely:
\begin{enumerate}
    \item If $\untree$ contains duplicate \edgestext $(X,Y)_{\walk}$,  $(X,Y)_{\walk'}$ involving $\honestuser$ (so $\honestuser \in \{ X, Y\} $), at most one of them may be executed.
    \item If $\honestuser$ is the root user of $\untree$ then all ingoing edges $(Y,B)_{\walk}$ (with $\size{\walk} =1$) are executed. 
    \item If an outgoing \edgetext $(B, X)_{\walk}$ of $\honestuser$ is executed then also all ingoing \edgestext $(Y,B)_{\walk'}$ with $\walk' \directsucc \walk$ are executed.
\end{enumerate}

\begin{definition}[Outcome Set]
    \label{def:outcomeset}
    Let $\untree$ be a \gtree. 
    Then the outcome set $\outcomeset{\untree}$ of user $B$ in $\untree$ is given as 
    \begin{align*}
        \outcomeset{\untree} := \{ 
            &\outcome \in \partialtreeoutcomes (\untree) ~|~
             \predNoDup{\untree}{B}{\outcome}  \\
             &\land~\predHonestRoot{\untree}{B}{\outcome}  
             ~\land~\predEagerPull{\untree}{B}{\outcome} \} 
    \end{align*}  
    where $\partialtreeoutcomes(\untree)$ denotes the set of all \gtrees whose paths (from the leaves to the root) are suffixes of paths in $\untree$ and
    \begin{align*}
        &\predNoDup{\untree}{B}{\outcome} : \Leftrightarrow
        \\
        &\edge, (X,Y)_{\walk'} \in \outcome \land B \in \{ X, Y\} \Rightarrow \walk = \walk' ,
        \\
        &\predHonestRoot{\untree}{B}{\outcome} :\Leftrightarrow \\
         &(X, B)_{\walk} \in \untree \land \size{\walk} = 1 \Rightarrow (X,B)_{\walk} \in \outcome ,
        \\
        &\predEagerPull{\untree}{B}{\outcome} :\Leftrightarrow
        \\
        &(X, B)_{\walk_1} \in \untree \land (B,Y)_{\walk_2} \in \outcome \land \walk_1 \directsucc \walk_2 , \\
            & \qquad \Rightarrow \exists \walk_3: (X, B)_{\walk_3} \in \outcome \land \size{\walk_3} \leq \size{\walk_1} .
    \end{align*}
\end{definition}

The predicates $\predNoDupP$, $\predHonestRootP$,  $\predEagerPullP$ capture exactly the three requirements on the partial \gtree executions.

\paragraph{Security and Correctness of \GTree Unfolding}
To show the security of the \gtree unfolding, we show that each outcome $\outcome \in \outcomeset{\untree}$ of a \gtree $\untree  =  \fununfold(\graphsymbol, A)$ is a safe outcome for user \hspace{-1pt}$B$\hspace{-1pt} w.r.t. the \atgacronym \hspace{-1pt}$\graphsymbol$\hspace{-1pt}. 
An outcome is considered safe if $B$ does not end up \emph{underwater}, meaning that some outgoing \arcstext of \hspace{-1pt}$\honestuser$\hspace{-1pt} in \hspace{-1pt}$\graphsymbol$\hspace{-1pt} are triggered but not all their ingoing \arcstext in $\graphsymbol$.
\begin{theorem}[Security of Tree Unfolding]
    \label{thm:unfolding-security}
    Let $\graphsymbol = (\nodesymbol, \arcsymbol)$ be a digraph that is in-semiconnected in $A \in \nodesymbol$ and 
     $\untree = \fununfold(\graphsymbol, A)$ and $B \in \nodesymbol$.
    Then it holds that $\forall \outcome \in \outcomeset{\untree} :$
    \begin{align*}
        &\forall \walk, \walk' : \bigl( \edge \in \outcome \land (X,Y)_{\walk'} \in \outcome \land B \in \{ X, Y\} \notag \\
        & \qquad \quad \Rightarrow \walk = \walk' \land (X,Y) \in \arcsymbol \bigr) \label{eq:underwater1}\\
        & \hspace{3pt} \land \forall (B,Y)_{\walk} \in \outcome : \, \bigl( (X,B) \in \arcsymbol \Rightarrow \exists \walk': (X,B)_{\walk'} \in \outcome \bigr)
    \end{align*}
\end{theorem}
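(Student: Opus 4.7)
The plan is to prove the two conjuncts separately, drawing on the three predicates defining $\outcomeset{\untree}$ ($\predNoDupP$, $\predHonestRootP$, $\predEagerPullP$) together with the structural constraint that leaf-to-root paths of any element of $\partialtreeoutcomes(\untree)$ are suffixes of leaf-to-root paths in $\untree$.

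The first conjunct should be essentially immediate. Every member of $\outcome \subseteq \untree$ has the form $\arc_{\walk^s}$ with $\arc \in \arcsymbol$ by definition of $\fununfold$, which gives $(X,Y) \in \arcsymbol$. The equality $\walk = \walk'$ is then exactly what $\predNoDup{\untree}{B}{\outcome}$ asserts for any two edges of $\outcome$ involving $B$.

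For the second, substantive conjunct, I would fix $(B,Y)_{\walk} \in \outcome$ and an incoming arc $(X,B) \in \arcsymbol$, then split on whether $(X,B)$ already occurs as an arc of $\walk$. In the first subcase, $(X,B) \notin \walk$, I would consider the walk $\walk_1 := \concatvec{\unitvec{(X,B)}}{\walk}$, which remains duplicate-free, and prove an auxiliary extension lemma: $\walk_1$ can be iteratively prolonged by prepending any incoming arc of the current start vertex that is not already in the walk; because $\arcsymbol$ is finite, the process terminates, and upon termination either the current start has no incoming arcs at all or all of its incoming arcs lie inside the walk as receivers at non-initial positions, so the start condition of $\fununfold$ is satisfied. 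This yields a full witnessing walk $\fullwalk_1$ and thus $(X,B)_{\walk_1} \in \untree$ with $\walk_1 \directsucc \walk$, and applying $\predEagerPull{\untree}{B}{\outcome}$ then delivers the desired $\walk_3$ with $(X,B)_{\walk_3} \in \outcome$.

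In the second subcase, $(X,B) \in \walk$, duplicate-freeness of $\walk$ makes the suffix $\walk'$ of $\walk$ beginning with $(X,B)$ unique, and since suffixes of a witnessing unfolding walk are themselves witnessing suffixes, $(X,B)_{\walk'} \in \untree$ as a strict ancestor of $(B,Y)_{\walk}$ on its path to the root. A short intermediate argument then shows that any ancestor of an edge in $\outcome$ must itself lie in $\outcome$: the path of $\outcome$ through $(B,Y)_{\walk}$ is, by definition of $\partialtreeoutcomes(\untree)$, a suffix of a leaf-to-root path of $\untree$, and therefore includes every edge strictly above $(B,Y)_{\walk}$ up to the root — in particular $(X,B)_{\walk'}$.

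The main obstacle is the extension lemma in Case 1, together with carefully unfolding the indexing convention $\arc = \arc^{s}_{\size{\walk^s}-1}$ to verify the start condition $\exists j \leq \size{\fullwalk}-2 : \hat{B} = \funreceiver(a_j)$ at termination — one has to observe that the starting arc cannot place $\hat{B}$ as a receiver (outside a degenerate self-loop), so any incoming arc of $\hat{B}$ already absorbed into the walk necessarily occupies a position $j \leq \size{\fullwalk}-2$. The remainder is bookkeeping over the unfolding definition and the partial-tree semantics.
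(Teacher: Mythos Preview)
Your proposal is correct and follows essentially the same line as the paper: the first conjunct is immediate from $\predNoDupP$ together with the fact that every edge in the unfolding is labelled by an arc of $\arcsymbol$, and the second conjunct is handled by an extension argument (showing that a suitable $(X,B)$-indexed walk can be completed to a full unfolding path) combined with $\predEagerPullP$ and the closure of $\outcome$ under ancestors. The only difference is organizational: the paper splits on whether \emph{any} arc $(Z,B)$ lies in $\walk$, then truncates $\walk$ at the first such occurrence and further sub-splits on whether the truncated suffix is empty (invoking $\predHonestRootP$ there), whereas you split on whether the \emph{specific} arc $(X,B)$ lies in $\walk$ --- a slightly cleaner cut that lets you always feed the given $(B,Y)_\walk$ directly into $\predEagerPullP$ in Case~1 and thereby avoid the $\predHonestRootP$ sub-case altogether.
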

Intuitively, the statement says that all outcomes $\outcome$ (so partial \gtree executions) that a user $\honestuser$ may incur during interaction with the \gtree $\untree$ resulting from unfolding the \graphtext $\graphsymbol$ have the following properties:
(1) \Edgestext involving user $\honestuser$ correspond to a unique \arctext in the \graphtext, and 
(2) if $(B,Y)_{\walk} \in \outcome$ (corresponding to an outgoing \arctext $(B,Y)$ in $\graphsymbol$ being executed) then for each ingoing \arctext $(X,B)$ in $\graphsymbol$, $\outcome$ also contains a corresponding \edgetext $(X,B)_{\walk'}$ for this \arctext. 
The formal proof of this theorem is given in 
\iffullversion
Appendix~\ref{sec:GraphtoTree}.
\else 
\cite{extended-version}. 
\fi  

In addition to security for an honest party, we show that if all parties are honest, all arcs from the original \graphtext get executed, so the intersection of the outcome sets of all users only contains outcomes that cover the whole \graphtext. 
The proof of~\cref{theorem:correctness-graph-tree} can be found in
\iffullversion
Appendix~\ref{sec:GraphtoTree}.
\else 
\cite{extended-version}. 
\fi  

\begin{theorem}[Correctness of Tree Unfolding]\label{theorem:correctness-graph-tree}
    Let $\graphsymbol = (\nodesymbol, \arcsymbol)$ be a digraph that is in-semiconnected in $A \in \nodesymbol$ and 
    $\untree = \fununfold(\graphsymbol, A)$ the tree unfolding of that graph. It holds:
    \begin{align*}
        \forall \outcome^* \in \bigcap_{B_i \in \nodesymbol} \outcomesetBi{\untree}: 
        (X,Y) \in \arcsymbol \Leftrightarrow \exists \walk: (X,Y)_{\walk} \in \outcome^*
    \end{align*}
\end{theorem}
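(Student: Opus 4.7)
The plan is to prove the biconditional in both directions separately.

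The $(\Leftarrow)$ direction is immediate: since $\outcome^* \in \partialtreeoutcomes(\untree)$, every edge appearing in $\outcome^*$ is already an edge of $\untree$; by the definition of $\fununfold$, every edge $(X, Y)_{\walk} \in \untree$ stems from some arc $(X, Y) \in \arcsymbol$.

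For the $(\Rightarrow)$ direction, I would induct on $d(Y)$, the length of a shortest walk from $Y$ to $A$ in $\graphsymbol$ (well-defined by in-semiconnectedness). In the base case $Y = A$, given an arc $(X, A) \in \arcsymbol$, I would first show $(X, A)_{[(X, A)]} \in \untree$ by extending the singleton walk $[(X, A)]$ to the left, iteratively prepending unused incoming arcs; since $\arcsymbol$ is finite, this process terminates either at a leaf or at a node that has already appeared as a receiver in the walk, producing a full walk that witnesses the side conditions of $\fununfold$. The predicate $\predHonestRoot{\untree}{A}{\outcome^*}$ (which holds because $\outcome^* \in \outcomesetA{\untree}$) then forces $(X, A)_{[(X, A)]} \in \outcome^*$.

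In the inductive step with $(X, Y) \in \arcsymbol$ and $d(Y) > 0$, I would take the first arc $(Y, Z) \in \arcsymbol$ on a shortest walk from $Y$ to $A$, so that $d(Z) < d(Y)$. By inductive hypothesis $(Y, Z)_{\walk_2} \in \outcome^*$ for some $\walk_2$. If $(X, Y)$ already occurs somewhere inside $\walk_2$, then the corresponding suffix-indexed copy of $(X, Y)$ lies on the path from $(Y, Z)_{\walk_2}$ to the root in $\untree$ and is therefore already present in $\outcome^*$ by the subtree-closure property of $\partialtreeoutcomes$. Otherwise, $\walk_1 := \concatvec{[(X, Y)]}{\walk_2}$ is a no-duplicate walk, and the same left-extension argument as in the base case shows $(X, Y)_{\walk_1} \in \untree$; since $\walk_1 \directsucc \walk_2$, the predicate $\predEagerPull{\untree}{Y}{\outcome^*}$ (which holds because $\outcome^* \in \outcomesetY{\untree}$) yields some $\walk_3$ with $(X, Y)_{\walk_3} \in \outcome^*$.

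I expect the main obstacle to be the left-extension lemma establishing membership in $\untree$: one must maintain the no-duplicate-arc invariant while guaranteeing that the extension terminates at a node that is either a leaf or reappears as a receiver later in the walk. This requires a careful termination analysis based on the finiteness of $\arcsymbol$ together with a case split on whether the current start node's incoming arcs are all exhausted (in which case its re-appearance as a receiver is forced). Once this extension lemma is in hand, the interplay between $\predHonestRoot{\untree}{A}{\outcome^*}$ at the root and $\predEagerPull{\untree}{Y}{\outcome^*}$ propagating downward covers every arc of $\graphsymbol$.
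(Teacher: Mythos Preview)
Your proposal is correct and uses the same two ingredients as the paper—$\predHonestRootP$ at the root and $\predEagerPullP$ to propagate downward along shortest walks—but packages them differently. The paper argues by contradiction: it fixes the set of \emph{all} shortest walks from $Y$ to $A$, assumes no shortest-walk-indexed copy of $(X,Y)$ lies in $\outcome^*$, then locates the highest level at which some shortest walk has an edge in $\outcome^*$ immediately preceded (towards the leaves) by one that is not, and uses $\predEagerPullP$ there to either exhibit a strictly shorter $(Y,A)$-walk or a shortest walk with a higher in-$\outcome^*$ level, contradicting the extremal choice. Your direct induction on $d(Y)$ avoids this extremal bookkeeping: one hop of the induction hypothesis plus one application of $\predEagerPullP$ suffices, and the depth bound in the conclusion of $\predEagerPullP$ is never needed. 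Your argument for the $(\Leftarrow)$ direction is also more self-contained; the paper instead invokes its security theorem. Finally, you are right to isolate the left-extension lemma (that any no-duplicate walk to $A$ is a suffix of a full walk in $\untree$): the paper uses exactly this fact but dismisses it with ``by definition of $\fununfold$'' without spelling out the termination argument you sketch.
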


\section{Protocols for \GTrees}
\label{sec:CTLCs}
\Gtrees provide a general abstraction layer for describing interactive protocols that involve the orchestrated execution of transfers in different \tams.
In this section, we show how \gtrees can be securely realized by cryptographic protocols in a \hbeacronym encompassing multiple \tams. 
This will allow us to prove end-to-end security and correctness for general blockchain protocols specified as an \atgacronym.

\subsection{\ctlcs}
Our protocols rely on \CTLCslong~(\ctlcs), the core building block that needs to be provided from the \tam (e.g., the underlying blockchain). 
We provide here a formal model of the execution of \ctlcs and describe in~\cref{sec:ctlc-impl} how \ctlcs can be realized in practice.

We represent a \CTLC{} contract by a list $\CTLCcontract := [ \CTLCsubcontract_1, ..., \CTLCsubcontract_s ]$
of subcontracts
$\CTLCsubcontract_i := (X,Y,\fundfull,\timelock_i, \secretset_i)$
where $X$ denotes the contract's sender, 
$Y$ the contract's receiver,
$f^{\zeta}$ the contract fund (with identifier $\fundid$),
$\timelock_i$ the \emph{timelock} of the subcontract, and 
$\secretset_i$ the subcontract's \emph{condition}. 
While the sender, receiver, and fund need to match for all $\CTLCsubcontract_i \in \CTLCcontract$ (we hence also write $\funsender(\CTLCcontract)$, $\funreceiver(\CTLCcontract)$, and $\funfund(\CTLCcontract)$), 
timelock $\timelock_i$ and condition $\secretset_i$ are specific to $\CTLCsubcontract_i$. 

The timelock $\timelock_i$ denotes the time starting from which subcontract $\CTLCsubcontract_i \in \CTLCcontract$ may be removed from $\CTLCcontract$  given that it is the first element of $\CTLCcontract$ (we say that $\CTLCsubcontract_i$ gets \emph{timed out} in this case). 
Since subcontracts can only be timed out in order, we require that the timelocks of the subcontracts must be strictly increasing. 
The condition $\secretset_i$ is a set of sets of secrets describing different options for claiming $\CTLCsubcontract_i$:
A member $\singlecondition \in \secretset_i$ describes a set of secrets whose knowledge is sufficient for claiming the contract. 

\paragraph{Semantics}
We formally describe the execution of \ctlcs using a small-step semantics, so a relation
$\environmentvec \sstep{\action} \environmentvecprime$ characterizing how a \hbeacronym of \ctlc-supporting \tams evolves from one state (denoted by $\environmentvec$) to another ($\environmentvecprime$) when executing an action $\action$. 
The state $\environmentvec$  thereby is given as a vector of individual \chenvs $\envindex{\tammath}$ per \tamfull $\tammath$, keeping track of the individual stages of the \ctlc execution. Each $\envindex{\tammath}$ corresponds to the state of one of the \tams from the \hbeacronym.
E.g., the components $\envindex{\tammath}.\avFunds$ and $\envindex{\tammath}.\resFunds$ track the funds currently available in $\tammath$ (e.g., owned by a user), or reserved by a \ctlc, respectively.  
The component $\envindex{\tammath}.\enCTLCs$ contains \ctlcs that have been set up (enabled) for execution in the specified \tam.
More concretely, the semantics covers the next stages of \ctlc execution on this \tam: 

(1) Since \ctlcs are used as parts of protocols involving multiple \ctlc instances, which reside in different \chenvs, users will agree on the execution of \ctlcs in batches. 
For initiating protocol execution, users will broadcast their intention to execute the protocol and its specification consisting of a set $\batch$ of \ctlcs across the different \tams. 
(2) Based on the announcement of a \ctlc batch $\batch$, the protocol participants will commit to the secrets used in the \ctlcs $\CTLCcontract \in \batch$. 
(3) Once all participants committed their secrets, the users in the different \chenvs can initiate the pair-wise setup of the individual \ctlcs $\CTLCcontract$ on the respective \tams. 
Importantly, starting from this point, \ctlcs $\CTLCcontract$ are considered local objects residing in a single $\environmenttam$ such that only users of $\tammath$ may interact with them.
(4) Contracts $\CTLCcontract$ previously advertised can be authorized by $\funsender(\CTLCcontract)$ and $\funreceiver(\CTLCcontract)$, effectively marking the $\CTLCcontract$ as authorized in the corresponding \chenvs .
(5) Once authorized by both parties a \ctlc $\CTLCcontract$ can be enabled, effectively marking the \fundstext $\funfund(\CTLCcontract)$ as reserved. 
(6) An enabled contract $\CTLCcontract$ can either be claimed by $\funreceiver(\CTLCcontract)$ (by claiming its top-level subcontract), 
or  $\CTLCcontract$'s subcontracts can be successively timed out 
until $\CTLCcontract$ can finally be refunded to $\funsender(\CTLCcontract)$. 
In both cases, the \fundstext $\funfund(\CTLCcontract)$ are unmarked as reserved and assigned to either $\funreceiver(\CTLCcontract)$ or $\funsender(\CTLCcontract)$.
\begin{figure}
    \begin{equation*}
\scalebox{0.92}{$
        \inference{ \advCTLCcontract \in \environmenttamCTLCAdvertised, \CTLCsubcontract \in \CTLCcontract \in \environmenttamCTLCEnabled, 
        \\ \singlecondition \in \funconditions(\CTLCsubcontract),
         \singlecondition \subseteq \environmenttamRevealedSecrets, 
        \\ \nexists \, \dotCTLCsubcontract \in \advCTLCcontract : \funposition(\dotCTLCsubcontract) < \funposition(\CTLCsubcontract),
        \\ \advCTLCs' := \environmenttamCTLCAdvertised \backslash \{ \CTLCcontract \},
        \enCTLCs' := \environmenttamCTLCEnabled \backslash \{ \CTLCcontract \}, \\
        \avFunds' := \environmenttam.\avFunds \cup \{ (\funreceiver(\CTLCcontract): \funfund(\CTLCcontract)) \}, \\ 
        \resFunds' := \environmenttam.\avFunds \backslash \{ \funfund(\CTLCcontract) \}, \\
         \\ \environmenttam' := \environmenttam[{\advCTLCs} \rightarrow \advCTLCs', \enCTLCs \rightarrow \enCTLCs', \avFunds \rightarrow \avFunds', \resFunds \rightarrow \resFunds'] }
        {\environmentvec \overset{ \DecideCo (\CTLCcontract ,\CTLCsubcontract,  \singlecondition)}{\longrightarrow} \Update{\environmentvec}{\environmenttam}{\environmenttam'}} 
        $}
        \end{equation*}
        \caption{Inference rule for claiming a \ctlc subcontract (slightly simplified).}
        \label{fig:ctlc-claim}
\end{figure}

The small-step relation $\sstep{}$ that formally captures these execution steps, is defined by a set of inference rules. 
An example of such a rule for the \emph{claim} case is provided in~\Cref{fig:ctlc-claim}: 
The rule first checks whether all preconditions for claiming a subcontract $\CTLCsubcontract$ of \ctlc $\CTLCcontract$ with conditions $\singlecondition$ are met, namely that 
(1) $\CTLCsubcontract$ was enabled in $\environmenttam$ ($\CTLCsubcontract \in \CTLCcontract \in \environmenttamCTLCEnabled$)
(2) all secrets as specified by condition $\singlecondition \in \funconditions(\CTLCsubcontract)$ have been revealed in $\environmenttam$ ($\singlecondition \subseteq \environmenttamRevealedSecrets$) and
(3) $\CTLCsubcontract$ is the top-level contract of $\CTLCcontract$ (ensured by checking that there is no other contract $\dotCTLCsubcontract$ according to the original \ctlc advertisement $\advCTLCcontract \in \environmenttamCTLCAdvertised$ occurring in a position before $\CTLCsubcontract$, so $\funposition(\dotCTLCsubcontract) < \funposition(\CTLCsubcontract)$).
Note that the last condition accesses the original advertisement $\advCTLCcontract \in \environmenttamCTLCAdvertised$ since this contains subcontracts $\dotCTLCsubcontract$ of $\CTLCcontract$ that might not have been enabled (and hence not in $\environmenttamCTLCEnabled$) but still need to be timed out before a lower-level subcontract can be claimed.

If all conditions are met, $\environmenttam$ is updated to remove the claimed $\CTLCcontract$ from both $\environmenttamCTLCEnabled$ and $\environmenttamCTLCAdvertised$ (reflecting that those contracts have been resolved) and to assign the contract funds to $\funreceiver(\CTLCcontract)$ (indicated by moving $\funfund(\CTLCcontract)$ from the set of reserved funds $\resFunds$ to the set of available funds $\avFunds$, annotating the new ownership $(\funreceiver(\CTLCcontract): \funfund(\CTLCcontract)$).
The full specification of the semantics is in 
\iffullversion
Appendix \ref{sec:inferenceRules}.
\else 
\cite{extended-version}. 
\fi  

\subsection{Blockchain Execution Model}
The small-step semantics describes the (concurrent) execution of \ctlcs in a \hbeacronym.
However, when defining cryptographic protocols that leverage \ctlcs, the peculiar execution environment of the \tams executing \ctlcs needs to be considered. 
In particular, in such \tams, user actions (e.g., the execution of a transaction in a blockchain) do not happen instantaneously but are subject to interference with a (potentially malicious) scheduler. 
Such a scheduler (e.g., a block builder in a blockchain) learns about the intended actions of honest users \hspace{-1pt}(e.g., \hspace{-1pt}when users submit transactions for inclusion in the blockchain) and can, based on this knowledge, decide on the execution order of actions or insert their own. 
However, \tams (such as blockchains) ensure that a malicious scheduler cannot defer honest user actions indefinitely but provide an eventual inclusion guarantee for these actions.

To reflect this in our formal model, we adopt the approach taken in~\cite{BitML} and model protocols of honest users (given by a set $\honestusers$) as symbolic \emph{strategies}. 
A strategy $\strategy$ is a function operating on sequences $\run = \environmentvec_0 \sstep{\action_0} \environmentvec_1 \sstep{\action_1} \dots \sstep{\action_{n-1}} \environmentvec_n$ of valid transitions according to the small-step semantics, which represent the execution history of the \hbeacronym. 
We will call such sequences \emph{runs}. 
On input of a run $\run$, a strategy $\strategy$ outputs a set of actions $\{ \action'_1, \action'_2, \dots, \action'_m\}$ that the user aims to execute and hence should be appended to $\run$. 

In addition to the strategies $\strategy_{\honestuser_i}$ of honest users $\honestuser_i \in \honestusers$ (modeling the behavior of honest protocol participants), we assume the existence of an adversarial strategy $\advstrategy$ that models the behavior of malicious protocol participants and the malicious scheduler. 
Such a strategy $\advstrategy$, in addition to the run $\run$, gets as input the set $\mempool = \bigcup_{\honestuser_i \in \honestusers}{\strategy_{\honestuser}(\run)}$ of all outputs of honest user strategies and based on that outputs the next action $\alpha$ to append to $\run$. 
$\advstrategy$ is limited to only output actions $\alpha = \advstrategy(\run, \mempool)$ that are valid extensions of $\run$ according to the small-step semantics.
Further, $\advstrategy$ may not schedule any actions $\alpha$ for which honest users have the privilege unless those are included in $\mempool$ (e.g., actions revealing the secrets of honest users).
Finally, $\advstrategy$ may only advance the time $\chtime$ of the \hbeacronym by an offset $\delta$ (indicated by the execution of a dedicated action $\elapse \, \delta$) if all honest users agreed to this by scheduling actions $\elapse \, \delta_i \in \strategy_{\honestuser_i}(\run)$ with $\delta_i \geq \delta$, see 
\iffullversion
Appendix \ref{sec:TimeProgression} 
\else 
\cite{extended-version} 
\fi  
for details. 
This requirement reflects the inclusion guarantees of the \tams that enable honest users to meet deadlines\footnote{It may at first seem like a restriction that honest user actions will be included at the same time as scheduled. However, the attacker can still schedule arbitrarily many actions before the inclusion of an honest user action. This, in particular, models the effects of a malicious miner appending several blocks before including the honest user transaction in a blockchain.}.

We say that a run $\run = \environmentvec_0 \sstep{\action_0} \environmentvec_1 \sstep{\action_1} \dots \sstep{\action_{n-1}} \environmentvec_n$
conforms to $(\strategy_\honestusers, \advstrategy)$ (written $(\strategy_\honestusers, \advstrategy) \conforms \run$)
if $\run$ results from the interactions of the honest user strategies $\strategy_{\honestuser_j} \in \strategy_\honestusers$ with the adversarial strategy $\advstrategy$, meaning that $\action_i = \advstrategy(\run_i, {\mempool_i})$ for $\run_i = \environmentvec_0 \sstep{\action_0} \dots \sstep{\action_{i-1}} \environmentvec_i$ and $\mempool_i =  \bigcup_{\honestuser_j \in \honestusers} \strategy_{\honestuser_j}(\run_i)$ for all $i = 0, \dots, n-1$.
For proving (security) properties for a protocol specified by a set of $\strategy_\honestusers$ of honest user strategies, one needs to consider all runs $\run$ such that $(\strategy_\honestusers, \advstrategy) \conforms \run$ for any adversarial strategy $\advstrategy$.
\vspace{-2pt}

\subsection{\GTree Protocols}
We formally define the strategy $\paramstrategy{\honestuser}{\untree}$ of an honest user $\honestuser$ that sets up and executes an \gtree $\untree$.
To this end, we first specify how to translate an  \gtree $\untree$ to a batch $\batch$ of \ctlcs whose subcontracts represent the \edgestext $\e \in \untree$. 
$\paramstrategy{\honestuser}{\untree}$ will then try to advertise and enable $\batch$, and, if successful, interact with the \ctlcs $\CTLCcontract \in \batch$ according to the way that user $\honestuser$ would interact with the \gtree $\untree$. 

\paragraph{From \GTrees to \ctlcs}
Defining a \gtree protocol requires additional information on how a given \gtree $\untree$ should integrate with the \hbeacronym, e.g., which \edgestext should use which \fundstext of which \tam.
For this reason, we consider tuples of the form $\fulltreeobj$ where $\treeid$ denotes a unique \gtree identifier, 
$\untree$ the \gtree to be executed, $\specalone$ a function mapping \edgestext $\e \in \untree$ to pairs $(\tammath, \fundfull)$ of a \tam $\tammath$ and \fundstext $\fundfull$, and $\inittime$ the time when the \gtree execution should start. 
Note that for $\specalone$ to be valid, we need to assign the same $(\tammath, \fundfull)$ to all \edgestext $\e$, $\e'$ with the same sender and receiver (since those constitute duplicate \edgestext that should be represented by the same \ctlc).

We define the batch $\batch$ of \ctlcs for \gtree $\untree$ as follows:
\begin{definition}[\Gtree to \ctlc conversion]
    Let $\untree$ be an \gtree, $\treeid$ an identifier, $\inittime \in \mathbb{R}$, and $\specalone$ a valid specification. 
\begin{align*}
& \treetoCTLCBadge\fulltreeobj := \\
\{& \CTLCcontract_{(X,Y)} ~|~   \CTLCcontract_{(X,Y)} = [\CTLCsubcontract_{\level_1}, \dots, \CTLCsubcontract_{\level_n}]  ~\land~ \level_1  < \dots < \level_n \\
&\qquad \hspace{-23pt} \land~ \arcpaths_{(X,Y)} = \{ (X, Y)_{\walk} \in \untree \} \\
&\qquad \hspace{-23pt} \land~ \levels_{(X,Y)} = \{ \size{\walk} \hspace{2pt} | \hspace{2pt} (X, Y)_{\walk} \in \arcpaths_{(X,Y)} \} = \{ \level_1, \dots, \level_n \} \\
&\qquad \hspace{-23pt} \land~ x  = (\treeid,  X, Y) ~\land~ \level \in \levels_{(X,Y)} ~\land \e^* \in \arcpaths_{(X,Y)} \\
&\qquad  \hspace{-23pt} \land~ \specalone(\e^*) = (\tammath, \fundfull) \land \CTLCsubcontract_{\level} = (X,Y,\fundfull, \inittime + \level \Delta , \secretset_{\level}) \\
&\qquad \hspace{-23pt} \land~  \secretset_{\level} = \{  \levelsecrets{\e}  ~|~ \e \in \arcpaths_{(X,Y)} ~\land~ \e = (X,Y)_{\walk} \\
&\qquad \hspace{-23pt} \qquad \hspace{-10pt} \land~ \size{\walk} = \level \land \levelsecrets{\e} = \{ \edgesecretfull{\e'}{\treeid} \hspace{2pt} | \hspace{2pt} \e' \in \funonPathtoRoot(\untree, \e) \}  
\} 
\}
\end{align*}
\end{definition}
This definition states that there is a \ctlc $[\CTLCsubcontract_{\level_1}, \dots, \CTLCsubcontract_{\level_n}]$ in $\batch$ 
for each unique sender-receiver pair $(X,Y)$ for which there is an \edgetext $\e^* = (X,Y)_{\walk}$ in $\untree$.
The subcontracts $\CTLCsubcontract_{\level}$ of this \ctlc then correspond to the levels $\level \in \levels_{(X,Y)}$ where \edgestext $(X,Y)_{\walk'}$ occur in $\untree$ and are ordered correspondingly in increasing order (placing the subcontract corresponding to the lowest \gtree level first). 
The subcontract $\CTLCsubcontract_{\level}$ for level $\level$ has timeout $\inittime + \level \Delta$, where $\Delta$ is a sufficient amount of time to execute an action on the specified \tam. The condition $\secretset_{\level}$ contains an element $\levelsecrets{\e}$ for each \edgetext $\e = (X,Y)_{\walk'}$ at level $\level$.
One such element $\levelsecrets{\e}$ contains secrets $\edgesecretfull{\e'}{\treeid}$ for all \edgestext $\e'$ on the path from $\e$ to the root of $\untree$ (denoted by $\funonPathtoRoot(\untree, \e)$).
According to this construction, executing an \edgetext $\e = (X,Y)_{\walk} \in \untree$ corresponds to claiming a subcontract $\CTLCsubcontract_{\size{\walk}}$ with secrets $\levelsecrets{\e}$ (so all secrets from $\e$ to the root). 
This ensures that the subcontract can only be claimed (by $Y$) if (1) all subcontracts corresponding to \edgestext $\e' = (X,Y)_{\walk'}$ on a higher level ($\size{\walk'} < \size{\walk}$) have been timed out before and (2) the secrets for all \edgestext $\e'$ on the path to the root have been revealed (indicating that these \edgestext have been claimed). This realizes the intended game semantics described in~\cref{sec:fromgraphstotreessubsection}.
In particular, this construction gives us a mapping from \edgestext $(X,Y)_{\walk}$ in a \gtree $\untree$ specified by $\fulltreeobj = \treespec$ to a corresponding contract $\CTLCcontract_{(X,Y)}$, subcontract $\CTLCsubcontract_{\size{\walk}} \in \CTLCcontract_{(X,Y)}$ and claiming condition $\levelsecrets{(X,Y)_{\walk}} \in \CTLCsubcontract_{\size{\walk}}.\secretset$ modeling that edge. 
We formally capture this correspondence by a function $\funedgemap(\treespec, (X,Y)_{\walk}) =  (\CTLCcontract_{(X,Y)}, \CTLCsubcontract_{\size{\walk}}, \levelsecrets{(X,Y)_{\walk}})$.

\paragraph{Honest User Protocol}
The honest user protocol (given by a strategy $\paramstrategy{\honestuser}{\untree}$) for executing an \gtree as given by $\fulltreeobj$ specifies how the user behaves in the different phases of the setup and execution of a \ctlc batch $\batch = \treetoCTLCBadge\fulltreeobj$. 

The strategy, in the setup phase, advertises the contract batch and eagerly tries to set up the \ctlcs corresponding to edges $\e \in \untree$. 
This means it advertises, authorizes, and enables those \ctlcs of $\batch$ containing subcontracts representing $\e$. 
Here, it is taken into account that a subcontract $\CTLCsubcontract$ representing $\e$ may only be enabled once all subcontracts representing its ingoing \edgestext have been enabled before (to avoid the loss of funds). 
After the setup phase, $\paramstrategy{\honestuser}{\untree}$ times out and refunds subcontracts representing edges $\e \in \untree$ of $\honestuser$ as soon as the corresponding timeouts are reached.
Further, $\paramstrategy{\honestuser}{\untree}$ claims subcontracts $\CTLCsubcontract$ representing ingoing edges $\e \in \untree$ if their outgoing edge has been pulled before (or there is no such edge). 
To this end, $\paramstrategy{\honestuser}{\untree}$ shares secrets that were revealed in other \tams (for executing outgoing edges there) and reveals the remaining secret of $\honestuser$ to claim $\CTLCsubcontract$, and, once this was successful, claims $\CTLCsubcontract$.
Finally, when all possible actions have been executed, $\paramstrategy{\honestuser}{\untree}$ schedules an $\elapse \, \delta$ to proceed to the next protocol round.

Note that we can easily lift user strategies to operate on sets $\treeobj$ of trees of the form $\fulltreeobj$, given that their identifiers are unique and their funds are disjoint. This enables the secure concurrent execution of multiple \gtree protocols.
A full specification of such generalized strategies $\treestrategy{\honestuser}$ can be found in  
\iffullversion
Appendix \ref{sec:honestStrategy}. 
\else 
\cite{extended-version}.
\fi  

\subsection{Security and Correctness}
We prove that the protocol given by $\treestrategy{\honestuser}$ is secure and correct. 
Intuitively, security in this context means that 
\fundtext transfers observable in the protocol execution can be mapped to a valid \gtree behavior. 
More formally, this is captured by the following theorem: 

\begin{theorem}[Protocol Security] \label{th:protocolsecurity-body}
    Let $B$ be an honest user, $\treeobj$ be a set of tuples of the form $\fulltreeobj$, which is well-formed, and $\Bstrategy$ the honest user strategy for $B$ executing $\treeobj$. Let $\Astrategy$ be an arbitrary adversarial strategy. Then for all final runs $\run$ with $(\Bstrategy , \Astrategy) \conforms \run$, starting from an initial environment, and for all $\treespec = \fulltreeobj \in \treeobj$ there exists $\widetilde{\outcome}_{\treeid} \in \outcomeset{\untree} \cup \{ \emptyset \}$ such that 
        \begin{align*}
              &\forall \treeid, \e \in \widetilde{\outcome}_{\treeid}: B \in \funusers(\e) \\
              & \qquad \Rightarrow \DecideCo(\funedgemap(\treespec, \e)) \in \actions{\run} ~\text{and} \\
              & \forall \DecideCo (\CTLCcontract ,\CTLCsubcontract, \secretsetelem) \in \actions{\run}: B \in \funusers(\CTLCcontract) \\
              & \qquad \Rightarrow \exists \treeid, \e \in  \widetilde{\outcome}_{\treeid}: \funedgemap(\treespec, \e) = (\CTLCcontract ,\CTLCsubcontract, \secretsetelem).
        \end{align*}
    where a run $\run$ is considered \emph{final} if it passed time \linebreak $\textit{max} \{t_0 + \fundepth(\untree) \Delta \vert \treespec \in \treeobj \}$, an \emph{initial} environment is a vector $\environmentvec^0$ where the components of all elements but $\envindex{\tammath}^0.\avFunds$ are empty and $\actions{\run}$ denote the actions appearing in $\run$.
    \end{theorem}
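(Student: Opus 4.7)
The plan is to extract, from the given final run $\run$, a candidate set of edges for each $\treespec = \fulltreeobj \in \treeobj$ and show it lies in $\outcomeset{\untree} \cup \{\emptyset\}$. Concretely, I would define $\widetilde{\outcome}_{\treeid}$ to consist of exactly those edges $\e \in \untree$ for which $\DecideCo(\funedgemap(\treespec, \e))$ appears in $\actions{\run}$. The two displayed bullet points of the theorem would then reduce to verifying that (i) every edge of $B$ included in $\widetilde{\outcome}_{\treeid}$ is reflected by a claim action in $\run$ (immediate by construction), and (ii) every $\DecideCo$ action in $\run$ involving $B$ corresponds to a unique edge in some $\widetilde{\outcome}_{\treeid}$, which follows from the injectivity of $\funedgemap$ on a fixed $\treespec$ together with the well-formedness of $\treeobj$ (disjoint funds and identifiers across distinct trees).

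The heart of the proof is showing $\widetilde{\outcome}_{\treeid} \in \outcomeset{\untree} \cup \{\emptyset\}$, which decomposes into a setup-failure case and a non-trivial case. If the batch $\batch = \treetoCTLCBadge\fulltreeobj$ does not get fully enabled during $\run$, then the honest strategy $\Bstrategy$, which enables outgoing subcontracts only after all prerequisite ingoing subcontracts are enabled, never produces a claim action involving $B$ for this $\treeid$; thus $\widetilde{\outcome}_{\treeid} = \emptyset$ and the statement holds vacuously. In the non-trivial case, I need to verify membership in $\partialtreeoutcomes(\untree)$ and the three outcome predicates $\predNoDupP$, $\predHonestRootP$, and $\predEagerPullP$ for $B$. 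Membership in $\partialtreeoutcomes$ follows because, by the definition of $\treetoCTLCBadge$, the condition $\levelsecrets{\e}$ required to claim the subcontract at level $\size{\walk}$ of $\CTLCcontract_{(X,Y)}$ (for $\e = (X,Y)_{\walk}$) includes secrets for every edge on the path from $\e$ to the root, forcing ancestor claims to be present in $\widetilde{\outcome}_{\treeid}$. $\predNoDupP$ holds because the claim inference rule removes the entire \ctlc from both $\environmenttamCTLCEnabled$ and $\environmenttamCTLCAdvertised$ upon a single claim, so at most one of the subcontracts of $\CTLCcontract_{(X,Y)}$ ends up claimed per $\treeid$. $\predHonestRootP$ is handled by noting that if $B$ is the root, $\Bstrategy$ schedules the claim of each level-$1$ ingoing subcontract immediately after enabling (the needed secrets are $B$'s alone), and since $\run$ is final, the adversary cannot advance time past the corresponding deadline without first executing the scheduled action.

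The main obstacle will be $\predEagerPullP$: if an outgoing edge $(B,X)_{\walk}$ of $B$ is in $\widetilde{\outcome}_{\treeid}$ and $(Y,B)_{\walk'} \in \untree$ with $\walk' \directsucc \walk$, I must exhibit an edge $(Y,B)_{\walk''} \in \widetilde{\outcome}_{\treeid}$ with $\size{\walk''} \leq \size{\walk'}$. The argument interleaves three ingredients. First, the claim of the level-$\size{\walk}$ subcontract of $\CTLCcontract_{(B,X)}$ must have been executed before its timeout $t_0 + \size{\walk}\Delta$ and, by the structure of $\secretset_{\size{\walk}}$, revealed all secrets along $\walk$; in particular, the secret owned by $X$ for $(B,X)_{\walk}$ became public. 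Second, since $B$ owns $\edgesecretfull{(Y,B)_{\walk'}}{\treeid}$, the condition $\levelsecrets{(Y,B)_{\walk'}}$ of the subcontract at level $\size{\walk'} = \size{\walk} + 1$ of $\CTLCcontract_{(Y,B)}$ becomes fulfillable as soon as this claim appears on-chain. Third, the honest strategy $\Bstrategy$ monitors the revealed-secret set and, once the condition is fulfillable, schedules both the reveal of its own secret and the corresponding $\DecideCo$ action; the adversary cannot advance time past $t_0 + \size{\walk'}\Delta = t_0 + \size{\walk}\Delta + \Delta$ without including these scheduled actions (by the $\elapse$ restriction), and the finality of $\run$ forces that deadline to have been reached, so the claim must appear in $\actions{\run}$. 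If instead a shallower duplicate $(Y,B)_{\walk''}$ with $\size{\walk''} < \size{\walk'}$ was already claimed in an earlier round of $\Bstrategy$, then $\Bstrategy$ would not attempt to claim the deeper subcontract since $\CTLCcontract_{(Y,B)}$ is no longer enabled; in either case the required witness lies in $\widetilde{\outcome}_{\treeid}$, completing the verification of $\predEagerPullP$ and hence the theorem.
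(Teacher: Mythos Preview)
Your high-level strategy differs from the paper's, and the difference exposes a genuine gap. You define $\widetilde{\outcome}_{\treeid}$ as the set of \emph{all} edges whose claim action occurs in $\run$. Your argument that this lies in $\partialtreeoutcomes(\untree)$ rests on the claim that revealing the secrets in $\levelsecrets{\e}$ ``forces ancestor claims to be present.'' This is false: the adversary may control every intermediate node on the path from $\e$ to the root, reveal all their edge secrets directly (the $\revealSecret$ rule has no precondition tying it to a prior claim), and then claim only $\e$ without ever executing $\DecideCo$ for the ancestor edges. Your $\widetilde{\outcome}_{\treeid}$ would then contain $\e$ but not its ancestors, so it is not the edge set of any partial tree and hence not in $\partialtreeoutcomes(\untree)$. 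The paper avoids this by working only with edges involving $B$ (a set $\partialEx$), proving \emph{that} set is ``consistent'' via a large inductive argument, and then defining $\widetilde{\outcome}_{\treeid}$ as the closure of $\partialEx$ under $\funonPathtoRoot$; intermediate adversary edges are added artificially, which is harmless because the theorem only constrains edges with $B \in \funusers(\e)$.

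Your treatment of $\predEagerPullP$ also underestimates the difficulty. You assert that the claim of $(B,X)_{\walk}$ ``must have been executed before its timeout $t_0 + \size{\walk}\Delta$,'' but the $\DecideCo$ rule carries no time guard; what prevents a late claim is that $B$'s strategy schedules a $\timeoutsc$ (or $\refund$) action at $t_0 + \size{\walk}\Delta$, and one must show this races correctly against the adversary's ability to reorder. More seriously, your argument that $B$ can then claim $(Y,B)_{\walk'}$ before $t_0 + \size{\walk'}\Delta$ presupposes the ingoing subcontract is still enabled and is the top-level subcontract at that moment. Establishing this requires tracking, throughout the run, the interaction between $B$'s enabling order, the adversary's front-running of claims (the paper explicitly notes the scheduler can execute a deeper outgoing claim \emph{before} the shallower ingoing claim $B$ just scheduled), and the timeout discipline. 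The paper handles this via an induction on run length maintaining ten coupled invariants (in particular $Inv_{\text{setup}}$, $Inv_{\text{levels}}$, $Inv_{\text{liveness}}$, $Inv_{\text{init-liveness}}$), and the $\elapse$ case of that induction is where the ``$B$ would schedule a non-$\elapse$ action'' reasoning you sketch actually lives. Your proposal collapses this machinery into a single paragraph; to make it go through you would need to reconstruct essentially the same invariant system.
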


The theorem states that for every protocol run $\run$ of $\Bstrategy$ passing time $\textit{max} \{t_0 + \fundepth(\untree) \Delta \vert \treespec \in \treeobj \}$ and interacting with an arbitrary adversary, there is an outcome $\widetilde{\outcome}_\treeid \in \outcomeset{\untree} \cup \{ \emptyset \}$ for every \gtree $\untree$ in $\treeobj$ such that these $\widetilde{\outcome}_\treeid$ correspond exactly to all \ctlcs that have been claimed during $\run$. 
In particular, this excludes that there are funds claimed by the attacker in a way that is not covered by $\outcomeset{\untree}$ for the \gtrees $\untree$ in $\treeobj$. 
The case $\widetilde{\outcome}_{\treeid} = \emptyset$ covers the case where $\honestuser$ may be the root node of $\untree$ but still their ingoing \edgestext are not executed because the adversary caused the \gtree set up to fail. 
However, in this case, the security statement ensures that no \fundstext were transferred between users, and hence $\honestuser$ did not lose money. 
In particular, together with~\cref{thm:unfolding-security}, we immediately obtain end-to-end security for \gtrees $\untree = \fununfold(\graphsymbol, A)$ resulting from the unfolding of an \insemi{} \atgacronym $\graphsymbol$ as we know that a user $\honestuser$, cannot be underwater in any of its outcomes $\outcome \in \outcomeset{\untree}$.
The formal proof of~\cref{th:protocolsecurity-body} and the one of an end-to-end security statement can be found in 
\iffullversion
Appendix \ref{sec:protocolSecurity}. 
\else 
\cite{extended-version}.
\fi  

Similar to security, we can show that the protocol $\treestrategy{\honestuser_i}$ faithfully executes the tree given that all users $\honestuser_i$ involved in $\treeobj$ follow the protocol. 
Formally, this is captured by the following theorem: 

\begin{theorem}[Protocol Correctness]
    \label{th:protocolcorrectness-body}
    Let $\honestusers =  \{ \honestuser_1, \dots, \honestuser_k\}$ be a set of honest users, $\treeobj$\hspace{-2pt} a well-formed set of tuples of the form $\fulltreeobj$ and $\funusers(\treeobj) \hspace{-1pt} \subseteq \hspace{-1pt} \honestusers$, $\honestuserstrategies^{\treeobj} \hspace{-2pt} = \hspace{-2pt} \{ \treestrategy{\honestuser_1}, \dots, \treestrategy{\honestuser{_k}} \}$ a set of honest user strategies for $\treeobj$, $\Astrategy$ an arbitrary adversarial strategy (for $\honestusers$). 
    Let \hspace{-2pt}$\run$\hspace{-2pt} with \hspace{-3pt}$(\honestuserstrategies^{\treeobj}, \Astrategy) \hspace{-4pt} \results \hspace{-4pt} \run$\hspace{-2pt} be a final run starting from an initial configuration $\environmentvec_0$ that is liquid w.r.t. $\treeobj$. 
    Further, let $\environmentvec_0.t < \textit{min} \{ t_0 - \fundepth(\untree)\Delta \vert \treespec \in \treeobj \}$ then for all $\treespec \in \treeobj$ there exists 
    $\outcome^*_{\treeid} \in \bigcap_{\honestuser \in \honestusers} \outcomeset{\untree}$ such that 
    \begin{align*}
     &  \forall \treeid, \e \in \outcome^*_{\treeid} \Rightarrow \DecideCo(\funedgemap(\treespec, \e))  \in \actions{\run} ~\text{and} \\
     & \forall \DecideCo (\CTLCcontract ,\CTLCsubcontract, \secretsetelem) \in \actions{\run}: \funusers(\CTLCcontract) \cap \honestusers \neq \emptyset \\
     & \qquad \Rightarrow \exists \treeid, \e \in \outcome^*_{\treeid}: \funedgemap(\treespec, \e) = (\CTLCcontract ,\CTLCsubcontract, \secretsetelem).
    \end{align*}
    Where $\environmentvec_0$ is liquid w.r.t. $\treeobj$ requires that all funds as specified in $\treeobj$ are present in $\environmentvec_0$.
\end{theorem}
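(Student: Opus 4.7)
The plan is to prove this correctness statement in two phases that mirror the structure of $\treestrategy{\honestuser}$: first argue that the setup phase (advertise/authorize/enable) completes successfully for every \ctlc in $\batch = \treetoCTLCBadge(\treespec)$, and then argue that the execution phase produces exactly the edge set of a common outcome $\outcome^*_{\treeid}$ lying in $\bigcap_{\honestuser \in \honestusers} \outcomeset{\untree}$. Since the adversary now only controls scheduling (all protocol participants are honest and belong to $\honestusers$), the scheduling restrictions in the execution model together with the time bound $\environmentvec_0.t < \textit{min}\{t_0 - \fundepth(\untree)\Delta \mid \treespec \in \treeobj\}$ and the liquidity of $\environmentvec_0$ give enough slack to complete setup before any timelock $t_0 + \level\Delta$ is reached.

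For the setup phase I would show, by induction on the partial order of edges from leaves to root within each $\untree$, that every honest sender/receiver pair eventually advertises, authorizes, and enables the \ctlc containing the subcontracts for their shared arc. The base case (leaf edges, whose senders have no ingoing edges in the \gtree) is immediate since the eager strategy has no prerequisites to wait on, and liquidity guarantees that $\funfund(\CTLCcontract)$ is available. The inductive step uses the fact that a honest sender only enables a subcontract after all subcontracts representing ingoing edges of that sender have been enabled, which is exactly what the induction hypothesis provides. Because $\Astrategy$ cannot postpone an honest $\elapse\,\delta$ beyond what every honest user scheduled, and every honest user delays progression until all enabling actions are in the mempool, setup finishes strictly before time $t_0$.

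For the execution phase I would perform a second induction, this time top-down on the level $\level$ of edges in $\untree$. Level-$1$ edges terminate at the tree root, who has no outgoing edges, so by $\paramstrategy{\honestuser}{\untree}$ the root immediately reveals its secret and claims all such subcontracts using condition $\levelsecrets{(X,\honestuser)_{\walk}}$. For the inductive step, once every level-$\level$ edge incident to an honest receiver has been claimed, the corresponding secrets $\edgesecretfull{\e'}{\treeid}$ are public; then the honest senders of those edges, whose outgoing edges are now pulled, eagerly claim their level-$(\level{+}1)$ ingoing subcontracts, shares the required secrets across \tams, and reveals their own. Meanwhile, duplicate subcontracts on lower levels get timed out by the honest party playing the receiver role, since after a successful claim no further subcontract in that \ctlc can be claimed. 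Defining $\outcome^*_{\treeid}$ to be exactly the set of edges $\e$ whose associated $\funedgemap(\treespec,\e)$ appears in $\actions{\run}$ gives the two inclusions required by the theorem by construction.

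It remains to check that $\outcome^*_{\treeid} \in \bigcap_{\honestuser \in \honestusers} \outcomeset{\untree}$, i.e.\ that it satisfies $\predNoDupP$, $\predHonestRootP$, and $\predEagerPullP$ for every honest user. The first holds because once a \ctlc is claimed it is removed from $\enCTLCs$ and $\advCTLCs$, so no second subcontract of the same $\CTLCcontract_{(X,Y)}$ can be claimed; the second holds by the base case of the downward induction applied to the root; the third is exactly the content of the inductive step, which shows that whenever an outgoing edge of $\honestuser$ is claimed, $\honestuser$ subsequently claims all ingoing edges at the next level before their timelocks expire. The main obstacle I anticipate is the bookkeeping for the third predicate across multiple \tams: one must ensure that the secrets revealed by claims on one \tam reach the honest receiver in time to schedule the corresponding claim on another \tam before $t_0 + \level\Delta$. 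This is where the choice of $\Delta$ as a per-step slack and the fact that $\Astrategy$ cannot defer an honest $\elapse$ past the smallest scheduled offset become essential; I would isolate this as a timing lemma stating that if a secret is revealed at time $t$, any honest claim depending on it is included by time $t + \Delta$, and then apply this lemma along each root-to-leaf path.
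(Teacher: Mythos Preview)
Your proposal is correct in spirit and defines $\outcome^*_{\treeid}$ exactly as the paper does, but the route you take for the execution phase differs from the paper's. You propose a fresh top-down induction on tree levels to show that every edge gets claimed in order, together with a dedicated timing lemma for cross-\tam secret propagation. The paper instead bootstraps almost everything from the already-proven Protocol \emph{Security} result (\cref{th:protocolsecurity-body}): for each honest $\honestuser_i$ it obtains an individual $\widetilde{\outcome}_{\honestuser_i} \in \outcomesetvar{\honestuser_i}{\untree}$, and then argues that the edges of $\outcome^*_{\treeid}$ involving $\honestuser_i$ coincide with those of $\widetilde{\outcome}_{\honestuser_i}$. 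Properties $\predNoDupP$, $\predEagerPullP$, and closure under paths to the root then follow immediately from the corresponding properties of each $\widetilde{\outcome}_{\honestuser_i}$, without any new level-by-level induction. Only $\predHonestRootP$ requires a separate argument, and that is precisely where the paper invokes Setup Correctness (your first phase) to show that the root's level-$1$ subcontracts are enabled in time and hence claimable. Your direct induction would work and is more self-contained, but it duplicates effort already absorbed into the security invariants; the paper's reuse of Protocol Security is considerably shorter and avoids the delicate timing bookkeeping you anticipated as the main obstacle.
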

The theorem states that for every final protocol run $\run$ with users behaving honestly (following their corresponding $\treestrategy{\honestuser_i}$ strategies) and that started sufficiently early to complete the setup (so before $\textit{min} \{ t_0 - \fundepth(\untree)\Delta \vert \treespec \in \treeobj \}$) and in a state were all funds to be used by the protocol are available,
there is an outcome $\outcome^*_{\treeid} \in \bigcap_{\honestuser \in \honestusers} \outcomeset{\untree}$ for every tree $\untree$ in $\treeobj$ such that these $\outcome^*_{\treeid}$ correspond exactly to all \ctlcs that have been claimed during $\run$. 
Intuitively, such outcomes $\outcome^*_{\treeid}$ represent the protocol outcomes for tree $\untree$ with identifier $\treeid$ that all honest users agree upon.
In the case of $\untree = \fununfold(\graphsymbol, A)$, $\outcome^*_{\treeid}$ corresponds to the execution of the whole graph $\graphsymbol$ (as proven in~\Cref{theorem:correctness-graph-tree}), which gives us the end-to-end correctness guarantee that if all preconditions of~\Cref{th:protocolcorrectness-body} are met, the protocol will execute all transfers described by $\graphsymbol$.
A formal proof of this statement is given in 
\iffullversion
Appendix \ref{sec:protocolSecurity}. 
\else 
\cite{extended-version}.
\fi  

\paragraph{Discussion} The chosen model for \ctlc-supporting \tams abstracts from several real-world aspects: 
(1) it features a symbolic model of cryptography and
(2) it assumes synchronized time across the different \tams.
We opted for those abstractions since (1) the underlying cryptographic building blocks are simple, and the main complexity arises from the interactions with the reordering scheduler. Further, we closely follow the symbolic model presented in~\cite{BitML}, which has been proven sound w.r.t. to a computational execution model of Bitcoin.
Concerning (2), in practice, different \tams may feature different inclusion time guarantees, which may require adjusting timeouts to be in line with the respective guarantees. 
This, however, could be realized with simple \tam-specific time conversions (when assuming synchronized user clocks, a standard assumption in the analysis of blockchain protocols).

The conducted simplifications allow us to focus on the main aspects of the proof, namely on showing how an honest user can achieve reliable security guarantees in a highly distributed adversarial \hbeacronym where the user has only access to limited information. 
Here, a particular challenge lies in accounting for the impact of a malicious scheduler on the \CTLC{} execution. 
E.g., it needs to be considered that the execution order as dictated by the \gtree cannot always be upheld while running the protocol: 
While intuitively, at least those \edgestext involving $\honestuser$ should only be executed in order (an edge $\e_1$ on the path from edge $\e_2$ to the root in $\untree$ should be executed before $\e_2$), even this invariant can be violated during protocol execution: 
After submitting a transaction pulling an ingoing edge $\e_1 = (A_1, \honestuser)_{\walk_1}$ (corresponding to $\treestrategy{\honestuser}$ outputting a $\DecideCo$ action $\action_1$) a malicious scheduler controlling users $A_1, \dots, A_n$ on the path between $\e_2 = (\honestuser, A_n)_{\walk_2}$ and $e_1$ could publish the transaction pulling $\e_2$ (corresponding to $\advstrategy$ outputting a corresponding $\DecideCo$ action $\action_2$) before $\action_1$.
It hence needs to be proven that a secure state (including both $\action_1$ and $\action_2$) will eventually be reestablished in such a case.

\section{\CTLC{} Implementations in Existing \tams}
\label{sec:ctlc-impl}
\label{sec:ctlc-impl-app}

Our \ctlc{} protocol builds on \tams as an abstraction layer of the concrete payment channel or blockchain. In this section, we describe how to instantiate \ctlc{}s over concrete \tams.

\subsection{\CTLC{} Implementation in Blockchains}
A blockchain with support for complex smart contracts (e.g., Ethereum) permits implementing \CTLC{}s directly as such smart contracts. 
However, \CTLC{}s do not require the full expressiveness of a Turing complete scripting language but can be constructed from basic primitives for 
(i) transaction authorization based on digital signatures
and
(ii) timelock checks.
Consequently, \CTLC{}s can also be implemented in many existing blockchains, including Bitcoin, which do not support expressive smart contracts, but only offer (i) and (ii).
In the following, we present concrete implementations of the \CTLC{} for a blockchain with complex smart contract support (e.g., Ethereum) and for a blockchain with simple payment conditions (e.g., Bitcoin), using adaptor signatures~\cite{AumayrEEFHMMR21}.  

\paragraph{\CTLC{} Implementation in Ethereum}
We provide in~\cref{fig:ctlc-sc} an excerpt of \actlc, the Ethereum-based implementation of the \CTLC{} between two users called party and counterparty in this case. 
When a \CTLC{} is created, the first subcontract is enabled (i.e., by having the variable \textit{height} pointing to it). 
The contract permits the counterparty to claim the current subcontract by providing the secrets for the corresponding conditions (i.e., claim function). 
Alternatively, the party can disable the current subcontract when the corresponding timeout expires, thereby enabling the next subcontract (i.e., disableSubcontract function). 
Finally, the party can get refunded if the last subcontract is enabled and its corresponding timeout is expired (i.e., refund function). The source code is made available at~\cite{CTLC-implementation}.

As opposed to Ethereum, Bitcoin only supports a simplistic (non-Turing-complete) scripting language, Bitcoin Script, to formulate payment conditions. Bitcoin Script is still sufficient to encode \CTLC{}s when combing multiple transactions with custom payment conditions, as we demonstrate in 
\iffullversion
Appendix \ref{sec:CTLC-in-BitML}. 
\else 
\cite{extended-version}.
\fi  
The Bitcoin-Script-based encoding of \CTLC{}s is structurally similar to the adaptor-signature-based one that we discuss in the following paragraph.

\begin{figure}[t]
    \begin{lstlisting}[language=Solidity,numbers=none]
    contract ethCTLC {
        
    address payable party; // sender 
    address payable counterparty; // receiver 
    uint[] timelock; // One per subcontract 
    bytes32[][][] conditions; // One per (subcontract, path)
    uint height = 0; // pointer to current subcontract 
    
    function claim(uint i, uint j, bytes [] memory secrets) { 
        require(msg.sender == counterparty); 
        require (i == height); 
        require (secrets.length == conditions[i][j].length);
        for (uint k = 0; k < secrets.length; k++)
            require (H(secrets[k]) == conditions[i][j][k]); 
        counterparty.transfer(address(this).balance);}
    
    function disableSubcontract (uint i) {
        require(msg.sender == party);
        require (i == height); 
        require(block.number >= timelock[i]); 
        height ++; }
    
    function refund () { 
        require(msg.sender == party);
        require(height == timelock.length -1); 
        require (block.number >= timelock[height]); 
        party.transfer(address(this).balance);}}
    \end{lstlisting}
    \caption{Pseudocode for \CTLC{} smart contract.\label{fig:ctlc-sc}}
    \end{figure}

\paragraph{CTLC Implementation Using Adaptor Signatures}
Next, we describe a \CTLC{} implementation based on adaptor signatures~\cite{AumayrEEFHMMR21}, compatible with 
\tams that do not offer complex smart contract support, 
but instead offer a functionality restricted to 
(i) authorization of transactions based on a digital signature; and (ii) timelock functionality.
 
Interestingly, since (i) and (ii) are provided by virtually every existing blockchain (including Bitcoin), they all already support an adaptor signature-based \CTLC{} implementation, thereby bringing practical benefits such as reducing transaction fees since verifying a digital signature or checking a timelock is cheaper than executing a full-fledged smart contract. 

An adaptor signature enables the creation of a digital signature on a transaction conditioned on the knowledge of a cryptographic secret. 
Imagine that users $A$ and $B$ have a shared account $\textit{vk}_{AB}$ (i.e., a shared public key) with $\alpha$ coins. 
Then, $A$ and $B$ can jointly \emph{pre-sign} a transaction $m$ spending from $\textit{vk}_{AB}$ with respect to a condition $Y$. 
Afterward, each user on their own can \emph{adapt} the pre-signature into a valid signature using the secret $s$ corresponding to $Y$. 

An adaptor signature-based \CTLC{} is shown in~\cref{fig:ctlc-adaptor}. 
Rounded boxes denote transactions, squared boxes within represent \fundstext. 
The diamond represents choices for transferring \fundstext, denoted by arrows. 
The text over an arrow denotes the required conditions to take this choice, namely (i) $\textit{vk}_{AB}, \vec{Y}_i$ requires a signature from both $A$ and $B$ and the secrets for conditions $\vec{Y}_i$; (ii) $t_i$ requires that time  $t_i$ has passed. 
Intuitively, each transaction corresponds to a subcontract where (i) encodes claiming it and (ii) encodes disabling it.\footnote{For simplicity, we show a single claim option per subcontract. If needed (e.g., when there are duplicated \edgestext on the same level in a \gtree), additional claim operations can be encoded by adding spending conditions of the type ($\textit{vk}_{AB}, \vec{Y}_i$) to the corresponding transaction.}  

More concretely, to setup such a \CTLC{}, for each transaction $\textit{tx}_i$ both parties (i) create a pre-signature $\hat{\sigma}_i$ with respect to conditions $\vec{Y}_i$ to spend the asset to $B$; and (ii) sign the transfer of the asset to $\textit{tx}_{i+1}$. 
One technical subtlety here is that adaptor signatures support pre-signatures with respect to a single condition $Y$ whereas \CTLC{}s require several conditions $\vec{Y}_i$ per pre-signature. Fortunately, as shown in~\cite{TairiMS23}, one can securely merge $\vec{Y}_i$ conditions into $Y^*$ and several secrets $\vec{s}_i$ into $s^*$ such that $s^*$ is a valid secret for $Y^*$ iff secrets $\vec{s}_i$ are the valid ones for the  conditions $\vec{Y}_i$.

After all pre-signatures and signatures are created and verified by both parties, the \CTLC{} is set up. 
To execute, $B$ can \emph{adapt} the pre-signature on $\textit{tx}_1$ into a valid signature with secrets $\vec{s}_1$, thereby getting the \fundstext and finalizing the \CTLC{} contract. 
Otherwise, after $t_1$ expires, $A$ can use the signature on $\textit{tx}_1$ to transfer the \fundstext to $\textit{tx}_2$, effectively enabling the next subcontract.  
This process is repeated until (1) at any step $i$, $B$ \emph{adapts} the pre-signature on $\textit{tx}_i$ using the secrets $\vec{s}_i$; 
or (2) $A$ gets the \fundstext back after $t_n$ expires. 

\begin{figure}[tb]
    \centering
    \includegraphics[width=\columnwidth]{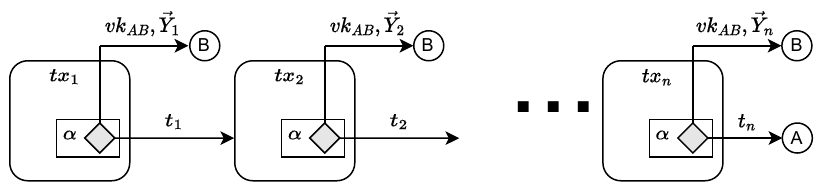}
    \caption{CTLC implementation in UTXO cryptocurrencies. 
    \label{fig:ctlc-adaptor}}
\end{figure}

\subsection{\CTLC{} Implementation in Payment Channels}

A payment channel (PC) allows two users to securely perform an arbitrary amount of instantaneous transactions between each other while including only two transactions on the blockchain: one for opening the PC and one to close it. 
To open a PC, two users publish a transaction that locks their \fundstext into a shared account (e.g., $\textit{vk}_{AB}$).  
Afterward, both users have the guarantee that \fundstext in $\textit{vk}_{AB}$ can only be transferred through (possibly many) transactions that they jointly sign and invalidate. 
Such set of transactions is called a PC state. 
The PC lifetime ends when one of the users publishes the last state on the blockchain. 
Moreover, if one of the users publishes a previously invalidated state, the counterparty can get all PC \fundstext. 

Next we describe how to implement \CTLC{} in a PC. 
Assume that users $A$ and $B$ have already opened a PC with $\alpha$ coins in it.  
Then, it is possible to lift the implementation of a \CTLC{} based on adaptor signatures (c.f.~\cref{fig:ctlc-adaptor}) to the PC setting. 
For that, both $A$ and $B$ compute a new PC state containing all transactions shown in~\cref{fig:ctlc-adaptor} along with a set of pre-signatures and signatures for each transaction.  
At that point, both users have the guarantee that they can execute the \CTLC{} in the PC. 
For instance, assume that timeout $t_1$ is expired, then $A$ can either (i) agree with $B$ to create another PC state that considers only transactions from $\textit{tx}_2$ to $\textit{tx}_n$; or (ii) publish $\textit{tx}_2$ in the blockchain, effectively closing the PC at the current state.  
In the latter case, the execution of the remaining \CTLC{} stays as described earlier. 

\section{Applications}
\label{sec:applications}

\begin{table}
    \caption{Summary of existing blockchain protocols for different applications and \tamsfull (\tams). Dotted arcs denote blockchain transfers, others are transfers on a PC. Abbreviations: Turing Complete (TC), Digital Signatures (DS), Trusted Hardware (TH). \label{table:summary-protocols}}
    \centering
    {\small
    \begin{tabular*}{\columnwidth}{c | p{2.8cm} | c}
        \hline
        \textbf{Applications} & \textbf{\Tams} & \textbf{\atgsacronym}\\
        \hline
        \multirow{ 2}{*}{$\begin{array}{c}
             \text{Multi-hop}  \\
             \text{Payments} 
        \end{array}$} & TC~\cite{state-channel-networks}, HTLC~\cite{lightning-network,fulgor} & \multirow{ 2}{*}{\scalebox{0.75}{%
            $\xymatrix @R=1pc @C=1.5pc{\circ \ar[r] & \circ \ar[r] & \circ \ar[r] &\circ}$
            }
            }\\
            &DS~\cite{multi-hop-locks}, TH~\cite{teechain} &\\
        \hline
        \multirow{ 2}{*}{Rebalancing} & TC~\cite{revive,hide-and-seek,cycle,shaduf}, & \multirow{ 2}{*}{\scalebox{0.75}{%
            $\xymatrix @R=0.3pc @C=1.5pc{\circ \ar[rd] & \circ \ar[l] &\circ \ar[l] &\circ \ar[l] \\
                                        & \circ \ar[r] & \circ \ar[ru] & }$
            }
            }\\
        & DS~\cite{amcu,zeta-rebalancing} &\\
        \hline
        \multirow{ 2}{*}{Loop-in} & \multirow{ 2}{*}{HTLC~\cite{loopin}} & \multirow{ 2}{*}{\scalebox{0.85}{%
            $\xymatrix @R=0.3pc @C=1.5pc{\circ \ar@{.>}[rd] &\circ \ar[l] &\circ \ar[l] \\
                                         & \circ \ar@{.>}[ru] & }$
            }
            }\\
            && \\
        \hline
        Atomic Multi-Path &HTLC~\cite{atomic-multipath-payments,split-payments, boomerang}, & \multirow{ 2}{*}{\scalebox{0.75}{%
            $\xymatrix @R=0.3pc @C=1.5pc{\circ \ar[r] \ar[rd] & \circ \ar[r] & \circ \ar[r] &\circ \\
                                        & \circ \ar[r] & \circ \ar[ru] & }$
            }
            }\\
        Payments & DS~\cite{cryptomaze} &\\
        \hline
        \multirow{ 2}{*}{Crowdfunding} & TC~\cite{pathshuffle}, & \multirow{ 2}{*}{\scalebox{0.85}{%
            $\xymatrix @R=0.3pc @C=1.5pc{\circ \ar[r] &\circ \ar[r] &\circ \\
                                         \circ \ar[r] & \circ \ar[ru] & }$
            }}\\
            & DS~\cite{amcu-crowdfunding}& \\
        \hline
        \multirow{ 2}{*}{$\begin{array}{c}
             \text{2-party}  \\
             \text{Atomic Swaps} 
        \end{array}$} & TC~\cite{p2dex}, HTLC~\cite{bitcoin-wiki-htlc},  & 
            \multirow{ 2}{*}{\scalebox{0.85}{%
            $\xymatrix@C=3.5pc{\circ \ar@/^0.5pc/@{.>}[r] & \circ \ar@/^0.5pc/@{.>}[l]}$
            }
            }\\
                             
            &DS~\cite{thyagarajan2022universal}, TH~\cite{BentovJ0BDJ19} &\\ 
        \hline 
        \multirow{ 3}{*}{$\begin{array}{c}
             \text{n-party}  \\
             \text{Atomic Swaps} 
        \end{array}$} & \multirow{ 3}{*}{TC~\cite{atomic-swaps,imoto2023atomic}} &  \multirow{ 2}{*}{\scalebox{0.85}{%
        \hspace{-13pt}$\xymatrix @R=1pc @C=1.5pc{\textit{n=3:} &\circ \ar@/^0.25pc/@{.>}[rd] \ar@/_0.25pc/@{.>}[ld] &\\
                             \circ \ar@/^0.25pc/@{.>}[rr] \ar@/_0.25pc/@{.>}[ru] & & \circ \ar@/^0.25pc/@{.>}[ll] \ar@/^0.25pc/@{.>}[lu]}$}}\\
        & & \\
        && \\
        \hline
    \end{tabular*}
    }
\end{table}

In this section, we overview real-world applications that can be captured in terms of \atgacronym specifications, and that, hence, can be generated using our framework. 
\cref{table:summary-protocols} summarizes existing protocols from the literature for such applications.
These protocols are application-specific and tailored to the capabilities of the underlying \tams (e.g., the support of Turing-complete smart contract languages), and consequently feature custom security and correctness notions and proofs. 

However, the security and correctness notions of these protocols coincide with the \atgacronym-based ones for the \atgsacronym depicted in the right-most column of~\cref{table:summary-protocols}.
Consequently, using our framework, we can generate protocols that provide the same guarantees by design, and that can be adapted to the desired use case by choosing an adequate \CTLC{} instantiation as discussed in~\Cref{sec:ctlc-impl}.
By instantiating with a fitting \CTLC{}, we do not only match the custom protocols' security, but we also arrive at protocols whose \tam-interactions coincide with the ones of the protocols in~\cref{table:summary-protocols}.
In particular, this means that our protocols do not incur any substantial overhead (e.g., in terms of additional transactions or execution steps to be conducted on a blockchain) w.r.t. the custom protocols. 

To illustrate how the protocols from~\cref{table:summary-protocols} can be specified in terms of \atgsacronym, we shortly overview their design goals: 

\paragraph{Multi-hop Payments in Payment-Channel Networks~\cite{multi-hop-locks}} 
As described in~\cref{sec:intro}, in payment channel networks (PCNs), multi-hop payments are realized by atomically executing (off-chain) transfers over bilateral payment channels (PCs) along a linear payment graph. 

\paragraph{Rebalancing in PCN~\cite{zeta-rebalancing}} 
Since a bilateral off-chain transfer in a PC redistributes the ownership of the PC \fundstext, the number of off-chain transfers is limited by the fraction of the coins owned by the transfer sender
(also called the sender’s balance in the PC). 
This limitation can be addressed with rebalancing protocols.
These protocols restore the payment capabilities of potential senders (S) on a PC pc by increasing their balance on pc through a cyclic (zero-sum) atomic multi-hop payment where S acts as a receiver (on pc)~\cite{zeta-rebalancing}. This behavior can be captured by a cyclic \atgacronym.

\paragraph{Loop-in~\cite{loopin}} 
Depending on the distribution of funds within the PCN, rebalancing may not always be possible. 
In such cases, so-called loop-in protocols~\cite{loopin} allow for the atomic integration of on-chain transactions into off-chain payment channels to provide the required funds. 

\paragraph{Atomic Multi-path Payments~\cite{atomic-multipath-payments}} 
Atomic Multi-path Payment protocols~\cite{atomic-multipath-payments} allow for the atomic execution of multiple linear multi-hop payments between the same sender and receiver. This is of practical relevance if there exists no single payment path with sufficient capacity to carry out a multi-hop payment. 
In this case the payment can be split across several paths, which need to be executed atomically.

\paragraph{Crowdfunding~\cite{amcu-crowdfunding}} 
Combining atomic multi-path payments from different senders enables crowdfunding applications where multiple users jointly pay a single user.
Atomicity here denotes that either all users contribute the pre-agreed amount to the crowdfunded user or 
all the funders get their coins back. 

\paragraph{Atomic Swaps~\cite{htlc-bip,atomic-swaps}} 
Two-party atomic swaps realize the atomic exchange of \fundstext that users hold on different blockchains. 
A generalization of this application is an $n$-party atomic swap~\cite{atomic-swaps}, where $n$ parties holding \fundstext at different blockchains perform a multilateral exchange. 

\paragraph{Limitations}
Applications that demand properties outside the scope of \atgacronym specifications cannot be generated using our framework. 
This applies e.g., to applications 
(i) whose requirements are not sufficiently characterized by the security and correctness guarantees discussed in~\Cref{cha:trees,sec:CTLCs};
(ii) whose security and functionality goals could only be captured with an \atgacronym that is not \insemi.

An example that falls in both categories is coin-mixing services~\cite{castejon2025mixbuy}.
Coin-mixing services enable multiple concurrent payments (involving different senders and receivers) that are routed through a single intermediary, a so-called mixer. 
The goal of these services is to hide from any third parties (including the mixer) which sender paid which receiver, a property usually called unlinkability.

It would be possible to represent such coin-mixing services as a graph-like structure to express that such services should ensure the atomic execution of all concurrent payments. 
However, satisfying this security and correctness notion alone falls short of ensuring the service's crucial unlinkability property.
Indeed, the construction presented in this paper excludes any unlinkability notion by design since the full \atgacronym structure needs to be known by all participants in order to construct the \gtree and set up \ctlcs accordingly.

Further, a coin-mixing service with multiple receivers would need to be described by an \atgacronym that is not \insemi, since there is no single node reachable from all receivers.

For applications that only suffer from restriction (ii), this restriction can often be lifted by adding cyclic payments of arbitrarily small amounts to the application to make their \atgacronym specification \insemi. 
We discuss this general strategy in more detail in the following paragraph.

\paragraph{Composing Applications} 
\begin{figure}[t]
    \centering
    \includegraphics[width=\columnwidth]{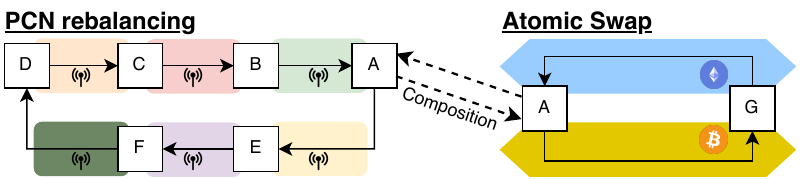}
    \caption{Composition of PCN rebalancing and atomic swap.  
    \label{fig:composition}}
\end{figure}
The applicability of our framework can be extended to applications that require more than a single \atgacronym by adding cyclic transfers that do not incur any financial harm on users.
Consider the scenario where two applications expressed as independent in-semiconnected graphs should be executed atomically. 
For example, in~\cref{fig:composition}, assume that user $A$ is interested in atomically executing a PCN rebalancing and an atomic swap where they are involved. 
The union of the corresponding \atgsacronym would not be in-semiconnected and hence out of scope of our framework. 
This issue can be solved by adding a cycle payment between the nodes of $A$ in these two graphs, see \cref{fig:composition}.
By adding such a payment, $A$ does not incur any loss as long as the \fundstext that are sent and received are the same. 
Moreover, the amount can be arbitrarily small (e.g., the smallest amount supported by the underlying \tam) since it is independent of the other applications. 

We can formally capture this observation as a general composition result: 
Assume that $\graphsymbol_1$ is an \atgacronym \insemi in node $A_1$ and $\graphsymbol_2$ is an \atgacronym \insemi in node $A_2$. 
Then the union of $\graphsymbol_1$  and $\graphsymbol_2$ is \insemi in $A_1$ and $A_2$ when adding the edges $(A_1, A_2)$ and $(A_2, A_1)$ to it.
A formal proof for this statement is given in 
\iffullversion
\cref{th:composingtwoinsemigraphs}. 
\else 
\cite{extended-version}.
\fi  

\section{Comparison with Related Work}
\label{sec:related-work}

Existing protocols across \hbeacronym can be grouped into: custom cryptographic protocols (as presented in~\cref{sec:applications}) and general protocols (yet for restricted class of applications). 
Since we already discussed the first group in~\cref{sec:applications}, 
we focus here in comparing with works in the second one.

Herlihy contributes in \cite{atomic-swaps} a general protocol for cross-currency swaps among several users, as can be represented by strongly connected graphs. 
The protocol relies on locking funds by transferring them into a stateful smart contract and making their release to the receiver subject to a complex unlocking procedure. 
This unlocking procedure requires the smart contract to store a copy of the whole graph. For unlocking, the receiver of the transfer needs to perform operations for all its paths in the graph to all nodes in a dedicated set of leaders, which (depending on the graph structure) may encompass all graph nodes. 

To improve upon the performance of~\cite{atomic-swaps}, Imoto et al.~\cite{imoto2023atomic} propose a protocol that also locks funds into a stateful smart contract as in~\cite{atomic-swaps} but make their release subject to an unlocking procedure more efficient than in~\cite{atomic-swaps}. 
For unlocking, the receiver needs to perform operations that scale with the number of users (instead of users and leaders as in~\cite{atomic-swaps}). 

Unfortunately, the protocols in~\cite{atomic-swaps,imoto2023atomic} are specific to blockchains with support for stateful smart contracts and restricted to applications that are represented as a strongly connected graph such as atomic swaps. Moreover, the security of these protocols is not analyzed in a realistic blockchain execution model. 
Instead, our approach can be implemented in virtually any blockchain (c.f.~\cref{sec:ctlc-impl}), supports applications specified as in-semiconnected \atgsacronym (c.f.~\cref{sec:applications}) and its security relies on a realistic blockchain model (c.f.~\cref{sec:CTLCs}). 

While being more general, our approach does not forfeit in performance, as shown in~\cref{table:asymptotic-results}. Here, as in~\cite{imoto2023atomic} we consider \emph{local time}: an upper bound on the computation cost to process the unlocking of an \arctext; and \emph{local space}: the total amount of bits that are stored in the blockchain per \arctext. 
Our protocol does not require storing the whole graph structure and the number of operations for unlocking funds is bounded by the number of subcontracts in a \ctlc{}, hence, scaling at most linearly with the number of graph nodes. 
We also confirm that these improvements are not only theoretical by showing that even for an unoptimized Ethereum-based implementation of \ctlc{}s, our protocol results in at least comparable gas cost for a small, concrete atomic swap (c.f. Appendix~\ref{sec:implementation}).

The authors of~\cite{atomic-swaps} acknowledge it as a limitation of their protocol that it relies on a set of leaders, since a single-leader protocol could be realized from simpler smart contract functionality and improve performance. The development of such a protocol is left as an open research problem, which we solve with this work. 
We defer a more detailed description and comparison with \cite{atomic-swaps,imoto2023atomic} to Appendix~\ref{sec:extended-related-work}.

\begin{table}[tb]
\caption{Comparison of asymptotic complexity for~\cite{atomic-swaps,imoto2023atomic} and ours. Here $\mathcal{L}$ denotes the set of leaders. \label{table:asymptotic-results}}
    \centering 
    {\small 
\begin{tabular}{|c | c | c | c}
    \hline
    & Local Time &  Local Space\\
    \hline
    Herlihy~\cite{atomic-swaps} & $O(\nodesymbol \cdot \mathcal{L})$ &  $O(\arcsymbol)$ \\
    \hline
    Imoto et al.~\cite{imoto2023atomic} & $O(\nodesymbol)$ & $O(\nodesymbol)$\\
    \hline
    Ours & $O(\nodesymbol)$   & $O(\nodesymbol)$\\
    \hline
\end{tabular}%
    }
\end{table}

\section{Conclusion}
\label{sec:conclusion}
 
We present a framework for secure-by-design protocols for \hbeacronym. 
The framework encompasses (i) the provably correct and secure translation from \atgacronym  specifications into \gtrees, an intermediate layer representing protocols as interactive \fundtext redistribution games among users in different \tams;
and (ii) a generic protocol that realizes \gtrees from a simple smart contract building block that we call \ctlc.

\section*{Acknowledgments}

We would like to thank the reviewers for their helpful feedback. 
This work has been supported by the Heinz Nixdorf Foundation through a Heinz Nixdorf Research Group (HN-RG) and funded by the Deutsche Forschungsgemeinschaft (DFG, German Research Foundation) under Germany’s Excellence Strategy—EXC 2092 CASA—390781972.
Further, this work has been partially supported by the ESPADA project (grant PID2022-142290OB-I00), MCIN/AEI/10.13039/501100011033/ FEDER, UE; 
and by the PRODIGY project (grant ED2021-132464B-I00), funded by MCIN/AEI/10.13039/501100011033/ and the European Union NextGenerationEU/ PRTR.

\bibliographystyle{IEEEtran}
\bibliography{bib}

\begin{appendices}

\section{Performance Evaluation and Comparison}
\label{sec:implementation}

In this section, we aim to compare the concrete cost of our protocol with those in~\cite{atomic-swaps,imoto2023atomic} for a small, concrete atomic swap. Since the smart contracts required in~\cite{atomic-swaps,imoto2023atomic} are only known to be realizable in Ethereum-like blockchains, we compare them with \actlc, our implementation of \CTLC{} using Ethereum as the \tam (c.f. ~\cref{sec:ctlc-impl}). The source code is made available at~\cite{CTLC-implementation}. 
To develop and test these three smart contracts, using as input the \arctext $B \rightarrow C$ in the running example of~\cref{fig:three-party-tree}, we have used the following toolset, testbed, and methodology. 

\paragraph{Toolset} 
Instead of directly interacting with the main currently deployed Ethereum blockchain, Ganache~\cite{ganache} allows one to spawn a local instance of an Ethereum blockchain where one can freely configure, e.g., (i) when new blocks are created; (ii) the creation of new and pre-funded Ethereum addresses; and beyond. This permits the creation of a controlled, safe environment to test smart contracts without harming the main Ethereum blockchain and their users (e.g., by clogging the main network with sample contract calls). The Truffle framework~\cite{truffle}, on the other hand, provides an API that eases the interaction with such a Ganache-based blockchain. Using Truffle, one can trigger (i) user-based actions (e.g., deploy a smart contract or call a function of a previously deployed smart contract); and (ii) miner-based actions (e.g., mine a block). Moreover, every time a user-based action is executed, Truffle reports a summary of its cost (e.g., transaction size or gas consumption). A thereby created environment is a standard mechanism to test smart contracts developed and maintained by the Ethereum community.

\paragraph{Testbed}
We spawned a Ganache-based blockchain with one pre-funded account per user. This models the (possibly many) assets each user wants to transfer as the ATG specifies. Moreover, we configured the blockchain parameters (e.g., gas limits) identical to those of the main Ethereum blockchain. To ease our experiments, we set each block to contain a single transaction so that we can easily compute and test timelock transactions based on blockchain length. With this setup, we extracted measurement data as follows. Given a smart contract (e.g., ethCTLC, the Ethereum-based instance of our CTLC described in~\cref{sec:ctlc-impl}), we first deploy it using Truffle and record the reported cost in terms of transaction size and gas consumption. After the smart contract is included in the blockchain, we use Truffle to interact with the smart contract by means of user-based actions (e.g., call the claim and refund operations) and record the reported cost. 

\paragraph{Methodology}
To compare the performance of ethCTLC with the smart contracts in \cite{atomic-swaps,imoto2023atomic}, we followed the same evaluation methodology for the three contracts. For each such contract, we created a script that uses Truffle and Ganache to (i) spawn a fresh blockchain; (ii) trigger in order the user actions to execute the contract calls required for claiming an edge in the best case. We repeated this process to claim an edge in the worst case and refund an edge (both in the best and worst case). In all cases, we consider the management of the same edge for the three contracts.

Our evaluation results are included in~\cref{table:eval-results}.
We show the gas cost to deploy the contract, claim an \arctext and refund an \arctext. 
We observe that the gas costs reflect the asymptotic performance of the compared protocols. For instance: the contract in~\cite{atomic-swaps} needs to store a complete copy of the graph and consequently has the highest deployment cost. Similarly, the contract in~\cite{atomic-swaps} also has the worst gas cost for the claim operation, reflecting the worst asymptotic performance of this approach. 
Perhaps more interestingly, we additionally observe that in the best case, both~\cite{atomic-swaps} and~\cite{imoto2023atomic} have better gas cost for refund than \actlc. 
This gap comes from the different approaches to handling the refund of an \arctext. In~\cite{atomic-swaps} and~\cite{imoto2023atomic}, the refund is implemented as a \emph{timeout}, that is, \fundstext can be claimed until a certain time $t$,  after which refunding the \arctext is the only option allowed.  
In \actlc, the refund is implemented as a \emph{timelock}, that is, the enabling of the $i$-th subcontract is not allowed until time $t_i$ and doing so effectively disables subcontract $i-1$. Consequently, the refund of the \arctext requires to have previously disabled all subcontracts. 
In a nutshell, using timeouts makes it possible to implement refund as a constant operation, while timelocks require a number of operations linear in the number of subcontracts. 
While we could have adopted the refund approach based on timeouts, we decided otherwise because timeouts are only realizable in cryptocurrencies with expressive smart contracts. 
By using timelocks instead, \CTLC{}s can be realized in cryptocurrencies with restricted smart contracts (e.g., Bitcoin) and support many more applications (c.f.~\cref{sec:applications}). 

\begin{table}[tb]
    {\resizebox{\columnwidth}{!}{%
    \begin{tabular}{c |  c | c | c | c | c }

         & Deploy & \multicolumn{2}{c}{Claim} & \multicolumn{2}{c}{Refund}\\
         \hline
         & & Best & Worst & Best & Worst\\
         \hline
    Herlihy~\cite{atomic-swaps} & $2080742$ & \multicolumn{2}{c|}{$385027$} & $34809$ & $315129$\\
    \hline
    Imoto et al.~\cite{imoto2023atomic} & $1278108$ & $57433$ & $92213$ & \multicolumn{2}{c}{$34417$}\\
    \hline
    Ours & $865120$ & $45478$ & $96427$ & \multicolumn{2}{c}{$80666$}\\
    \hline

    \end{tabular}%
    }}
    \caption{Comparison of gas cost for~\cite{atomic-swaps,imoto2023atomic} and ours. \label{table:eval-results}}
\end{table}

\section{Extended Related Work}
\label{sec:extended-related-work}

In this section, we describe and compare in detail with the works in~\cite{atomic-swaps,imoto2023atomic}, where the authors present a framework to design protocols for the concrete application of atomic swaps. 

    \paragraph{Detailed Comparison with~\cite{atomic-swaps}}
Herlihy presents in~\cite{atomic-swaps} a protocol for realizing strongly connected \swapgraphs. 
The protocol relies on a set of leaders forming a minimal feedback vertex set, meaning that removing the leaders from the graph results in an acyclic graph, and thus there is a unique path from each \nodetext to all leaders.
The \fundstext for each \arctext are getting locked in a complex (Ethereum-style) smart contract which enforces the protocol execution along the paths to the leaders, each of which holds an individual secret.
The swap execution starts when leaders partially unlock ingoing \arcstext in the first round by providing their secret and a signature on it. 
In the next round, every user whose outgoing \arctext was partially unlocked does the same for their ingoing \arcstext by providing the learned secret and their signature on the signature obtained in the previous round (as proof for the path to the leader through which the secret was obtained).
This procedure continues until all \arcstext have been reached via all their paths to all leaders.
Consequently, for claiming a fund, the corresponding \arctext must have been fully unlocked meaning that for all paths to all leaders, the corresponding leader secret and a nested signature on that secret of all users on the path to the leader must have been provided in the adequate round. 
A partial lock is implemented using a primitive called \emph{hash key} which checks that before a predefined timeout, the preimage for some hash value and a nested signature on that preimage for a specific user path is provided. 
If in any round, the expected partial unlocking of an \arctext did not happen, the \fundstext of that \arctext can be refunded to the owner. 

The work proves that this protocol is uniform, meaning that 
1) if all parties follow the protocol, then all \arcstext are executed (corresponding to our \gtree correctness result); 
2) a party following the protocol can never end up \emph{underwater}, meaning that at least one outgoing \arctext is triggered, whereas at least one ingoing \arctext is not triggered (roughly corresponding to our \gtree security result). 
They further prove an impossibility result showing that no such uniform swap protocol can exist for graphs that are not strongly connected. 
At first, this seems contradictory to our result (that proves the security of unfolding for \insemi graphs).
However, we opt for a slightly relaxed security notion: 
Instead of requiring for a user to be underwater to end up with at least one outgoing \arctext triggered and at least one ingoing \arctext missing, we only require that if an outgoing \arctext has been triggered also \emph{all} ingoing \arcstext  (which could potentially be none) must have been triggered. 
This means in particular that if a graph contains a user that functions as a sender only, the security notion still applies in a meaningful way.
We show in~\cref{sec:applications} that such scenarios, indeed, find applicability in practice.
For strongly connected graphs, our security notion coincides with the one from~\cite{atomic-swaps}.

A main advantage of our protocol is that it operates using a single leader. This advantage is two-fold. 
First, having a single leader enhances the liveness of the protocol: Leaders are in the position to block the execution (similar to how user $A$ can force the protocol to timeout by not pulling their ingoing \edgetext in the example presented in~\cref{sec:fromgraphstotreessubsection}) without encountering financial harm. 
Second, in~\cite{atomic-swaps}, it is observed that in single-leader scenarios no digital signatures and hashkeys are needed but only timeouts, and finding such a protocol for the general case is posed as an open challenge. 
Our work solves this challenge and, with that, brings multiple practical benefits (as demonstrated in~\cref{sec:ctlc-impl}):
1) The logic of \CTLC{}s, our smart contract for realizing \edgestext, is much simpler (since no hash keys need to be checked), resulting in improved on-chain performance, even when implementing \CTLC{}s on Ethereum.
2) The on-chain computation cost for executing a swap is asymptotically lower:
In the protocol~\cite{atomic-swaps}, in the worst case, users need to perform as many transactions unlocking the contract for an \arctext before claiming it as there are paths to all leaders in $\graphsymbol$. 
In our protocol, the worst case occurs if all \CTLC{} \edgestext time out, requiring disabling as many subcontracts as there are paths to the (single) leader. \linebreak
3) Our protocol has improved honest on-chain execution cost: If all protocol participants behave honestly, they can claim the assets for their \arcstext with a single blockchain transaction. In contrast, even in the honest case, the protocol from~\cite{atomic-swaps} requires unlocking all hash keys;
4) The smart contract based on hash keys is only known to be realizable in blockchains with expressive smart contract languages, while 
\CTLC{}s can be realized in blockchains with limited or even no scripting capabilities, enabling the implementation of our protocol among all existing cryptocurrencies and even on layer-two solutions, further reducing the on-chain cost.  

\paragraph{Detailed Comparison with~\cite{imoto2023atomic}} 
Imoto et al.~\cite{imoto2023atomic} build upon the work in~\cite{atomic-swaps} to propose a protocol that improves its performance drawbacks.  
As in~\cite{atomic-swaps}, the \fundstext for each \arctext in the graph are locked in a complex (Ethereum-style) smart contract. 
Different to~\cite{atomic-swaps}, the \fundstext are guarded by one secret per user. 
After one such contract per \arctext has been set up, the proposed protocol proceeds in rounds.  
In the $i$-th round, the funds at one \arctext can be released providing $i$-many signatures on the secrets of all users. 
In the first round, any user can claim the \fundstext on one \arctext with only their own signature on all secrets. But if they do so, those users whose funds where taken away have learnt a signature (in addition to their own) and can claim their ingoing \arcstext in the next round. 
Such protocol can take as many rounds as the diameter of the graph, after which \arcstext are timed out.
When compared to~\cite{atomic-swaps}, this work achieves the same security guarantees.  
However, the security analysis is in the same model of~\cite{atomic-swaps}, thereby sharing the same limitations. 
 Instead, as we detail in~\cref{sec:related-work}, this work improves upon~\cite{atomic-swaps} in storage requirements and computation cost. 
When compared to our work,~\cite{imoto2023atomic} and ours share similar storage requirements and computation cost, an improvement over~\cite{atomic-swaps}. 
Yet, \cite{imoto2023atomic} inherits several disadvantages from~\cite{atomic-swaps}, namely 1) it only considers strongly connected graphs and is restricted to the specific application of atomic cross-chain swaps; and 2) the required smart contract is only known to be realizable in cryptocurrencies with complex (Ethereum-style) smart contracts.

\iffullversion
\section{\CTLC{} Implementation in Bitcoin}
\label{sec:CTLC-in-BitML}

\newcommand{\bitmlcontractname}[1]{\textcolor{orange}{#1}}
\newcommand{\bitmlkeyword}[1]{\ensuremath{\mathsf{#1}}}
\newcommand{\bitmlparticipant}[1]{\ensuremath{\textcolor{red}{#1}}}
\begin{figure}[tb]
\begin{equation*}
\scalebox{0.86}{$
    \begin{aligned}
    &\bitmlcontractname{CTLC}(\bitmlparticipant{A}, \bitmlparticipant{B}) ::= \bitmlcontractname{Pay1} + \bitmlkeyword{after}~ t_2 : \tau. ( \bitmlcontractname{Pay2} + 
    \bitmlkeyword{after}~ t_3 : \bitmlkeyword{withdraw}~ \bitmlparticipant{A} )\\
    &\bitmlcontractname{Pay1} ::= \bitmlkeyword{reveal}~ \edgesecret{(B,A)}{\text{\edgeBAone}} \land \edgesecret{(A,B)}{\text{\edgeABtwo}}~ \bitmlkeyword{then}~\bitmlkeyword{withdraw }~ \bitmlparticipant{B}\\
    &\bitmlcontractname{Pay2} ::= \bitmlkeyword{reveal}~ \edgesecret{(C,A)}{\text{\edgeCAone}} \land \edgesecret{(B,C)}{\text{\edgeBCtwo}} \land \edgesecret{(A, B)}{\text{\edgeABthree}}~
    \bitmlkeyword{then}~\bitmlkeyword{withdraw }~ \bitmlparticipant{B}
    \end{aligned}
    $}
\end{equation*}
\caption{BitML contract realizing the \CTLC{} $\ctlcvar_{(A,B)}$ from Eq. \ref{eq:CTLC-running-example}.\label{fig:bitml-ctlc}}
\end{figure}
To show the implementation of \CTLC{}s in Bitcoin, we leverage BitML~\cite{BitML}. 
BitML is a domain-specific language for specifying contracts that describe transfers of Bitcoins among a set of users without relying on a trusted intermediary. 
A compiler is then provided to translate BitML contracts into Bitcoin transactions. 
Participants can execute the contract by appending these transactions to the Bitcoin blockchain according to compiled strategies for each user. 
Therefore, to implement \CTLC{}s in Bitcoin, we can express \CTLC{}s in BitML and use the existing compiler to extract the corresponding Bitcoin transactions. \cref{fig:bitml-ctlc} shows the \CTLC{} $\ctlcvar_{(A,B)}$ from \cref{eq:CTLC-running-example} implemented as a BitML contract. 
On the top level, it consists of two mutually exclusive execution choices (separated by the operator $+$). 
The first choice ($\bitmlcontractname{Pay1}$) intuitively corresponds to subcontract \subctlcABtwo{}: 
It enables $\userB$ to withdraw (claim) the contract funds when providing the secrets  $\edgesecret{(B,A)}{\text{\edgeBAone}}$ and $\edgesecret{(A,B)}{\text{\edgeABtwo}}$.
The subcontract's timeout $t_2$ is implemented by enabling the alternative execution choice $\tau. ( \bitmlcontractname{Pay2} + 
    \bitmlkeyword{after}~ t_3 : \bitmlkeyword{withdraw}~ \bitmlparticipant{A})$
starting from time $t_2$ (as indicated by BitML's $\bitmlkeyword{after}$ keyword). 
Taking this choice corresponds to timing out \subctlcABtwo{}.  Doing so has no further prerequisites (as indicated by the special action $\tau$) and results in the remaining contract 
$\bitmlcontractname{Pay2} + 
\bitmlkeyword{after}~ t_3 : \bitmlkeyword{withdraw}~ \bitmlparticipant{A}$.
Here, again $\bitmlcontractname{Pay2}$ corresponds to \subctlcABthree{} with its timeout $t_3$ being implemented by enabling party $\userA$ to withdraw the contract funds starting from $t_3$. 
\section{Graph Theory}
\label{sec:Graph Theory}

This section will briefly introduce the needed definitions and properties of directed graphs for this work. Further details and a rigorous treatment of directed graphs can be found in \cite{graphtheory}.  

\begin{definition}
    \label{def:digraph}
    A \emph{directed graph} (or \emph{digraph}) $\graphsymbol := (\nodesymbol, \arcsymbol)$ consists of a non-empty finite set $\nodesymbol$ of \emph{vertices} or \emph{nodes} and a finite set $\arcsymbol$ of ordered pairs of distinct vertices called \emph{arcs}. 
    For an arc $\arc=(A,B) \in \arcsymbol$ we call $A$ its \emph{sender}, $B$ its $\emph{receiver}$ and define the functions
    \begin{align*}
        \funsender(\arc) & := pr_1(\arc) = A,\\
        \funreceiver(\arc) & := pr_2(\arc) = B,
    \end{align*}
    where $pr_i(.)$ is the projection onto the $i$-th component of a tuple. 
    Two nodes $A, B \in \nodesymbol$ are \emph{connected} if there exists an arc $(A,B) \in \arcsymbol$ or $(B,A) \in \arcsymbol$, which will be visualized with $A \rightarrow B$ or $B \rightarrow A$ respectively. 
    We require the two vertices forming an arc to be distinct, which removes loops from one node back to itself from the set of arcs \cite[p.3] {graphtheory}.
\end{definition}

\begin{definition}\label{def:walk}
    A \emph{walk} in $\graphsymbol=(\nodesymbol,\arcsymbol)$ is a finite tuple of arcs
    $\walk := [\arc_{k-1}, \arc_{k-2},  ...,  \arc_1, \arc_0]$
    for some $\vert \walk \vert = k \in \N$ and 
    \begin{align*}
        &\forall 0 \leq j \leq k-1 : \, \arc_j \in \arcsymbol \text{ as well as } \\
        &\text{if } k \geq 2: \, \forall 0 \leq i \leq k-2 : \, \funsender(\arc_{i+1}) = \funreceiver(\arc_{i}).
    \end{align*} 
    We say that $\walk$ is a \emph{walk from} $\funsender(\arc_{k-1})$ \emph{to} $\funreceiver(\arc_{0})$ or a \emph{$(\funsender(\arc_{k-1}),\funreceiver(\arc_{0}))$-walk} \cite[p.7]{graphtheory}. 
    We use $\concatvec{\walk}{\walk'}$ to denote the concatenation of two walks $\walk$ and $\walk'$. 
\end{definition}

\begin{definition}\label{def:strongly connected}
    A digraph $\graphsymbol=(\nodesymbol,\arcsymbol)$ is \emph{strongly connected} if 
    \[
    \forall A,B \in \nodesymbol \, : \exists \, (A,B)\text{-walk} \, \land \, \exists \, (B,A)\text{-walk} \, \text{\cite[p.7]{graphtheory}}.
    \]
\end{definition}

\begin{definition} \label{def:neighbourhood}
    Let $A \in \nodesymbol$ be an arbitrarily chosen node from $\graphsymbol$, then we define the \emph{extended out-neighbourhood} of $A$ as
    \begin{align*}
        N^+_\graphsymbol(A) := \{ B \in \nodesymbol \setminus \{ A \} \mid \exists \, (A,B)\text{-walk} \}, 
    \end{align*}
    and the \emph{extended in-neighbourhood} of $A$ as
    \begin{align*}
        N^-_\graphsymbol(A) := \{ C \in \nodesymbol \setminus \{ A \} \mid \exists \, (C,A)\text{-walk} \}.
    \end{align*}
    The union of these sets is often referred to as the reachable set of a node, e.g., in \cite[p.16]{graphtheory}. 
\end{definition}

\begin{definition}\label{def:insemiconnected}
    Let $\graphsymbol=(\nodesymbol,\arcsymbol)$ be a digraph and $A \in \nodesymbol$. Then we call $\graphsymbol$ \emph{in-semiconnected w.r.t. $A$} if and only if 
     \begin{equation*}
              N^-_\graphsymbol(A) \cup \{ A \} = \nodesymbol.
     \end{equation*}
\end{definition}

\begin{example}
\label{ex:connected}
    The digraph 
    \begin{equation*}
        \xymatrix@C=3pc{
        A \ar[r] & B \ar[r] & C, }
    \end{equation*}
    is in-semiconnected w.r.t. $C$ but not strongly connected. For that a walk from $C$ to $B$ and from $B$ to $A$ has to be added. By including these arcs we obtain
    \begin{equation*}
        \xymatrix@C=3pc{
        A \ar@/^0.5pc/[r] & B \ar@/^0.5pc/[l] \ar@/^0.5pc/[r] & C, \ar@/^0.5pc/[l] }
    \end{equation*}
    which is now strongly connected. 
\end{example}

\begin{corollary} \label{cor:all_leaders}
It holds for every digraph $\graphsymbol = (\nodesymbol, \arcsymbol)$:
\begin{align*} 
    & \graphsymbol\text{ is strongly connected } \\
    & \Rightarrow
    \graphsymbol\text{ is in-semiconnected w.r.t. any $A\in \nodesymbol$}
\end{align*}
\end{corollary}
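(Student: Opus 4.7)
The plan is to unfold the two definitions and show set equality. Fix an arbitrary $A \in \nodesymbol$ and assume $\graphsymbol = (\nodesymbol, \arcsymbol)$ is strongly connected. By \defref{def:insemiconnected}, I need to establish $N^-_\graphsymbol(A) \cup \{A\} = \nodesymbol$.

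First I would handle the inclusion $N^-_\graphsymbol(A) \cup \{A\} \subseteq \nodesymbol$, which is immediate from \defref{def:neighbourhood} since $N^-_\graphsymbol(A) \subseteq \nodesymbol \setminus \{A\} \subseteq \nodesymbol$ and $A \in \nodesymbol$. Second, for the reverse inclusion, take an arbitrary $B \in \nodesymbol$. If $B = A$ then $B \in \{A\}$ and we are done. Otherwise $B \neq A$, and by \defref{def:strongly connected} applied to the pair $(B,A)$, there exists a $(B,A)$-walk in $\graphsymbol$. Hence by \defref{def:neighbourhood}, $B \in N^-_\graphsymbol(A)$.

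Combining both inclusions yields $N^-_\graphsymbol(A) \cup \{A\} = \nodesymbol$, and since $A$ was chosen arbitrarily, $\graphsymbol$ is in-semiconnected w.r.t.\ every node. There is no real obstacle here: the claim is essentially a direct logical consequence of the definitions, and the only ``content'' is noting that strong connectedness guarantees an incoming walk to $A$ from every other node, which is exactly what in-semiconnectedness w.r.t.\ $A$ requires. \Cref{ex:connected} already illustrates that the converse fails, so no further discussion of tightness is needed.
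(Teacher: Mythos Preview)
Your proof is correct and follows essentially the same approach as the paper's: both fix an arbitrary $A$, invoke \defref{def:strongly connected} to obtain a $(B,A)$-walk for every $B \neq A$, and conclude $N^-_\graphsymbol(A) \cup \{A\} = \nodesymbol$. The paper is simply more terse, asserting $N^-_\graphsymbol(A) = \nodesymbol \setminus \{A\}$ directly rather than spelling out both inclusions.
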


\begin{proof}
    Let $\graphsymbol=(\nodesymbol,\arcsymbol)$ be a strongly connected digraph and $A \in \nodesymbol$ chosen arbitrarily. By Definition \ref{def:strongly connected} we have
    \begin{equation*}
        \forall B \in \nodesymbol \setminus \{ A \} \, \exists (B,A)\text{-walk.}
    \end{equation*}
    This implies $N^-_\graphsymbol(A) = \nodesymbol \backslash \{ A \}$. 
\end{proof}

\begin{theorem} \label{th:composingtwoinsemigraphs}
    Let $\graphsymbol_1 = (\nodesymbol_1, \arcsymbol_1)$ be in-semiconnected w.r.t. $A \in \nodesymbol_1$ and $\graphsymbol_2 = (\nodesymbol_2, \arcsymbol_2)$ in-semiconnected w.r.t. $B \in \nodesymbol_2$. We define $\graphsymbol_3 := (\nodesymbol_3, \arcsymbol_3)$ with
    \begin{align*}
        \nodesymbol_3 &:= \nodesymbol_1 \cup \nodesymbol_2 , \\
        \arcsymbol_3 &:= \begin{cases}
            \arcsymbol_1 \cup \arcsymbol_2 \cup \{ (A,B), (B,A)\} & \text{ ,if } A \neq B \\
            \arcsymbol_1 \cup \arcsymbol_2 & \text{ ,if } A = B .
    \end{cases}
    \end{align*}
    Then $\graphsymbol_3$ is in-semiconnected w.r.t. $A$ and in-semiconnected w.r.t. $B$. 
\end{theorem}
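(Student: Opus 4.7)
The plan is to directly apply Definition~\ref{def:insemiconnected}, showing that $N^-_{\graphsymbol_3}(A) \cup \{A\} = \nodesymbol_3$ and, symmetrically, $N^-_{\graphsymbol_3}(B) \cup \{B\} = \nodesymbol_3$. Since $\nodesymbol_3 = \nodesymbol_1 \cup \nodesymbol_2$, the task reduces to exhibiting, for every $X \in \nodesymbol_3 \setminus \{A\}$, a walk from $X$ to $A$ in $\graphsymbol_3$, and likewise for $B$. I will split the argument along the two cases in the definition of $\arcsymbol_3$.

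First, consider the case $A = B$. Here $\arcsymbol_3 = \arcsymbol_1 \cup \arcsymbol_2$, so any walk in $\graphsymbol_1$ or $\graphsymbol_2$ is also a walk in $\graphsymbol_3$. For $X \in \nodesymbol_1 \setminus \{A\}$, the in-semiconnectedness of $\graphsymbol_1$ w.r.t.~$A$ yields a walk from $X$ to $A$ in $\graphsymbol_1$, hence in $\graphsymbol_3$. For $X \in \nodesymbol_2 \setminus \{A\} = \nodesymbol_2 \setminus \{B\}$, the in-semiconnectedness of $\graphsymbol_2$ w.r.t.~$B = A$ yields a walk from $X$ to $A$ in $\graphsymbol_2$, again a walk in $\graphsymbol_3$. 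This shows $\graphsymbol_3$ is in-semiconnected w.r.t.~$A$, and the claim for $B$ is identical.

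Next, consider the case $A \neq B$, where $\arcsymbol_3$ additionally contains the arcs $(A,B)$ and $(B,A)$. To show in-semiconnectedness w.r.t.~$A$, take an arbitrary $X \in \nodesymbol_3 \setminus \{A\}$. If $X \in \nodesymbol_1 \setminus \{A\}$, reuse the walk to $A$ in $\graphsymbol_1$. If $X \in \nodesymbol_2 \setminus \{A\}$ and $X = B$, the single-arc walk $[(B,A)]$ suffices. If $X \in \nodesymbol_2 \setminus \{A,B\}$, take a walk $\walk$ from $X$ to $B$ in $\graphsymbol_2$ (guaranteed since $\graphsymbol_2$ is in-semiconnected w.r.t.~$B$) and concatenate it with $(B,A)$; by Definition~\ref{def:walk} the concatenation $\concatvec{[(B,A)]}{\walk}$ is a well-formed walk in $\graphsymbol_3$ since $\funreceiver$ of the last arc of $\walk$ is $B = \funsender((B,A))$. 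The argument for in-semiconnectedness w.r.t.~$B$ is symmetric, relying on the arc $(A,B)$ to bridge nodes of $\nodesymbol_1$ to $B$.

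I do not anticipate a substantial obstacle: the proof is essentially a bookkeeping argument that the newly added bridging arcs $(A,B)$ and $(B,A)$ glue the two in-neighborhoods together. The only mildly delicate point is the handling of nodes that may lie in $\nodesymbol_1 \cap \nodesymbol_2$, but since the union of walk-based reachability is monotone under adding arcs, any such node is reached through whichever of the original graphs already contains a suitable walk. No assumption beyond the definitions in Appendix~\ref{sec:Graph Theory} is required.
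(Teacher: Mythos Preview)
Your proposal is correct and follows essentially the same approach as the paper's proof: unfold Definition~\ref{def:insemiconnected}, split on $A = B$ versus $A \neq B$, and in the latter case route nodes of the opposite graph through the bridging arc. One small notational slip: given the paper's walk convention (walks written $[a_{k-1},\dots,a_0]$ from source to target), the concatenation should be $\concatvec{\walk}{\unitvec{(B,A)}}$ rather than $\concatvec{\unitvec{(B,A)}}{\walk}$, since $(B,A)$ is the final arc traversed---but your accompanying justification makes clear you have the right picture.
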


\begin{proof}
    By Definition \ref{def:insemiconnected} we have to show that
    \begin{equation*}
              N^-_{\graphsymbol_3}(A) \cup \{ A \} = \nodesymbol_3
     \end{equation*}
     holds. Unfolding this statement using Definition \ref{def:neighbourhood} results in:
     \begin{equation*}
         \forall C \in \nodesymbol_3 \backslash \{ A \} \, \exists (C,A)\text{-walk}
     \end{equation*}
     Firstly, we assume $A=B$. Let $C \in \nodesymbol_3 \backslash \{ A \}$ be arbitrary. If $C \in \nodesymbol_1$ there is an $(C,A)\text{-walk}$ consisting of arcs in $\arcsymbol_1$. If $C \in \nodesymbol_2$, the same is true for $\arcsymbol_2$.
     Secondly, we assume $A \neq B$. We show that $\graphsymbol_3$ is in-semiconnected w.r.t. $A$. For all $C \in \nodesymbol_1 \backslash \{ A \}$ there again is a $(C,A)\text{-walk}$ consisting of arcs from $\arcsymbol_1$. For all $C \in \nodesymbol_2 \backslash \{ B \}$, by definition, a $(C,B)\text{-walk}$ $\walk$ exists with arcs from $\arcsymbol_2$. Additionally, $(B,A) \in \arcsymbol_3$ and so we get a $(C,A)\text{-walk}$ $\concatvec{\walk}{\unitvec{(B,A)}}$. 
     And finally, for $B$ there is a $(B,A)\text{-walk}$, more specifically $\unitvec{(B,A)}$. 
     Analogously we can show that $\graphsymbol_3$ is in-semiconnected w.r.t. $B$. 
\end{proof}
\section{Graph to Tree Conversion}
\label{sec:GraphtoTree}

As in the previous Section, we notate walks consisting of arcs in $\graphsymbol = (\nodesymbol, \arcsymbol)$ with 
$\walk = [a_{\vert \walk \vert - 1}, a_{\vert \walk \vert - 2}, ..., a_0] \in \arcsymbol^{\vert \walk \vert}.$

We will represent trees by their paths from the leaves to the root of the tree. 

\paragraph{Tree Unfolding} 
We first formally define the unfolding of a graph $\graphsymbol$ into a \textit{game tree}. 
This unfolding is given as follows:

\begin{definition}[Tree Unfolding] \label{def:treeunfoldingappendix}
    For a given digraph \linebreak $\graphsymbol = (\nodesymbol, \arcsymbol)$ and leader $A \in \nodesymbol$ we define the unfolding
    \begin{align} \small
        &\fununfold(\graphsymbol, A) := \{ \walk \mid \funpath(\graphsymbol, A, \walk) \} \text{ with } \label{eq:unfold}
    \end{align}
    \begin{align}
        & \funpath (\graphsymbol, A, \walk) := \exists X \in \nodesymbol : a_{0} = (X,A) \nonumber\\
        &\land \forall j \in \{\vert \walk \vert - 2,..., 0\} \, \exists X,Y,Z \in \nodesymbol : \\
        & \hspace{15pt} a_{j+1} = (Z,Y), a_{j} = (Y,X) \nonumber\\
        & \land \forall i,j \in \{\vert \walk \vert - 1,..., 0\} \forall X,Y,Z, \nonumber\\
        & \hspace{15pt} a_i = (X,Y), a_j = (Z,Y) \, \Rightarrow X=Z, i=j \nonumber\\
        & \land \forall X,Y, a_{\vert \walk \vert - 1} = (X,Y) \Rightarrow \exists j \in \{\vert \walk \vert - 2,..., 0\}, Z : \nonumber\\
        & \hspace{15pt} (Z,X) = a_j \lor \nexists Z: (Z,X) \in \arcsymbol. \nonumber
    \end{align}
    The condition $\funpath$ checks that a path is a walk in $\graphsymbol$ ending in node $A$ (first two conditions), in particular, whenever an edge $a_{j+1}$ ends in a node $Y$ the next edge $a_{j}$ should start in $Y$ again.
    The third condition says that no arc can be revisited. This means whenever there is a node $Y$ with two different edges in the same path, both ending in $Y$, they should already be the same. The fourth and last condition ensures that the path always ends with a repeated node or because there is no successor. 
    Thus, $\fununfold(\graphsymbol, A) =: \untree$ is a set of walks. To uniquely identify the edges in $\untree$, we index them with their partial walk, starting with the edge itself and ending with the leader, the root of the tree. For example $\edge$ is indexed with $\walk = [ (X,Y), ..., (Z,A) ]$ for a leader $A$ and some party $Z$. Note that $\walk$ is a partial path starting with $(X,Y)$ and following one of the paths in $\untree$ upwards to the leader. 

    In the following we will use $\concatvec{\walk_1}{\walk_2}$ to denote the concatenation of two paths $\walk_1$ and $\walk_2$. 
    In addition, we will use $\walk_2 \suffix \walk_1$ to denote that $\walk_1$ is a suffix of $\walk_2$ (so that $\exists \walk:~ \walk_2 = \concatvec{\walk}{\walk_1}$).

    Using this, we define what it means for an edge to be an element in the tree:
    \begin{align*}
        \edge \intree \untree : \Leftrightarrow 
        \exists \walk_1 \in \untree, \walk_2:
        \walk_1 = \concatvec{\walk_2}{\walk}
    \end{align*}
    In other words, there is a full walk $\walk_1$ in $\untree$ such that $\walk$ is part of it. 
    For edges $\edge, (X',Y')_{\walk'} \intree \untree$, we define what it means to be on the same path:
    \begin{align*}
        &\funonpath{\edge}{(X',Y')_{\walk'}} \\
        &: \Leftrightarrow \exists \walk'' \in \untree :  \walk'' \suffix \walk ~\land~ \walk'' \suffix \walk'
    \end{align*}
    We define the depth of $\edge$ by its tree level or equivalently the length of $\walk$:
    \begin{align*}
        \fundepth(\edge) := \vert \walk \vert
    \end{align*}
    The set of edges in the tree that are on the same path to the root is then given as:
    \begin{align*} 
    & \funonPathtoRoot(\untree, \e) := \\
    &\{ \e' \intree \untree \mid \funonpath{\e}{\e'} \land \fundepth(\e') \leq \fundepth(\e) \}
    \end{align*}
    This implies $\e \in \funonPathtoRoot(\untree, \e)$. 
\end{definition}
The rest of this section assumes a digraph $\fulldigraph$, which has been unfolded into a tree $\untree$. With $J$, we denote all levels of the tree and index it with $j \in J$. The parties in a given tree level $j$ are notated as the set $N^j$. 
\begin{definition} \label{def:gameTree}
    We define a \textit{game tree} as a finite set of walks
    \begin{align*}
        \untree :&= \{ \walk_1, \walk_2, ..., \walk_n \}, \\
        \forall 1 \leq i \leq n : \, \walk_i &= [a_{\vert \walk_i \vert - 1}, a_{\vert \walk_i \vert - 2}, ..., a_0]
    \end{align*}
    were each element of a $\walk_i$ is defined as a tuple $a_j = (X,Y)_{\walk'}$ where $X,Y$, with $X \neq Y$, come from an underlying set of nodes $\nodesymbol$ and $\walk' = [a_{j}, a_{j-1}, ..., a_0]$ with $\walk_i \suffix \walk'$. As all $\walk_i$ are walks, Def. \ref{def:walk} holds. For $\untree$ to be a game tree, the following two conditions need to hold:
    \begin{align*}
        (i) & ~\exists ! \, Y \in \nodesymbol \, \forall \walk_i \in \untree \, \exists X \in \nodesymbol : \\
        & ~a_0 = (X,Y)_{\unitvec{(X,Y)}} \text{ (Leader)} \\
        (ii) & ~\forall \walk =[a_{\vert \walk \vert - 1}, a_{\vert \walk \vert - 2}, ..., a_0], \\
        & \hspace{9pt} \walk' = [a'_{\vert \walk' \vert - 1}, a'_{\vert \walk' \vert - 2}, ..., a'_0] \in \untree : \\
        & \hspace{9pt} \exists i,j \in \N : a_i = a'_j \Rightarrow [a_i , ..., a_0] = [a'_j, ..., a'_0] \\
        & \hspace{9pt} \text{with } \linebreak \walk \suffix [a_i , ..., a_0], \walk' \suffix [a'_j, ..., a'_0] \text{ (Crossings)}
    \end{align*}
\end{definition} 

\paragraph{Outcome sets}
Based on a game tree $\untree$, we next define the set $ \outcomeset{\untree}$ of an honest user $B$, a participant in $\untree$.
Intuitively, the outcome set contains all possible tree executions that an honest user may observe when eagerly pulling all their ingoing tree edges whenever possible. 

\begin{definition}[Outcome Set]\label{def:outcomes}
    Let $\untree$ be a game tree. 
    Assume the following definitions: 
    \begin{align}
        \partialwalks (\untree)&:= \{ \walk' ~|~ \exists \walk \in \untree : \walk \suffix \walk' \}  \label{def:partialwalks} \\ 
        \partialtrees (\untree) &:= \{ \untree_p ~|~ \untree_p \subseteq \partialwalks(\untree) \} \label{def:partialtrees} \\ 
        \partialtreeoutcomes (\untree) &:= \{ \{ \edge \intree \untree_p \} ~|~  \untree_p \in \partialtrees(\untree) \} \label{def:partialtreeoutcomes}
    \end{align}

    Further, assume the following predicates: 

    \begin{align}
        \predNoDup{\untree}{B}{\outcome} &: \Leftrightarrow \edge \in \outcome \land (X,Y)_{\walk'} \in \outcome
        \label{eq:no-dup} \\
            & ~\land~ (X = B \lor Y = B) \Rightarrow \walk = \walk' \notag
    \end{align}
    \begin{align}
        & \predHonestRoot{\untree}{B}{\outcome} :\Leftrightarrow
          \label{eq:honest-root} \\
            & (X, B)_{\walk} \intree \untree ~\land~ \fundepth((X,B)_{\walk}) = 1 \Rightarrow (X,B)_{\walk} \in \outcome \notag
    \end{align}
    \begin{align}
        &\predEagerPull{\untree}{B}{\outcome} :\Leftrightarrow
         \label{eq:eager} \\
            & (X, B)_{\walk_1} \intree \untree ~\land~ (B,Y)_{\walk_2} \in \outcome \land \walk_1 = \concatvec{\unitvec{(X, B)}}{\walk_2} \notag \\
            & \qquad \Rightarrow \exists \walk_3: (X, B)_{\walk_3} \in \outcome \notag \\
            & \qquad \qquad ~\land~ \fundepth((X, B)_{\walk_3}) \leq \fundepth((X, B)_{\walk_1} ) \notag
    \end{align}

    Then the outcome set $\outcomeset{\untree}$ of user $B$ in $\untree$ is given as 
    \begin{align}
        \outcomeset{\untree} &:= \{ 
            \outcome \in \partialtreeoutcomes (\untree)  \mid 
              \hspace{15pt} \predNoDup{\untree}{B}{\outcome}  \\
            & \hspace{15pt} \hspace{62pt} ~\land~\predHonestRoot{\untree}{B}{\outcome} \notag \\
            & \hspace{15pt} \hspace{62pt} ~\land~\predEagerPull{\untree}{B}{\outcome} \} \notag
    \end{align}

\end{definition}

    The definition incrementally constructs $\outcomeset{\untree}$ by first defining the set $\partialtreeoutcomes(\untree)$ that contains a set of all its edges for each partial tree of $\untree$ (\Cref{def:partialtreeoutcomes}). 
    Partial trees (defined in \Cref{def:partialtrees}) are given by arbitrary subsets of partial walks (defined in~\Cref{def:partialwalks}) of the tree $\untree$.
    Finally, $\outcomeset{\untree}$ restricts $\partialtreeoutcomes(\untree)$ further to only those outcome sets that 
    do not contain any duplicate edges involving honest user $B$ (described by $\predNoDupP$) and that are compliant with an honest strategy of user $B$ (so satisfying the predicates $\predHonestRootP$ and $\predEagerPullP$): 
    More precisely,~\Cref{eq:honest-root} requires that if $B$ is the root, so has an ingoing edge $(X, B)_{\walk} \in \untree$, then this edge is also included in an outcome set $\outcome$. 
    This corresponds to $B$ always pulling all ingoing edges when being the root user.
    Next,~\Cref{eq:eager} requires that if $B$ has an ingoing edge $(X, B)_{\walk_1} \in \untree$ whose outgoing edge $(B,Y)_{\walk_2}$ is included in the outcome set $\outcome$ then also $(X, B)_{\walk_1}$ (or a duplicate thereof on the same or higher tree level) must be contained in $\outcome$.
    This corresponds to $B$ eagerly pulling all ingoing edges whenever possible.

To show the security of the tree-unfolding, we show that each outcome set $\outcome \in \outcomeset{\untree}$ constitutes a good outcome for user $B$, meaning that user $B$ does not end up \emph{underwater}. 
A user is considered underwater if the swap triggers some outgoing arcs of $\graphsymbol$ for user $B$ but not all their ingoing arcs of $\graphsymbol$.
We capture this notion formally with the following theorem:

\begin{theorem}[Security of tree unfolding]
    \label{thm:security-graph}
    Let $\graphsymbol = (\nodesymbol, \arcsymbol)$ be a digraph that is in-semiconnected w.r.t. $A \in \nodesymbol$ and \linebreak 
     $\untree = \fununfold(\graphsymbol, A)$ the tree unfolding of that graph and $B \in \nodesymbol$ be a node representing a user.
    Then, it holds that
    \begin{align}
        &\forall \outcome \in \outcomeset{\untree} : \nonumber \\
        &\forall \walk, \walk' : \bigl( \edge \in \outcome \land (X,Y)_{\walk'} \in \outcome \land B \in \{ X, Y\} \notag \\
        & \qquad \quad \Rightarrow \walk = \walk' \land (X,Y) \in \graphsymbol \bigr) \label{eq:underwater1}\\
        & \land \forall (B,Y)_{\walk} \in \outcome : \, \bigl( (X,B) \in \graphsymbol \Rightarrow \exists \walk': (X,B)_{\walk'} \in \outcome \bigr) \label{eq:underwater2}
    \end{align}
\end{theorem}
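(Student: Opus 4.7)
The plan is to split the conjunction into its two parts and dispatch each using the definition of $\outcomeset{\untree}$ together with the structure of the unfolding $\untree = \fununfold(\graphsymbol, A)$. The first conjunct is essentially bookkeeping: $\walk = \walk'$ is exactly the predicate $\predNoDupP$ from the definition of $\outcomeset{\untree}$, and $(X,Y) \in \arcsymbol$ follows by unfolding $\edge \in \outcome \subseteq \partialtreeoutcomes(\untree)$ to obtain a walk $\walk^* \in \untree$ of which $\walk$ is a suffix, and observing that every arc on $\walk^*$ lies in $\arcsymbol$ by the walk-continuity clauses of $\funpath$.

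The second conjunct is the substantive part. I would fix $(B,Y)_{\walk} \in \outcome$ and an arbitrary $(X,B) \in \arcsymbol$, and produce some $\walk'$ with $(X,B)_{\walk'} \in \outcome$ by splitting into three sub-cases. (i) If $B = A$, I first show that the singleton walk $[(X,A)]$ extends upward to a full walk in $\untree$, so $(X,A)_{[(X,A)]} \intree \untree$, and then invoke $\predHonestRootP$ to conclude $(X,A)_{[(X,A)]} \in \outcome$. (ii) If $B \neq A$ and $(X,B)$ does not occur as an arc in $\walk$, I extend $[(X,B)] \cdot \walk$ upward to a walk in $\untree$, giving $(X,B)_{[(X,B)] \cdot \walk} \intree \untree$, and invoke $\predEagerPullP$ with the witness $(B,Y)_{\walk} \in \outcome$. (iii) If $B \neq A$ and $(X,B) = a_i$ occurs in $\walk$, then since $B \neq A$ we have $i \geq 1$, and walk continuity forces $a_{i-1} = (B, W_{i-1})$ for some $W_{i-1}$; the walk $[a_{i-1}, \ldots, a_0]$ is a suffix of $\walk$, so the outgoing edge $(B, W_{i-1})_{[a_{i-1}, \ldots, a_0]}$ lies in $\outcome$ as a sub-witness of the same supporting walk, and applying $\predEagerPullP$ with $\walk_1 = [a_i, a_{i-1}, \ldots, a_0]$ produces the required $(X,B)_{\walk_3} \in \outcome$.

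The main obstacle will be the upward-extension argument needed in sub-cases (i) and (ii): I must show that a candidate prefix extends to a full walk in $\untree$, which reduces to establishing the fourth clause of $\funpath$ --- the topmost sender must either recur later as a receiver or have no incoming arc in $\graphsymbol$. The extension proceeds by iteratively prepending unused incoming arcs of the current top sender; the key combinatorial point is that whenever this clause fails, every incoming arc of the top sender is necessarily unused, since that sender does not yet appear as a receiver along the current walk. Termination then follows from the finiteness of $\arcsymbol$ together with the no-repetition constraint, because either the top condition is eventually satisfied or the current top sender runs out of incoming arcs, which itself meets the alternative of the clause. A minor auxiliary step is to verify that intermediate edges along the supporting walk in $\untree_p$ are themselves in $\outcome$, which is immediate from the suffix-based definition of $\intree \untree_p$.
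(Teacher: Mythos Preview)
Your overall strategy matches the paper's, and Part~1 together with your sub-cases (i) and (iii) are correct. There is, however, a genuine gap in sub-case (ii).

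Your case split for the second conjunct is on whether the \emph{specific} arc $(X,B)$ lies in $\walk$, but the relevant distinction is whether \emph{any} arc with receiver $B$ lies in $\walk$. Concretely, suppose $B \neq A$, $(X,B) \notin \walk$, yet some $(Z,B) \in \walk$ with $Z \neq X$. This is possible: $(B,Y)_\walk$ only fixes $B$ as the sender of the topmost arc of $\walk$, and $B$ may still occur once as a receiver further down (e.g.\ $\walk = [(B,Y),\ldots,(Z,B),(B,W),\ldots]$). In that situation $[(X,B)] \cdot \walk$ has $B$ as a receiver twice, violating the third clause of $\funpath$, so no upward extension to a full walk in $\untree$ exists and your appeal to $\predEagerPullP$ at that index fails. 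Your upward-extension argument only prevents introducing \emph{new} repeated receivers during the iteration; it never checks that the starting walk $[(X,B)] \cdot \walk$ is itself admissible.

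The fix---and this is exactly the paper's decomposition---is to split on whether $B$ occurs as a receiver anywhere in $\walk$. If it does, say $a_i = (Z,B)$, let $\walk_2 = [a_{i-1},\ldots,a_0]$ be the strict suffix below; by the no-repeated-receivers clause $\walk_2$ contains no arc with receiver $B$, so $[(X,B)] \cdot \walk_2$ is a valid prefix that extends to a walk in $\untree$. If $\walk_2$ is empty then $B = A$ and $\predHonestRootP$ applies; otherwise $a_{i-1} = (B,W)$ for some $W$, $(B,W)_{\walk_2} \in \outcome$ by the same sub-witness reasoning you use in (iii), and $\predEagerPullP$ finishes. Your case (iii) is then simply the special instance $Z = X$ of this argument, and your case (ii) should be restricted to ``no arc with receiver $B$ in $\walk$'', where your reasoning goes through as written.
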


\begin{proof}
The proof will be carried out by first showing \eqref{eq:underwater1} and then showing \eqref{eq:underwater2}.

Let $\outcome \in \outcomeset{\untree}$
then by definition of $\outcomesetP$ we have that 
for all $\edge \in \outcome$ that $\edge \intree \untree$.
Since $\untree = \fununfold(\graphsymbol, A)$ only contains walks consisting of arcs from $\graphsymbol$. Also, all elements from $\outcome$ are indexed edges from $\graphsymbol$.

    Further, let $\edge, (X, Y)_{\walk'} \in \outcome$ and $B \in \{ X, Y\}$. Assume towards contradiction that $\walk \neq \walk'$.
    By~\Cref{def:outcomes}, it holds that $\predNoDup{\untree}{B}{\outcome}$ and consequently, by the definition of $\predNoDupP$ also $\walk = \walk'$ immediately giving a contradiction.
    
    Let $\outcome \in \outcomeset{\untree}$ and $(B,Y)_{\walk} \in \outcome, (X,B) \in \graphsymbol$. 
    We make a case distinction on whether there exists some edge $(Z,B) \in \walk$.
    \begin{itemize}
        \item Assume that $\walk = \concatvec{\walk_1}{\concatvec{\unitvec{(Z,B)}}{\walk_2}}$. 
        Since $(B,Y)_{\walk} \in \outcome$, by definition of $\fununfold$ there exists $\walk_t \in \untree$ such that $\walk_t \suffix \walk$.
        In particular, this means that (same as $\walk_t$) also $\walk$ is a walk in $\graphsymbol$ not visiting any arc twice.
        Consequently, $\walk_2$ does not contain any other edge $(Z', B)$. 
        And so, also $\walk' = \concatvec{\unitvec{(X,B)}}{\walk_2}$ is a walk in $\graphsymbol$ not visiting any node twice. 
        Consequently, by definition of $\fununfold$, there must be some $\walk_t' \in \untree$ such that $\walk_t' \suffix \walk'$ and so also $(X,B)_{\walk'} \intree \untree$. 
        We do another case distinction on $\walk_2$: 
        \begin{itemize}
            \item If $\walk_2$ is empty, then $\fundepth(\walk') = 1$ and hence by \linebreak $\predHonestRoot{\untree}{B}{\outcome}$ it follows from $(X,B)_{\walk'} \intree \untree$ that also $(X,B)_{\walk'} \in \omega$.  
            \item If $\walk_2 = \concatvec{\unitvec{(B, U)_{\walk_2}}}{\walk_3}$ then also $(B,U)_{\walk_2} \in \outcome$ (because $(B,Y)_{\walk} \in \outcome$ by definition of $\partialtreeoutcomes$ implies that also all $(V,W) \in \walk$ are included in $\outcome$). 
            By definition of \linebreak $\predEagerPull{\untree}{B}{\outcome}$, it follows from $(X,B)_{\walk'} \intree \untree$ and \linebreak $(B,U)_{\walk_2} \in \outcome$
            that there must be some $(X,B)_{\walk_4} \in \outcome$, which concludes the case. 
        \end{itemize}
        \item Assume that there is no $(Z,B) \in \walk$. 
        Since $(B,Y)_{\walk} \in \outcome$, by definition of $\fununfold$ there exists $\walk_t \in \untree$ such that $\walk_t \suffix \walk$.
        In particular, this means that (as $\walk_t$) also $\walk$ is a walk in $\graphsymbol$ not visiting any node twice.
        Then also $\walk' = \concatvec{\unitvec{(X,B)}}{\walk}$ is a walk in $\graphsymbol$ not visiting any node twice. 
        Consequently, by definition, there must be some $\walk_t' \in \untree$ such that $\walk_t' \suffix \walk'$ and so also $(X,B)_{\walk'} \intree \untree$. 
        By definition of $\predEagerPull{\untree}{B}{\outcome}$, it follows from this and  $(B,Y)_{\walk} \in \outcome$
        that there must be some $(X,B)_{\walk_3} \in \outcome$, which concludes the case. 
    \end{itemize}

\end{proof}

\begin{theorem}[Correctness of Tree Unfolding]\label{theorem:correctness-graph-tree-appendix}
    Let $\graphsymbol = (\nodesymbol, \arcsymbol)$ be a digraph that is in-semiconnected in $A \in \nodesymbol$ and 
    $\untree = \fununfold(\graphsymbol, A)$ the tree unfolding of that graph.
    If all parties $B_i \in \nodesymbol$ are honest, a representative of every edge in $\untree$ will be executed, which means
    \begin{align*}
        \mathcal{H} := \bigcap_{B_i \in \nodesymbol} \outcomesetBi{\untree} &\cong \untree \textit{ i.e.} \\
        \mathcal{H} / \sim  &= \{ \e \intree \untree \} / \sim  
    \end{align*}
    with $\edge \sim (X,Y)_{\walk'}$ for all $\walk$, $\walk'$. 
\end{theorem}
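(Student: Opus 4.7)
The plan is to prove the two inclusions underlying $\mathcal{H}/\sim \cong \{\e \intree \untree\}/\sim$: that every $\outcome^* \in \mathcal{H}$ covers exactly the arcs of $\graphsymbol$ (one representative each, up to $\sim$). The direction ``no extra arcs'' is immediate, since $\outcome^* \in \partialtreeoutcomes(\untree)$ forces every $(X,Y)_{\walk} \in \outcome^*$ to arise as a sub-suffix of some walk in $\fununfold(\graphsymbol, A)$, whose arcs come from $\arcsymbol$ by construction.

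For the substantive direction (every arc is covered), I first note that in-semiconnectedness of $\graphsymbol$ in $A$ guarantees that each arc $(X,Y) \in \arcsymbol$ has at least one representative $(X,Y)_{\walk} \intree \untree$: any shortest $(Y,A)$-walk in $\graphsymbol$ has no repeated receivers, hence can be prepended with $(X,Y)$ and then extended backwards (if necessary) until the termination clause of $\funpath$ holds. I would then define the \emph{tree depth} of an arc as the minimum length over all its representatives in $\untree$ and induct on this depth. The base case (depth $1$) forces $Y = A$ and $(X,A)_{[(X,A)]} \intree \untree$, which lies in $\outcome^*$ by $\predHonestRoot{\untree}{A}{\outcome^*}$ (applicable since $\outcome^*$ lies in the outcome set of $A$).

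For the inductive step, pick a minimum-depth representative $\walk = \concatvec{\unitvec{(X,Y)}}{\walk'}$ of $(X,Y)$; then $(Y,Z)_{\walk'}$ sits at depth $|\walk|-1$, so the arc $(Y,Z)$ has strictly smaller tree depth, and by the induction hypothesis some $(Y,Z)_{\walk''} \in \outcome^*$. I then split on whether $\concatvec{\unitvec{(X,Y)}}{\walk''}$ is a tree-walk suffix in $\untree$: if yes, $\predEagerPull{\untree}{Y}{\outcome^*}$ applied with $\walk_1 = \concatvec{\unitvec{(X,Y)}}{\walk''}$ and $\walk_2 = \walk''$ yields a representative of $(X,Y)$ in $\outcome^*$; otherwise, either the arc $(X,Y)$ already appears inside $\walk''$, in which case the sub-suffix of $\walk''$ starting at $(X,Y)$ is an edge of $\outcome^*$ by the partial-tree closure of $\partialtreeoutcomes(\untree)$, or $Y$ is revisited as a receiver in $\walk''$ via some $(W,Y)$ with $W \neq X$.

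The main obstacle is this last subcase. Condition (3) and the no-arc-revisit clause of $\funpath$ force $\walk''$ to itself be a full tree walk starting at $Y$, ruling out direct prepending of $(X,Y)$. I would handle it by exploiting the duplicate edges introduced by the unfolding: invoking $\predNoDup{\untree}{Y}{\outcome^*}$ to keep track of which outgoing-arc representatives of $Y$ coexist in $\outcome^*$, and showing---via a careful inspection of how the unfolding allocates in-arcs of $Y$ across the different paths of $\untree$---that some outgoing arc of $Y$ must have a representative in $\outcome^*$ whose walk is a proper suffix of a longer tree walk passing through $(X,Y)$, so that $\predEagerPull$ applies after all. Combining the two inclusions then yields $\mathcal{H}/\sim = \{\e \intree \untree\}/\sim$ as claimed.
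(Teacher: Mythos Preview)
Your induction on the tree depth of an arc is a legitimate route and differs from the paper's argument. For $Y\neq A$ the paper works directly with the set of \emph{shortest} $(Y,A)$-walks in $\graphsymbol$ and runs an extremal ``highest missing level'' argument along them: because a shortest walk visits no node twice, $(X,Y)$ can always be prepended, so your hard subcase never arises; the proof then reduces to showing (via $\predHonestRootP$ at the root and repeated use of $\predEagerPullP$ climbing the walk, with the shortest-walk property killing the ``strictly shorter $\walk_3$'' branch) that the top edge of some shortest walk lies in $\outcome^*$.

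Your approach also goes through, but the hard subcase is simpler than you indicate and does not need $\predNoDupP$. Suppose $(Y,Z)_{\walk''}\in\outcome^*$ and $\walk''$ contains $(W,Y)$ with $W\neq X$; write $\walk''=\concatvec{\walk_a}{\concatvec{\unitvec{(W,Y)}}{\walk_b}}$. By the no-repeated-receiver clause of $\funpath$ this is the \emph{only} arc of $\walk''$ with receiver $Y$, so $\walk_b$ is nonempty (else $Y=A$, the base case), begins with some $(Y,U)$, and contains no arc with receiver $Y$. Partial-tree closure already gives $(Y,U)_{\walk_b}\in\outcome^*$, and $\concatvec{\unitvec{(X,Y)}}{\walk_b}$ now satisfies the receiver constraint of $\funpath$ and hence extends to a full walk of $\untree$; thus $(X,Y)_{\concatvec{\unitvec{(X,Y)}}{\walk_b}}\intree\untree$, and $\predEagerPullP$ for $Y$ applied at $\walk_b$ yields a representative of $(X,Y)$ in $\outcome^*$ directly. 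So the plan closes without the detour through $\predNoDupP$ or any inspection of how the unfolding ``allocates in-arcs.''
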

With $\mathcal{H} / \sim $ and $ \{ \e \intree \untree \} / \sim$ the creation of equivalence classes according to '$\sim$' is meant. 
\begin{proof}

    Let $ \outcome \in \mathcal{H}$. Then for all $B_i \in \nodesymbol$, we have $\outcome \in \outcomesetBi{\untree}$. 

    To prove the claim, we show that for all $\edge \in \outcome$ it holds that $(X, Y) \in \graphsymbol$ and that for all $(X, Y) \in \graphsymbol$ there exists $\walk$ such that $\edge \in \outcome$.
    The first claim follows directly from \cref{thm:security-graph}.
    To show the second claim, assume that $(X, Y) \in \graphsymbol$.
    We proceed by case distinction:
    \begin{itemize}
        \item If $Y = A$ then by the definition of $\fununfold$ it holds that \linebreak $(X,A)_{\unitvec{(X,A)}} \intree \untree$.
        Since $\outcome \in \outcomesetP(A, \untree)$, we also know that $\predHonestRoot{\untree}{A}{\outcome}$.
        Consequently, it immediately follows from the definition of $\predHonestRootP$ that 
        $$(X,A)_{\unitvec{(X,A)}} \in \omega .$$
        \item If $Y \neq A$ then since $\graphsymbol$ is in-semiconnected, there must be walks from $Y$ to the leader $A$. These walks can be ranked based on their lengths. With this, we end up with
        (potentially multiple) shortest walks $\walk_s$ (with $\length{\walk_s} > 1$) from $Y$ to $A$. Hence, they visit every node only once. Because there is the possibility of $Y$ appearing multiple times at the same tree level but on different walks, there can be multiple shortest walks with the same length.
        By definition of $\fununfold$ it holds that for all these walks $\walk_s$ we have that $(X,Y)_{\concatvec{\unitvec{(X,Y)}}{\walk_s}} \intree \untree$.
        We assume toward contradiction that there is no $\walk_s$ such that $(X,Y)_{\concatvec{\unitvec{(X,Y)}}{\walk_s}} \in \outcome$.
        
        Then either for all $\walk_s$ it holds that for all $(V,W)_{\concatvec{\unitvec{(V,W)}}{\walk_2}}$ with 
        \[
            \walk_s = \concatvec{\walk_1}{\concatvec{\unitvec{(V,W)}}{\walk_2}}
        \]
        for some $\walk_1, \walk_2$ that $(V,W)_{\concatvec{\unitvec{(V,W)}}{\walk_2}} \not \in \outcome$ 
        or there is a walk $\walk_j$ such that $(V,W)_{\concatvec{\unitvec{(V,W), (W, Z)}}{\walk_2}} \not \in \outcome$ with 
        \[
            \walk'_{s} = \concatvec{\walk_1}{\concatvec{\unitvec{(V,W), (W, Z)}}{\walk_2}}
        \]
        for some 
        $\walk_1, \walk_2$ and $(W,Z)_{\concatvec{\unitvec{(W, Z)}}{\walk_2}} \in \outcome$ 
        and for all $\walk_s$ it holds that for all 
        $(V',W')_{\concatvec{\unitvec{(V',W')}}{\walk_2'}}$ with 
        \[
            \walk_s = \concatvec{\walk_1'}{\concatvec{\unitvec{(V',W')}}{\walk_2'}}
        \]
        for some $\walk_1', \walk_2'$ and $\length{\walk_1'} \leq \length{\walk_1}$ that $(V',W')_{\concatvec{\unitvec{(V',W')}}{\walk_2'}} \not \in \outcome$.
        So either all elements of shortest walks $\walk_s$ are not contained in $\outcome$ or there must be a highest level where the shortest walk $\walk'_{s}$ contains an edge $(V,W)$ not in $\outcome$ followed by an edge $(W, Z)$ in $\outcome$. 
        The first case immediately leads to a contradiction because it implies that also $(V,W)_{\unitvec{(V,W)}} \not \in \outcome$ with $\walk_s =  \concatvec{\walk_1}{\unitvec{(V,W)}}$. However, since $\outcome \in \outcomesetP(W, \untree)$, we also know that $\predHonestRoot{\untree}{W}{\outcome}$.
        Consequently, it immediately follows from the definition of $\predHonestRootP$ that $(V,W)_{\unitvec{(V,W)}} \in \omega$.
        
        We, hence, assume the existence of $\walk'_s$ as described above. 
        Since $\outcome \in \outcomesetP(W, \untree)$, we also know that \linebreak $\predEagerPull{\untree}{W}{\outcome}$ and hence that there must be a $\walk_3$ such that $(V,W)_{\walk_3} \in \omega$ and 
        \[
            \length{\walk_3} \leq \length{\concatvec{\unitvec{(V,W), (W,Z)}}{\walk_2}}. 
        \]
        If $\length{\walk_3} < \length{\concatvec{\unitvec{(V,W), (W,Z)}}{\walk_2}}$ then $\concatvec{\walk_1}{\walk_3}$ would be a shorter walk from $Y$ to $A$ in $\graphsymbol$ than $\walk'_s$, which contradicts that $\walk'_s$ is a shortest path. \newline
        If $\length{\walk_3} = \length{\concatvec{\unitvec{(V,W), (W,Z)}}{\walk_2}}$ then $(V,W)_{\walk_3} \in \outcome$ with \linebreak $\walk_s = \concatvec{\walk_1'}{\walk_3}$ since 
        \begin{align*}
            \length{\walk'_s} &= \length{\concatvec{\walk_1}{\concatvec{\unitvec{(V,W), (W, Z)}}{\walk_2}}} \\
                         &= \length{\walk_1} + \length{\walk_3} = \length{\walk_s} = \length{\walk_1'} + \length{\walk_3}.
        \end{align*}
        So there is a shortest walk $\walk_s$ where an edge $(V',W')$ already occurs in $\outcome$ on a higher level than in $\walk'_s$, which contradicts the assumption that $\walk'_s$ is a shortest walk with the highest level where such edge occurs in $\outcome$. 
    \end{itemize}

\end{proof}

\section{Scalability of the Graph to Tree Conversion}
\label{sec:TtreeSize}

Albeit the \gtree 's size can grow exponentially with the number of users (exact bounds are given in the following), the CTLC-based protocol imposes tolerable on-chain cost per honest user because:
\begin{itemize}
    \item The number of executed edges per user is capped by the number of their arcs in the \atgacronym (and independent of the \gtree size).
    \item Applications relying on payment channels as instance of \tams, execute edges off-chain. 
\end{itemize}

For specifying an upper bound for the number of edges in a \gtree depending on the number of users in the underlying \atgacronym, we assume the largest \atgacronym possible:
Assume there are $N \in \N$ many users, and we have an \atgacronym where every user is directly connected to every other user with an arc. The unfolding process (\cref{def:treeunfoldingappendix}) for deducing a \gtree from an \atgacronym stops whenever a node is revisited and ensures that no walk in the \gtree features two edges linked to the same arc. 
The number of edges in this \gtree is now calculated using combinatorics. Let $A$ be the leader of the \gtree. Then, every walk features a minimum of $1$ and a maximum of $N-1$ other users than $A$. It ends once one of these users or $A$ appears a second time. Hence, the length of a walk, measured in edges, is given as $i + 1$, where $i$ is the number of its users other than $A$. The number of possible walks for a given $i$ is expressed by multiplying the ordered choice of users with the number of options for the last party in the walk: 
$$
	\frac{ (N-1)! }{ (N-1-i)! } ~i
$$
One of these walks has $(i+1)$ many edges. Summing over all $i$ leads us to the number of edges in the \gtree:
$$
	\sum_{i=1}^{N-1} \frac{ (N-1)! }{ (N-1-i)! } ~i (i+1)
$$
Since we considered the largest \atgacronym possible featuring $N$-many users, this is the upper bound for edges in the unfolded \gtree. 
Similarly, the maximum number of nodes is always smaller or equal than
$$
	\sum_{i=1}^{N-1} \frac{ (N-1)! }{ (N-1-i)! } ~2 i 
$$
as there can at most be $2$ instances of a user per walk. The maximum depth of a node in a \gtree is given by the longest possible walk, which has length $N$.
The maximum fan-out is $N-1$.

\section{\CTLC{} Semantics}
\label{sec:inferenceRules}
An environment is given as 
\[ 
    \environment := [ S_{com}, S_{rev}, \batchset, C_{adv}, C_{aut}, C_{en}, C_{cla}, F_{av}, F_{res}, t ]
\]
\begin{align*}
    \environmentCreatedSecrets &:= \{ \text{committed secrets} \} \\
    \environmentRevealedSecrets &:= \{ \text{revealed secrets} \} \\
    \environmentBatches &:= \{ \text{Set of advertised Batches} \} \\
    \environmentCTLCAdvertised & := \{ \text{advertised \CTLC{}s} \} \\
    \environmentCTLCEnabled &:= \{ \text{enabled \CTLC{}s} \} \\
    \environmentCTLCAuthorized &:= \{ \text{authorizations for \CTLC{}s} \} \\
    \environmentCTLCDecided &:= \{ \text{claimed \CTLC{}s} \} \\ 
    \environmentAvailableFunds &:= \{ \text{available funds} \} \\
    \environmentReservedFunds &:= \{ \text{reserved funds} \} \\
    \environmenttime &:= t
\end{align*}
To realize multiple environments at once, we are defining 
\begin{align*}
    \environmentvec := \fullenvironmentvec ,
\end{align*}
where every $\environment_{ch}$ is defined as $\environment$ above. 
The index $ch$ stands for "channel" and identifies one of the elements in $\environmentvec$.
In the main body of the paper, we use $\tammath$ and $\environment_{\tammath}$ instead of $ch$ and $\environment_{ch}$. This change of notation has no further meaning and is only for abbreviating the notation of the following formulas. Also, with "channel", we mean exactly the same object as previously referred to as "\tamfull". 
 
To identify who is a participant in a given environment, we define the assignment function $\conf$, which assigns users from $\environmentvec$ to every $\environmentch \in \environmentvec$ with
\begin{align*}
    \confuser{\environmentch} := \{ \text{users participating in } \environmentch\}. 
\end{align*}
This function $\conf$ is given together with $\environmentvec$ and the notation \linebreak $\conf \results \environmentvec$ means that $\conf$ is defined for all $\environmentch \in \environmentvec$. Additionally, $\conf$ cannot be altered and stays constant throughout all changes to $\environmentvec$. In the following, we will always assume that $\environmentvec$ is given together with $\conf$, and whenever we write $\environmentvec$, we mean $\conf \results \environmentvec$ implicitly. 
Further, we define
\begin{align*}
    \environmentvec.S_{com} &:= \bigcup_{1 \leq \channel \leq \vert \environmentvec \vert} \environmentchCreatedSecrets , \\
    \environmentvec.S_{rev} &:= \bigcup_{1 \leq \channel \leq \vert \environmentvec \vert} \environmentchRevealedSecrets , \\
    \environmentvec.t &:= \environment_1 .t = ... = \environment_{\vert \environmentvec \vert} .t, \\
    \environmentvec.\batchset &:= \environment_1 .\batchset = ... = \environment_{\vert \environmentvec \vert} .\batchset
\end{align*}
We also define the accumulation analogously for the other components of $\environmentch \in \environmentvec$. 
For changing only an element $\environmentch$ to $\environmentch'$ within $\environmentvec$ we use the update notation
\[
    \Update{\environmentvec}{\environmentch}{\environmentch'}.
\]
A \CTLC{} contract is given as a set of subcontracts \linebreak$\CTLCcontract := \{ \CTLCsubcontract_1, ..., \CTLCsubcontract_s \}$ with $\forall 1\leq i \leq s :$
\begin{align*} 
    \CTLCsubcontract_i := \bigl[X,Y,f^{\zeta}_X,\lambda_{i}, \funsecretset(\CTLCsubcontract_i) \bigr]
\end{align*}
where $X,Y \in \nodesymbol$, the set of participants in $\environmentvec$, and
\begin{align*}
    \funsecretset(\CTLCsubcontract_i) := \bigl\{ \{s_{\walk_{1}}^{id, 1},... ,s_{\walk_{n_i}}^{id, 1}\}, &\{s_{\walk_{1}}^{id, 2},... ,s_{\walk_{n_i}}^{id, 2}\}, \\
    ... , &\{s_{\walk_{1}}^{id, m},...,s_{\walk_{n_i}}^{id, m}\} \bigr\}
\end{align*}
is a set of secret sets. The used $x$ is a unique identifier of $\CTLCcontract$ where every subcontract $\CTLCsubcontract$ with the same identifier $x$ is an element of $\CTLCcontract$. 
The $f^{\zeta}_X$ is a token for a fund of user $X$ with a unique identifier $\zeta$.  
$\lambda_{i}$ is a timelock with $\lambda_{i} \in \R^+$ and if $i<s : \lambda_{i} \leq \lambda_{s}$. The identifier $id$ of the secrets is explained in the following paragraph.
For each of those $\CTLCsubcontract_i \in \CTLCcontract$ we define
\[
    \funposition(\CTLCsubcontract_i) := i.
\]
Advertised but not yet enabled contracts are notated with $\advCTLCcontract$. Once they are advertised, they are notated with $\CTLCcontract$. For $\CTLCsubcontract$, we don't use this notation as they are always considered as an element of a contract. The enabling and advertisement of contracts and sub-contracts is defined in the following semantics. 
A batch is given as a set of \CTLC{} contracts $\batch := \{ \advCTLCcontractid{x}_1, \advCTLCcontractid{x'}_2, ... \}$. 
With $\funhonest$, we denote the set of honest users in a given $\batch$.

The remaining element is a set, which, again, consists of sets containing secrets. Each secret $s_{\walk_{\iota}}^{id, j}$ is specific to an $id$, the identifier of $\batch$, and walk $\walk_{\iota}$. Here, in the standalone definition of $\CTLC$s, walks do not have a meaning yet, but this notation will be helpful in the following Appendix \ref{sec:honestStrategy} where we bring together $\CTLC$s with game trees. For now, view $\walk_{\iota}$ just as a differentiating index. 
Furthermore, each secret is owned by $\funowner(s_{\walk_{\iota}}^{id, j}) \in \confuser{\environmentch}$ for some $\environmentch \in \environmentvec$. Let $\Delta$ be a sufficient amount of time to execute an action on the specified $\channel$.

We define $s:= \vert \CTLCcontract \vert$. To access the individual components, we also define the following functions: 
\begin{align*}
    \funsender(\CTLCcontract) &= \funsender(\CTLCsubcontract_1)= ... = \funsender(\CTLCsubcontract_s) = X \\
    \funreceiver(\CTLCcontract) &= \funreceiver(\CTLCsubcontract_1) = ... = \funreceiver(\CTLCsubcontract_s) =Y \\
    \funusers(\CTLCcontract) &= \funusers(\CTLCsubcontract_1) = ... = \funusers(\CTLCsubcontract_s) \\
                            &= \{\funsender(\CTLCcontract), \funreceiver(\CTLCcontract)\} \\
    \funusers(\advCTLCcontract) &= \funusers(\CTLCcontract) \\
    \funusers(\batch) &= \bigcup_{\advCTLCcontract \in \batch} \funusers(\advCTLCcontract) \\
    \funfund(\CTLCcontract) &= \funfund(\CTLCsubcontract_1) = ... = \funfund(\CTLCsubcontract_s) = f^{\zeta}_X \\
    \timeoutsc(\CTLCsubcontract) &= t_0 + \lambda_{i} \Delta
\end{align*}
\begin{align*}
    \funsecret_1(\CTLCsubcontract_i) &= \{s_{\walk_{1}}^{id, 1},...,s_{\walk_{n_i}}^{id, 1}\} \\
    \funsecret_2(\CTLCsubcontract_i) &= \{s_{\walk_{1}}^{id, 2},...,s_{\walk_{n_i}}^{id, 2}\} \\
    ... & \\
    \funsecret_m(\CTLCsubcontract_i) &= \{s_{\walk_{1}}^{id, m},...,s_{\walk_{n_i}}^{id, m}\} \\
    \funsecret(\CTLCsubcontract_i) &= \{ \funsecret_1(\CTLCsubcontract_i) \} \cup \{ \funsecret_2(\CTLCsubcontract_i) \} \cup ...\\
\end{align*}

\begin{definition} \label{def:initialconfig}
    $\environmentvec := \fullenvironmentvec$ is called \textit{initial} if
    \begin{align*}
        \forall \environmentch \in \environmentvec : \environmentchCreatedSecrets &= \environmentchRevealedSecrets = \environmentchBatches = \environmentchCTLCAdvertised \\
        &= \environmentchCTLCEnabled = \environmentchCTLCAuthorized = \environmentchCTLCDecided \\
        &= \environmentchReservedFunds = \emptyset . 
    \end{align*}
    For all $\environmentch \in \environmentvec$ all elements/sets in $\environmentch$ are empty besides $\environmentchAvailableFunds$. In $\environmentchAvailableFunds$ there can be funds $f^{\zeta}_X$ for an $X \in \confuser{\environmentch}$. Each fund $f^{\zeta}_X \in \environmentvecAvailableFunds$ is unique with an unique identifier ${\zeta}$. For any contract used in the following semantics, a fund needs to be predefined in this initial environment because funds cannot be created afterward, and every contract needs a unique fund.
    The following will start with an $\textit{initial environment}$.
\end{definition}

To shorten the following rule, we define the well-formedness of a batch as a standalone function: \newcommand{\funwellformedbatch}{well-formed}
\begin{align}
    &\funwellformedbatch(\batch) :\Leftrightarrow \batch = \{ \advCTLCcontractid{x}_1, \advCTLCcontractid{x'}_2, ... \} \nonumber \\
    &\land \funhonest \neq \emptyset \nonumber\\ 
    &\land \forall \advCTLCcontract \in \batch : s, \lambda_1, ... , \lambda_s \in \N \land \advCTLCcontract := \{ \CTLCsubcontract_1, ..., \CTLCsubcontract_s \} \text{ with } \nonumber\\
    &\forall 1\leq i \leq s : \CTLCsubcontract_i := \bigl[X,Y,f^{\zeta}_X,\funsecret(\CTLCsubcontract_i), t_0 + \lambda_{i} \Delta \bigr], \nonumber\\
    &\text{and if } i<s : \lambda_i \leq \lambda_{i+1} . \label{ctlc:well-formedness}
\end{align}

\def\inferencescaling{0.9}
\begin{equation}\label{ctlc:advBatch} \small
\scalebox{\inferencescaling}{$
\inference[\text{[\advBatch]}]{ \batch \text{ with } \funwellformedbatch(\batch) \land \forall \advCTLCcontract \in \batch :
\\ \exists \channel : \funfund (\advCTLCcontract) \in \environmentchAvailableFunds 
\\ \land \forall A \in \funusers(\advCTLCcontract) : A \in \confuser{\environmentch} 
\\ \land \forall \CTLCsubcontract \in \advCTLCcontract : \funsecret(\CTLCsubcontract) \cap (\environmentCreatedSecrets \cup \environmentRevealedSecrets) = \emptyset ,
\\ \environmentvec' := [\environment_1', ..., \environment_{\vert \environmentvec \vert}'], \forall 1 \leq \channel \leq \vert \environmentvec \vert \text{ set}
\\ \fullenvironmentch ,
\\ \batchset' := \environmentchBatches \cup \batch
\\ \environmentch' := [ S_{\textit{com}}, S_{\textit{rev}}, \batchset' , C_{\textit{adv}}, C_{\textit{aut}}, C_{\textit{en}}, C_{\textit{cla}}, F_{\textit{av}}, F_{\textit{res}}, t ]}
{\environmentvec \overset{ \advBatch \, \batch }{\longrightarrow} \environmentvec' }
$}
\end{equation}
The $\advBatch$ rule checks that the proposed batch is well formed and that for every included contract the fund is available in a channel, which includes the users included in this contract. Additionally none of the secrets used in this batch should have been used before. 
Batches are global objects, and so they get copied to all environments. For talking about their secrets we define for $A \in \funusers(\batch)$
\begin{align*}
    S_A(\batch) &:= \{ s_{\walk}^{id} \mid \exists \CTLCsubcontract \in \advCTLCcontract \in \batch : ~ s_{\walk}^{id} \in \funsecret(\CTLCsubcontract) \}, \\
    S(\batch) &:= \bigcup_{A \in \funusers(\batch)} S_A(\batch).
\end{align*}

\begin{equation}\label{ctlc:commitBatch} \small
\scalebox{\inferencescaling}{$
\inference[\text{[\commitBatch]}]{ \batch \in \environmentchBatches, A \in \funusers(\batch) 
\\ S_A(\batch) \cap ( \environmentchCreatedSecrets \cup \environmentchRevealedSecrets) = \emptyset ,
\\ S_{com}' := S_{com} \cup S_A(\batch),
\\ \environmentvec' := [\environment_1', ..., \environment_{\vert \environmentvec \vert}'], \forall 1 \leq \channel \leq \vert \environmentvec \vert \text{ set}
\\ \fullenvironmentch ,
\\ \environmentch' := [ S_{\textit{com}}', S_{\textit{rev}}, \batchset, C_{\textit{adv}}, C_{\textit{aut}}, C_{\textit{en}}, C_{\textit{cla}}, F_{\textit{av}}, F_{\textit{res}}, t ]}
{\environmentvec \overset{A: \, \commitBatch \, \batch }{\longrightarrow} \environmentvec'}
$}
\end{equation}
With the $\commitBatch$ action a user commits to all secrets appearing in a given batch. It is a global action.

\begin{equation}\label{ctlc:advCTLC} \small
\scalebox{\inferencescaling}{$
\inference[\text{[\advCTLC]}]{
 \environmentch \in \environmentvec, \advCTLCcontract \in \batch \in \environmentchBatches, \advCTLCcontract \notin \environmentchCTLCAdvertised ,
\\ \forall \CTLCsubcontract \in \advCTLCcontract : \funsecret(\CTLCsubcontract) \subseteq \environmentchCreatedSecrets,
\\ \funfund(\advCTLCcontract) \in \environmentchAvailableFunds ,
\\ \funusers(\advCTLCcontract) \cap \funhonest \neq \emptyset , 
\\ \funusers(\advCTLCcontract) \subseteq \confuser{\environmentch},
\\ C_{adv}' := \environmentchCTLCAdvertised \cup \{ \advCTLCcontract \},
\\ \fullenvironmentch, 
\\ \environmentch' := [ S_{\textit{com}}, S_{\textit{rev}}, \batchset, C_{\textit{adv}}', C_{\textit{aut}}, C_{\textit{en}}, C_{\textit{cla}}, F_{\textit{av}}, F_{\textit{res}}, t ], }
{\environmentvec \overset{ \advCTLCch \, \advCTLCcontract}{\longrightarrow} \Update{\environmentvec}{\environmentch}{\environmentch'}}
$}
\end{equation}
\CTLC{}s are local objects, so advertising one is a local operation. 

\begin{equation}\label{ctlc:authCTLC} \small
\scalebox{\inferencescaling}{$
\inference[\text{[\authCTLC]}]{ \environmentch \in \environmentvec, \advCTLCcontract \in \environmentchCTLCAdvertised, (A,\advCTLCcontract) \notin \environmentchCTLCAuthorized , 
\\ \bigl( A = \funreceiver(\advCTLCcontract) 
\\ \lor (A = \funsender(\advCTLCcontract) \land (\funreceiver(\advCTLCcontract), \advCTLCcontract) \in \environmentchCTLCAuthorized) \bigr) , 
\\ \funfund(\advCTLCcontract) \in \environmentchAvailableFunds , 
\\ C'_{aut} = \environmentchCTLCAuthorized \cup \{ (A,\advCTLCcontract) \}, 
\\ \fullenvironmentch,
\\ \environmentch' := [ S_{\textit{com}}, S_{\textit{rev}}, \batchset, C_{\textit{adv}}, C'_{\textit{aut}}, C_{\textit{en}}, C_{\textit{cla}}, F_{\textit{av}}, F_{\textit{res}}, t ]}
{\environmentvec \overset{A: \, \authCTLC \, \advCTLCcontract }{\longrightarrow} \Update{\environmentvec}{\environmentch}{\environmentch'} }
$}
\end{equation}
Note that $\funsender(\advCTLCcontract)= \funowner(\funfund(\advCTLCcontract))$. For authorizing a \CTLC{}, all its secrets and the fund need to be available. To make sure that no contract can get executed without the consent of all included participants, both sender and receiver need to authorize them before it can proceed with enabling them in the next step. 

\begin{equation}\label{ctlc:enableCTLC} \small
\scalebox{\inferencescaling}{$
\inference[\text{[\enableCTLC]}]{ \environmentch \in \environmentvec, \advCTLCcontract \in \environmentchCTLCAdvertised, \CTLCcontract \notin \environmentchCTLCEnabled ,  \\
\CTLCcontract := \{ \CTLCsubcontract \}, \text{ for } \CTLCsubcontract \in \advCTLCcontract , s = \vert \advCTLCcontract \vert ,\\
\funposition(\CTLCsubcontract) = s, \\
Aut := \{ (\funsender(\advCTLCcontract), \advCTLCcontract),(\funreceiver(\advCTLCcontract), \advCTLCcontract) \},\\
Aut \subseteq \environmentchCTLCAuthorized , \, \funfund(\advCTLCcontract) \in \environmentchAvailableFunds , \\ 
A = \funsender(\CTLCcontract), C_{en}' := \environmentchCTLCEnabled \cup \{ \CTLCcontract \} ,\\
C_{aut}' := \environmentchCTLCAuthorized \backslash Aut, \, 
F_{av}' := \environmentchAvailableFunds \backslash \{ \funfund(\advCTLCcontract) \}, \\
F_{res}' := \environmentchReservedFunds \cup \{ \funfund(\advCTLCcontract) \} , \\
\fullenvironmentch \\ 
\environmentch' := [ S_{\textit{com}}, S_{\textit{rev}}, \batchset, C_{\textit{adv}}, C_{\textit{aut}}', C_{\textit{en}}', C_{\textit{cla}}, F_{\textit{av}}', F_{\textit{res}}', t ]}
{\environmentvec \overset{\enableCTLCch \, \CTLCcontract}{\longrightarrow} \Update{\environmentvec}{\environmentch}{\environmentch'}}
$}
\end{equation}
In case authorizations have been given and the fund for a \CTLC{} is available it can be enabled, which is a local action happening in one channel. With this the last of its subcontracts is made available in $\environmentchCTLCEnabled$. When a \CTLC{} has been enabled the remaining subcontracts can be enabled one by one by the sender.

\begin{equation}\label{ctlc:enableSubC} \small
\scalebox{\inferencescaling}{$
\inference[\text{[\enableSubC]}]{ \environmentch \in \environmentvec, \advCTLCcontract \in \environmentchCTLCAdvertised, \CTLCcontract \in \environmentchCTLCEnabled,
\\ \CTLCsubcontract \in \advCTLCcontract \backslash \CTLCcontract, A = \funsender(\CTLCcontract) ,
\\ C_{en}' := \environmentchCTLCEnabled \backslash \{ \CTLCcontract \} \cup \{ \CTLCcontract \cup \{ \CTLCsubcontract \} \} ,
\\ \fullenvironmentch, 
\\ \environmentch' := [ S_{\textit{com}}, S_{\textit{rev}}, \batchset, C_{\textit{adv}}, C_{\textit{aut}}, C_{\textit{en}}', C_{\textit{cla}}, F_{\textit{av}}, F_{\textit{res}}, t ]}
{\environmentvec \overset{A: \, \enableSubC \, \CTLCsubcontract}{\longrightarrow} \Update{\environmentvec}{\environmentch}{\environmentch'} }
$}
\end{equation}
Since \CTLC{}s are local objects, subcontracts are also local. Now that contracts and their subcontracts can be enabled the next step towards executing them is revealing their secrets. 

\begin{equation}\label{ctlc:revealSecret} \small
\scalebox{\inferencescaling}{$
\inference[\text{[\revealSecret]}]{ \environmentch \in \environmentvec, s^{id}_{\walk} \in \environmentchCreatedSecrets \backslash \environmentchRevealedSecrets,
\\   A = \funowner(s^{id}_{\walk}) \in \confuser{\environmentch},
\\  S_{com}' := \environmentchCreatedSecrets \backslash \{ s^{id}_{\walk} \},
\\  S_{rev}' := \environmentchRevealedSecrets \cup \{ s^{id}_{\walk} \},
\\ \fullenvironmentch, 
\\ \environmentch' := [ S_{\textit{com}}', S_{\textit{rev}}', \batchset, C_{\textit{adv}}, C_{\textit{aut}}, C_{\textit{en}}, C_{\textit{cla}}, F_{\textit{av}}, F_{\textit{res}}, t ]}
{\environmentvec \overset{A: \, \revealSecretch \, s^{id}_{\walk}}{\longrightarrow} \Update{\environmentvec}{\environmentch}{\environmentch'} }
$}
\end{equation}
Where committing to secrets was a global operation, revealing them is local.

\begin{equation}\label{ctlc:shareSecret} \small
\scalebox{\inferencescaling}{$
\inference[\text{[\shareSecret]}]{ \environmentch, \environment_{ch'} \in \environmentvec, s^{id}_{\walk} \in  \environment_{ch'}.S_{rev} \backslash \environmentchRevealedSecrets, 
\\  A \in \confuser{\environmentch} \cap \confuser{\environment_{ch'}},
\\  S_{rev}' := \environmentchRevealedSecrets \cup \{ s^{id}_{\walk} \},
\\ \fullenvironmentch, 
\\ \environmentch' := [ S_{\textit{com}}, S_{\textit{rev}}', \batchset, C_{\textit{adv}}, C_{\textit{aut}}, C_{\textit{en}}, C_{\textit{cla}}, F_{\textit{av}}, F_{\textit{res}}, t ]}
{\environmentvec \overset{A: \, \shareSecretch s^{id}_{\walk}}{\longrightarrow} \Update{\environmentvec}{\environmentch}{\environmentch'} }
$}
\end{equation}
With 'shareSecret', a secret already known in $\environment_{ch'}$ gets shared with an environment $\environmentch$. As participants in $\environment_{ch'}$ do not unlearn a secret it also stays in $\environment_{ch'}.S_{rev}$. Any user can execute this operation if they are part of $\environmentch$ and $\environment_{ch'}$. 

\begin{equation}\label{ctlc:timeout} \small
\scalebox{\inferencescaling}{$
\inference[\text{[\timeoutsc]}]{ \advCTLCcontract \in \environmentchCTLCAdvertised, \CTLCcontract \in \environmentchCTLCEnabled ,
\\ \vert \advCTLCcontract \vert > 1,  \CTLCsubcontract \in \advCTLCcontract , 
\\ \nexists \, \dotCTLCsubcontract \in \advCTLCcontract : \funposition(\dotCTLCsubcontract) < \funposition(\CTLCsubcontract),
\\ \funtimeout(\CTLCsubcontract) \leq \environmentchtime,
\\ C_{en}' := \environmentchCTLCEnabled \backslash \{ \CTLCcontract \} \cup \{ \CTLCcontract \backslash \{ \CTLCsubcontract \} \},
\\ C_{adv}' := \environmentchCTLCAdvertised \backslash \{ \advCTLCcontract \} \cup \{ \advCTLCcontract \backslash \{ \CTLCsubcontract \} \},
\\ \fullenvironmentch,
\\ \environmentch' := [ S_{\textit{com}}, S_{\textit{rev}}, B, C_{\textit{adv}}', C_{\textit{aut}}, C_{\textit{en}}', C_{\textit{cla}}, F_{\textit{av}}, F_{\textit{res}}, t ] }
{\environmentvec \overset{\timeoutsc \, (\CTLCcontract, \CTLCsubcontract)}{\longrightarrow} \Update{\environmentvec}{\environmentch}{\environmentch'} }
$}
\end{equation}
With $\timeoutsc$, one of the subcontracts in a $\CTLC{}$ can get disabled after its timelock has run out. If only one subcontract is left, the whole contract can be refunded instead. These actions remove them from the advertised and enabled contracts and as $\advCTLC$, $\enableCTLC$, and $\enableSubC$ where local actions this also happens locally in one $\environmentch$.  

\begin{equation}\label{ctlc:refund} \small
\scalebox{\inferencescaling}{$
\inference[\text{[\refund ]}]{ \advCTLCcontract \in \environmentchCTLCAdvertised, \CTLCcontract \in \environmentchCTLCEnabled ,
 \vert \advCTLCcontract \vert = 1,  \CTLCsubcontract \in \advCTLCcontract ,
\\ \environmentchtime \geq \funtimeout(\CTLCsubcontract),
\\ C_{en}' := \environmentchCTLCEnabled \backslash \{ \CTLCcontract \},
 C_{adv}' := \environmentchCTLCAdvertised \backslash \{ \CTLCcontract \},
\\ F_{av}' := \environmentchAvailableFunds \cup \{\funfund(\CTLCcontract)\}, 
 F_{res}' := \environmentchReservedFunds \backslash \{\funfund(\CTLCcontract)\},
\\ \fullenvironmentch, 
\\ \environmentch' := [ S_{\textit{com}}, S_{\textit{rev}}, B, C_{\textit{adv}}', C_{\textit{aut}}, C_{\textit{en}}', C_{\textit{cla}}, F_{\textit{av}}', F_{\textit{res}}', t ] }
{\environmentvec \overset{ \refund \, \CTLCcontract}{\longrightarrow} \Update{\environmentvec}{\environmentch}{\environmentch'} }
$}
\end{equation}
If the timelock for a subcontract has not been reached, its fund is reserved and all its secrets are known locally it can get claimed. 

\begin{equation} \label{ctlc:DecideCo} \small
\scalebox{\inferencescaling}{$
\inference[\text{[\DecideCo ]}]{ \advCTLCcontract \in \environmentchCTLCAdvertised, \CTLCsubcontract \in \CTLCcontract \in \environmentchCTLCEnabled, 
\\ \exists i : \, \funsecret_i(\CTLCsubcontract_{\iota}) \subseteq \environmentchRevealedSecrets, \funfund(\CTLCcontract) \in \environmentchReservedFunds, 
\\ \nexists \, \dotCTLCsubcontract \in \advCTLCcontract : \funposition(\dotCTLCsubcontract) < \funposition(\CTLCsubcontract),
\\ C_{adv}' := \environmentchCTLCAdvertised \backslash \{ \CTLCcontract \},
\\ C_{en}' := \environmentchCTLCEnabled \backslash \{ \CTLCcontract \},
 C_{cla}' := \environmentchCTLCDecided \cup \{ \CTLCcontract \cap \{ \CTLCsubcontract \} \},
\\ \fullenvironmentch, 
\\ \environmentch' := [ S_{\textit{com}}, S_{\textit{rev}}, B, C_{\textit{adv}}', C_{\textit{aut}}, C_{\textit{en}}', C_{\textit{cla}}', F_{\textit{av}}, F_{\textit{res}}, t ] }
{\environmentvec \overset{ \DecideCo (\CTLCcontract ,\CTLCsubcontract,  \funsecret_i(\CTLCsubcontract))}{\longrightarrow} \Update{\environmentvec}{\environmentch}{\environmentch'}} 
$}
\end{equation}

Note that the $\DecideCo$ action needs all secrets of one of the secret sets $\funsecret_i(\CTLCsubcontract_{\iota})$ to be revealed in one $\environmentch$, predefined by the location of $\funfund(\CTLCcontract)$. This means that it is possible to $\DecideCo$ $\CTLCsubcontract_{\iota}$ based on the secrets from $\funsecret_1(\CTLCsubcontract_{\iota})$ \textbf{or} $\funsecret_2(\CTLCsubcontract_{\iota})$ \textbf{or} ... .
Also, note that $\CTLCsubcontract_{\iota}$ can only be claimed when being the \emph{top-level contract}, meaning that the contract below (and hence all contracts below) has been timed out before.
One subtlety here is that also non-enabled subcontracts must have timed out (as long as the main CTLC has been enabled). 
This is achieved by checking for the appearance in the set $\advCTLCcontract \in \environmentchCTLCAdvertised$ of advertised subcontracts (belonging to the enabled contract $\CTLCsubcontract_{\iota}$). 
The reason for this modeling is the static nature of CTLCs: Subcontracts are pre-defined spending options that can be dynamically enabled by the contract senders but whose execution follows a strict hierarchical order. 
To ensure that low-hierarchy spending options are made available as expected, it needs to be ensured that high-priority spending options cannot be enabled at a later point in time, messing with the execution order. 
For this reason, the spending whole spending option (independent of whether being enabled or not) can time out. 

Finally, a claimed contract can be withdrawn. Here, in the second line of the rule, the owner of the fund belonging to $\CTLCcontract$ is set to the receiver of this contract. Before this action, the owner of this fund was the sender of $\CTLCcontract$. 

\begin{equation}\label{ctlc:CoEx} \small
\scalebox{\inferencescaling}{$
\inference[\text{[\CoEx]}]{ \advCTLCcontract \in \environmentchCTLCAdvertised, \CTLCsubcontract \in \CTLCcontract \in \environmentchCTLCDecided, 
\\ \funowner(\funfund(\CTLCcontract)) := \funreceiver(\CTLCcontract), \funfund(\CTLCcontract) \in \environmentchReservedFunds ,
\\ C_{cla}' := \environmentchCTLCDecided \backslash \{ \CTLCcontract \},
\\ F_{av}' := \environmentchAvailableFunds \cup \{ \funfund(\CTLCcontract) \},
\\ F_{res}' := \environmentchReservedFunds \backslash \funfund(\CTLCcontract),
\\ \fullenvironmentch, 
\\ \environmentch' := [ S_{\textit{com}}, S_{\textit{rev}}, B, C_{\textit{adv}}, C_{\textit{aut}}, C_{\textit{en}}, C_{\textit{cla}}', F_{\textit{av}}', F_{\textit{res}}', t ]
}
{\environmentch \overset{ \CoEx (\CTLCcontract, \CTLCsubcontract ) }{\longrightarrow} \Update{\environmentvec}{\environmentch}{\environmentch'}} 
$}
\end{equation}

\begin{equation}\label{ctlc:elapsetime} \small
\scalebox{\inferencescaling}{$
    \inference[\text{[elapse $\delta$]}]{
    \environmentvec' := [\environment_1', ..., \environment_{\vert \environmentvec \vert}'], \forall 1 \leq \channel \leq \vert \environmentvec \vert \text{ set}
\\  t' = t + \delta ,
\\  \fullenvironmentch,
\\  \environmentch' := [ S_{\textit{com}}, S_{\textit{rev}}, \batchset, C_{\textit{adv}}, C_{\textit{aut}}, C_{\textit{en}}, C_{\textit{cla}}, F_{\textit{av}}, F_{\textit{res}}, t' ] 
    }
    {\environmentvec \overset{ elapse \, \delta }{\longrightarrow} \environmentvec'}
    $}
\end{equation}
Time can only be changed in all $\environmentch \in \environmentvec$ at once an thus we write $\environmenttime := \environment_1.t = ... = \environment_{\vert \environmentvec \vert}.t$. 

\section{Honest User Strategy}
\label{sec:honestStrategy}

To model execution on \tams, we will adopt a symbolic execution model similar to the one presented in~\cite{BitML}. 
We will call sequences $\environmentvec_0 \overset{ \action_0}{\longrightarrow} \environmentvec_1 \overset{ \action_1}{\longrightarrow} \cdots$ runs where $\action_i$ are transition labels and $\environmentvec_0$ is an initial environment. 
We will refer to these transition labels as \emph{moves} or \emph{actions} in the following. Given a run
\[
    \run:= \environmentvec_0 \overset{ \action_0}{\longrightarrow} \cdots \overset{ \action_{n-1}}{\longrightarrow} \environmentvec_n 
\]
of length $n \in \N$ we set $\environmentvec_n =: \funlastEnv(\run)$ as the last environment of $\run$. 

A participant strategy is a function $\userstrategy{\honestuser}$ taking as input a run $\run$ and outputting a set of transition labels, indicating the actions that the user wants to schedule.
 \begin{definition}[Participant strategies]
 \label{def:participantstrategy}
 
The strategy of an (honest) user $\honestuser$ is a function $\userstrategy{\honestuser}$, taking as input a run $\run$.
The output is a set of $\action$-moves such that the following conditions hold: 
\begin{enumerate}
    \item Participant strategies can only output actions that are valid with respect to the semantics: 
    $$\forall \action \in \userstrategy{\honestuser}(\run): \run \overset{ \action}{\longrightarrow} ;$$
    \item Users can only schedule restricted actions if they are the ones to whom the action is restricted: 
    $$\forall \action \in \userstrategy{\honestuser}(\run): \action = U: \action' \Rightarrow \honestuser = U ;$$
    \item Participant strategies must be persistent, meaning that a participant strategy needs to keep scheduling an action as long as it is valid:
    $$\forall \action \in \userstrategy{\honestuser}(\run): \run \overset{ \action'}{\longrightarrow} \run'  \overset{ \action}{\longrightarrow} ~\Rightarrow \action \in \userstrategy{\honestuser}(\run \overset{ \action'}{\longrightarrow} \run') .$$
\end{enumerate}
\end{definition}

To model the power of miners in the execution of honest user actions in a blockchain ecosystem, we define the adversary strategy to be a function that, given a run and the outputs of the honest user strategies, can produce the next action to extend the run. 
This models both, the attacker's capability to order honest user actions arbitrarily and the adversary's power to include own transactions based on the knowledge gathered from the scheduled actions of honest users. 
To give basic guarantees to the honest user, the attacker strategy is restricted to only be able to make time pass once all honest users either agree to do so or have no more actions scheduled.
This ensures that honest users can always meet deadlines and that the protocol execution cannot advance (in time) without their scheduled actions being taken into account.

\begin{definition}[Adversary strategy] \label{def:adversarystrategy}
Let $\honestusers =  \{ \honestuser_i, \dots, \honestuser_k\}$ be a set of honest users.
An adversary strategy $\attackerstrategy$ (for $\honestusers$) is a function taking as input a run $\run$ and a list $\vec{\useractions} = [\useractions_1, \dots \useractions_k]$ of sets of moves for each $\honestuser_i \in \honestusers$. 
The output is a single adversary action $\action$ such that the following conditions hold:
\begin{enumerate}
    \item The adversary strategy can only output actions that are valid with respect to the semantics: $$\forall \action \in \attackerstrategy(\run, \vec{\useractions}): \run \overset{ \action}{\longrightarrow} ;$$
    \item Restricted actions of honest users can only be chosen by the adversary strategy if scheduled by the corresponding honest user strategy: 
    \begin{align*}
    &\attackerstrategy(\run, \vec{\useractions}) = U :\action' ~\land~ U \in \honestusers \\
    &\Rightarrow \exists i:  \honestuser_i = U ~\land~ U :\action' \in \useractions_i ;
    \end{align*}
    \item The adversary can only output a time elapse action if all users agree to do so:
    \begin{align*}
    &\attackerstrategy(\run, \vec{\useractions}) = \actelapse{\delta} \\
    &\Rightarrow \forall \honestuser_i \in \honestusers: \useractions_i = \emptyset \\
    & \hspace{10pt} ~\lor~ \exists \delta_i: \actelapse{\delta_i} \in \useractions_{i} ~\land~ \delta_i \geq \delta \geq \epsilon > 0
    \end{align*}
\end{enumerate}
We assume an $\epsilon > 0$ as a constant minimum size for all $\actelapse \delta$ actions throughout the run. 
\end{definition}

Based on these notions, we can define when a run $\run$ is conformant with a given set of strategies:

\begin{definition}[Conformant runs]
    Let $\honestusers =  \{ \honestuser_1, \dots, \honestuser_k\}$ be a set of honest users
and $\honestuserstrategies = \{ \userstrategy{\honestuser_1}, \dots, \userstrategy{\honestuser{_k}} \}$ a corresponding set of user strategies.
Let $\attackerstrategy$ be an adversary strategy (for $\honestusers$).  
We say that a run $\run$ \emph{conforms to $(\honestuserstrategies, \attackerstrategy)$} (written $(\honestuserstrategies, \attackerstrategy) \conforms \run$) if one of the following holds
\begin{enumerate}
    \item $\run = \environmentvec_0$ is an initial environment 
    \item $\run = \run'  \overset{ \action}{\longrightarrow} \environmentvec$ where $(\honestuserstrategies, \attackerstrategy) \conforms \run'$ and \\
    $\action = \attackerstrategy(\run', [ \userstrategy{\honestuser_1}(\run'), \dots, \userstrategy{\honestuser{_k}}(\run') ])$
\end{enumerate}
\end{definition}

We will also write $\honestuserstrategies \conforms \run$ as shorthand for 
\[
    \exists \attackerstrategy:(\honestuserstrategies, \attackerstrategy) \conforms \run .
\]
Note that for the sake of simplicity, as opposed to~\cite{BitML}, we assume here strategies to be functions (instead of PPTIME algorithms). 
To achieve computational soundness results, one could require these functions to be PPTIME algorithms and give them access to a (user-specific) source of randomness.
It will be evident to see that the honest user strategies presented in this work run in PPTIME. 

Another major adaption with respect to the model from~\cite{BitML} is that we do not explicitly model the values of secrets. 
This is because our semantics does not model any computations on secret values which allows us to only refer to characterize the function of these secrets purely in terms of user access to these secrets. 
This simplification substantially simplifies the theoretical model because we are not required to explicitly strip secret values from runs and user actions in order to model that those are not accessible to users and the attacker.

\paragraph{Preparation Phase}
This section establishes the formal connection between trees and \CTLC{}s. We will denote $\environmentvec = \funlastEnv(\run)$ in the following. We define $\treeobj$ as a set of \emph{tree specifications} consisting of elements $\fulltreeobj$ where $id$ is a unique identifier for this tree, which will be used to establish the connection to a batch. 
In the following, we assume a game tree $\untree$ (Def. \ref{def:gameTree}) to be given.
As defined in $\eqref{eq:unfold}$, it can result from unfolding a digraph $\graphsymbol$ with $\untree:= \fununfold(\graphsymbol, A)$. With $t_0 \in \R^+$, the beginning of the Execution Phase is determined. All $\funtimeout$s will be set after $t_0$, and for the honest user, the setup of contracts will be done before $t_0$, and executions only happen after $t_0$. The last element, $\specalone$ is defined as a function on all $\e \intree \untree$:
\[
    \spec{\e}:= (\edgefund{\e}, \channeledge{\e}) = (f^{\zeta}_X, \channel) 
\]
Here, $\edgefund{\e}$ defines the fund as a token in $\environment_{\channeledge{\e}}$ for the subcontract that will resemble this edge. $X$ is the sender of $\e$ and hence owner of $f^{\zeta}_X$. The \textit{specification} $\spec{\e}$ defines the preconditions for the initial environment for it to be able to resemble this edge. 
\begin{definition} \label{def:specification}
    A specification $\specalone$ on $\untree$ is \textit{valid} if for all
    $\e, \e' \intree \untree$ it holds 
    \begin{align*}
        &\spec{\e} = \spec{\e'} : \Leftrightarrow \\
        &\funsender(\e) = \funsender(\e') \land \funreceiver(\e) = \funreceiver(\e')
    \end{align*}
\end{definition}
This condition ensures that edges in different locations in the tree resulting from one arc in $\graphsymbol$ through the unfolding process feature the same output of $\specalone$. We call these \textit{duplicated edges}, and in the following, the equality of their specifications will be used to identify them. In the following, $\specalone$ will be a valid specification. 
 
We assign a unique secret 
\begin{equation} 
    \funsecret((A,B)_{\walk}, id) =: s_{\walk}^{id} \label{def:edgesecret}
\end{equation}
with $\funowner(s_{\walk}^{id}) = B$ to any edge $(A,B)_{\walk} \in \untree$. 
For defining the \CTLC{} contracts resulting from a given $\treeobj$, we first look at a single \linebreak $\fulltreeobj \in \treeobj$ and group together duplicated edges on the same level:
\begin{align} \label{def:funedgegroup}
    \funedgegroup := \bigcup_{\e \intree \untree} \{ \e' \intree \untree & \mid \fundepth(\e) = \fundepth(\e') \\
    & \hspace{-10pt} \land \funsender(\e) = \funsender(\e') \nonumber\\
    & \hspace{-10pt} \land \funreceiver(\e) = \funreceiver(\e') \} \nonumber
\end{align}
For a fixed $\Omega \in \funedgegroup$ we look at edges $\e \in \Omega$, and for these, we set
\begin{align} \label{def:hsecMapping}
    h_{sec}(\e, id) := \bigcup_{\e' \in \funonPathtoRoot(\untree, \e)} \{ \funsecret(\e', id) \}. 
\end{align}
For all $\Omega \in \funedgegroup$, we define secret sets:
\begin{align*}
    \funsecretset(\Omega, id, \untree) &:= \bigcup_{\e \in \Omega} \bigl\{ h_{sec}(\e, id) \bigr\}
\end{align*}
Note that if there is no additional edge on the same tree level with the same sender and receiver and disjoint path to the root, only the secret set of $\e$ itself is included.
By construction, for any fixed $\Omega$ there exists $X,Y \in \nodesymbol$, $j \in \N$ and $f^{\zeta}_X$ with:
\begin{align*}
    \funsender(\e) &= X &, \forall \e \in \Omega \\
    \funreceiver(\e) &= Y &, \forall \e \in \Omega \\
    \fundepth(\e) &= j &, \forall \e \in \Omega \\
    \funfund(\e) &= f^{\zeta}_X &, \forall \e \in \Omega 
\end{align*}
This means that all edges from $\Omega$ are between the same parties, their $\specalone$ features the same fund, and they are located on the same tree level. Again, let $\Delta$ be a sufficient amount of time to execute an action on the specified $\channel$.  For every $\Omega$ we demand $f^{\zeta}_X$ to be a fresh, unused fund with a unique identifier $\zeta$.
Based on the $X,Y,j$ and $f^{\zeta}_X$ from above we map a set $\Omega$ to a subcontract $\CTLCsubcontract$ with $\ctlcID = \ctlcIDtree{id}{X}{Y}$ and 
\begin{align*}
    & H(\Omega, \fulltreeobj) := \CTLCsubcontract \\
    &= [X,Y,f^{\zeta}_X, t_0 + j \Delta, secret-set(\Omega, id, \untree)] . \nonumber
\end{align*} 
For all $\e \in \Omega$ we define
\begin{align} \label{def:hMapping}
    h(\e, id) := \CTLCsubcontract.
\end{align}
Since $\funedgegroup$ is a partition of $\untree$, this definition and also the one of $h_{sec}$ can be extended to all $\e \intree \untree$. Later in this section (see (\ref{eq:honprep})), we will see that an honest user only commits to batches that are built according to this construction. Hence, $h$ gives us a surjective mapping from all edges in $\untree$ to the sub-contracts in a batch. 
We will now define this batch: 
\begin{align}
    &\treetoCTLCBadge(\fulltreeobj, \mathcal{S}):= \batch = \label{eq:treetoCTLCBadge} \\
    & \bigcup_{\e \intree \untree} \{ h(\e', id) \mid \e' \intree \untree \, \land \, \spec{\e'} = \spec{\e} \} \nonumber
\end{align}
By definition, this groups sub-contracts of duplicated edges together and gives a batch of \CTLC{}s that represents \mbox{$\fulltreeobj \in \treeobj$}, hence the same $id$ is used for $\batch$. Further, we define the mapping
\begin{align}
    &\treesettoCTLCBadges (\treeobj, \mathcal{S}) := \label{eq:treesettoCTLCBadges}\\
    &\bigcup_{\fulltreeobj \in \treeobj} \treetoCTLCBadge(\fulltreeobj, \mathcal{S}), \nonumber
\end{align}
which outputs a set of batches, one for each tree. 

For later use, we also define the following objects. 
Given a $\fulltreeobj \in \treeobj$ and let $A, B \in \nodesymbol$ be fixed. Then the number of edges $(A,B)$ in $\untree$ is given as
\begin{align}
    E_{(A,B)}^{\untree} &:= \{ \edge \intree \untree \mid X=A \land Y=B \}, \label{def:edgeswithB} \\
    s_{(A,B)}^{\untree} &:= \vert E_{(A,B)}^{\untree} \vert . \nonumber
\end{align}

For advertising and committing to batches we require the initial environment to be liquid with respect to a set of tree specifications $\fulltreeobj$, defined as follows.

\begin{definition}[Liquid Environment] \label{def:liquidenv}
    Let $\treeobj$ be a set of tuples of the form $\fulltreeobj$.
We say that a environment $\environmentvec$ is \textit{liquid w.r.t.} $\treeobj$ if the following condition holds:
\begin{align*}
    \forall& \fulltreeobj \in \treeobj: 
    \forall \e \intree \untree: \forall (f^{\zeta}_X, \channel):  \\
    & \specalone(\e) = (f^{\zeta}_X, \channel) \Rightarrow
    f^{\zeta}_X \in \environmentchAvailableFunds
\end{align*}
\end{definition}

We also define when we consider a $\treeobj$ to be well-formed. 

\begin{definition}[Well-Formed Treeobject] \label{def:wellformedtree}
Let $\treeobj$ be a set of tuples of the form $\fulltreeobj$.
We say that $\treeobj$ is \textit{well-formed} if all $\specalone$ functions are valid, all $id$s are unique, and the $\specalone$ functions do not intersect, formalized with the following condition:
\begin{equation*}
\scalebox{0.92}{$
\begin{aligned}
    &specInter(\treeobj) :\Leftrightarrow \\
    &\forall \fulltreeobj, (id', \untree', t'_0, \specalone') \in \treeobj ~\forall \e \intree \untree ~\forall \e' \intree \untree :\\
    &(\funsender(\e), \funreceiver(\e), id) \neq (\funsender(\e'), \funreceiver(\e'), id') \\
    & \Rightarrow \forall (f^{\zeta}_X, \channel), (f'^{\zeta}_{X'}, \channel'):  \\
    &\specalone(\e) = (f^{\zeta}_X, \channel) ~\land ~\specalone(\e') = (f'^{\zeta}_{X'}, \channel') \Rightarrow X \neq X'
\end{aligned}
$}
\end{equation*}
\end{definition}

Firstly, for every $\fulltreeobj \in \treeobj$, a Batch needs to be advertised. Anybody can advertise batches. Thus, it is in the honest strategy to advertise all not yet advertised batches. In the initial environment, only funds are given.
Batches are only advertised at or before $t_0 - \fundepth(\untree) \Delta$ with
\begin{equation} \label{eq:depthoftreedefinition}
    \fundepth(\untree) := max \{ \fundepth(\e) \mid \e \intree \untree \},
\end{equation}
because this ensures that the setup process can go through as desired.
The environment also needs to be liquid w.r.t. the current $\treeobj$. 
Additionally, honest users only accept well-formed $\treeobj$. This ensures that the same fund cannot be used for two different contracts. 
After $\advBatch$, $B$ commits to the batch.
We denote $\environmentvec = \funlastEnv(\run)$. 
The following sub-strategy, where $B$ is the currently operating party, formalizes this process for a given $\fulltreeobj \in \treeobj$:
{\small \begin{align*}
    &newBatch(\treeobj, \run) \\
    &:= \begin{cases}
        2
        &\text{ ,if }\forall \environmentch \in \environmentvec : \batch \notin \environmentchBatches, \\
        &\treeobj \text{ well-formed}, ~\environmentvec \text{ liquid w.r.t. } \treeobj , \\
        &\exists \fulltreeobj \in \treeobj: \\
        &\batch = \treetoCTLCBadge(\fulltreeobj, \mathcal{S}) , \\
        &\environmentvectime \, \leq \, t_0 - \fundepth(\untree) \Delta \\
        1
        &\text{ ,if } \exists \environmentch \in \environmentvec : \batch \in \environmentchBatches, \\
        &\treeobj \text{ well-formed}, ~\environmentvec \text{ liquid w.r.t. } \treeobj , \\
        &S_B(\batch) \neq \emptyset, \environmentchtime < t_0, \\
        &S_B(\batch) \nsubseteq \environmentchCreatedSecrets, \, \exists \fulltreeobj \in \treeobj: \\
        &\batch = \treetoCTLCBadge(\fulltreeobj, \mathcal{S})\\ 
        0
        &\text{ ,else.}
    \end{cases} \nonumber
\end{align*}}%
{\small \begin{align} \label{eq:honprep}
    &\widehat{\Sigma}_{B}^{\treeobj} (\run) \\
    &:= \begin{cases}
        \Bigl\{ \advBatch \, \batch \Bigr\}
        &\text{ ,if }newBatch(\treeobj, \run)=2, \\
        \Bigl\{ B:\commitBatch \, \batch \Bigr\}
        &\text{ ,if }newBatch(\treeobj, \run)=1,\\
        \emptyset &\text{ ,if }newBatch(\treeobj, \run)=0.
    \end{cases} \nonumber
\end{align}}%

In the first case, a corresponding $\batch$ for $\fulltreeobj$ gets advertised. 
If the batch is well-formed, meaning it is aligned with the said mapping, $B$ also commits to it. 
If none of these options is available, $B$ schedules nothing.

\paragraph{Enabling Phase}
Given $\fulltreeobj \in \treeobj$ and $\e \intree \untree$. We first define 2 helper functions that evaluate whether all ingoing edges have been enabled (1) and, if so, if an entirely new \CTLC{} should be advertised, an available \CTLC{} should be authorized or enabled or just an additional sub-contract (2) enabled. 
The condition (1) is checked in the function \newcommand{\ingoing}{ingoing}
{\small \begin{align} \label{eq:honingoing}
    &\ingoing(\e, \run) \\
    &:= \begin{cases} 
            1 &\text{,if } \e \intree \untree, \exists X, \walk: \, \e = (X,B)_{\walk} \textbf{ or} \\ 
            &\forall \e' \intree \untree \text{ with } \funonpath{\e}{\e'}, \fundepth(\e') = \fundepth(\e) + 1 \\
            & \exists \widetilde{\CTLCcontract} \in \environment_{\channeledge{\e'}}.C_{en}: h(\e',id) \in \widetilde{\CTLCcontract}  \\
            &\land \environmentvec.t < t_0, \\
            &\exists \batch \hspace{-2pt} := \treetoCTLCBadge(\fulltreeobj, \mathcal{S}) \hspace{-1pt} \in \hspace{-1pt} \environmentvec.\batchset \hspace{-2pt}: \\
            & h(\e,id) \in \CTLCcontract \in \batch \land h(\e',id) \in \widetilde{\CTLCcontract} \in \batch \\
            & \land \nexists \channel : h(\e,id) \in \CTLCcontract \in \environmentchCTLCEnabled \\
            0 &, \text{else.} \\
        \end{cases} \nonumber
\end{align}}%

The function $\ingoing$ checks for a given edge $\e$ and current run $R$ whether all edges below it have been enabled for this appearance of $B$ in the tree $\untree$. Additionally, it is checked that all these edges correspond to a tree.

This information gets used in the mapping $\newC$, which checks for condition (2) from above. For this we define
\[
Aut(\advCTLCcontract) := \{ (\funsender(\advCTLCcontract), \advCTLCcontract),(\funreceiver(\advCTLCcontract, \advCTLCcontract)) \},
\]
{\small \begin{align} 
        &newC(\e, R) \label{eq:CTLC-enable-check}\\
        &:=
        \begin{cases}
            4 &\text{,if } \ingoing(\e, \run) = 1 \land \exists X,\walk : \e = (B,X)_{\walk} \intree \untree\\
            &\land \exists h(\e, id) \in \advCTLCcontract \in \environmentvecCTLCAdvertised \\
            & \land \CTLCcontract \in \environmentvecCTLCEnabled \land h(\e, id) \notin \CTLCcontract,\\
            3 &\text{,if } \ingoing(\e, \run) = 1 \land \exists X,\walk : \e = (B,X)_{\walk} \intree \untree\\
            &\land \exists h(\e, id) \in \advCTLCcontract \in \environmentvecCTLCAdvertised \\
            & \land \CTLCcontract \notin \environmentvecCTLCEnabled \land Aut(\advCTLCcontract) \subseteq \environmentvecCTLCAuthorized ,\\
            2 &\text{,if } \ingoing(\e, \run) = 1 \\
            &\land \exists h(\e, id) \in \advCTLCcontract \in \environmentvecCTLCAdvertised \\
            & \land \CTLCcontract \notin \environmentvecCTLCEnabled \land \funfund(\advCTLCcontract) \in \environmentvecAvailableFunds \\
            &\land 
            if \, B = \funsender(\advCTLCcontract) : \,(\funreceiver(\advCTLCcontract), \advCTLCcontract) \in \environmentvecCTLCAuthorized\\
            1 &\text{,if } \ingoing(\e, \run) = 1 \land \exists X,\walk : \e = (B,X)_{\walk} \intree \untree\\
            &\land \nexists h(\e, id) \in \advCTLCcontract \in \environmentvecCTLCAdvertised \\
            & \land \nexists h(\e, id) \in \CTLCcontract \in \environmentvecCTLCEnabled\\
            & \land \forall \CTLCsubcontract \in \advCTLCcontract : \funsecret(\CTLCsubcontract) \subseteq \environmentvecCreatedSecrets,\\
            0 &, \ingoing(\e, \run) = 0.  \\
        \end{cases} \nonumber
\end{align}}%

In the first case the \CTLC{} of the subcontract for the given edge $\e$ has been advertised and enabled, but $h(\e, id)$ has not been enabled yet. Therefore, $B$ will enable it in the following substrategy. 
In the second case, $\CTLCcontract$ has not been enabled but advertised and authorized by the sender and receiver. Thus, $B$ will enable it.
The third case includes 
$\CTLCcontract$ not to be enabled but advertised. Additionally no other contract $\hat{\dot{\textit{c}}}^{\textit{x}}$
with the same identifier $x$ should have been authorized by $B$ before. If this true $B$ will authorize it.
In the fourth case, the \CTLC{} has neither been enabled nor advertised, so $B$ will advertise it under given well-formedness conditions. 
In the fifth and last case, $\ingoing(\e, \run) = 0$, and no action of $B$ is required. 

Then, the substrategy for these actions is defined as
\begin{equation} \label{eq:enable-check}
\scalebox{0.9}{$
{\small \begin{aligned} 
    &\overline{\Sigma}_{B}^{e} (\run) \\
    &:= \begin{cases}
        \Bigl\{ B: \enableSubC \, h(\e, id) \Bigr\} &\hspace{-8pt} \text{ ,if } newC(\e, \run) = 4, \environmentvec.t < t_0 ,\\
        \Bigl\{\enableCTLCch \, \CTLCcontract \Bigr\} &\hspace{-8pt} \text{ ,if } newC(\e, \run) = 3, h(\e, id) \in \CTLCcontract , \\
        &\hspace{-8pt} \channel = \channeledge{\e},\environmentvec.t < t_0 ,\\
        \Bigl\{ B: \authCTLC \, \advCTLCcontract \Bigr\} &\hspace{-8pt} \text{ ,if } newC(\e, \run) = 2, h(\e, id) \in \advCTLCcontract , \\
        &\hspace{-8pt} \nexists \hat{\dot{\textit{c}}}^{\textit{x}} :(B: \authCTLC \, \hat{\dot{\textit{c}}}^{\textit{x}}) \in \actions{\run},  \\
        &\hspace{-8pt} \environmentvec.t < t_0 ,\\
        \Bigl\{ \advCTLCch \, \advCTLCcontract \Bigr\} &\hspace{-8pt} \text{ ,if } newC(\e, \run) = 1, h(\e, id) \in \advCTLCcontract , \\
        &\hspace{-8pt} \channel = \channeledge{\e},\environmentvec.t < t_0 ,\\
        \emptyset &\hspace{-8pt} \text{ ,if }  newC(\e, \run) = 0. 
    \end{cases}
\end{aligned}}%
$}
\end{equation}

\paragraph{Execution Phase}
Before discussing the decision and execution of contracts, we want to ensure that contracts run into timeout or refund when possible. As any party can timeout or refund any contract as soon as their timeout has been reached, we make it part of the honest user strategy. For this, let $\e \intree \untree$ be given and we define
{\small \begin{align} \label{eq:CTLC-timeout}  
    &\widetilde{\Sigma}_{B}^{\e} (\run) \\
    &:= \begin{cases}
        \Bigl\{ \timeoutsc (\CTLCcontract, h(\e, id)) \Bigr\} &\text{ ,if } 
         \exists \CTLCcontract \in \environment_{\channeledge{\e}}.C_{en}: \\
         & \hspace{5pt} \honestuser \in \funusers(\CTLCcontract) 
         \land \vert \CTLCcontract \vert > 1 \\
        & \land \CTLCsubcontract_j = h(\e, id) \in \CTLCcontract \\
        &\land \funtimeout(h(\e, id)) \leq \environmentvec.t \\
        &\land \CTLCsubcontract_{j-1} \notin \CTLCcontract \\
        \Bigl\{ \refund \, \CTLCcontract \Bigr\} &\text{ ,if } \exists \CTLCcontract \in \environment_{\channeledge{\e}}.C_{en}: \\
        & \hspace{5pt} \honestuser \in \funusers(\CTLCcontract) \land \vert \CTLCcontract \vert = 1 \\
        & \land \exists h(\e, id) \in \CTLCcontract: \\
        & \hspace{5pt} \funtimeout(h(\e, id)) \leq \environmentvec.t \\
        &\land \refund \, (\CTLCcontract) \notin \actions{\run} \\
        \emptyset &\text{ ,else.} 
    \end{cases} \nonumber
\end{align}}%
In the first case, the subcontract corresponding to the given edge has run into timeout, but there is still another sub-contract in the \CTLC{} with a larger timeout. In the second case, the current sub-contract is the last one in the \CTLC{} so the whole \CTLC{} gets refunded.

Given an edge $\e \intree \untree$. For the corresponding subcontract to be claimable 3 things need to be given: \newcommand{\funenabled}{enabled} \newcommand{\funsecretsAv}{secretsAv} \newcommand{\funisIngoing}{isIngoing} \newcommand{\funoutgoing}{outgoing}
\begin{itemize}
    \item It needs to be enabled, meaning 
    \begin{equation*}
    \scalebox{0.95}{$    
    \begin{aligned}
        \funenabled(\e, \environment) :\Leftrightarrow &\exists \CTLCcontract \in \environment_{\channeledge{\e}}.C_{en} : \bigl( h(\e, id) \in \CTLCcontract \\
        \land &\nexists \CTLCsubcontract \hspace{-2pt} \in \advCTLCcontract \in \environment_{\channeledge{\e}}.C_{adv} :  \\
        &\funtimeout(\CTLCsubcontract) < \funtimeout(h(\e, id)) \bigr).
    \end{aligned}
    $}
    \end{equation*}
    \item The current time should be before its timeout, meaning 
    \begin{align*}
        t_0 \leq \environmentvec.t \leq \funtimeout(h(\e, id)).
    \end{align*}
    \item Secrets of other people for this contract should be revealed in the environment of the contract already s.t. only the ones of $B$ are missing. 
    We denote this condition as 
    \begin{align*}
        \funsecretsAv(\e) :\Leftrightarrow & \exists i \, \forall s^{id}_{\walk} \in \funsecret_i(h(e, id)) \\
        & \text{with }\hspace{-2pt} s^{id}_{\walk} \hspace{-1pt} \notin \hspace{-1pt} \environmentchRevealedSecrets \hspace{-2pt} : \hspace{-2pt} s^{id}_{\walk} \hspace{-2pt} = \hspace{-2pt} \funsecret(\e, id) .
    \end{align*}
\end{itemize}
Additionally, by construction, we want $B$ only to execute ingoing edges, and this only in case an outgoing edge has been executed before, which is formalized with:
\begin{equation} \label{fun:isingoing23}
\scalebox{0.9}{$
\begin{aligned}
    \funisIngoing(\ein, \run) :\Leftrightarrow & \exists Y \in \nodes, \walk \in \untree : \ein = (Y,B)_{\walk} \\
    & \land \text{ if } \fundepth(\ein) > 1 : \\
    & \forall \eout \in \walk \text{ with } \fundepth(\eout) = \fundepth(\ein) - 1 \\
    & \exists \CTLCcontract : \DecideCo(\CTLCcontract, h(\eout, id),  h_{sec}(\eout, id)) \\
    & \hspace{20pt} \in \actions{\run} \\
    & \land \text{ if } \fundepth(\ein) = 1 : \\
    &\forall \ein' \in \untree \text{ with } \fundepth(\ein') = 1 \, : \\
    &\exists \CTLCcontract, \channel : h(\ein', id) \in \CTLCcontract \in \environmentchCTLCEnabled \\
    &\hspace{20pt} \land \, \channel = \channeledge{\ein'} \\
    & \hspace{20pt} \lor \DecideCo(\CTLCcontract, h(\ein', id),  h_{sec}(\ein', id)) \\
    & \hspace{30pt} \in \actions{\run} 
\end{aligned}
$}
\end{equation}
Note that $\eout$ could also be specified using $\funonpath{.}{.}$ or \linebreak 
$\funonPathtoRoot(.)$ but by construction of $\untree$ they would specify exactly $\walk$ again. 
The following helper function validates the conditions from above \newcommand{\chContract}{chContract}
{\small \begin{align} \label{eq:CTLC-chContract}
        &\chContract(\ein, \run) \\
        &:=
        \begin{cases}
            3 &, \text{if } \exists h(\ein, id) \in \CTLCcontract \in \environment_{\channeledge{\ein}}.C_{dec} \\
            2 &, \text{if } \funenabled(\ein) \land \funsecretsAv(\ein) \land \funisIngoing(\ein, \run) \\
                &\land \,  t_0 \leq \environmentvec.t < \funtimeout(h(\ein, id))  \\
            1 &, \text{if } \funenabled(\ein) \land \funisIngoing(\ein, \run) \\
                &\land \, \exists s^{id}_{\walk} \in \funsecret(h(\ein,id)) \backslash \environment_{\channeledge{\ein}}.S_{rev} \, \\
                &\, \, \, \, \exists \channel': s^{id}_{\walk} \in \environment_{\channel'}.S_{rev} \land B\in \confuser{\environment_{\channel'}} \\
            0 &, \text{else.}
        \end{cases} \nonumber
\end{align}}%
In the first case, there is a decided/claimed contract with a subcontract resembling a given edge. It is part of the honest user strategy to execute any contract that is decided upon.
In the second case, the contract the subcontract for $\ein$ belongs to has been enabled, the remaining secrets are owned by $B$, $\ein$ is an ingoing edge, all previous sub-contracts of the same \CTLC{} have timed out, and the timeout has not run out yet. In the third case, there is a secret $a_{\iota}^A$ in another environment $\environment_{\channel'}$ different from $\environmentch$ which has been revealed there but not in $\environmentch$ and belongs to the subcontract coming from $\ein$. Additionally, $B$ is part of both environments. 

To avoid $B$ revealing secrets for two duplicated edges on the same level we define a function that looks up whether $B$ scheduled such an action before. From the partition $\funedgegroup$ (see \ref{def:funedgegroup}) we notice for any given $\e \intree \untree$
\[
    \exists ! \Omega \in \funedgegroup : \e \in \Omega .
\]
Based on this specific $\Omega$, we define the function 
\begin{align*}
    \funnodupl (\e, \run) :\Leftrightarrow & \nexists \e' \intree \untree : \e' \neq \e \\
    & \land \funsender(\e') = \funsender(\e) \\
    & \land \funreceiver(\e') = \funreceiver(\e) \\
    & \land \funsecret(\e', id) \in \environmentvecRevealedSecrets . 
\end{align*}
This completes the set of functions and conditions needed for the definition of the next sub-strategy:
\begin{equation}\label{eq:CTLC-execution}
\scalebox{0.88}{$
{\small \begin{aligned}  
    &\Sigma_{B}^{\e}(\run) \\
    &:= \begin{cases}
        \Bigl\{ \CoEx (\CTLCcontract , \CTLCsubcontract) \Bigr\} & \hspace{-5pt} \text{,if } \chContract(\e, \run) = 3, \\
        & \, \CTLCsubcontract = h(\e, id) \in \CTLCcontract \\
        \Bigl\{ \DecideCo (\CTLCcontract,\CTLCsubcontract , sec) \Bigr\} & \hspace{-5pt} \text{,if } \chContract(\e, \run) = 2, h(\e, id) \in \CTLCcontract, \\
       & \, \exists i : \, \funsecret_i(h(e,id)) \subseteq \environment_{\channeledge{\e}}.S_{rev}, \\
       & \, sec := \funsecret_i(h(e,id)) \\
        \Bigl\{ B: \revealSecretch \, s^{id}_{\walk} \Bigr\} & \hspace{-5pt} \text{,if } \chContract(\e, \run) = 2, \\
        & h(\e, id) \in \CTLCcontract \in \environmentvecCTLCEnabled, \\
       & \channel := \channeledge{\e}, s^{id}_{\walk} = \funsecret(\e, id), \\
       & \, \funsecret_i(h(e,id)) \nsubseteq \environmentchRevealedSecrets, \\
       & \, s^{id}_{\walk} \in \funsecret_i(h(e,id)) \backslash \environmentchRevealedSecrets, \\
       & \funnodupl (\e, \run) \\
        \Bigl\{ B: \shareSecretch \, s^{id}_{\walk} \Bigr\} & \hspace{-5pt} \text{,if } \chContract(\e, \run) = 1, h(\e, id) \in \CTLCcontract, \\
        & \channel := \channeledge{\e},\\
        \emptyset & \hspace{-5pt} \text{,if } \chContract(\e, \run) = 0 
    \end{cases}
\end{aligned}}%
$}
\end{equation}
In the first case, the \CTLC{} has already met all conditions for execution and, therefore, has been decided by moving it to $\environmentCTLCDecided$. It can get executed as long as its timeout has not run out yet. 
In the second case, all conditions are met, and $B$ executes $\DecideCo$.
In the third case, one secret of $B$ needs to be revealed before the contract can be decided. With $\funsecret_i(h(e,id))$, we mean the same secret set that was found in $\funsecretsAv(\ein)$, i.e. the same $i$. If this is true for more than one $i$ it is chosen arbitrarily from the ones fulfilling the condition. 
The $\funnodupl (\e, \run)$ condition implies that for two duplicated edges on the same level, only for one of them, the revealing of its edges gets scheduled by $B$. With this it is also ensured that only one of them can get claimed. By Definition (\ref{eq:treesettoCTLCBadges}), this situation cannot occur for $\shareSecretch \, s^{id}_{\walk}$. 
In the fourth case, a secret needed for claiming this contract has been revealed in another environment $B$ is part of thus, $B$ shares the secret with the environment of the contract $h(\e, id) \in \CTLCcontract$. 
If none of these 4 cases apply, $B$ cannot act on this contract and schedules no action. 

\paragraph{Combining Sub-Strategies}
All previously defined sub-strategies are dependent on an edge from a specific tree $\untree$, except $ \widehat{\Sigma}_{B}^{\treeobj} (\run)$ which is only $\treeobj$-dependent. Hence we first unite over all \linebreak $\fulltreeobj \in \treeobj$ and then over all edges $\e$ from a given $\untree$: 
\begin{align}\label{def:temphoneststrategy}
    &\Sigma_B^{\treeobj, temp}(\run) \\
    &:= \widehat{\Sigma}_{B}^{\treeobj} (\run) \bigcup_{\fulltreeobj \in \treeobj} \hspace{3pt} \bigcup_{\e \in \untree} \overline{\Sigma}_{B}^{e} (\run) \cup \widetilde{\Sigma}_{B}^{\e} (\run) \cup \Sigma_{B}^{\e}(\run) \nonumber
\end{align}

If $\Sigma_B^{\treeobj, temp}(\run)$ outputs no action, we want to wait and elapse time which is defined with 
\begin{align}\label{def:timetimetime}
    \Sigma_B^{\treeobj, local}(\run) := \begin{cases}
        \Bigl\{ \elapse \, \delta \Bigr\} &\text{ ,if } \Sigma_B^{\treeobj, temp}(\run) = \emptyset \\
        \Sigma_B^{\treeobj, temp}(\run) &\text{ ,else.} 
    \end{cases}
\end{align}
Here $\delta$ is defined as 
\begin{align*}
    \delta_{t_0} &:= t_0 + j \Delta - \environmentvec.t \\
            &\text{with } j = min \{ j \in \Z \mid t_0 + j \Delta - \environmentvec.t > 0 \} \\
    \delta &:= min \{ \delta_{t_0} \mid \fulltreeobj \in \treeobj \}
\end{align*}
which is exactly the time until the next timestamp at which new sub-contracts potentially become available for being enabled or for execution. Note that allowing $j$ to come from the Integers $\Z$ implies that both the Enabling Phase as well as the Execution Phase are covered. 
Since the honest user should not change his mind, we want to stick to an output as long as it has not been executed and is still a valid extension of the current run.
To formalize this notion, we set
\begin{align}
        \actions{R} &:= \{ \action \mid \run = \run_1 \overset{\action}{\longrightarrow} \run_2 \} , \label{eq:actionsset}\\
        \sactions{R} &:= \{ \action \mid \exists \environmentvec : \run \overset{\action}{\longrightarrow} \environmentvec \}. \label{eq:sactionsset} 
\end{align}
By including this addition, we recursively define the \textit{honest user strategy for an honest user $B$}:
\begin{align} \label{def:honeststrategy}
    \Sigma_B^{\treeobj}(\run) := \begin{cases}
        \{ \action \} &\hspace{-7pt} \text{,if } \exists \run_1, \run_2 : \run = \run_1 \rightarrow \run_2 \\
        &\hspace{-7pt} \exists \action \in \Sigma_B^{\treeobj}(\run_1) :\\
        &\hspace{-7pt} \action \notin \actions{R} \\
        &\hspace{-7pt} \land \action \in \sactions{R}  \\
        \Sigma_B^{\treeobj, local}(\run) &\hspace{-7pt} \text{,else.} 
    \end{cases}
\end{align}
We call the first case the consistency condition of the honest user strategy.
\section{Progression of Time}
\label{sec:TimeProgression}

This section formalizes that in a run including an honest participant time always progresses after a finite number of actions and thus a \textit{final run}, which will be formally defined here, will always be reached at some point. 
\begin{definition}
Given a run $\run \hspace{-1pt} = \hspace{-1pt} R_1 \overset{\action}{\longrightarrow} \environmentvec \hspace{-1pt}$ we define 
$$\funlastEnv(\run) := \environmentvec .$$
\end{definition}
In the following, we assume a run $\run$ of arbitrary length to be given and a set of honest users $\honestusers =  \{ \honestuser_1, \dots, \honestuser_k\}$ with strategies $\honestuserstrategies = \{ \userstrategy{\honestuser_1}, \dots, \userstrategy{\honestuser{_k}} \}$ to participate. We recap from Definition \ref{def:adversarystrategy} that an adversary can only schedule an $\elapse \, \delta$ action if all users agree. 

\begin{theorem}\label{th:timeelapseaftern}
    Let $\run' := \run \overset{\action_0}{\longrightarrow} \environmentvec_1 \overset{\action_1}{\longrightarrow} \environmentvec_2 \overset{\action_2}{\longrightarrow} ...$
    be an extension of $\run$ with $\honestuserstrategies \conforms \run'$.
    Then there exists $\delta \in \R^{>0}$ s.t. 
    \[
        \exists n \in \N : \alpha_n = \elapse \, \delta \lor \vert \run' \vert - \vert \run \vert \leq n .
    \]
\end{theorem}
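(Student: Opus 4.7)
I would prove the contrapositive: any infinite extension $\run'$ conforming to $\honestuserstrategies$ must contain an $\elapse \, \delta$ action at some finite position, which witnesses the first disjunct; finite $\run'$ trivially satisfies the second disjunct by taking $n := |\run'| - |\run|$. The argument reduces to showing that after finitely many non-$\elapse$ steps the adversary is necessarily in a configuration from which it can (and eventually must) schedule an $\elapse$.

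First, I would observe directly from \eqref{def:temphoneststrategy}, \eqref{def:timetimetime}, and \eqref{def:honeststrategy} that once $\Sigma_B^{\treeobj, temp}(\run_k) = \emptyset$ for every honest user $B$ at some prefix $\run_k$ of $\run'$, each $\Sigma_B^{\treeobj}(\run_k)$ collapses to $\{\elapse \, \delta_B\}$ with $\delta_B > 0$. Setting $\delta := \min_B \delta_B$, which is bounded below by the constant $\epsilon > 0$ from Definition~\ref{def:adversarystrategy}, clause (3) of that definition permits the adversary to schedule $\elapse \, \delta$.

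Next, I would show that this ``$\elapse$-only regime'' is reached after at most finitely many non-$\elapse$ steps, by a monotonicity argument on the sub-strategies' triggering conditions. The actions scheduled by $\widehat{\Sigma}_B^{\treeobj}$, $\overline{\Sigma}_B^{\e}$, $\widetilde{\Sigma}_B^{\e}$, and $\Sigma_B^{\e}$ each disable their own triggering condition permanently: $\advBatch$ and $\commitBatch$ populate $\environmentchBatches$ and $\environmentchCreatedSecrets$ (which only grow); the contract-lifecycle actions $\advCTLCch$, $\authCTLC$, $\enableCTLC$, $\enableSubC$, $\DecideCo$, $\CoEx$, $\timeoutsc$, and $\refund$ advance each advertised contract along the monotone chain $C_{\textit{adv}} \to C_{\textit{aut}} \to C_{\textit{en}} \to C_{\textit{cla}}$; and $\revealSecret$/$\shareSecret$ advance secrets along $S_{\textit{com}} \to S_{\textit{rev}}$. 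Combined with the persistence clause of Definition~\ref{def:participantstrategy} and the finiteness of $\treeobj$, this gives an explicit bound, linear in $\sum_{\treespec \in \treeobj} |\untree|$, on the total number of honest-scheduled non-$\elapse$ actions.

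The main obstacle is ruling out an adversary that perpetually schedules its \emph{own} non-$\elapse$ actions --- notably $\advBatch$ with fresh batches disjoint from $\treeobj$ --- to prevent $\elapse$ forever. The crucial observation is that every honest triggering condition in $\widehat{\Sigma}_B^{\treeobj}$, $\overline{\Sigma}_B^{\e}$, $\widetilde{\Sigma}_B^{\e}$, and $\Sigma_B^{\e}$ references only $\treeobj$-derived batches, contracts, and secrets (e.g.\ $newBatch$ requires $\batch = \treetoCTLCBadge(\fulltreeobj, \mathcal{S})$ for some $\fulltreeobj \in \treeobj$), so adversarial off-topic actions cannot re-activate any honest non-$\elapse$ scheduling. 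Following the symbolic conventions of~\cite{BitML}, the set of distinct non-$\elapse$ adversarial actions reachable from a fixed state is finite modulo $\alpha$-renaming of fresh symbolic secrets, hence the adversary must eventually choose $\elapse \, \delta$, which completes the argument.
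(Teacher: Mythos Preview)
Your approach matches the paper's: show that each honest sub-strategy can produce only finitely many non-$\elapse$ actions (the paper does this via an explicit per-sub-strategy case analysis checking that each rule's guard is permanently disabled after execution; your monotonicity summary captures the same argument), then appeal to the BitML-derived model convention that the adversary can interleave only finitely many of its own actions before an $\elapse$ is forced. Your treatment of the adversarial side is in fact slightly more explicit than the paper's---you spell out that off-topic $\advBatch$ actions cannot re-trigger honest scheduling because all honest guards reference only $\treeobj$-derived objects---whereas the paper simply asserts at the outset that ``an adversary can only schedule a finite number of actions between honest user actions'' and proceeds from there.
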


\begin{proof}
    As noted above, an adversary can only schedule a finite number of actions between honest user actions. 
    All $\CTLC$ actions, defined in Appendix \ref{sec:inferenceRules}, are relative to a batch $\batch$. 
    Let $B \in \honestusers$. 
    From the honest user strategy, especially (\ref{eq:honprep}) and (\ref{eq:CTLC-enable-check}), we know that all honest users $B$ only schedule actions regarding a $\batch$ with 
    \[
        \batch = \treetoCTLCBadge(\fulltreeobj, \mathcal{S})
    \]
    for some $\fulltreeobj \in \treeobj$ and secret set $\mathcal{S}$. We assume $\treeobj$ to be finite, i.e., the number of game trees executed simultaneously is finite. Also, we assume every tree only to include finitely many walks, which again are assumed to be of finite length, i.e. 
    \[
        \forall \walk \in \untree \, \exists m \in \N: \walk = [a_{m-1}, a_{m-2}, ..., a_0 ].  
    \]
    Therefore, the total number of edges in $\treeobj$, given as the size of
    \[
       \bigcup_{\fulltreeobj \in \treeobj} \{\e \intree \untree \},
    \]
    is finite. Every such edge gets mapped to a subcontract \linebreak $\CTLCsubcontract = h(e, id)$, and based on the honest user strategy, only finitely many actions are implied by this. 
    
    Assume towards contradiction $\vert \run' \vert - \vert \run \vert > n$ and 
    \[
        \nexists n \in \N : \alpha_n = \elapse \, \delta .
    \]
    We show that any honest user $B_i$ only schedules finitely many actions before nothing but $\elapse \, \delta_i$ will be scheduled, regardless of the actions of other users. Combining this with the fact that an adversary can only schedule a finite number of actions between honest user actions and the assumption that $\treeobj$ is finite, we arrive at the desired contradiction.
    We proceed with going through the sub-strategies of the honest user strategy for a given $\fulltreeobj \in \treeobj$ one by one. 

    Let an honest user $B$ be given. 
    For the preparation phase we have $\widehat{\Sigma}_{B}^{\treeobj} (\run)$ (see (\ref{eq:honprep})) which depends on $newBatch(\treeobj, \run)$. In case $newBatch(\treeobj, \run) = 2$ all preconditions for advertising a new batch are fulfilled. Since it is checked that a batch has not been advertised before, this can only be the case as often as the cardinality of $\treeobj$ allows. In case $newBatch(\treeobj, \run) = 1$, party $B$ will commit to a given Batch. With $S_B(\batch) \nsubseteq \environmentchCreatedSecrets$, it is checked that this only happens once per batch. In case $newBatch(\treeobj, \run) = 0$ an empty set will be outputted, which, in case all other sub-strategies also output $\emptyset$, will be turned into an $\elapse \, \delta_i$ action (see (\ref{def:honeststrategy})). 

    For the enabling phase we have $\overline{\Sigma}_{B}^{e} (\run)$. In case 
    \[
        \overline{\Sigma}_{B}^{e} (\run) = \Bigl\{ \advCTLC_{\channel} \, \advCTLCcontract \Bigr\}
    \]
    $newC(\e, \run) = 1$ holds, which includes the condition \linebreak $\nexists h(\e, id) \in \advCTLCcontract \in \environmentCTLCAdvertised$. Hence, the corresponding contract for the sub-contract belonging to $\e$ has not been advertised, which implies that this action can only happen once. 
    Assume $\Bigl\{ \advCTLC_{\channel} \, \advCTLCcontract \Bigr\}$ happens twice in a run. Then, it needs to be removed from $\environmentvecCTLCAdvertised$ in between. This can only be done with a \textit{refund} $\CTLCcontract$ action, which by construction of $h(\e, id)$ in \eqref{def:hMapping} can only happen after $t_0$. This leads to a contradiction as it is a condition of 
    \[
        \overline{\Sigma}_{B}^{e} (\run) = \Bigl\{ \advCTLC_{\channel} \, \advCTLCcontract \Bigr\}
    \]
    that $\environmentvec.t < t_0$. 
    Analogously, the same is true for
    \[
        \overline{\Sigma}_{B}^{e} (\run) = \Bigl\{ B: \authCTLC \, \CTLCcontract \Bigr\}
    \]
    as it is checked for 
    $(B, id) \notin \environmentvecCTLCAuthorized$. This also holds for 
    \[
        \overline{\Sigma}_{B}^{e} (\run) = \Bigl\{ \enableCTLC_{\channel} \, \CTLCcontract \Bigr\}
    \]
    because of the $\CTLCcontract \notin \environmentvecCTLCEnabled$ condition, and for
    \[
        \overline{\Sigma}_{B}^{e} (\run) = \Bigl\{ B : \enableSubC \, h(\e, id) \Bigr\}
    \]
    with the
    \[
        \exists h(\e, id) \in \advCTLCcontract \in \environmentvecCTLCAdvertised \land \CTLCcontract \in \environmentvecCTLCEnabled \land h(\e, id) \notin \CTLCcontract
    \]
    condition. 

    The honest user strategy for the execution phase consists of two sub-strategies, $\widetilde{\Sigma}_{B}^{\e} (\run)$ and $\Sigma_{B}^{\e}(\run)$. In $\widetilde{\Sigma}_{B}^{\e} (\run)$ the actions \textit{timeout} and \textit{refund} are handled, and for both of them it is part of the conditions that they have not been scheduled before. In $\Sigma_{B}^{\e}(\run)$ we have 
    \[
        \Sigma_{B}^{\e}(\run) = \Bigl\{B : \revealSecretch \, s^{id}_{\walk} \Bigr\}
    \]
    conditionally linked to a sub-contract for which this secret is used. As noted previously, sub-contracts can only be enabled once. Therefore, this action can be scheduled only once per secret. Analogously 
    \[
        \Sigma_{B}^{\e}(\run) = \Bigl\{ \shareSecretch \, s^{id}_{\walk} \Bigr\}
    \]
    cannot be done more than once per secret, as it is linked to a sub-contract and its channel. 
    For 
    \[
        \Sigma_{B}^{\e}(\run) = \Bigl\{ \DecideCo ( \CTLCcontract, \CTLCsubcontract , \funsecret_i(\CTLCsubcontract_{\iota})) \Bigr\}
    \]
    it is guaranteed that $t_0 \leq \environmentvectime$, and with $\funisIngoing(\e, \run)$ that $B$ is the receiver of the contract. Hence, the authorization of $B$ is needed for this contract to be enabled (see (\ref{ctlc:enableCTLC})). Part of the condition for $B$ to authorize is $\environmentvectime < t_0$ according to the honest user strategy \linebreak (see (\ref{eq:enable-check})). Therefore, $\CTLCcontract$ cannot be enabled again after it was claimed by $B$. 
    For
    \[
        \Sigma_{B}^{\e}(\run) = \Bigl\{ \CoEx (\CTLCcontract, \CTLCsubcontract) \Bigr\}
    \]
    it is guaranteed in
    $\chContract(\ein, \run)$ that 
    \[
    h(\e, id) \in \CTLCcontract \in \environment_{\channeledge{\e}}.C_{dec}
    \]
    holds. The action $\DecideCo$ is the only one that can move contracts into $\environment_{\channeledge{\e}}.C_{dec}$ and $\CoEx$ removes them from this set. Hence, $\CoEx$ cannot be executed more often than $\DecideCo$ in a given run. As noted previously, $\DecideCo$ can only happen once per contract, which is, therefore, also the case for $\CoEx$
    
    In a setting with multiple $B_i \in \honestusers$, the above reasoning applies to all $B_i$. Hence all $\Sigma_{B_i}^{\treeobj}$ output $\{ \elapse \, \delta_i \}$ after $n$-many steps for some $\delta_i$. By Definition (see ($\ref{def:adversarystrategy}$)), an $\actelapse \delta$ action with $\delta_i \geq \delta$ is then appended to the run by the adversary. 
\end{proof}

\begin{corollary}\label{cor:anytimelimit}
    Let $\run' := \run \overset{\action_0}{\longrightarrow} \environmentvec_1 \overset{\action_1}{\longrightarrow} \environmentvec_2 \overset{\action_2}{\longrightarrow} ...$
    be an extension of $\run$ with $\honestuserstrategies \conforms \run'$. Given an arbitrary $t' \in \R^{>0}$ with $t' > \environmentvectime$ then there exists an $n \in \N$ s.t. $\environmentvec_{n+1}.t > t'$ or $\vert \run' \vert - \vert \run \vert \leq n$.
    In other words, extending a run can reach any point in time. 
\end{corollary}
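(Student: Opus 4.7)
The plan is to iteratively invoke \thref{th:timeelapseaftern} to extract a sequence of $\elapse$ actions from the extension $\run'$ and use the uniform lower bound $\epsilon > 0$ on all time-elapse deltas (from \defref{def:adversarystrategy}, item~3) to conclude that the accumulated elapsed time eventually surpasses $t'$, unless the run terminates first.

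First, I would fix $k := \lceil (t' - \environmentvec.t)/\epsilon \rceil$, which is a finite natural number since $\epsilon$ is a strictly positive constant fixed for the whole run. This $k$ will serve as a bound on the number of $\elapse$-actions that are needed to push the time beyond $t'$.

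Then I would proceed by (finite) induction on $i \in \{0, 1, \dots, k\}$, maintaining the invariant that after the $i$-th extracted $\elapse$ action at position $n_i$ in $\run'$, we have either $\environmentvec_{n_i+1}.t \geq \environmentvec.t + i \cdot \epsilon$ or $|\run'| - |\run| \leq n_i$. The base case ($i = 0$) is immediate by taking $n_0 = 0$. For the inductive step, assume we have the $i$-th elapse at position $n_i$ and the run has not terminated by $n_i$. Then I would apply \thref{th:timeelapseaftern} to the suffix of $\run'$ starting at $\environmentvec_{n_i + 1}$ (which is itself a valid conformant extension of the prefix up to step $n_i$, since $\honestuserstrategies \conforms \run'$ is preserved under taking prefixes and the strategies' outputs only depend on the run so far). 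This yields an $n'$ such that either action $\alpha_{n_i + n'}$ is an $\elapse \, \delta_{i+1}$ with $\delta_{i+1} \geq \epsilon$, or the run terminates within $n'$ additional steps. In the former case, setting $n_{i+1} := n_i + n'$ preserves the invariant, since the new elapse adds at least $\epsilon$ to the time; in the latter case, the corollary's disjunction is already satisfied.

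After at most $k$ iterations, the invariant gives $\environmentvec_{n_k+1}.t \geq \environmentvec.t + k \cdot \epsilon \geq t'$ (by the choice of $k$), which is the desired conclusion with $n = n_k$. Since each $n_{i+1}$ is a finite increment over $n_i$, the resulting $n$ is finite. I do not expect a substantial obstacle here: the two key ingredients (a uniform positive lower bound $\epsilon$ on all elapse deltas from \defref{def:adversarystrategy}, and the guaranteed occurrence of an elapse within finitely many steps from \thref{th:timeelapseaftern}) combine cleanly. The only mild subtlety is verifying that \thref{th:timeelapseaftern} can be reapplied to suffixes of $\run'$; this follows because conformance with $\honestuserstrategies$ is compositional and the statement of \thref{th:timeelapseaftern} is agnostic to the history prior to $\run$.
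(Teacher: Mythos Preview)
Your proposal is correct and follows essentially the same approach as the paper: both use the uniform lower bound $\epsilon$ from \defref{def:adversarystrategy} together with repeated application of \thref{th:timeelapseaftern} to argue that $\lceil (t' - \environmentvec.t)/\epsilon \rceil$ many $\elapse$ actions suffice, each occurring after finitely many steps. Your version is simply more explicit about the induction on the number of extracted $\elapse$ actions, whereas the paper compresses this into a two-sentence sketch.
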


\begin{proof}
    The Definition $\ref{def:adversarystrategy}$ of the adversary strategy sets a minimum size $\epsilon > 0$ for $\delta$. W.l.o.g. assume $\delta = \epsilon$ for all $\actelapse \delta$ actions in $\run'$. Thus 
    \[
        \left\lceil \frac{t' - \environmentvectime}{\epsilon} \right\rceil
    \]
    many $\actelapse \delta$ actions are needed to reach $t'$. From Theorem \ref{th:timeelapseaftern} we know that after a finite number of steps an $\actelapse \delta$ gets appended to the run and thus time $t'$ is reached after finitely many steps. 
\end{proof}

\begin{definition}\label{def:final run}
    A run $\run$ with $\environmentvec = \funlastEnv(\run)$ is considered \textit{final} if
    \begin{align*}
        &\forall \fulltreeobj \in \treeobj : \forall \e \in \untree : \\
        &\funtimeout (h(\e, id)) < \environmentvectime. 
    \end{align*}
\end{definition}

\begin{corollary}
    For any run $\run$ there exists an $n \in \N$ s.t. for all extensions 
    $
        \run' := \run \overset{\action_0}{\longrightarrow} \environmentvec_1 \overset{\action_1}{\longrightarrow} \environmentvec_2 \overset{\action_2}{\longrightarrow} ... \, ,
    $
    both of finite and infinite length, it holds:
    \[
        \run' \text{ is final} \, \lor \, \vert \run' \vert - \vert \run \vert \leq n
    \]
    This means that every run will become final at some point. 
\end{corollary}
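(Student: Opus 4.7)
The plan is to reduce the corollary to Corollary~\ref{cor:anytimelimit} by choosing the target time to exceed all relevant subcontract timeouts. First I would observe that finality, as defined in Definition~\ref{def:final run}, is a purely time-based condition: a run $\run$ is final exactly when $\environmentvec.t > \funtimeout(h(\e, id))$ for every $\e \intree \untree$ and every $\fulltreeobj \in \treeobj$. Since $\treeobj$ is assumed finite and each tree $\untree$ is finite (as argued in the proof of Theorem~\ref{th:timeelapseaftern}), the supremum
\begin{equation*}
t^* := \max\{\funtimeout(h(\e, id)) \mid \fulltreeobj \in \treeobj,\ \e \intree \untree\}
\end{equation*}
is attained and finite.

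Next I would fix any $t' > \max(t^*, \environmentvec.t)$, say $t' := t^* + 1$ if $t^* \geq \environmentvec.t$ and $t' := \environmentvec.t + 1$ otherwise, and invoke Corollary~\ref{cor:anytimelimit} on $\run$ with this $t'$. This yields some $n \in \N$ such that for every extension $\run' = \run \overset{\action_0}{\longrightarrow} \environmentvec_1 \overset{\action_1}{\longrightarrow} \cdots$ with $\honestuserstrategies \conforms \run'$ (finite or infinite), either $|\run'| - |\run| \leq n$ (which is the right disjunct of the goal) or $\environmentvec_{n+1}.t > t' \geq t^* + 1 > t^*$. In the latter case, the time component in $\environmentvec_{n+1}$ already exceeds every subcontract timeout, so the prefix of $\run'$ of length $n+1$ satisfies Definition~\ref{def:final run}; since time is monotone non-decreasing along runs (the $\elapse$ rule only adds a positive offset, and all other rules leave $t$ unchanged), any longer prefix of $\run'$ remains final as well, so $\run'$ itself is final.

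The two disjuncts in Corollary~\ref{cor:anytimelimit}'s conclusion match exactly the two disjuncts required here, so no additional case analysis is needed. The main subtle point to handle carefully is the uniformity of $n$ over both finite and infinite extensions: the witness $n$ coming from Corollary~\ref{cor:anytimelimit} depends only on $\run$ and $t'$, not on the extension $\run'$, and the bound $|\run'| - |\run| \leq n$ is well-defined whether $\run'$ is finite or infinite. I do not expect any substantive obstacle here, since Theorem~\ref{th:timeelapseaftern} and its corollary have already established the nontrivial liveness property that time must eventually elapse; the present corollary is essentially a packaging of that result with the observation that finality is a time-threshold condition.
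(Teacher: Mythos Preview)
Your proposal is correct and follows essentially the same route as the paper: both take the maximum of all subcontract timeouts (which exists by finiteness of $\treeobj$ and its trees) and invoke Corollary~\ref{cor:anytimelimit} with that value as the target time. You are slightly more careful than the paper in ensuring the precondition $t' > \environmentvec.t$ of Corollary~\ref{cor:anytimelimit} and in spelling out the monotonicity of time and the uniformity of $n$, but the argument is the same.
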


\begin{proof}
    As it is explained in the proof of Theorem \ref{th:timeelapseaftern} the total number of edges in $\treeobj$ is finite. Thus the maximum
    \[
        max \{ \funtimeout (h(\e, id)) \mid \e \in \untree, \fulltreeobj \in \treeobj \}
    \]
    exists. By setting this value as $t'$ in Corollary \ref{cor:anytimelimit} the statement follows from it.
\end{proof}
\section{\CTLC{} Properties}
\label{sec:CTLCproperties}

\subsection{Strategy-independent properties}
We first define properties that are independent of the concrete, honest user strategies applied.

\begin{lemma}\label{lemma:atleastoneitemincontracts}
    Let $\run$ be a \CTLC{} run with $\environmentvec =\funlastEnv(\run)$.
    It holds
    \begin{align*}
        \forall \environmentch \in \environmentvec 
        : \CTLCcontract \in \environmentchCTLCEnabled \Rightarrow  \, \vert \CTLCcontract \vert \geq 1 .
    \end{align*}
\end{lemma}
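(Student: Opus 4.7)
The plan is to prove the lemma by induction on the length of $\run$, strengthening the invariant to the following auxiliary statement: for every $\environmentch \in \environmentvec$ and every enabled contract $\CTLCcontract \in \environmentchCTLCEnabled$, the advertised contract $\advCTLCcontract \in \environmentchCTLCAdvertised$ from which $\CTLCcontract$ originated (i.e., the one sharing its identifier $x$) still contains its maximum-position subcontract $\CTLCsubcontract^*$, and moreover $\CTLCsubcontract^* \in \CTLCcontract$. Since $\CTLCsubcontract^*$ is then a concrete witness, this immediately implies $\vert \CTLCcontract \vert \geq 1$.

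The base case is trivial: by Definition~\ref{def:initialconfig}, an initial environment has $\environmentchCTLCEnabled = \emptyset$, so the invariant holds vacuously. For the inductive step I would walk through every semantic rule of Appendix~\ref{sec:inferenceRules}. The rules $\advBatch$, $\commitBatch$, $\advCTLC$, $\authCTLC$, $\revealSecret$, $\shareSecret$, $\CoEx$, and $\elapse\,\delta$ either leave $\environmentchCTLCEnabled$ and $\environmentchCTLCAdvertised$ untouched or only grow the advertised side without enabling anything, so the invariant is preserved on all contracts by the IH. The rule $\enableCTLC$ introduces a freshly enabled contract under the premises $\CTLCcontract = \{\CTLCsubcontract\}$, $s = \vert \advCTLCcontract \vert$ and $\funposition(\CTLCsubcontract) = s$, which are precisely what is needed to exhibit $\CTLCsubcontract = \CTLCsubcontract^*$ for the new contract. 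The rule $\enableSubC$ only grows $\CTLCcontract$, and $\refund$ and $\DecideCo$ remove $\CTLCcontract$ (and $\advCTLCcontract$) entirely, so the remaining enabled contracts still satisfy the hypothesis.

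The one case that requires a little work is $\timeoutsc$, where both $\advCTLCcontract$ and $\CTLCcontract$ shrink by one subcontract $\CTLCsubcontract$; here I have to show $\CTLCsubcontract \neq \CTLCsubcontract^*$. This follows from two premises of the rule together with the well-formedness condition~\eqref{ctlc:well-formedness} that makes positions unique within a contract: first, $\vert \advCTLCcontract \vert > 1$ forces the maximum and minimum positions in $\advCTLCcontract$ to be distinct; second, $\CTLCsubcontract$ is chosen as the strictly minimum-position subcontract (by $\nexists \dotCTLCsubcontract \in \advCTLCcontract : \funposition(\dotCTLCsubcontract) < \funposition(\CTLCsubcontract)$), whereas $\CTLCsubcontract^*$ has the maximum position. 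Hence $\CTLCsubcontract^*$ survives in $\advCTLCcontract \setminus \{\CTLCsubcontract\}$ and, by the IH, also in $\CTLCcontract \setminus \{\CTLCsubcontract\}$, so the strengthened invariant is preserved. The main obstacle is really just to realise that a naive induction directly on $\vert \CTLCcontract \vert$ does not propagate through $\timeoutsc$, whereas tracking the concrete witness $\CTLCsubcontract^*$ makes the argument close.
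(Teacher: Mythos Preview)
Your proof is correct and follows the same approach as the paper—an induction over the inference rules—though the paper's own proof is a one-line sketch that cites only \textit{\enableCTLC} and \textit{\enableSubC}. Your explicit treatment of $\timeoutsc$ via the strengthened invariant (tracking the maximum-position subcontract as a persistent witness in $\CTLCcontract$) is exactly the detail the paper elides.
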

\begin{proof}
    This statement directly follows from the inference rules \textit{\enableCTLC } (\ref{ctlc:enableCTLC}) and \textit{\enableSubC} (\ref{ctlc:enableSubC}) by induction. 
\end{proof}

\begin{lemma}\label{lemma:claimedcontractsoneitem}
    Let $\run$ be a \CTLC{} run with $\environmentvec =\funlastEnv(\run)$.
    It holds
    \begin{align*}
        \forall \environmentch \in \environmentvec 
        : \CTLCcontract \in \environmentchCTLCDecided \Rightarrow  \, \exists ! \, \CTLCsubcontract : \CTLCsubcontract \in 
        \CTLCcontract.
    \end{align*}
\end{lemma}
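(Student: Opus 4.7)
The plan is to proceed by induction on the length of the run $\run$, using the inference rules from \appref{sec:inferenceRules} to track how the component $\environmentchCTLCDecided$ (i.e.\ $C_{\textit{cla}}$) evolves.

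For the base case, the initial environment satisfies $\environmentchCTLCDecided = \emptyset$ by \defref{def:initialconfig}, so the claim holds vacuously. For the inductive step, I would inspect every rule in the semantics and identify which ones modify $C_{\textit{cla}}$. A straightforward case analysis shows that only two rules touch this component: the $\DecideCo$ rule (\ref{ctlc:DecideCo}) and the $\CoEx$ rule (\ref{ctlc:CoEx}). All other rules leave $C_{\textit{cla}}$ unchanged, so the induction hypothesis transfers directly for them.

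For $\CoEx$, the update is $C_{\textit{cla}}' := \environmentchCTLCDecided \setminus \{ \CTLCcontract \}$, which only removes elements; any contract still in $C_{\textit{cla}}'$ was already in $\environmentchCTLCDecided$ and therefore already satisfies the single-subcontract invariant by the induction hypothesis. The key case is $\DecideCo$, where $C_{\textit{cla}}' := \environmentchCTLCDecided \cup \{ \CTLCcontract \cap \{ \CTLCsubcontract \} \}$. The premise $\CTLCsubcontract \in \CTLCcontract$ of that rule implies $\CTLCcontract \cap \{ \CTLCsubcontract \} = \{ \CTLCsubcontract \}$, which is a singleton; combined with the induction hypothesis on the other elements of $\environmentchCTLCDecided$, this yields the invariant.

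I do not expect any real obstacle here: the statement is essentially a direct reading of the $\DecideCo$ rule, and the argument is a routine rule-by-rule case analysis. The only minor care needed is to notice that $\CTLCcontract$ on the right-hand side of $\DecideCo$ refers to the enabled contract (taken from $\environmentchCTLCEnabled$, which by \lemref{lemma:atleastoneitemincontracts} is non-empty), so the intersection $\CTLCcontract \cap \{ \CTLCsubcontract \}$ is well-defined and equals $\{\CTLCsubcontract\}$ rather than being empty.
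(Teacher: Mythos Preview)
Your proposal is correct and matches the paper's approach exactly: the paper's proof is the one-liner ``This statement directly follows from the inference rules \textit{\DecideCo} (\ref{ctlc:DecideCo}) and \textit{\CoEx} (\ref{ctlc:CoEx}) by induction,'' and you have simply unfolded that induction in the obvious way. The reference to \lemref{lemma:atleastoneitemincontracts} at the end is harmless but unnecessary, since the premise $\CTLCsubcontract \in \CTLCcontract$ of the $\DecideCo$ rule already guarantees $\CTLCcontract \cap \{\CTLCsubcontract\} = \{\CTLCsubcontract\}$.
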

\begin{proof}
    This statement directly follows from the inference rules \textit{\DecideCo } (\ref{ctlc:DecideCo}) and \textit{\CoEx} (\ref{ctlc:CoEx}) by induction. 
\end{proof}

\begin{lemma}\label{lemma:enabledsubsetofadvertised}
    Let $\run$ be a \CTLC{} run with $\environmentvec =\funlastEnv(\run)$.
    It holds
    \begin{align*}
        \forall \environmentch \in \environmentvec 
        : \CTLCcontract \in \environmentchCTLCEnabled \Rightarrow  \, \advCTLCcontract \in \environmentchCTLCAdvertised \, \land \, 
        \CTLCcontract \subseteq \advCTLCcontract .
    \end{align*}
\end{lemma}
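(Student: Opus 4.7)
The plan is to prove the statement by induction on the length of the run $\run$. For the base case, note that by Definition~\ref{def:initialconfig}, an initial environment has $\environmentchCTLCEnabled = \environmentchCTLCAdvertised = \emptyset$ for all $\environmentch$, so the implication holds vacuously.

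For the inductive step, assume the invariant holds after the $n$-th step, and consider an arbitrary transition $\environmentvec_n \overset{\action}{\longrightarrow} \environmentvec_{n+1}$. The argument proceeds by case analysis on the inference rules in Appendix~\ref{sec:inferenceRules} that can modify $\environmentchCTLCEnabled$ or $\environmentchCTLCAdvertised$ in some component $\environmentch$. Rules that do not touch these sets (e.g., \advBatch, \commitBatch, \authCTLC, \revealSecret, \shareSecret, \elapse) preserve the invariant trivially, so I would focus on the remaining rules.

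For \advCTLC\ (Rule~\ref{ctlc:advCTLC}), only $\environmentchCTLCAdvertised$ grows; any previously enabled $\CTLCcontract$ still has its witness $\advCTLCcontract$ available, so the invariant is preserved. For \enableCTLC\ (Rule~\ref{ctlc:enableCTLC}), a fresh $\CTLCcontract = \{\CTLCsubcontract\}$ with $\CTLCsubcontract \in \advCTLCcontract \in \environmentchCTLCAdvertised$ is added to $\environmentchCTLCEnabled$, which immediately gives $\CTLCcontract \subseteq \advCTLCcontract$. For \enableSubC\ (Rule~\ref{ctlc:enableSubC}), the existing $\CTLCcontract$ is replaced by $\CTLCcontract \cup \{\CTLCsubcontract\}$ for some $\CTLCsubcontract \in \advCTLCcontract \setminus \CTLCcontract$, where $\advCTLCcontract$ is the same witness provided by the induction hypothesis, so the subset relation is preserved. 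For \timeoutsc\ (Rule~\ref{ctlc:timeout}), both the enabled $\CTLCcontract$ and the advertised $\advCTLCcontract$ are simultaneously stripped of the same $\CTLCsubcontract$, and since set difference preserves inclusion, $\CTLCcontract \setminus \{\CTLCsubcontract\} \subseteq \advCTLCcontract \setminus \{\CTLCsubcontract\}$ continues to hold. Finally, \refund\ (Rule~\ref{ctlc:refund}) and \DecideCo\ (Rule~\ref{ctlc:DecideCo}) remove the contract from $\environmentchCTLCEnabled$ altogether (and the corresponding advertised contract from $\environmentchCTLCAdvertised$), so the invariant need no longer be verified for the removed contract, while all other contract pairs remain untouched.

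The main potential obstacle is the slight notational ambiguity between an advertised contract $\advCTLCcontract$ and the corresponding enabled contract $\CTLCcontract$ sharing the same identifier $x$: several rules (notably \DecideCo) refer to them interchangeably. To make the induction airtight, I would first establish (as a preliminary observation, provable by the same induction) that the identifier $x$ of an enabled $\CTLCcontract$ uniquely determines the advertised $\advCTLCcontract \in \environmentchCTLCAdvertised$ that witnesses $\CTLCcontract \subseteq \advCTLCcontract$, and that this pairing is preserved under every rule. Once this identifier-based correspondence is pinned down, the case analysis above goes through routinely.
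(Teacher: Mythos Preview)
Your proposal is correct and follows the same approach as the paper, which simply states that the claim ``directly follows from the inference rules \enableCTLC\ and \enableSubC\ by induction.'' Your case analysis is a faithful elaboration of that one-line proof; the paper singles out those two rules because they are the only ones that enlarge $\environmentchCTLCEnabled$, while you additionally spell out why the removal rules (\timeoutsc, \refund, \DecideCo) and the $\environmentchCTLCAdvertised$-only rule (\advCTLC) preserve the invariant, which is entirely appropriate.
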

\begin{proof}
    This statement directly follows from the inference rules \textit{\enableCTLC } (\ref{ctlc:enableCTLC}) and \textit{\enableSubC} (\ref{ctlc:enableSubC}) by induction. 
\end{proof}

\begin{lemma}\label{lemma:somelemmainG}
    Let $\honestusers =  \{ \honestuser_1, \dots, \honestuser_k\}$ be a set of honest users
    and $\honestuserstrategies = \{ \userstrategy{\honestuser_1}, \dots, \userstrategy{\honestuser{_k}} \}$ a corresponding set of user strategies.
    Let $\attackerstrategy$ be an adversary strategy (for $\honestusers$) and let $\run$ be a run such that $(\honestuserstrategies, \attackerstrategy) \conforms \run$. 
    Further, let $\environmentvec =\funlastEnv(\run)$.
    Then it holds
    \begin{align*}
        \forall \environmentch &\in \environmentvec : \CTLCcontract \in \environmentchCTLCEnabled \\ \Rightarrow &(\funsender(\advCTLCcontract): \authCTLC \, \advCTLCcontract) \in \actions{\run} \\
         \land &(\funreceiver(\advCTLCcontract): \authCTLC \, \advCTLCcontract) \in \actions{\run} .
    \end{align*}
    See \cref{eq:actionsset} for the definition of $\actions{\run}$. 
\end{lemma}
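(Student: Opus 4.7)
The plan is to proceed by induction on the length of the run $\run$, but first to establish an auxiliary invariant that cleanly separates the two steps of the argument.

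The key auxiliary lemma I would prove first is: for every prefix $\run''$ of $\run$ with $\environmentvec'' = \funlastEnv(\run'')$, every channel $\environmentch \in \environmentvec''$, and every pair $(A, \advCTLCcontract) \in \environmentch.C_\textit{aut}$, the action $(A{:}\, \authCTLC\, \advCTLCcontract)$ occurs in $\actions{\run''}$. This is proved by a straightforward induction on $\run''$: inspecting the inference rules in Appendix~\ref{sec:inferenceRules}, the only rule that adds an element to $C_{\textit{aut}}$ is \authCTLC{} (rule~\ref{ctlc:authCTLC}), which records exactly that action in the transition label. All other rules either leave $C_{\textit{aut}}$ unchanged or shrink it (only \enableCTLC{} shrinks it). Since $\actions{\run''} \subseteq \actions{\run}$ for any prefix $\run''$ of $\run$, an authorization in the set at any point in the run entails that the corresponding \authCTLC{} action appears somewhere in $\actions{\run}$.

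With this auxiliary fact in hand, the main statement follows by induction on the length of $\run$. In the base case $\run$ is an initial environment, so $\environmentch.C_{\textit{en}} = \emptyset$ by Definition~\ref{def:initialconfig} and the implication is vacuous. For the inductive step, write $\run = \run_0 \sstep{\action} \environmentvec$ and let $\CTLCcontract \in \environmentch.C_{\textit{en}}$. I do case analysis on $\action$: if $\action$ does not modify $\environmentch.C_{\textit{en}}$, or only reshapes an existing enabled contract (as in \enableSubC, \timeoutsc, or a no-op from another channel), then $\CTLCcontract$ (or an ancestor of it sharing the same advertisement $\advCTLCcontract$) was already in the enabled set in the prior environment, and the induction hypothesis together with $\actions{\run_0} \subseteq \actions{\run}$ closes the case. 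The only way a fresh entry appears in $\environmentch.C_{\textit{en}}$ is via \enableCTLC{} (rule~\ref{ctlc:enableCTLC}), whose premises explicitly require both $(\funsender(\advCTLCcontract), \advCTLCcontract)$ and $(\funreceiver(\advCTLCcontract), \advCTLCcontract)$ to lie in $\environmentch.C_{\textit{aut}}$ immediately before the transition. Applying the auxiliary lemma to the prefix $\run_0$ then yields both $(\funsender(\advCTLCcontract){:}\,\authCTLC\,\advCTLCcontract) \in \actions{\run_0}$ and the analogous receiver action, which are preserved in $\actions{\run}$.

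The main obstacle I anticipate is a bookkeeping one rather than a conceptual one: the notation in the excerpt is sometimes ambiguous about which $\advCTLCcontract$ we attach to a given $\CTLCcontract$, since $\advCTLCcontract$ itself shrinks under \timeoutsc. I would therefore phrase the induction so that it tracks the original advertisement referenced by \enableCTLC{} (recorded unambiguously in $C_{\textit{adv}}$ up to pruning of timed-out subcontracts), and use Lemma~\ref{lemma:enabledsubsetofadvertised} to guarantee that whenever $\CTLCcontract \in \environmentch.C_{\textit{en}}$ there is a canonical $\advCTLCcontract \in \environmentch.C_{\textit{adv}}$ with $\CTLCcontract \subseteq \advCTLCcontract$; this $\advCTLCcontract$ is the one for which the two \authCTLC{} actions are produced. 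No use of honest or adversarial strategy structure is required beyond the fact that $(\honestuserstrategies,\attackerstrategy) \conforms \run$ means $\run$ is a valid sequence of semantic transitions, so the statement could in fact be strengthened to hold for every valid run.
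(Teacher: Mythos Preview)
Your proposal is correct and follows the same approach the paper takes: the paper's proof is literally the one-liner ``This statement directly follows from the inference rules \enableCTLC{} and \authCTLC{} by induction,'' and your plan spells out exactly that induction. Your auxiliary invariant about $C_{\textit{aut}}$, the case analysis on the last action, and your observation that no strategy structure is needed beyond $\run$ being a valid semantic run are all sound and make explicit what the paper leaves implicit; the bookkeeping concern about which $\advCTLCcontract$ is meant is real but, as you note, is resolved by Lemma~\ref{lemma:enabledsubsetofadvertised} together with the fact that \enableSubC{} and \timeoutsc{} only reshape an already-enabled contract with the same identifier.
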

\begin{proof}
    This statement directly follows from the inference rules \textit{\enableCTLC } (\ref{ctlc:enableCTLC}) and \textit{\authCTLC} (\ref{ctlc:authCTLC}) by induction. 
\end{proof}

\begin{lemma}\label{lemma:anothersomelemmainG}
    Let $\honestusers =  \{ \honestuser_1, \dots, \honestuser_k\}$ be a set of honest users
    and $\honestuserstrategies = \{ \userstrategy{\honestuser_1}, \dots, \userstrategy{\honestuser{_k}} \}$ a corresponding set of user strategies.
    Let $\attackerstrategy$ be an adversary strategy (for $\honestusers$) and let $\run$ be a run such that $(\honestuserstrategies, \attackerstrategy) \conforms \run$. 
    Further, let $\environmentvec =\funlastEnv(\run)$.
    Then it holds
    \begin{align*}
        &\CTLCcontract \in \environmentvecCTLCDecided \\
        &\Rightarrow 
        \exists \CTLCsubcontract : \, 
        \DecideCo (\CTLCcontract ,\CTLCsubcontract,  \funsecret_i(\CTLCsubcontract)) \in \actions{\run} 
    \end{align*}
\end{lemma}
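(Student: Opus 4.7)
The plan is to prove this by induction on the length of the run $\run$, analogous to the strategy used for Lemmas \ref{lemma:atleastoneitemincontracts}--\ref{lemma:somelemmainG}. The key observation is that the only inference rule in the semantics that adds a contract to the $C_{cla}$ (decided) component of any environment is the $\DecideCo$ rule (ctlc:DecideCo), while the $\CoEx$ rule (ctlc:CoEx) only removes contracts from it. Thus the claim essentially reduces to tracking the provenance of entries in $\environmentvecCTLCDecided$.

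For the base case, $\run = \environmentvec_0$ is an initial environment, and by \defref{def:initialconfig} we have $\environment_{\channel}^0.C_{cla} = \emptyset$ for every component $\environment_{\channel}^0 \in \environmentvec_0$. Hence $\environmentvecCTLCDecided = \emptyset$ and the implication holds vacuously. For the inductive step, assume the statement holds for $\run$ and consider $\run' = \run \sstep{\action} \environmentvec'$. Fix some $\CTLCcontract \in \environmentvec'.C_{cla}$; we need to find a suitable $\DecideCo$ action in $\actions{\run'}$. A case distinction on the inference rule used to derive $\action$ suffices: for all rules other than $\DecideCo$ and $\CoEx$, the $C_{cla}$ component of every $\environment_{\channel}$ is unchanged, so the claim follows from the induction hypothesis applied to $\run$ (since $\actions{\run} \subseteq \actions{\run'}$). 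For $\CoEx$, some $\CTLCcontract'$ is removed from $C_{cla}$, so if $\CTLCcontract \in \environmentvec'.C_{cla}$, then a fortiori $\CTLCcontract \in \environmentvec.C_{cla}$ and we again invoke the induction hypothesis.

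The only nontrivial case is $\action = \DecideCo(\dotCTLCcontract, \dotCTLCsubcontract, \funsecret_i(\dotCTLCsubcontract))$. Here either $\CTLCcontract$ was already present in $\environmentvec.C_{cla}$ (and the induction hypothesis supplies the required action, which is still in $\actions{\run'}$), or $\CTLCcontract$ is precisely the newly added entry $\dotCTLCcontract \cap \{\dotCTLCsubcontract\}$ produced by this rule. In the latter case, the witnessing action $\action$ itself has the required form $\DecideCo(\dotCTLCcontract, \dotCTLCsubcontract, \funsecret_i(\dotCTLCsubcontract))$ and lies in $\actions{\run'}$ by construction. Note that the statement as written refers to $\DecideCo (\CTLCcontract ,\CTLCsubcontract,  \funsecret_i(\CTLCsubcontract))$, which should be read with the understanding that the first component in the decided set is a single-subcontract version of the original contract; this matches the update performed by the $\DecideCo$ rule (which stores $\CTLCcontract \cap \{\CTLCsubcontract\}$, not the original $\dotCTLCcontract$), and together with \lemref{lemma:claimedcontractsoneitem} uniquely determines $\CTLCsubcontract$.

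The main (minor) subtlety will be dealing with this notational identification between a decided contract and the advertised contract it originated from, i.e.\ verifying that the $\CTLCcontract$ quantified over in the lemma corresponds to the projection $\dotCTLCcontract \cap \{\dotCTLCsubcontract\}$ appearing in the inference rule, so that the asserted $\DecideCo$ action is indeed in $\actions{\run'}$ with the exact arguments named in the statement. Beyond this bookkeeping, the argument is a routine induction.
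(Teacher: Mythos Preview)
Your proposal is correct and follows essentially the same approach as the paper, which simply states that the lemma ``directly follows from the inference rule \textit{\DecideCo} by induction.'' You have in fact spelled out the induction argument in considerably more detail than the paper does, including the subtlety about the decided entry being $\dotCTLCcontract \cap \{\dotCTLCsubcontract\}$ rather than the original $\dotCTLCcontract$.
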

\begin{proof}
    This statement directly follows from the inference rule \textit{\DecideCo } (\ref{ctlc:DecideCo}) by induction. 
\end{proof}

\begin{lemma}\label{lemma:authorized-before-enabled}
    Let $\honestusers =  \{ \honestuser_1, \dots, \honestuser_k\}$ be a set of honest users
    and $\honestuserstrategies = \{ \userstrategy{\honestuser_1}, \dots, \userstrategy{\honestuser{_k}} \}$ a corresponding set of user strategies.
    Let $\attackerstrategy$ be an adversary strategy (for $\honestusers$) and let $\run$ be a run such that $(\honestuserstrategies, \attackerstrategy) \conforms \run$. 
    Further, let $\environmentvec =\funlastEnv(\run)$.
    Then it holds
    \begin{align*}
        &\CTLCcontract \in \environmentvecCTLCEnabled \\
        &\Rightarrow 
        \exists \dotCTLCcontract \hspace{-3pt} : \hspace{-2pt}
        \dotCTLCcontract \hspace{-1pt} \supseteq \hspace{-1pt} \CTLCcontract \funsender(\dotCTLCcontract) \hspace{-2pt}: \authCTLC (\dotCTLCcontract) \hspace{-2pt} \in \hspace{-2pt} \actions{\run} 
    \end{align*}
\end{lemma}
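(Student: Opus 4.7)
The plan is to prove the lemma by induction on the length of $\run$, tracking a slightly strengthened invariant: the witness $\dotCTLCcontract$ can always be chosen to be the full advertised contract $\advCTLCcontract$ containing $\CTLCcontract$. Before the induction, I would record one auxiliary observation: authCTLC is the only inference rule that inserts pairs into any $\environmentchCTLCAuthorized$, so at any point in $\run$ where $(A, \advCTLCcontract) \in \environmentchCTLCAuthorized$ holds, the action $A: \authCTLC(\advCTLCcontract)$ must already belong to $\actions{\run}$ (even though enableCTLC later removes the pair from $C_{aut}$, it does not erase the action from the action history).

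For the base case, an initial environment has $\environmentchCTLCEnabled = \emptyset$ for every channel, so the implication is vacuous. For the inductive step, write $\run = \run' \sstep{\action} \environmentvec$ with $\environmentvec' = \funlastEnv(\run')$. The rules that do not grow any $C_{en}$ (i.e.\ everything other than enableCTLC and enableSubC) leave the conclusion intact by the IH, since $\actions{\run'} \subseteq \actions{\run}$ and existing enabled contracts either persist, shrink by one subcontract (timeout), or are removed entirely (refund, DecideCo). Two cases remain.

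First, suppose $\action = \enableCTLCch \, \CTLCcontract$. Then $\CTLCcontract = \{\CTLCsubcontract\}$ for some $\CTLCsubcontract \in \advCTLCcontract$ with $\advCTLCcontract \in \environment'.C_{adv}$, and the rule's premises include $(\funsender(\advCTLCcontract), \advCTLCcontract) \in \environment'.C_{aut}$. By the auxiliary observation applied to $\run'$, $\funsender(\advCTLCcontract): \authCTLC(\advCTLCcontract) \in \actions{\run'} \subseteq \actions{\run}$. Since $\CTLCcontract \subseteq \advCTLCcontract$, taking $\dotCTLCcontract := \advCTLCcontract$ discharges this case. Second, suppose $\action = B: \enableSubC \, \CTLCsubcontract$ which replaces some $\CTLCcontract \in \environment'.C_{en}$ by $\CTLCcontract \cup \{\CTLCsubcontract\}$, with $\CTLCsubcontract \in \advCTLCcontract \setminus \CTLCcontract$ for the corresponding $\advCTLCcontract \in \environment'.C_{adv}$. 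By \cref{lemma:enabledsubsetofadvertised} applied in $\environment'$, $\CTLCcontract \subseteq \advCTLCcontract$, so $\CTLCcontract \cup \{\CTLCsubcontract\} \subseteq \advCTLCcontract$. Applying the strengthened IH to $\CTLCcontract \in \environment'.C_{en}$ yields the very same $\advCTLCcontract$ as witness together with the authCTLC action in $\actions{\run'} \subseteq \actions{\run}$, so $\dotCTLCcontract := \advCTLCcontract$ again works.

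The only real subtlety — and what I expect to be the main obstacle — is the enableSubC case: a naive induction hypothesis that merely asserts existence of \emph{some} superset $\dotCTLCcontract \supseteq \CTLCcontract$ is insufficient, because that superset might not contain the freshly added $\CTLCsubcontract$. The fix is to strengthen the hypothesis so that $\dotCTLCcontract$ is always an advertised contract (hence maximal among supersets of any currently-enabled contract with the same identifier), which is compatible with the weaker statement of the lemma and is produced automatically by the enableCTLC case. With this strengthening, the enableSubC case reduces to a direct reuse of the same advertised contract, and the induction goes through.
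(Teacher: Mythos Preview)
Your proposal is correct and follows the same inductive approach the paper intends; the paper's own proof is the single line ``This statement directly follows from the inference rule \textit{\DecideCo} by induction'' (almost certainly a copy-paste slip for the \enableCTLC/\authCTLC rules). Your write-up is considerably more careful than the paper's, and the subtlety you flag---that the witness must be the full advertised contract so that the \enableSubC case goes through---is real and is exactly what the neighboring \cref{lemma:somelemmainG} and \cref{lemma:enabledsubsetofadvertised} encode implicitly by always referring to $\advCTLCcontract$.
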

\begin{proof}
    This statement directly follows from the inference rule \textit{\DecideCo } (\ref{ctlc:DecideCo}) by induction. 
\end{proof}

\begin{lemma}\label{lemma:authorized-before-claim}
    Let $\honestusers =  \{ \honestuser_1, \dots, \honestuser_k\}$ be a set of honest users
    and $\honestuserstrategies = \{ \userstrategy{\honestuser_1}, \dots, \userstrategy{\honestuser{_k}} \}$ a corresponding set of user strategies.
    Let $\attackerstrategy$ be an adversary strategy (for $\honestusers$) and let $\run$ be a run such that $(\honestuserstrategies, \attackerstrategy) \conforms \run$. 
    Further, let $\environmentvec =\funlastEnv(\run)$.
    Then it holds
    \begin{align*}
        \CTLCcontract \in \environmentvecCTLCDecided & \Rightarrow \,
        \exists \dotCTLCcontract : \,
        \dotCTLCcontract \supseteq \CTLCcontract \\
        &\hspace{2pt} \land \funsender(\dotCTLCcontract) \hspace{-1pt} : \hspace{-1pt} \authCTLC (\dotCTLCcontract) \hspace{-1pt} \in \hspace{-1pt} \actions{\run}.  
    \end{align*}
\end{lemma}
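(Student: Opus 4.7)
The plan is to reduce this statement to the two preceding lemmas. Intuitively, any decided contract must at some earlier point have been enabled (by inspecting the rule \textit{\DecideCo}), and any enabled contract must have been authorized by its sender (by Lemma~\ref{lemma:authorized-before-enabled}). Combining these observations chronologically along the run $\run$ yields the claim.

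First, I would apply Lemma~\ref{lemma:anothersomelemmainG} to $\CTLCcontract \in \environmentvecCTLCDecided$ to obtain some subcontract $\CTLCsubcontract$ and condition $\funsecret_i(\CTLCsubcontract)$ such that $\DecideCo(\CTLCcontract, \CTLCsubcontract, \funsecret_i(\CTLCsubcontract)) \in \actions{\run}$. By definition of $\actions{\run}$ (see \eqref{eq:actionsset}), there exists a decomposition
\begin{equation*}
\run \; = \; \run_1 \; \overset{\DecideCo(\CTLCcontract, \CTLCsubcontract, \funsecret_i(\CTLCsubcontract))}{\longrightarrow} \; \run_2 .
\end{equation*}
By inspecting the premises of the inference rule \textit{\DecideCo} \eqref{ctlc:DecideCo}, the transition can only fire if $\CTLCcontract \in \funlastEnv(\run_1).C_{en}$ (in the $\environmentch$ where the fund resides). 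Moreover, since $(\honestuserstrategies, \attackerstrategy) \conforms \run$, every prefix of $\run$ also conforms to $(\honestuserstrategies, \attackerstrategy)$, so in particular $(\honestuserstrategies, \attackerstrategy) \conforms \run_1$.

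Next, I would apply Lemma~\ref{lemma:authorized-before-enabled} to the prefix $\run_1$ and to the enabled contract $\CTLCcontract \in \funlastEnv(\run_1).C_{en}$. This yields some $\dotCTLCcontract$ with $\dotCTLCcontract \supseteq \CTLCcontract$ and $\funsender(\dotCTLCcontract) : \authCTLC(\dotCTLCcontract) \in \actions{\run_1}$. Since $\run_1$ is a prefix of $\run$, $\actions{\run_1} \subseteq \actions{\run}$, and therefore $\funsender(\dotCTLCcontract) : \authCTLC(\dotCTLCcontract) \in \actions{\run}$, which is exactly what is to be shown.

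I do not expect any significant obstacle here: the proof is a straightforward chaining of two previously established invariants, and the only subtlety is making precise that Lemma~\ref{lemma:authorized-before-enabled} must be applied to the prefix $\run_1$ just before the $\DecideCo$ action, rather than to the final environment $\environmentvec$ (where $\CTLCcontract$ has already been moved from $C_{en}$ to $C_{\textit{cla}}$). This is justified by the preservation of conformance under prefixes and by the monotonicity of $\actions{\cdot}$ with respect to run prefixes.
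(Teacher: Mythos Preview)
Your proposal is correct and takes essentially the same approach as the paper, which simply states that the claim ``directly follows from the inference rule \textit{\DecideCo} (\ref{ctlc:DecideCo}) by induction.'' Your version is more explicit: rather than redoing the induction, you factor the argument through Lemmas~\ref{lemma:anothersomelemmainG} and~\ref{lemma:authorized-before-enabled} and handle the prefix-of-run subtlety carefully, which is exactly the content hidden behind the paper's one-line proof.
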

\begin{proof}
    This statement directly follows from the inference rule \textit{\DecideCo } (\ref{ctlc:DecideCo}) by induction. 
\end{proof}

\begin{lemma}\label{lemma:secsetsmonotonic}
    Let $\honestusers =  \{ \honestuser_1, \dots, \honestuser_k\}$ be a set of honest users
    and $\honestuserstrategies = \{ \userstrategy{\honestuser_1}, \dots, \userstrategy{\honestuser{_k}} \}$ a corresponding set of user strategies.
    Let $\attackerstrategy$ be an adversary strategy (for $\honestusers$) and let $\run$ be a run such that $(\honestuserstrategies, \attackerstrategy) \conforms \run$. 
    Further, let $\environmentvec =\funlastEnv(\run)$.
    Then it holds for all $\environmentvec_0, \environmentvecprime$ with $\run = \environmentvec_0 \longrightarrow \ldots \longrightarrow \environmentvecprime \longrightarrow \ldots \longrightarrow \environmentvec$ :
    \begin{align*}
        \forall \channel: \environmentchprimeRevealedSecrets \subseteq \environmentchRevealedSecrets 
    \end{align*}
\end{lemma}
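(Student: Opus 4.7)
The plan is to prove the statement by straightforward induction on the number of transitions between $\environmentvecprime$ and $\environmentvec$ in the run $\run$. The base case ($\environmentvecprime = \environmentvec$) is trivial since $\environmentchprimeRevealedSecrets = \environmentchRevealedSecrets$ for every $\channel$.

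For the inductive step, I would consider a run $\run = \environmentvec_0 \longrightarrow \ldots \longrightarrow \environmentvecprime \longrightarrow \ldots \longrightarrow \environmentvec^* \overset{\action}{\longrightarrow} \environmentvec$ and, assuming the inductive hypothesis $\environmentchprimeRevealedSecrets \subseteq \environmentvec^*.\environmentch.S_{rev}$, reduce the claim to showing $\environmentvec^*.\environmentch.S_{rev} \subseteq \environmentchRevealedSecrets$ for every $\channel$. This reduces to a case analysis on the inference rule underlying the transition $\environmentvec^* \overset{\action}{\longrightarrow} \environmentvec$, as listed in \appref{sec:inferenceRules}.

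Inspecting each of the rules \advBatch{}, \commitBatch{}, \advCTLC{}, \authCTLC{}, \enableCTLC{}, \enableSubC{}, \timeoutsc{}, \refund{}, \DecideCo{}, \CoEx{} and $\elapse$, one sees that none of them modify the $S_{rev}$ component of any \chenv: each such rule only updates a disjoint subset of the other components ($S_{com}$, $\batchset$, $C_{adv}$, $C_{aut}$, $C_{en}$, $C_{cla}$, $F_{av}$, $F_{res}$, or $t$). Hence for these rules we have $\environmentvec^*.\environmentch.S_{rev} = \environmentchRevealedSecrets$ for every $\channel$. The only two rules that do touch $S_{rev}$ are \revealSecret{} (\ref{ctlc:revealSecret}) and \shareSecret{} (\ref{ctlc:shareSecret}); in both cases the update is of the form $S_{rev}' := \environmentchRevealedSecrets \cup \{ s^{id}_{\walk} \}$ for exactly one channel and leaves all other channels untouched, so the superset inclusion holds trivially.

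The whole argument is routine, so I do not expect a genuine obstacle; the only care needed is to exhaustively check the rule list and to respect the mild subtlety that $S_{rev}$ is a per-channel (local) component, so rules operating on one $\environmentch$ do not affect the $S_{rev}$ of other channels, which is consistent with the $\forall \channel$ quantifier in the conclusion.
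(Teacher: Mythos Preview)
Your proposal is correct and takes essentially the same approach as the paper: the paper's proof is a one-liner stating that the claim follows directly from the inference rules \textit{\revealSecret} (\ref{ctlc:revealSecret}) and \textit{\shareSecret} (\ref{ctlc:shareSecret}) by induction, which is exactly the argument you spell out in more detail.
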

\begin{proof}
    This statement directly follows from the inference rules 
    \textit{\revealSecret } (\ref{ctlc:revealSecret}) and \textit{\shareSecret } (\ref{ctlc:shareSecret}) by induction. 
\end{proof}

\subsection{Properties implied by the honest user strategy}

We show general properties of runs that can be enforced by the honest user strategy $\Sigma_B^{\treeobj}$ defined in Appendix~\ref{sec:honestStrategy}.

\begin{lemma}\label{lemma:somelemmaaboutrevealsecret}
    Let $B$ be an honest user, $\treeobj$ be a set of tuples of the form $\fulltreeobj$, which is well-formed according to \cref{def:wellformedtree}, and let $\Bstrategy$ be the honest user strategy for $B$ executing $\treeobj$. Let $\Astrategy$ be an arbitrary adversarial strategy. Let $\run$ be a run with $\Bstrategy , \Astrategy \results \run$, starting from an initial environment.
    Further, let $\environmentvec =\funlastEnv(\run)$.
    Then it holds for all $\e \intree \untree $ with $\honestuser = \funreceiver(\e)$:
    \begin{align*}
        &s = \funsecret(\e, id) \in \environment_{\channeledge{\e}}.S_{rev} \\ &\Rightarrow \honestuser : \revealSecretch \, s \in \actions{\run} \\
        &\hspace{13pt} \text{ with } \channel = \channeledge{\e}
    \end{align*}
\end{lemma}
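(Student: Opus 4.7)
The plan is to trace backward through $\run$ to the first transition that places $s$ into any environment's revealed secrets set, show that this transition must be a $\revealSecret$ by the owner, and then use the structure of the honest strategy $\Bstrategy$ to pin down the channel.

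First, I would observe that by the secret assignment in~\eqref{def:edgesecret}, the secret $s = \funsecret(\e, id)$ is unique to the edge $\e$, and by construction $\funowner(s) = \funreceiver(\e) = \honestuser$. Consulting the semantics, the only two rules that can add a secret to any $\environmentch.S_{rev}$ are $\revealSecret$~\eqref{ctlc:revealSecret} and $\shareSecret$~\eqref{ctlc:shareSecret}. Since $\shareSecret$ requires $s$ to already lie in some other $\environment_{ch'}.S_{rev}$, a straightforward induction on run length shows that the earliest transition in $\run$ introducing $s$ into any $S_{rev}$ must be a $\revealSecret$ action. Moreover, the side-condition $A = \funowner(s^{id}_{\walk})$ in~\eqref{ctlc:revealSecret} forces this action to be executed by $\honestuser$; call it $\honestuser : \revealSecret_{ch''}\, s$ for some channel $ch''$.

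Next, I would invoke Definition~\ref{def:adversarystrategy}(2): since this action is restricted to the honest user $\honestuser$ and appears in $\run$, it must have been scheduled by $\Bstrategy$, i.e., $\honestuser : \revealSecret_{ch''}\, s \in \Bstrategy(\run')$ for some prefix $\run'$ of $\run$. Inspecting each sub-strategy in $\Bstrategy$ (equations~\eqref{eq:honprep}, \eqref{eq:enable-check}, \eqref{eq:CTLC-timeout}, \eqref{eq:CTLC-execution}, and the consistency clause of~\eqref{def:honeststrategy}), the only clause that can output a $\revealSecret$ action is the third case of $\Sigma_B^{\e'}(\run')$ in~\eqref{eq:CTLC-execution}, which schedules $\honestuser : \revealSecret_{ch''}\, s^{id}_{\walk}$ with the secret equal to $\funsecret(\e', id)$ and channel $ch'' = \channeledge{\e'}$ for some edge $\e' \intree \untree$ satisfying the preconditions of that clause (in particular, $\funreceiver(\e') = \honestuser$ via $\funisIngoing(\e', \run')$). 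Since $\funsecret$ is injective on edges, $\funsecret(\e', id) = s = \funsecret(\e, id)$ forces $\e' = \e$, and therefore $ch'' = \channeledge{\e}$. This yields exactly the desired membership $\honestuser : \revealSecret_{\channeledge{\e}}\, s \in \actions{\run}$.

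The main obstacle is the channel-matching step: one must be confident that no other clause of $\Bstrategy$ (and in particular no action induced by the consistency clause in~\eqref{def:honeststrategy}) can emit a $\revealSecret$ of $s$ on a different channel. This is where the injectivity of $\funsecret$ on edges of $\untree$ and the fact that $\specalone$ assigns a single channel $\channeledge{\e}$ to each edge $\e$ are essential; once these are in place, the case analysis over $\Bstrategy$'s clauses closes the argument immediately.
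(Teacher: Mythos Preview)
Your proposal is correct and follows essentially the same approach as the paper. The paper's own proof is a two-line sketch that cites only the $\revealSecret$ inference rule~\eqref{ctlc:revealSecret} and the ownership fact $\funowner(\funsecret(\e,id)) = \funreceiver(\e)$ from~\eqref{def:edgesecret}; your argument expands this by explicitly tracing back past possible $\shareSecret$ steps, invoking the adversary restriction (Definition~\ref{def:adversarystrategy}(2)), and doing a clause-by-clause inspection of $\Bstrategy$ to pin down the channel. The core insight---only the owner can execute $\revealSecret$, and the owner here is the honest $\honestuser$---is identical, so your write-up is a faithful (and more rigorous) fleshing-out of the paper's sketch rather than a different route.
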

\begin{proof}
    This statement directly follows from the inference rules \textit{\revealSecret } (\ref{ctlc:revealSecret}) and the fact that 
    \[
        \funowner(\funsecret(\e, id)) = \funreceiver(\e)
    \] 
    from \cref{def:edgesecret}. 
\end{proof}

Additionally, we show that whenever a contract has been claimed, it will not be enabled anymore for a second time. 

\begin{lemma} \label{lemma:againanotherlemmainG}
    Let $B$ be an honest user, $\treeobj$ be a well-formed set of tuples of the form $\fulltreeobj$ and $\Bstrategy$ the honest user strategy for $B$ executing $\treeobj$. Let $\Astrategy$ be an arbitrary adversarial strategy. Let $\run$ be a run with $\Bstrategy , \Astrategy \results \run$, starting from an initial environment.
    Further, let $\environmentvec =\funlastEnv(\run)$.
    Then it holds for \hspace{-1pt} all $id$:
    \begin{align*}
        &\DecideCo (\CTLCcontract ,\CTLCsubcontract,  \funsecret_i(\CTLCsubcontract_{\iota})) \in \actions{\run} \\
        &\land \honestuser \in \funusers(\CTLCcontract)\\
        &\Rightarrow \nexists \, \dotCTLCcontract \in \environmentvecCTLCEnabled \land \nexists ( B, \dotCTLCcontract ) \in \environmentvecCTLCAuthorized
    \end{align*}
\end{lemma}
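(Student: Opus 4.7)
The plan is to exploit a single invariant enforced by $\Bstrategy$: $B$ schedules an authorize action for any fixed CTLC identifier $x$ at most once in the entire run. This invariant is encoded in the authorize branch of $\overline{\Sigma}_{B}^{e}$ (see~\ref{eq:enable-check}), which only fires when $\nexists \hat{\dot{\textit{c}}}^{\textit{x}} : (B: \authCTLC \, \hat{\dot{\textit{c}}}^{\textit{x}}) \in \actions{\run}$. Since actions of the form $B: \authCTLC$ are restricted to $B$ by \cref{def:adversarystrategy}, no adversarial strategy $\Astrategy$ can schedule them on $B$'s behalf. Hence at most one $B$-authorize action per identifier appears in $\actions{\run}$.

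Next I would unfold the hypothesis that $\DecideCo (\CTLCcontract ,\CTLCsubcontract, \funsecret_i(\CTLCsubcontract)) \in \actions{\run}$ with $B \in \funusers(\CTLCcontract)$. Rule~(\ref{ctlc:DecideCo}) is applicable only if $\CTLCcontract \in \environmentchCTLCEnabled$ at the corresponding step, which by \lemref{lemma:enabledsubsetofadvertised} and rule~(\ref{ctlc:enableCTLC}) must have been produced by an earlier enableCTLC on some advertised contract $\widetilde{\CTLCcontract}$ sharing the identifier $x$ of $\CTLCcontract$. That enable step required $(B, \widetilde{\CTLCcontract})$ to be in the authorized set (since $B \in \funusers(\widetilde{\CTLCcontract})$), and by \lemref{lemma:authorized-before-claim} (or directly by induction on the rules) the corresponding $B: \authCTLC \, \widetilde{\CTLCcontract}$ action must occur in $\actions{\run}$. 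By the invariant above, this is $B$'s unique authorize action with identifier $x$.

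To conclude I would split the negated existential into its two components. For the authorization component: any hypothetical $(B, \dotCTLCcontract) \in \environmentvecCTLCAuthorized$ sharing identifier $x$ would require a $B: \authCTLC \, \dotCTLCcontract$ action not yet consumed by an enable step; but the enable step on $\widetilde{\CTLCcontract}$ already consumed $B$'s single authorize action for identifier $x$ (the set update $C_{aut}' := \environmentchCTLCAuthorized \backslash Aut$ in~(\ref{ctlc:enableCTLC})), and the invariant blocks any fresh such action. For the enabled component: any $\dotCTLCcontract \in \environmentvecCTLCEnabled$ with identifier $x$ either coincides with $\widetilde{\CTLCcontract}$, which was removed from the enabled set by the DecideCo step itself, or was produced by a later enableCTLC -- but that rule would again demand a fresh $(B, \cdot) \in \environmentvecCTLCAuthorized$, ruled out by the previous point.

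I expect the main obstacle to be essentially notational rather than mathematical: one must fix the intended reading of $\dotCTLCcontract$ in the statement as ranging over contracts sharing the identifier $x$ of $\CTLCcontract$ (so that $B \in \funusers(\CTLCcontract)$ forces $B \in \funusers(\dotCTLCcontract)$, making $B$'s authorization truly necessary for any enable of $\dotCTLCcontract$), and then track the chronological ordering of enableCTLC and DecideCo steps carefully. Once the single-authorize-per-identifier invariant and the consumption of authorizations by enableCTLC are cleanly stated, the conclusion follows by a short case analysis.
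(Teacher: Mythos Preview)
Your proposal is correct and identifies the same core mechanism as the paper: the single-authorize-per-identifier constraint in the honest strategy (the check $\nexists \hat{\dot{\textit{c}}}^{\textit{x}} : (B: \authCTLC \, \hat{\dot{\textit{c}}}^{\textit{x}}) \in \actions{\run}$ in~(\ref{eq:enable-check})), combined with the fact that enableCTLC consumes the authorization and that any enabled contract with identifier $x$ must have been preceded by $B$'s authorization (via \lemref{lemma:somelemmainG}/\lemref{lemma:authorized-before-claim}). The paper packages this as an induction on the length of $\run$, doing a case split on the last action being $\enableCTLC$, $A:\authCTLC$, or $\DecideCo$, whereas you phrase it as a direct chronological trace; the substance is identical, and your ``main obstacle'' paragraph already anticipates the only bookkeeping needed (fixing $\dotCTLCcontract$ to range over contracts with the same identifier $x$, which is the intended reading).
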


\begin{proof}
    By induction on the length of $\run$. 
    The only interesting cases to consider are those where contracts are added to $\environmentvecCTLCEnabled$ and $\environmentvecCTLCAuthorized$ and where a contract is claimed.
    So we look at the cases where the last action $\action$ in $\run = \run' \overset{\action}{\longrightarrow} \Gamma$ is
    $\action = \enableCTLC (\CTLCcontract)$, \linebreak $\action = A: \authCTLC(\CTLCcontract)$ or 
    $\action = \DecideCo(\CTLCcontract, \CTLCsubcontract,  \funsecret_i(\CTLCsubcontract)).$

    If $\action = \enableCTLC (\CTLCcontract)$ we would have $( B, \dotCTLCcontract ) \in \environmentvecprimeCTLCAuthorized$, by the \textit{\enableCTLC } rule (\ref{ctlc:enableCTLC}), which is ruled out by I.H..

    If $\action = A: \authCTLC(\CTLCcontract)$ then we distinguish the cases on whether $A = B$. 
    If $A = B$, then we know that $\action$ must have been scheduled by $\Sigma_B^{\treeobj}$ and hence by definition of $\Sigma_B^{\treeobj}$, we have that
    $B: \authCTLC ( \hat{\dot{\textit{c}}}^{\textit{x}} ) \not \in \actions{\run'}$. 
    Now assume towards contradiction that 
    $$\DecideCo (\dotCTLCcontract ,\dotCTLCsubcontract,  \funsecret_i(\dotCTLCsubcontract_{\iota})) \in \actions{\run'}.$$ 
    Then there must be some $\run^*$ with $\run' = \run^* \overset{\action'}{\longrightarrow} \run^{**}$ for some $\run^{**}$ and $\funlastEnv(\run^*) =: \environmentvec^{*}$ such that $\dotCTLCcontract \in \environmentvec^{*}.C_{\textit{cla}} .$
    But then by~\Cref{lemma:authorized-before-claim}, we get that $B: \authCTLC(c'^x) \in \actions{\run^*} \subseteq \actions{\run'}$ for some $c'^x \supseteq \dotCTLCcontract$ leading to a contradiction. 
    If $A \neq B$, it does not have an effect on the Lemma. 

    If $\action = \DecideCo(\CTLCcontract, \CTLCsubcontract,  \funsecret_i(\CTLCsubcontract))$
    we have $\CTLCcontract \in \environmentvecprimeCTLCEnabled$ and $\CTLCcontract \notin \environmentvecprimeCTLCEnabled$ by the \textit{\DecideCo} rule (\ref{ctlc:DecideCo}). 
    For $\CTLCcontract \in \environmentvecprimeCTLCEnabled$ there needs to be $B : \authCTLC(\CTLCcontract) \in \actions{\run'}$ by \cref{lemma:somelemmainG}. This given authorization was then removed when enabling $\CTLCcontract$ according to the \textit{\enableCTLC } rule (\ref{ctlc:enableCTLC}). According to the honest user strategy, see \cref{eq:enable-check}, $B$ will not authorize another contract with the same identifier, and hence 
    \[
        \nexists (B : \authCTLC(\dotCTLCcontract)) \in \actions{\run'}
    \]
    which implies $\nexists ( B, \dotCTLCcontract ) \in \environmentvecCTLCAuthorized$ and $\nexists \, \dotCTLCcontract \in \environmentvecCTLCEnabled$.
    Part of the preconditions of $\DecideCo$ (\ref{ctlc:DecideCo}) is $\CTLCcontract \in \environmentvecCTLCEnabled$, and part of its effects is that it removes $\CTLCcontract$ from $\environmentvecCTLCEnabled$. Thus, it needs to be argued that $\CTLCcontract$ cannot be enabled again.
    
    Assume towards contradiction that in $\environmentvec =\funlastEnv(\run)$ we have $\CTLCcontract \in \environmentvecCTLCEnabled$. By \cref{lemma:somelemmainG}, we have that both sender and receiver need to authorize $\CTLCcontract$ again.
    Since $\funusers(\CTLCcontract) \cap Hon \neq \emptyset$, the honest user strategy determines at least one of these authorizations. 
    Now, the honest user strategy, more specifically \eqref{eq:enable-check}, tells us that honest users will only authorize a contract once. 
\end{proof}

\begin{lemma}\label{lemma:only-top-lvl-disabled}
    Let $B$ be an honest user, $\treeobj$ a well-formed set of tuples of the form $\fulltreeobj$ and $\Bstrategy$ the honest user strategy for $B$ executing $\treeobj$. Let $\Astrategy$ be an arbitrary adversarial strategy. Let $\run$ be a run with $\Bstrategy , \Astrategy \results \run$, starting from an initial environment.
    Further, let $\environmentvec =\funlastEnv(\run)$.
    Then it holds
    \begin{align*}
        &\CTLCsubcontract \in \CTLCcontract \in \environmentvecCTLCEnabled \, \land \, \funposition(\CTLCsubcontract) = j \\
        \land \, &B \in \funusers(\CTLCcontract) \\
        \land \, &\exists \run', \run'' : \run = \run' \rightarrow \run'' \text{ with }\environmentvecprime =\funlastEnv(\run') : \\
        &\exists \dotCTLCcontract, \dotCTLCsubcontract \in \dotCTLCcontract \in \environmentvecprimeCTLCEnabled \text{ with } \funposition(\dotCTLCsubcontract) > j \\
        & \Rightarrow \dotCTLCsubcontract \in \CTLCcontract \in \environmentvecCTLCEnabled .
    \end{align*}
\end{lemma}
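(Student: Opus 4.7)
The plan is to argue by contradiction: assume $\dotCTLCsubcontract \notin \CTLCcontract$ at $\environmentvec$, even though $\dotCTLCsubcontract$ was in the enabled state $\dotCTLCcontract$ (of the same underlying advertised contract $\advCTLCcontract$ that contains $\CTLCsubcontract$ at $\environmentvec$) at the earlier time $\environmentvecprime$. I then inspect which transitions along the suffix of $\run$ can remove a subcontract from an enabled contract. By inspection of the inference rules of Appendix~\ref{sec:inferenceRules}, there are exactly three: $\timeoutsc$ (\ref{ctlc:timeout}), $\refund$ (\ref{ctlc:refund}), and $\DecideCo$ (\ref{ctlc:DecideCo}).

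First, I would handle the $\timeoutsc$ case. By the premise of the rule, $\dotCTLCsubcontract$ would have to be the lowest-position element of $\advCTLCcontract$ at the moment of firing, forcing $\CTLCsubcontract$ (with smaller position $j$) not to be in $\advCTLCcontract$ at that step. But $\CTLCsubcontract \in \CTLCcontract \subseteq \advCTLCcontract$ at the later time $\environmentvec$ by Lemma~\ref{lemma:enabledsubsetofadvertised}, and the only way a subcontract can re-enter an advertised contract is via a fresh $\advCTLC$ action following a prior removal, which in turn requires $\refund$ or $\DecideCo$ on $\advCTLCcontract$. So this case collapses into the remaining two.

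The two remaining rules both remove $\CTLCcontract$ from $\environmentchCTLCEnabled$ in its entirety, so for $\CTLCcontract \in \environmentvecCTLCEnabled$ to hold at $\environmentvec$ the contract must subsequently be re-enabled. By the $\enableCTLC$ rule together with Lemma~\ref{lemma:somelemmainG}, such a re-enabling requires a fresh $\authCTLC$ by both sender and receiver, and since $B \in \funusers(\CTLCcontract)$ is honest any authorisation on $B$'s side must originate from $\Bstrategy$. For the $\refund$ branch, $\refund$ only fires when $\environmentvec.t \geq t_0 + \lambda_s \Delta > t_0$; but by the enabling-phase clause of $\Bstrategy$ in Eq.~\eqref{eq:enable-check}, $B$ only schedules $\authCTLC$ while $\environmentvec.t < t_0$ and never for a contract identifier it has already authorised, so no re-authorisation is possible. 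For the $\DecideCo$ branch, Lemma~\ref{lemma:againanotherlemmainG} directly rules out any subsequent presence of a contract with the same identifier (involving $B$) in $\environmentvecCTLCEnabled$. Both contradict $\CTLCcontract \in \environmentvecCTLCEnabled$ at $\environmentvec$, completing the argument.

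The main obstacle will be the careful bookkeeping of contract identity across potential refund/re-advertise cycles---i.e.\ making precise the implicit assumption that $\dotCTLCcontract$ and $\CTLCcontract$ refer to the same underlying advertised object $\advCTLCcontract$, tracking monotonicity of the advertised set between consecutive $\advCTLC$ events, and invoking the one-shot authorisation window of $\Bstrategy$ (together with Lemma~\ref{lemma:againanotherlemmainG}) strongly enough to forbid re-enabling.
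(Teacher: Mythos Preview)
Your proposal is correct and follows essentially the same line as the paper's proof, which tersely invokes the $\timeoutsc$ rule (\ref{ctlc:timeout}) together with Lemma~\ref{lemma:againanotherlemmainG} by induction. Your case analysis on the three rules that can remove subcontracts, and the reduction of the $\timeoutsc$ and $\refund$ branches to the impossibility of re-enabling (via the one-shot authorisation of $\Bstrategy$ and Lemma~\ref{lemma:againanotherlemmainG}), is exactly the content the paper leaves implicit; your explicit treatment of the $\refund$ branch is a useful addition since Lemma~\ref{lemma:againanotherlemmainG} as stated only covers the $\DecideCo$ case.
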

\begin{proof}
    This statement follows from the inference rule \textit{\timeoutsc } (\ref{ctlc:timeout}) together with the previous \cref{lemma:againanotherlemmainG} by induction. The honest user strategy implies that once a contract has been claimed, it cannot be enabled again. 
\end{proof}

\begin{lemma}\label{lemma:fundsAvailable}
    Let $B$ be an honest user, $\treeobj$ a well-formed set of tuples of the form $\fulltreeobj$ and $\Bstrategy$ the honest user strategy for $B$ executing $\treeobj$. Let $\Astrategy$ be an arbitrary adversarial strategy. Let $\run$ be a run with $\Bstrategy , \Astrategy \results \run$, starting from an initial environment $\environmentvec_0$ which is liquid w.r.t. $\treeobj$, see \cref{def:liquidenv}.
    Further, let $\environmentvec =\funlastEnv(\run)$.
    Then it holds  
    \begin{align*}
        &\forall \fulltreeobj \in \treeobj ~\forall \e \intree \untree : B = \funsender(\e) \\
        &\land ~\nexists \batch \in \environmentvec.\batchset : h(\e,id) \in \CTLCcontract \in \batch \\
        &\Rightarrow \edgefund{\e} \in \environment_{\channeledge{\e}}.F_{av} 
    \end{align*}
\end{lemma}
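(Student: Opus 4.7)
The proof will be by induction on the length of $\run$. For the base case $\run = \environmentvec_0$, liquidity (Definition \ref{def:liquidenv}) of the initial environment directly gives $\edgefund{\e} \in \environment_{\channeledge{\e}}.F_{av}$ for every $\e \intree \untree$ with $\treespec \in \treeobj$. For the inductive step, write $\run = \run' \overset{\action}{\longrightarrow} \environmentvec$ and set $\environmentvec' := \funlastEnv(\run')$; since $\batchset$ never shrinks along any transition, the premise of the lemma also holds for $\environmentvec'$, and the induction hypothesis gives $\edgefund{\e} \in F_{av}$ of $\environmentvec'$'s $\channeledge{\e}$-component. The only inference rule that removes a fund from $F_{av}$ is $\enableCTLC$ (\ref{ctlc:enableCTLC}), so it suffices to rule out $\action = \enableCTLCch(\CTLCcontract')$ with $\funfund(\CTLCcontract') = \edgefund{\e}$ and $\channel = \channeledge{\e}$.

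Suppose toward contradiction that such an $\action$ occurs. Since $B$ owns $\edgefund{\e}$, we have $\funsender(\CTLCcontract') = B$; by Lemma \ref{lemma:somelemmainG} together with Lemma \ref{lemma:enabledsubsetofadvertised}, $B{:}\authCTLC(\advCTLCcontract) \in \actions{\run'}$ for some $\advCTLCcontract \supseteq \CTLCcontract'$. Because restricted $B$-actions in a conformant run must be scheduled by $\Bstrategy$ (Definition \ref{def:adversarystrategy}), inspection of the sub-strategy $\overline{\Sigma}_B^{\e''}$ in (\ref{eq:enable-check}) shows that $B$ schedules $\authCTLC(\advCTLCcontract)$ only in the $newC(\e'', \widetilde{R}) = 2$ case for some edge $\e''$ with $h(\e'', id') \in \advCTLCcontract$. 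Unfolding $\ingoing$ (\ref{eq:honingoing}) inside $newC$ (\ref{eq:CTLC-enable-check}), there must exist $\treespec' = (id', \untree', t_0', \specalone') \in \treeobj$ and a batch $\batch' = \treetoCTLCBadge(\treespec', \mathcal{S})$ in the $\batchset$ component of $\funlastEnv(\widetilde{R})$ such that $h(\e'', id') \in \dotCTLCcontract \in \batch'$ for some $\dotCTLCcontract \supseteq \advCTLCcontract$.

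Now $\funfund(\dotCTLCcontract) = \funfund(\CTLCcontract') = \edgefund{\e}$, hence $\edgefund{\e''} = \edgefund{\e}$. Well-formedness of $\treeobj$ (Definition \ref{def:wellformedtree}), read as ``identical funds imply identical $(\funsender, \funreceiver, id)$-triples'', then forces $\treespec' = \treespec$ and $(\funsender(\e''), \funreceiver(\e'')) = (B, \funreceiver(\e))$. By construction of $\treetoCTLCBadge$ (\ref{eq:treetoCTLCBadge}), $\dotCTLCcontract$ is the unique contract in $\batch'$ that groups the subcontracts for all edges in $E^{\untree}_{(B,\funreceiver(\e))}$, and in particular $h(\e, id) \in \dotCTLCcontract$. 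Monotonicity of $\batchset$ gives $\batch' \in \environmentvec.\batchset$, contradicting the premise.

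The main obstacle is the reduction from the purely operational event of a fund leaving $F_{av}$ to the strategy-level invariant that $B$'s honest strategy only ever authorizes contracts whose host batch has the precise form $\treetoCTLCBadge(\treespec, \mathcal{S})$ for some $\treespec \in \treeobj$; this relies on the persistence and consistency clauses of Definition \ref{def:participantstrategy} together with well-formedness, which guarantees that the fund $\edgefund{\e}$ pins down both the relevant tree $\treespec \in \treeobj$ and the unique contract within its associated batch, so that no alternative ``attacker-crafted'' batch can legitimately cause the fund to be reserved.
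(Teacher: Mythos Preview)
Your proof is correct and follows essentially the same route as the paper's (much terser) argument: both trace the only fund-removing rule $\enableCTLC$ back through $B$'s required authorization, observe that the honest strategy only authorizes when the batch $\treetoCTLCBadge(\treespec,\mathcal{S})$ is present, and then use monotonicity of $\batchset$ together with well-formedness to reach a contradiction. One small slip: the inclusion $\dotCTLCcontract \supseteq \advCTLCcontract$ is not justified (the advertised $\advCTLCcontract$ could in principle come from an adversarial batch), but you do not actually need it, since $\funfund(\dotCTLCcontract) = \funfund(h(\e'',id')) = \funfund(\advCTLCcontract) = \funfund(\CTLCcontract')$ already follows from the fact that both $\dotCTLCcontract$ and $\advCTLCcontract$ contain the common subcontract $h(\e'',id')$.
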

\begin{proof}
    Recalling from
    \cref{def:liquidenv}
    we know that in $\environmentvec_0$ we have
    \[
        \forall \fulltreeobj \hspace{-1pt} \in \hspace{-1pt} \treeobj ~\forall \e \intree \untree \hspace{-2pt} : \hspace{-2pt} \edgefund{\e} \in \environment_{0, \channeledge{\e}}.F_{av} .
    \]
    Only the \textit{\enableCTLC } (\ref{ctlc:enableCTLC}) action removes funds from $\environmentvecAvailableFunds$. This action relies on $\advCTLCcontract \in \environmentvecCTLCAdvertised$, which can only be added by the \linebreak \textit{\advCTLC } (\ref{ctlc:advCTLC}) action, and the authorization of $B$. 
    Hence, if $B = \funowner(\edgefund{\e})$ does not give its authorization, it cannot be removed from $\environment_{\channeledge{\e}}.F_{av} $. 
    The action \textit{\advCTLC } relies on advertised batches $\batch \in \environmentvec.\batchset$ and hence, if a corresponding batch has not been advertised, which cannot be removed from $\environmentvec.\batchset$, the following steps also have not happened yet.. Hence the fund is still available.
\end{proof}

\subsection{Protocol Security}
\label{sec:protocolSecurity}
In the following, we will use $\e \in \untree$ instead of $\e \intree \untree$ for brevity of notation. 

\begin{definition}
    Given a $\fulltreeobj \in \treeobj$ and a valid specification $\specalone$ on $\untree$ (\ref{def:specification}), we define the set of all tree edges that involve $B$ as
    \[
        \hatpartialEx := \{ \edge \intree \untree \mid X = B \lor Y = B \}.    
    \]
    Subsets $\partialEx \subseteq \hatpartialEx$ are called \textit{consistent} by definition if and only if 
    \begin{align*}
        &\forall \e \in \partialEx : \\
        &\bigl( \forall \e' \in \funonPathtoRoot (\untree, \e) \cap \hatpartialEx : \e' \in \partialEx\\
        &\land \nexists \e'' \in \partialEx \backslash \{ \e \} : \spec{\e''} = \spec{\e} \bigr).
    \end{align*}
    In other words, for all edges in $\partialEx$, the edges appearing on its path to the root that are also in $\hatpartialEx$ are in $\varrho_B$. Therefore, $\varrho_B$ is closed regarding walks upwards in the tree. 
    Also, there are no duplicates in $\partialEx$. 
    We define 
    \[
        \partialEx \widehat{\subseteq} \untree : \Leftrightarrow \partialEx \subseteq 
        \{ \e \mid \e \intree \untree \}.
    \]
    We use $\partialEx \subseteq \untree$ instead of $\partialEx \widehat{\subseteq} \untree$ to simplify the notation.
\end{definition}

\begin{theorem}\label{th:mainSecurity}
    Let $B$ be an honest user, $\treeobj$ be a set of tuples of the form $\fulltreeobj$ (with $\untree$ being a tree, $id\in \N$ a unique id for the tree, $t_0 \in \R^{>0}$ and $spec(\e)$ the specification for $\e$), which is well-formed (\cref{def:wellformedtree}), and $\Bstrategy$ the honest user strategy for $B$ executing $\treeobj$. Let $\Astrategy$ be an arbitrary adversarial strategy. Then for all runs $\run$ with $\Bstrategy , \Astrategy \results \run$, starting from an initial environment, and for all $\fulltreeobj \in \treeobj$ there exists a consistent $\partialEx \subseteq\untree$ s.t. $\run \IDrelation \partialExFamily$ and $Inv(\run, \partialEx, \fulltreeobj)$ holds.
    
    Where 
    $
        \partialExFamily := \{ \partialEx \mid \fulltreeobj \in \treeobj \}
    $
    and 
    \begin{align}
        &\run \IDrelation \partialExFamily :\Leftrightarrow \nonumber \\
        & \forall (\CTLCsubcontract, \funsecret_i(\CTLCsubcontract)) \in \comcontracts{R} : \nonumber\\
        & \exists \partialEx, \e \in \partialEx : \CTLCsubcontract = h(\e, id) \nonumber \\
        & \hspace{20pt} \land \funsecret_i(\CTLCsubcontract) = h_{sec}(\e, id) \label{eq:IDrelation1}\\
        \land \, & \forall \partialEx, \e \in \partialEx : \label{eq:IDrelation2}\\
        & \hspace{20pt} (h(e,id), h_{sec}(\e, id)) \in \comcontracts{R} \nonumber 
    \end{align}
    and by recalling from Appendix \ref{sec:honestStrategy} we set
    \begin{align}
        \actions{\run} &\,= \{ \action \mid \run = \run_1 \overset{\action}{\longrightarrow} \run_2 \} \, \text{ (see (\ref{eq:actionsset}))} \nonumber 
    \end{align}
    \begin{align}
        \sactions{\run} &\,= \{ \action \mid \exists \environmentvec : \run \overset{\action}{\longrightarrow} \environmentvec \}  \, \, \, \, \text{ (see (\ref{eq:sactionsset}))} \nonumber
    \end{align}
    \begin{align}
        \comactions{\run} &:= \actions{\run} \cup \sactions{\run} \nonumber
    \end{align}
    \begin{align}
        \excontracts{\run} &:= \label{eq:excontracts} \\
        & \hspace{-25pt} \{ (\CTLCsubcontract, \funsecret_i(\CTLCsubcontract)) \mid \action \in \actions{\run} \nonumber\\
        & \hspace{-25pt} \land B = \funsender(\CTLCsubcontract) \nonumber\\
        & \hspace{-25pt} \land \action = \DecideCo(\CTLCcontract, \CTLCsubcontract,  \funsecret_i(\CTLCsubcontract)) \} \nonumber
    \end{align}
    \begin{align}
        \schecontracts{\run} &:= \label{eq:schecontracts} \\
        & \hspace{-25pt} \{ (\CTLCsubcontract, \funsecret_i(\CTLCsubcontract)) \mid \action \in \comactions{\run} \nonumber \\
        & \hspace{-25pt} \land B = \funreceiver(\CTLCsubcontract) \nonumber\\
        & \hspace{-25pt} \land \action = \DecideCo(\CTLCcontract,\CTLCsubcontract, \funsecret_i(\CTLCsubcontract)) \} \nonumber 
    \end{align}
    \begin{align}
        \comcontracts{R} &:= \nonumber \\ 
        & \hspace{-25pt} \excontracts{\run} \cup \schecontracts{\run} \nonumber
    \end{align}
    and it holds $Inv(\run, \partialEx, \fulltreeobj)$ defined as
    \begin{align*}
        Inv&(\run, \partialEx, \fulltreeobj) \\
        &:= \bigwedge_{S \in InvNames} Inv_S(\run, \partialEx, \fulltreeobj)
    \end{align*}
    consisting of the invariants defined below. 
    We define ten invariants
    \begin{align*}
        &InvNames := \\
        &\{ \text{in-secrets, secrets, in-schedule, levels, liveness,} \\
        & \text{init-liveness, deposits, setup, tree, auth} \}
    \end{align*} 
    The first invariant ensures that whenever the secret assigned to an ingoing edge has been revealed, the edge will get executed. 
    \begin{align*} \small
        &Inv_{\text{in-secrets}}(\run, \partialEx, \fulltreeobj) :\Leftrightarrow \\
        &\forall Y, \ein, \environmentvec \text{ with } \environmentvec = \funlastEnv(\run) \, : \\
        &\ein = (Y,B)_{\walk} \in \untree, \funsecret(\ein, id) \in \environmentvecRevealedSecrets \\
        &\Longrightarrow \ein \in \partialEx
    \end{align*}%

    It also should be assured that if an edge is executed, all secrets of the edges on the path to the root have been revealed.
    \begin{align*} \small
        &Inv_{\text{secrets}}(\run, \partialEx, \fulltreeobj) :\Leftrightarrow \\
        &\forall \environmentvec \text{ with } \environmentvec = \funlastEnv(\run) \, : \\
        &\bigl( \e \in \partialEx \\
        &\Longrightarrow \forall \e' \in \funonPathtoRoot(\untree,\e):\, \\
        & \hspace{22pt} \funsecret(\e', id) \in \environment_{\channeledge{\e}}.S_{rev} \bigr)
    \end{align*}%
    Additionally, the secret of an ingoing edge should only be published if the outgoing edge, if existent, had been claimed before. 
    \begin{align*} \small
        &Inv_{\text{in-schedule}}(\run, \partialEx, \fulltreeobj) :\Leftrightarrow \\
        &\forall X, Y, \ein, \environmentvec \text{ with } \environmentvec = \funlastEnv(\run) \, :\\
        &\bigl( \ein = (Y,B)_{\walk} \in \untree, \fundepth(\ein) > 1, \\
        &\funsecret(\ein, id) \in \environmentvecRevealedSecrets \\
        &\Longrightarrow \forall \eout = (B,X)_{\walk'} \in \untree \text{ with } \walk = \concatvec{\unitvec{(Y,B)_{\walk}}}{\walk'} :\\
        & \hspace{15pt} \DecideCo( \CTLCcontract,  h( \eout, id ), h_{sec}( \eout, id ) ) \in \funactions(\run) \bigr)
    \end{align*}%
    The \textit{levels} invariant ensures that once the timeout for a $h(\e, id)$ has run out, there is no channel in which the subcontract is enabled, and its fund is reserved. 
    \begin{align*} \small
        &Inv_{\text{levels}}(\run, \partialEx, \fulltreeobj) :\Leftrightarrow \\
        &\forall \e \in \untree \text{ with } B = \funsender(\e), \environmentvec \text{ with } \environmentvec = \funlastEnv(\run) \, :\\
        &\Bigl( \environmentvec.t > t_0 + \fundepth(\e) \Delta \\
        &\Longrightarrow \nexists \channel, \CTLCcontract \in \environmentchCTLCEnabled : h(\e, id) \in \CTLCcontract \\
        &\hspace{20pt} \land \funfund(\CTLCcontract) \in \environmentchReservedFunds \Bigr)
    \end{align*}%
    For the liveness of the protocol, we need to ensure that if an outgoing edge has been claimed and there is an ingoing one, its timelock has not run out yet, or a representative higher-up has been decided before. 
    \begin{align*} \small
        &Inv_{\text{liveness}}(\run, \partialEx, \fulltreeobj) :\Leftrightarrow \\
        &\forall Y, \eout, \environmentvec \text{ with } \environmentvec = \funlastEnv(\run) \, : \\
        & \Bigl( \eout = (B,X)_{\walk} \in \untree, j = \fundepth(\eout), \\
        & \DecideCo(\CTLCcontract, h(\eout, id), h_{sec}(\eout, id)) \in \funactions(\run) \\
        & \Longrightarrow \forall \ein = (Y,B)_{\walk'} \in \untree \text{ with } \walk' = \concatvec{\unitvec{(Y,B)}}{\walk} :\\
        & \bigl( \exists j' \leq j+1, \walk'' : (Y,B)_{\walk''} \in \partialEx \\
        & \land \fundepth((Y,B)_{\walk''}) = j' \bigr) \\
        &\lor \bigl( \exists \dot{c}^{x'} \in \environment_{\channeledge{\ein}}.C_{en} : h(\ein, id) \in \dot{c}^{x'} \\
        & \land \environmentvectime < t_0 + (j+1)\Delta \bigr) \Bigr)
    \end{align*}%
    Note that $\walk'' = \walk'$ is possible. 
    Similarly, it should be ensured that if $\e \in \partialEx$, it either has been claimed already or there is still enough time to do so before a respective $\timeoutsc$ action can happen.
    \begin{align*} \small
        &Inv_{\text{init-liveness}}(\run, \partialEx, \fulltreeobj) :\Leftrightarrow \\
        &\forall \e \in \untree, \environmentvec \text{ with } \environmentvec = \funlastEnv(\run) \, : \e \in \partialEx\\
        & \Rightarrow \exists \CTLCcontract : \DecideCo(\CTLCcontract, h(\e, id), h_{sec}(\e, id)) \in \funactions(\run) \\
        & \hspace{13pt} \lor \environmentvectime < \timeout(h(\e, id))
    \end{align*}%
    The corresponding funds should be available to execute enabled and claimed contracts. This is ensured in the following invariant. 
    \begin{align*} \small
        &Inv_{\text{deposits}}(\run, \partialEx, \fulltreeobj) :\Leftrightarrow \\
        &\forall \environmentvec \text{ with } \environmentvec = \funlastEnv(\run), \environmentch \in \environmentvec \, : \\
        &\bigl( \exists \CTLCcontract \in \environmentchCTLCEnabled \cup \environmentchCTLCDecided \Longrightarrow \funfund(\CTLCcontract) \in \environmentchReservedFunds \bigr)
    \end{align*}%
    The next invariant says that if a contract for an outgoing edge is enabled or authorized by both the sender and receiver, then the contract for the ingoing edge is enabled, or another representative at the same pr smaller tree level has already been executed. This is because if a contract was enabled for an outgoing edge, we need to show that the ingoing edge is either enabled or has already been executed. 
    \begin{align*} \small
        &Inv_{\text{setup}}(\run, \partialEx, \fulltreeobj) :\Leftrightarrow \\
        &\forall X, Y, \eout = (B,X)_{\walk} \in \untree, \environmentvec \text{ with } \environmentvec = \funlastEnv(\run) \, : \\
        &\bigl( h(\eout, id) \in \CTLCcontract \in \environment_{\channeledge{\eout}}.C_{\textit{en}} \\
        & \lor (B, \advCTLCcontract) \in \environment_{\channeledge{\eout}}.C_{\textit{aut}} \, , h(\eout, id) \in \advCTLCcontract \\
        & \Longrightarrow \forall \ein = (Y,B)_{\walk'} \in \untree \text{ with } \walk' = \concatvec{\unitvec{(Y,B)_{\walk}}}{\walk} :\\
        &(\exists (\CTLCcontract)' \in \environment_{\channeledge{\ein}}.C_{en} :  h(\ein, id) \in (\CTLCcontract)' \, \\
        &\lor \, \exists j' \leq \fundepth(\ein) : (Y,B)_{\walk''} \in \partialEx  \\
        &\hspace{10pt} \land \fundepth((Y,B)_{\walk''}) = j')
        \bigr)
    \end{align*}%
    The following invariant ensures all subcontracts, including the honest user $B$, are linked to at least one edge in a tree.
    \begin{align*} \small
        &Inv_{\text{tree}}(\run, \partialEx, \fulltreeobj) :\Leftrightarrow \\
        &\forall \environmentvec = \funlastEnv(\run) \, : \\
        &( \, (B, \advCTLCcontract) \in \environmentvecCTLCAuthorized \\
        &\Longrightarrow \forall \CTLCsubcontract \in \advCTLCcontract \, \exists \e \in \untree : \CTLCsubcontract = h(\e, id) ) \\
        &\land ( \CTLCcontract \in \environmentvecCTLCEnabled \cup \environmentvecCTLCDecided \land B \in \funusers(\CTLCcontract)\\
        &\Longrightarrow \forall \CTLCsubcontract \in \CTLCcontract \, \exists \e \in \untree : \CTLCsubcontract = h(\e, id) )
    \end{align*}%
    Finally, it is ensured that all authorized contracts also get enabled afterwards. 
    \begin{align*} \small
        &Inv_{\text{auth}}(\run, \partialEx, \fulltreeobj) :\Leftrightarrow \\
        &\forall \environmentch \in \environmentvec = \funlastEnv(\run), (B, \advCTLCcontract) \hspace{-2pt} \in \hspace{-2pt} \environmentchCTLCAuthorized, s := \size{\advCTLCcontract} \hspace{-2pt} \in \hspace{-2pt} \mathbb{N} \hspace{-2pt}: \\
        &(B = \funsender(\advCTLCcontract), \\
        &\CTLCcontract = \{ \CTLCsubcontract ~|~ \CTLCsubcontract \in \advCTLCcontract ~\land~ \funposition(\CTLCsubcontract) = s  \} \\
        & \Longrightarrow 
        \enableCTLC \, \CTLCcontract \in \Sigma_B^{\treeobj}(\run) 
    \end{align*}%
\end{theorem}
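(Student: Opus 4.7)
The plan is to proceed by induction on the length of the run $\run$. In the base case $\run = \environmentvec_0$ with initial environment $\environmentvec_0$, we set $\partialEx = \emptyset$ for every $\treespec \in \treeobj$. The relation $\run \IDrelation \partialExFamily$ holds trivially since no $\DecideCo$ actions have occurred and $\comcontracts{\run} = \emptyset$, and each invariant is vacuously satisfied because no subcontracts have been advertised, authorized, enabled, claimed, and no secrets have been revealed.

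For the inductive step, assume the theorem holds for a run $\run$ with corresponding family $\partialExFamily$. Consider $\run' = \run \sstep{\action} \environmentvecprime$ with $(\Bstrategy, \Astrategy) \conforms \run'$. We distinguish cases on $\action$. For all actions except $\DecideCo$, we set $\partialExprime = \partialEx$ for every $\treespec$; the non-trivial task is to verify that each invariant is preserved. When $\action = \DecideCo(\CTLCcontract, \CTLCsubcontract, \secretsetelem)$ with $B \in \funusers(\CTLCcontract)$, we exploit the fact that by well-formedness of $\treeobj$ and the construction $\treetoCTLCBadgeP$, there is a unique tree specification $\treespec \in \treeobj$ and a unique edge $\e \in \untree$ with $\funedgemap(\treespec, \e) = (\CTLCcontract, \CTLCsubcontract, \secretsetelem)$; we set $\partialExprime := \partialEx \cup \{ \e \}$ for this tree and leave all others unchanged. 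Consistency of $\partialExprime$ follows from the invariants $Inv_{\text{secrets}}$ and $Inv_{\text{in-secrets}}$ applied to $\run$: the secrets along the path-to-root of $\e$ are already revealed, so the corresponding ingoing edges of $B$ are already in $\partialEx$; the absence of duplicates follows from the honest user strategy's $\funnodupl$ check in~\eqref{eq:CTLC-execution}, together with~\Cref{lemma:againanotherlemmainG} ensuring a subcontract can be claimed at most once.

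The bulk of the work is maintaining the ten invariants across the action cases. The simpler invariants ($Inv_{\text{tree}}$, $Inv_{\text{auth}}$, $Inv_{\text{deposits}}$) follow directly from the honest user strategy definitions (\Cref{eq:enable-check,eq:honingoing}) and the \CTLC{} semantics rules. The setup-related invariants ($Inv_{\text{setup}}$, $Inv_{\text{init-liveness}}$) require careful bookkeeping of the enabling order enforced by $\ingoing$ in the honest user strategy; the key observation is that $B$ only authorizes/enables a subcontract for an outgoing edge once all subcontracts of its ingoing edges (in the adjacent tree level) are enabled. For $Inv_{\text{levels}}$ we combine \Cref{lemma:only-top-lvl-disabled} with the honest strategy's eager scheduling of $\timeoutsc$ actions in~\eqref{eq:CTLC-timeout} (together with \Cref{cor:anytimelimit} to guarantee progress). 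The secret-related invariants ($Inv_{\text{in-secrets}}$, $Inv_{\text{secrets}}$, $Inv_{\text{in-schedule}}$) are preserved by observing that the honest strategy (case $\chContract(\e,\run) = 2$ in~\eqref{eq:CTLC-execution}) only reveals $\funsecret(\ein, id)$ through $\funisIngoing$ which requires the outgoing $\DecideCo$ to be in $\actions{\run}$; \Cref{lemma:somelemmaaboutrevealsecret} ensures only honest reveals occur for $B$'s secrets.

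The main obstacle will be the $Inv_{\text{liveness}}$ invariant, which must track a delicate disjunction: once an outgoing edge of $B$ is claimed at level $j$, every ingoing edge at level $j+1$ must either already be in $\partialEx$ (possibly at a higher level via a duplicate) or its subcontract must still be enabled with timeout not yet elapsed. Here, two adversarial behaviors interfere: (i) the adversary can schedule $\DecideCo$ on an outgoing subcontract of $B$ out of the expected order (before $B$ claims the corresponding ingoing one), and (ii) the adversary can delay time progression up to the subcontract's timeout. To handle (i), we use the timelock ordering $t_0 + j\Delta < t_0 + (j+1)\Delta$ enforced by the $\treetoCTLCBadgeP$ construction together with $Inv_{\text{levels}}$ to argue that when the outgoing edge at level $j$ is claimed, the time is still below $t_0 + j\Delta$, leaving $\Delta$ time to react on the ingoing side; \Cref{def:adversarystrategy}(3) guarantees that the adversary cannot advance time past the next deadline without honest consent, and the honest strategy refuses to elapse as long as there is a claimable subcontract (\Cref{def:temphoneststrategy,def:timetimetime}). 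To handle (ii) and translate invariants to final runs (\Cref{def:final run}), we observe that once time exceeds all relevant timeouts, the liveness disjunct's second clause is falsified, forcing the first clause to hold — giving exactly the theorem statement that claimed subcontracts correspond to edges in $\partialEx$ and vice versa via $\IDrelation$. The end-to-end security statement for $\untree = \fununfold(\graphsymbol, A)$ then follows by verifying $\partialEx$ satisfies the three predicates $\predNoDupP$, $\predHonestRootP$, $\predEagerPullP$ of \Cref{def:outcomes}, so that $\partialEx \in \outcomeset{\untree}$, and applying \Cref{thm:security-graph}.
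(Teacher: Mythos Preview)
Your overall induction structure matches the paper, but your update rule for $\partialEx$ has a genuine gap. You propose leaving $\partialEx$ unchanged for every action except $\DecideCo$, and then adding the claimed edge on $\DecideCo$ whenever $B \in \funusers(\CTLCcontract)$. This cannot maintain $Inv_{\text{in-secrets}}$ nor direction~\eqref{eq:IDrelation1} of $\run \IDrelation \partialExFamily$. Recall that $\schecontracts{\run}$ is defined over $\comactions{\run} = \actions{\run} \cup \sactions{\run}$: it contains pairs $(\CTLCsubcontract,\funsecret_i(\CTLCsubcontract))$ for which a $\DecideCo$ is \emph{possible}, not only those that have been executed. Consider the step $\action = B:\revealSecretch\, s^{id}_{\walk}$ where $s^{id}_{\walk} = \funsecret(\ein,id)$ for some ingoing edge $\ein=(Y,B)_{\walk}$. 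Immediately after this step the claim of $h(\ein,id)$ becomes enabled, so $(h(\ein,id),h_{sec}(\ein,id)) \in \schecontracts{\run'} \subseteq \comcontracts{\run'}$, yet under your rule $\ein \notin \partialExprime$, breaking~\eqref{eq:IDrelation1}. Likewise $\funsecret(\ein,id)\in\environmentvecRevealedSecrets$ with $\ein\notin\partialExprime$ violates $Inv_{\text{in-secrets}}$ outright.

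The paper handles this by updating $\partialEx$ at \emph{two} points: on $\revealSecret$ it adds every ingoing edge $\e$ with $\funreceiver(\e)=B$ for which revealing $s^{id}_{\walk}\in h_{sec}(\e,id)$ makes $\DecideCo(\CTLCcontract,h(\e,id),h_{sec}(\e,id))$ executable; on $\DecideCo$ it adds only the \emph{outgoing} edge $\e=(B,X)_{\walk}$ that was just claimed. This split is what makes $Inv_{\text{in-secrets}}$ and $Inv_{\text{init-liveness}}$ work in tandem (the former puts $\ein$ into $\partialEx$ as soon as its secret is out; the latter then guarantees there is still time before the timeout, which is exactly what the honest strategy needs to push the claim through before an $\elapse$ is allowed). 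Your argument that ``the honest strategy only reveals after the outgoing $\DecideCo$'' is correct and is used to establish $Inv_{\text{in-schedule}}$, but it does not substitute for actually inserting $\ein$ into $\partialEx$ at reveal time. A secondary point: your final paragraph drifts into the proof of \Cref{th:protocolsecurity} (constructing an element of $\outcomeset{\untree}$ and invoking \Cref{thm:security-graph}); that is downstream of the present theorem and not needed here.
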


\begin{remark}
    Note that for the honest user $B$, only the claimed outgoing contracts are relevant, not the scheduled ones. Hence, only these are included in $\excontracts{R}$. For the ingoing ones, though, both already claimed and possibly claimed contracts are relevant, which is why they are both included in $\schecontracts{R}$. 
\end{remark}

\begin{proof}%
    Let $B, \treeobj, \Bstrategy, \Astrategy$ as stated in the Theorem and let $\run$ be a run with $\Bstrategy , \Astrategy \results \run$ and $\vert \run \vert = n$. 

    We prove by induction on $n$:
    \begin{align*}
        \forall &\fulltreeobj \in \treeobj \, \exists \partialEx \subseteq\untree \,\text{consistent}: \\ 
        &\run \IDrelation \partialExFamily \land Inv(\run, \partialEx, \fulltreeobj)
    \end{align*}

    \noindent \underline{\textbf{Case $n = 0$:}}
    
    \noindent
    By definition of $\run$ and $\vert \run \vert = 0$ we have $\funactions(R) = \emptyset$ and that $\environment = \funlastEnv(\run)$ is an initial environment, see \cref{def:initialconfig}. 
    We show that $a) \, \run \IDrelation \partialExFamily$ and $b) \, Inv(\run, \partialEx, \fulltreeobj)$. 
    \newline
    
    $a)$ Since $\funactions(R) = \emptyset$ by definition also $\schecontracts{R} = \emptyset$ and so it is sufficient to show that 
    $\excontracts{R} = \emptyset$ to show the statement (since trivially $\run \IDrelation \emptyset$ in this case). By the definition of $\excontracts{R}$ it is left to show 
    \begin{align*}
        \forall \action \in \sactions{R}:\, \action \neq \DecideCo(\CTLCcontract, \CTLCsubcontract, \funsecret_i(\CTLCsubcontract_{\iota})). 
    \end{align*}
    The set $\sactions{R}$ contains all $\action$ s.t. $\exists \environmentvec ' : \environmentvec \overset{\action}{\longrightarrow} \environmentvec '$. By the definition of the \CTLC \, semantics, the action $\DecideCo$ is impossible in an initial environment since $\environmentvecCTLCEnabled = \environmentvecCTLCDecided = \emptyset$.  
    \newline

    $b)$ To show that $Inv(\run, \partialEx, \fulltreeobj )$ holds, we go through the individual invariants. Let $\fulltreeobj \in \treeobj$ be given then:
    \begin{itemize}
        \item $Inv_{\text{in-secrets}}(\run, \emptyset, \fulltreeobj)$ trivially holds since \newline $\environmentvecRevealedSecrets = \emptyset$. 
        \item $Inv_{\text{secrets}}(\run, \emptyset, \fulltreeobj)$ holds since there is no $e \in \partialEx = \emptyset$ and therefore $\funonPathtoRoot(T, \e)$ is also empty.
        \item $Inv_{\text{in-schedule}}(\run, \emptyset, \fulltreeobj)$ holds because there are no revealed secrets yet, thus $\environmentvecRevealedSecrets = \emptyset$. Hence an $\ein \in \untree$ with $\funsecret(\ein) \in \environmentvecRevealedSecrets$ does not exist. 
        \item $Inv_{\text{levels}}(\run, \emptyset, \fulltreeobj)$ holds because no \CTLC s have been enabled yet. Thus, $\environmentvecCTLCEnabled = \emptyset$ and the conclusion of the invariant trivially holds. 
        \item $Inv_{\text{liveness}}(\run, \emptyset, \fulltreeobj)$ holds since $\actions{\run}= \emptyset$ and therefore no $\DecideCo$ action has happened yet. 
        \item $Inv_{\text{init-liveness}}(\run, \emptyset, \fulltreeobj)$ has no items to apply to since $\partialEx = \emptyset$, thus it holds.
        \item $Inv_{\text{deposits}}(\run, \emptyset, \fulltreeobj)$ is true because \newline $\environmentvecCTLCEnabled = \environmentvecCTLCDecided = \emptyset$.
        \item $Inv_{\text{setup}}(\run, \emptyset, \fulltreeobj)$ also holds because $$\environmentvecCTLCEnabled = \environmentvecCTLCAuthorized = \emptyset.$$
        \item $Inv_{\text{tree}}(\run, \partialEx, \fulltreeobj)$ is true because $$\environmentvecCTLCAuthorized = \environmentvecCTLCEnabled = \environmentvecCTLCDecided = \emptyset .$$
        \item $Inv_{\text{auth}}(\run, \partialEx, \fulltreeobj)$ holds for the same reason.
    \end{itemize}
    
    \noindent \underline{\textbf{Case $n > 0$:}} 
    
    \noindent With $\vert \run \vert > 0$ we know that it exists a run $\run'$ with $\run = \run' \overset{\action}{\longrightarrow} \Gamma$. By induction hypothesis (I.H.), we know there is also a $\partialExFamilyprime$ such that $\run' \IDrelation \partialExFamilyprime$ and $Inv(\run', \partialEx', \fulltreeobj)$ holds for all \linebreak $\fulltreeobj \in \treeobj$. Let $\fulltreeobj \in \treeobj$ be given. We show that there exists $\partialExFamily$ s.t. $\run \IDrelation \partialExFamily$ and \linebreak $Inv(\run, \partialEx, \fulltreeobj)$ holds. We proceed by case distinction on $\alpha$.

    \noindent \textbf{Case $\alpha = \advBatch$:}
    
    \noindent We show that 
    \begin{enumerate}[a\hspace{1pt}]
        \item \hspace{-5pt}) $\run \IDrelation \partialExFamilyprime$,
        \item \hspace{-5pt}) $Inv(\run, \partialEx', \fulltreeobj)$ hold.
    \end{enumerate}
    \noindent a) According to the $\advBatch$ rule only the batch set $\batchset$ changes between $\environmentvec'$ and $\environmentvec$, more specifically it changes between all $\environmentch$ as it is a global action. Therefore $\excontracts{\run} = \excontracts{\run'}$, $\schecontracts{\run} = \schecontracts{\run'}$ for all $id$ and thus 
    $$\run' \IDrelation \partialExFamilyprime \Rightarrow \run \IDrelation \partialExFamilyprime. $$
    \noindent b) All components in the environment remain unchanged except $\batchset$. Additionally all invariants within $Inv(\run, \partialEx', \fulltreeobj)$ only refer to actions in $\funactions(\run)$ that already existed in $\funactions(\run')$. Thus $Inv(\run, \partialEx', \fulltreeobj)$ is implied directly from \linebreak $Inv(\run', \partialEx', \fulltreeobj)$, i.e. the I.H..
    
    \noindent \textbf{Case $\alpha = A: \commitBatch$:}
    
    \noindent Similar to the previous case $\excontracts{\run} = \excontracts{\run'}$, $\schecontracts{\run} = \schecontracts{\run'}$ remain unchanged for all $id$ and so $\run \IDrelation \partialExFamilyprime$. Also analogously, $\commitBatch$ is a global action, and $\environmentvec$ remains unchanged except $\environmentCreatedSecrets$, which is not accessed by the invariants, and they refer only to actions in $\funactions(\run)$ that already existed in $\funactions(\run')$.
    
    \noindent \textbf{Case $\alpha = \advCTLC$:}
    
    \noindent The same reasoning applies here, except that $\advCTLC$ is a local action that only affects one $\environmentch \in \environmentvec$ where $\environmentchCTLCAdvertised$ has been changed. 
    Since none of the invariants accesses this component, they follow immediately from I.H..
    
    \noindent \textbf{Case $\alpha = A : \authCTLC \, \advCTLCcontract$:}
    
    \noindent Similar to the previous cases we argue that $\run \IDrelation \partialExFamilyprime$ holds. Out of the nine invariants from which $Inv(\run, \partialEx', \fulltreeobj)$ is constructed, only $Inv_{\text{tree}}(\run, \partialEx', \fulltreeobj)$ and $Inv_{\text{setup}}(\run, \partialEx', \fulltreeobj)$ access this component for a $\environmentch \in \environmentvec$ and $id$ with $\advCTLCcontract \in \batch$. 
    
    To show $Inv_{\text{tree}}(\run, \partialEx', \fulltreeobj)$, we only need to consider the case $A = B$ because the invariant only considers authorizations by the honest user $B$.
    In this case, for $(B, \advCTLCcontract) \in \environmentchCTLCAuthorized \backslash \environmentchprimeCTLCAuthorized$ we have to show
    \[
        \forall \CTLCsubcontract \in \advCTLCcontract \, \exists \e \in \untree : \CTLCsubcontract = h(\e, id) .
    \]
    Since this authorization is given by honest user $B$, it cannot have been scheduled by the adversary and is hence determined by the honest user strategy (of $B$). 
    From \eqref{eq:enable-check} we see that $B$ will only execute this action if $\CTLCcontract$ is part of a batch determined by
    \[
        \treetoCTLCBadge(\fulltreeobj, \mathcal{S}). 
    \]
    Hence, \eqref{eq:treetoCTLCBadge} gives us the desired property. 

    The second affected invariant $Inv_{\text{setup}}(\run, \partialEx', \fulltreeobj)$ also follows from the honest user strategy, see \eqref{eq:honingoing}.
    This is because, again, we only need to consider the case $A = B$ and according to the honest user strategy of $B$, it only schedules an authorization action for a CTLC representing an outgoing edge if contracts corresponding to all ingoing edges have been enabled (this is specified in \eqref{eq:honingoing}). Accordingly, the left disjunct of the invariant's conclusion must hold since $\environmentvecCTLCEnabled = \environmentvecprimeCTLCEnabled$.

    For $Inv_{\text{auth}}(\run, \partialEx', \fulltreeobj)$ we argue that since the authorization $(B, \advCTLCcontract) \in \environmentvecCTLCAuthorized$ is given by $B$, the action to do so is determined by the honest user strategy. Based on \eqref{eq:enable-check}, we know that the action 
    $
         B: \authCTLC \, \advCTLCcontract
    $
    will only be scheduled before $t_0$. For this $newC(\e, R)$ (\ref{eq:CTLC-enable-check}) ensures that all conditions for also enabling this contract are met.
    
    \noindent \textbf{Case $\action = \enableCTLC ~ \CTLCcontract$: } 
    
    \noindent
    We first show that $\run \IDrelation \partialExFamilyprime$. 
    This trivially follows from the inductive hypothesis if we can show that for all $id$
    \begin{align*}    
    \excontracts{\run'} &= \excontracts{\run} \text{ and } \\
    \schecontracts{\run'} &= \schecontracts{\run} . 
    \end{align*}
    For $id$ with $\CTLCcontract \notin \batch$, this follows immediately from I.H. so we consider the specific $id$ with $\CTLCcontract \in \batch$.
    The first claim follows directly from the definition of $\excontracts{\run}$.
    To show 
    $$\schecontracts{\run'} = \schecontracts{\run}$$
    we assume towards contradiction that there exists 
    \begin{equation*}
    \scalebox{0.93}{$
    (\CTLCsubcontract, \funsecret_i(\CTLCsubcontract)) \in \schecontracts{R} \backslash \schecontracts{R'}. (*)
    $}
    \end{equation*}
    Then there is $\environmentvec^*$ such that  $\run \overset{ \action'}{\longrightarrow} \environmentvec^*$ for 
    $$\action' = \DecideCo (\CTLCcontract, \CTLCsubcontract, \funsecret_i(\CTLCsubcontract_{\iota}))$$
    for some $\CTLCcontract$ with $\funsender(\CTLCsubcontract) = B$. 
    If $\action' = \DecideCo (\CTLCcontract, \CTLCsubcontract, \funsecret_i(\CTLCsubcontract_{\iota}))$ then by~\Cref{ctlc:DecideCo}
    we know 
    that $\CTLCsubcontract \in \CTLCcontract \in \environmentchCTLCEnabled$ 
    and there is a $i$ such that $\funsecret_i(\CTLCsubcontract_{\iota}) \subseteq \environmentchprimeRevealedSecrets$. 
    From 
    $$Inv_{\text{tree}}(\run', \partialEx', \fulltreeobj) ,$$ we know that $\exists \e \in \untree : \CTLCsubcontract = h(\e, id)$.
    From the way that $h$ is defined (\Cref{def:hMapping}) we can conclude that $\funsender(\e) = B$.
    And that there is some $\e'$ such that $\funsender(\e) = \funsender(\e')$, $\funreceiver(\e) = \funreceiver(\e')$, $\fundepth(\e) = \fundepth(\e')$, 
    $h(\e', id) = h(e, id)$ and 
    $$\funsecret_i(\CTLCsubcontract) = h_{sec}(\e', id) .$$ 
    Consequently, $\funsecret(\e', id) \in \environmentchprimeRevealedSecrets$.
    From the inductive hypothesis, we know that 
    $$Inv_{\text{in-secrets}}(\run', \partialEx', \fulltreeobj)$$
     holds
    and consequently also $\e' \in \partialEx'$.
    Since the inductive hypothesis also gives that $\run' \IDrelation \partialExFamilyprime$, we get 
    \begin{equation*}
    \scalebox{0.93}{$ 
    (\CTLCsubcontract = h(\e, id) = h(\e', id), \funsecret_i(\CTLCsubcontract))  \in \schecontracts{\run'} ,
    $}
    \end{equation*}
    contradicting (*). 

    To show the invariants, we first analyze the environment changes induced by the $\enableCTLC$ action.
    
    From the inference rule of the $\enableCTLC$ action, we know that there is $\environmentch' \in \environmentvec'$ and $\environmentch \in \environmentvec$ with
    \begin{align*}
        \environmentchCTLCEnabled \supsetneq \environmentchprimeCTLCEnabled, \,
        \environmentchAvailableFunds \subsetneq \environmentchprimeAvailableFunds, \, \environmentchReservedFunds \supsetneq \environmentchprimeReservedFunds.
    \end{align*}
    Affected by these changes are the invariants 
    \begin{itemize}
        \item $Inv_{\text{levels}}(\run, \partialEx', \fulltreeobj)$,
        \item $Inv_{\text{liveness}}(\run, \partialEx', \fulltreeobj)$,
        \item $Inv_{\text{deposits}}(\run, \partialEx', \fulltreeobj)$, 
        \item $Inv_{\text{setup}}(\run, \partialEx', \fulltreeobj)$,
        \item $Inv_{\text{tree}}(\run, \partialEx', \fulltreeobj)$, and
        \item $Inv_{\text{auth}}(\run, \partialEx, \fulltreeobj)$.
    \end{itemize}
    By the semantics of $\enableCTLC$ there is a new $\CTLCcontract \in \environmentchCTLCEnabled \backslash \environmentchprimeCTLCEnabled$ for some channel $\channel$. 
    
    For $Inv_{\text{levels}}(\run, \partialEx', \fulltreeobj)$ we show towards contradiction that if $B = \funsender(\e)$ and $\funfund(\CTLCcontract) \in \environmentchReservedFunds$ for this $\CTLCcontract \in \environmentchCTLCEnabled$ with $h(\e, id) \in \CTLCcontract$ then
    $
        \environmentvec.t < t_0 + \fundepth(\e) \Delta
    $
    holds. 
    
    Note that from $h(\e, id) \in \CTLCcontract$ and $B = \funsender(\e)$, we also know that $\funsender(\CTLCcontract) = B$.
    The preconditions for executing $\enableCTLC$ in this case require that for some $\advCTLCcontract$ it holds that $(B, \advCTLCcontract) \in \environmentchCTLCAuthorized$ and for $s = \size{\advCTLCcontract}$ it holds that $\CTLCcontract = \{ \CTLCsubcontract_i ~|~ \CTLCsubcontract_i \in \advCTLCcontract ~\land~ i = s  \}$. 
    Consequently, we know that $h(\e, id) \hspace{-1pt} = \hspace{-1pt} \CTLCsubcontract_i$ and from $Inv_{\text{auth}}(\run', \partialEx', \fulltreeobj)$ we can conclude that $\enableCTLC ~ \CTLCcontract \in \Sigma_B^{\treeobj}(\run')$.
    By definition of $\Sigma_B$, this only is the case if $\environmentvecprimetime < t_0$. 
    So, the claim follows since $\environmentvecprimetime = \environmentvectime$.

    $Inv_{\text{liveness}}(\run, \partialEx', \fulltreeobj)$ follows from I.H. since $\partialEx$ has not been changed between $\run'$ and $\run$ and $\environmentchCTLCEnabled \supsetneq \environmentchprimeCTLCEnabled$ holds. 
    
    For the $Inv_{\text{deposits}}(\run, \partialEx', \fulltreeobj)$ invariant there is now exactly one new $\CTLCcontract \in \environmentchCTLCEnabled \cup \environmentchCTLCDecided$ which also includes exactly one $\CTLCsubcontract$. 
    By the preconditions of the $\enableCTLC$ inference rule \linebreak $\funfund(\CTLCcontract) \in \environmentchprimeReservedFunds = \environmentchReservedFunds$. 
    
    For $Inv_{\text{setup}}(\run, \partialEx', \fulltreeobj)$ there is now exactly one more $\CTLCcontract \in \environmentvecCTLCEnabled$ to consider.
    From the inference rule for $\enableCTLC$ (\Cref{ctlc:enableCTLC}), we can conclude that $(B, \advCTLCcontract) \in \environment'_{\channeledge{\eout}}.C_{\textit{aut}}$ for a $\advCTLCcontract \supseteq \CTLCcontract$. 
    Consequently, if $h(\eout, id) \in \CTLCcontract$ with $\eout \intree \untree$ as the invariant's precondition desires it, then also $h(\eout, id) \in \advCTLCcontract$. 
    We can hence use $Inv_{\text{setup}}(\run', \partialEx', \fulltreeobj)$ (given by the inductive hypothesis) to immediately show the invariant's conclusion (using the fact that $\environment'_{\channel}.C_{\textit{en}} \subseteq \environment'_{\channel}.C_{\textit{en}}$ for all channels $\channel$).

    For $Inv_{\text{tree}}(\run, \partialEx', \fulltreeobj)$ we look at $\environmentvecprime = \funlastEnv(\run')$ and $\environmentvec = \funlastEnv(\run)$. From the inference rule of \textit{\enableCTLC} (\ref{ctlc:enableCTLC}) we know $(B,\advCTLCcontract) \in \environmentvecprimeCTLCAuthorized$ in case $\action = \enableCTLC \, \CTLCcontract$. Therefore this invariant follows from $Inv_{\text{tree}}(\run', \partialEx', \fulltreeobj)$. 

    For $Inv_{\text{auth}}(\run, \partialEx', \fulltreeobj)$ we look at the case of $\CTLCcontract$ where 
    \[
        \enableCTLC \, \CTLCcontract \in \Sigma_B^{\treeobj}(\run') \, \land \, \enableCTLC \, \CTLCcontract \notin \Sigma_B^{\treeobj}(\run). 
    \]
    Here the $\enableCTLC$ rule (\ref{ctlc:enableCTLC}) implies $(B, \advCTLCcontract) \notin \environmentvecCTLCAuthorized$, so this invariant still holds. For the other cases, we still have $\enableCTLC$ in the honest user strategy since the set of enabled \CTLC{}s has only gotten larger, see 
    $newC(\e, R)$ (\ref{eq:CTLC-enable-check}).
    
    \noindent \textbf{Case $\alpha = A: \enableSubC ~ \CTLCsubcontract$:}
    
    \noindent Here $\run \IDrelation \partialExFamilyprime$ holds with analogous reasoning as in the previous case.
     Again, this is because enabling a new (sub)contract representing an ingoing edge $\e$ of \honestuser{} does not allow for claiming such a contract immediately. Instead, this option is only available after $\honestuser$ revealed the corresponding edge secret $\funsecret(\e, id)$, which honest user $\honestuser$ will only do after enabling (this connection is made formal by $Inv_{\text{in-secrets}}$).

    Also, similar to the previous case, all components of the environment remain untouched except $\environmentchprimeCTLCEnabled' \neq \environmentchCTLCEnabled$, which implies that only 
    \begin{itemize}
        \item $Inv_{\text{levels}}(\run, \partialEx', \fulltreeobj)$,
        \item $Inv_{\text{liveness}}(\run, \partialEx', \fulltreeobj)$,
        \item $Inv_{\text{deposits}}(\run, \partialEx', \fulltreeobj)$, 
        \item $Inv_{\text{setup}}(\run, \partialEx', \fulltreeobj)$,
        \item $Inv_{\text{tree}}(\run, \partialEx', \fulltreeobj)$, and
        \item $Inv_{\text{auth}}(\run, \partialEx', \fulltreeobj)$.
    \end{itemize}
    can be affected for the specific $id$ with $\CTLCsubcontract \in \advCTLCcontract \in \batch$. 

    For $Inv_{\text{levels}}$, we show towards contradiction that if $B = \funsender(\e)$ and $\funfund(\CTLCcontract) \in \environmentchReservedFunds$ for this $\CTLCcontract \in \environmentchCTLCEnabled$ with $h(\e, id) \in \CTLCcontract$ then
    $
        \environmentvec.t < t_0 + \fundepth(\e) \Delta
    $
    holds. 

    For all $h(\e, id) \in \CTLCcontract \in \environmentchprimeCTLCEnabled$, this immediately follows from the inductive hypothesis. 
    So we only need to consider the case where $h(\e, id) = \CTLCsubcontract$.
    Since the preconditions for executing $\enableSubC$ in this case require that 
    it holds that $\CTLCsubcontract \in \CTLCcontract$ and $\funsender(\CTLCcontract) = A$
    we know that $A = \funsender(\CTLCcontract) = \funsender(\e) = B$.  
    Consequently, we know that $B: \enableSubC ~ \CTLCsubcontract \in \Sigma_B^{\treeobj}(\run')$ (since only the honest strategy of $\honestuser$ can schedule such actions). 
    By definition of $\Sigma_B$, this only is the case if $\environmentvecprimetime < t_0$. 
    So, the claim follows since $\environmentvecprimetime = \environmentvectime$. 
    
    $Inv_{\text{liveness}}(\run, \partialEx', \fulltreeobj)$ follows from the fact that there are more subcontracts in $\environmentchCTLCEnabled $ than in $\environmentchprimeCTLCEnabled$ and that $\partialEx'$ remained unchanged.

    For $Inv_{\text{deposits}}(\run, \partialEx', \fulltreeobj)$ we look at its preconditions: The $\enableSubC$ action adds a sub-contract to an already enabled \CTLC{} and thus
    \begin{align*}
        & \exists \CTLCcontract \in \environmentchprimeCTLCEnabled \cup \environmentchprimeCTLCDecided : \exists \CTLCsubcontract \in \CTLCcontract , \e \in \untree : \CTLCsubcontract = h(\e, id) \\
        & \Leftrightarrow \\
        & \exists \CTLCcontract \in \environmentchCTLCEnabled \cup \environmentchCTLCDecided : \exists \CTLCsubcontract \in \CTLCcontract , \e \in \untree : \CTLCsubcontract = h(\e, id). \\
    \end{align*}
    Additionally, $\environmentchprimeReservedFunds = \environmentchReservedFunds$ and so $$Inv_{\text{deposits}}(\run, \partialEx', \fulltreeobj)$$ is implied by I.H..

    For $Inv_{\text{setup}}(\run, \partialEx', \fulltreeobj)$ there is now exactly one more $\CTLCsubcontract \in \CTLCcontract \in \environmentvecCTLCEnabled$ to consider. 
    Consequently, we need to consider that $\CTLCsubcontract = h(\eout, id) \in \CTLCcontract$ with $\eout = (B, X)_{\walk} \intree \untree$ and show that
    the invariant's conclusion holds for this $\CTLCsubcontract$.
    We show that indeed for all $\ein = (Y,B)_{\walk'} \in \untree$ with $\walk' = \concatvec{\unitvec{(Y,B)_{\walk'}}}{\walk}$ 
    there is  $(\CTLCcontract)' \in \environment_{\channeledge{\ein}}.C_{en}$ such that  $h(\ein, id) \in (\CTLCcontract)'$ (left disjunct of the conclusion).
    Since the preconditions for executing $\enableSubC$ in this case require that 
    it holds that $\CTLCsubcontract \in \CTLCcontract$ and $\funsender(\CTLCcontract) = A$
    we know that $A = \funsender(\CTLCcontract) = \funsender(\eout) = B$. 
    Consequently, we know that $B: \enableSubC ~ \CTLCsubcontract \in \Sigma_B^{\treeobj}(\run')$ (since only the honest strategy of $\honestuser$ can schedule such actions). 
    By definition of $\Sigma_B$, this is only the case if $\ingoing(\eout, \run') = 1$. 
    This implies that there is some $(\CTLCcontract)' \in \environment'_{\channeledge{\ein}}.C_{en}$ such that $h(\ein, id) \in (\CTLCcontract)'$.
    Since $\environmentchCTLCEnabled$ contains strictly more subcontracts than $\environmentchprimeCTLCEnabled$ it is also ensured that there is some 
    $(\CTLCcontract)^* \in \environment_{\channeledge{\ein}}.C_{en}$ such that $h(\ein, id) \in (\CTLCcontract)^*$, concluding the case.

    $Inv_{\text{tree}}(\run, \partialEx', \fulltreeobj)$ holds since if for the newly enabled subcontract we have 
    $$\CTLCcontract \in \environmentvecCTLCEnabled \land B \in \funusers(\CTLCcontract)$$
    the $\authCTLC$ rule (\ref{ctlc:authCTLC}) implies $(B, \advCTLCcontract) \in \environmentvecprimeCTLCAuthorized$. 
    For this \linebreak $Inv_{\text{tree}}(\run', \partialEx', \fulltreeobj)$ applies for all $\CTLCsubcontract \in \advCTLCcontract$. 
    By \linebreak \cref{lemma:enabledsubsetofadvertised} we have that $\CTLCcontract \subseteq \advCTLCcontract$ always holds and so \linebreak $Inv_{\text{tree}}(\run, \partialEx', \fulltreeobj)$ is implied. 
    
    $Inv_{\text{auth}}(\run, \partialEx', \fulltreeobj)$ holds with analogous reasoning as in the previous case $\action = \enableCTLC ~ \CTLCcontract$. 

    \noindent \textbf{Case $\alpha = A: \, \revealSecretch \, s_{\walk}^{id}$:}
    
    \noindent For the specific $id$ of $s_{\walk}^{id}$ we define 
    \begin{equation} 
        \partialEx := \partialEx' \cup \{ \e \in \hatpartialEx \mid \exists \environmentvec^* : \run \overset{\action'}{\longrightarrow} \environmentvec^* \} \label{eq:somehelperequationhere}
    \end{equation}
    with 
    \begin{align*}
    &\alpha' = \DecideCo (\CTLCcontract , h(\e, id), h_{sec}(\e, id)), \\
    &s_{\walk}^{id} \in h_{sec}(\e, id), \funreceiver(\e) = \honestuser
    \end{align*}
    (for the definition of $h_{sec}$ see (\ref{def:hsecMapping}))
    and 
    \begin{align*}
        \partialExFamily := ( \partialExFamilyprime \backslash \{ \partialEx' \} ) \cup \{ \partialEx \}
    \end{align*}
    (only replacing this specific $\partialEx'$ for the $id$ of $s_{\walk}^{id}$)
    and show that
    \begin{enumerate}[a\hspace{1pt}]
        \item \hspace{-5pt}) $\run \IDrelation \partialExFamily$,
        \item \hspace{-5pt}) $\partialEx$ is consistent for all $id$ and
        \item \hspace{-5pt}) $Inv(\run, \partialEx, \fulltreeobj)$ holds.
    \end{enumerate}

    \noindent a) To show $\run \IDrelation \partialExFamily$ we start by stating that 
    \begin{align}
        \schecontracts{\run'} &\subseteq \schecontracts{\run}, \label{eq:helperstuff} \\
        \excontracts{\run'} &= \excontracts{\run}
    \end{align}
    holds. Revealing $s_{\walk}^{id}$ can allow for a new $\action'$ in $\sactions{\run}$. Having less possible \textit{claim} actions for $\partialEx$ is not possible since revealing a secret cannot remove the possibility of claiming a contract. This action moves a secret from a $\environmentchCreatedSecrets$ to $\environmentchRevealedSecrets$. 
    We now show the two parts (\ref{eq:IDrelation1}) and (\ref{eq:IDrelation2}) of $\run \IDrelation \partialExFamily$ individually.
    Nothing changes for all $id$ with $\partialEx = \partialEx'$, so we look at the one specific $id$ with $\partialEx \neq \partialEx'$. 

    (\ref{eq:IDrelation1}) Let $(\CTLCsubcontract, \funsecret_i(\CTLCsubcontract)) \in \schecontracts{\run}$ then by definition of $\schecontracts{\run}$ either $(\CTLCsubcontract, \funsecret_i(\CTLCsubcontract)) \in \schecontracts{\run'}$ or the most recent $\revealSecret$ action has made such an $\action'$ available with $\CTLCsubcontract = h(\e, id)$ and $\funsecret_i(\CTLCsubcontract) = h_{sec}(\e, id)$. 
    Further, since $\funreceiver(\e) = \honestuser$ and $\CTLCsubcontract = h(\e, id)$, we know that $\funreceiver(\CTLCsubcontract) = B$.
    Therefore $Inv_{\text{tree}}(\run', \partialEx, \fulltreeobj)$ applies and we get that there is some $\e$ such that $h(\e, id) = \CTLCsubcontract$.
    So $\CTLCsubcontract$ with $\funsecret_i(\CTLCsubcontract) = h_{sec}(\e, id)$ is the only possible element in 
    $$\schecontracts{\run} \backslash \schecontracts{\run'}.$$
    
    If $(\CTLCsubcontract, \funsecret_i(\CTLCsubcontract)) \in \schecontracts{\run'}$ then by I.H. also 
    $$\exists e \in \partialEx': h(\e, id) = \CTLCsubcontract \land h_{sec}(\e, id) = \funsecret_i(\CTLCsubcontract)$$ 
    and thus $\e \in \partialEx$. 
    If $\CTLCsubcontract \in \schecontracts{\run} \backslash \schecontracts{\run'}$ the above reasoning applies and gives us 
    \[
    \exists \e \in \partialEx : \, h(\e, id) = \CTLCsubcontract \land h_{sec}(\e, id) = \funsecret_i(\CTLCsubcontract).
    \]
    (\ref{eq:IDrelation2}) Let $\e \in \partialEx$. Then either $\e \in \partialEx'$ or $\e \not \in \partialEx'$ and 
    \begin{equation} \label{eq:proof1}
        \e \in \{ \e \in \hatpartialEx \mid \exists \environmentvec^* : \run \overset{\action'}{\longrightarrow} \environmentvec^* \} .
    \end{equation}
    If $\e \in \partialEx'$ then by I.H. also 
    \begin{equation*}
    \scalebox{0.88}{$
    \begin{aligned}
        (h(\e, id), h_{sec}(\e, id)) \in \schecontracts{\run'} ~\cup~\excontracts{\run'} \\
        \subseteq \schecontracts{\run} ~\cup~ \excontracts{\run}.
    \end{aligned}
    $}
    \end{equation*}
    In the case of \eqref{eq:proof1} we have 
    $$(h(\e, id), h_{sec}(\e, id)) \in \schecontracts{\run}$$ 
    by definition of $\schecontracts{\run}$ and \cref{eq:somehelperequationhere} given that $\funreceiver(\e) = \honestuser$.
    \newline
    
    \noindent b) Again, nothing changes for all $id$ with $\partialEx = \partialEx'$, so we look at the one specific $id$ with $\partialEx \neq \partialEx'$. To show that $\partialEx$ is consistent we assume towards contradiction that $\partialEx$ is not consistent meaning that there exists some $\e \in \partialEx$  and a predecessor $\e^{*} \in \funonPathtoRoot(\untree, \e) \cap \hatpartialEx$ such that $\e^{*} \notin \partialEx$ or
    \[
        \exists \e^{**} \in \partialEx : \spec{\e^{**}} = \spec{\e}.
    \]
    For the first case, we distinguish whether $e^*$ is an ingoing or an outgoing edge. 

    Let $e^* = (Y, B)_{\walk}$ be an ingoing edge. By assumption $e^* \notin \partialEx$ and by $Inv_{\text{in-secrets}}$ we know that $\funsecret(e^*) \nsubseteq \environmentvecRevealedSecrets$ holds. This contradicts that \mbox{$e \in \partialEx \land e \notin \partialEx'$} since for that it would need to hold that it exists a $\environmentvec'$ and $\CTLCsubcontract= h(\e, id)$ s.t. 
    $$\environmentvec \overset{\DecideCo(\CTLCcontract, h(\e, id), h_{sec}(\e, id))}{\longrightarrow} \environmentvec'$$
    is possible, implying $\funsecret(\e) \subseteq \environment_{\channeledge{\e}}.S_{rev}$. However, if $\e^*$ is a predecessor of $\e$ we have $ h_{sec}(\e^*, id) \subseteq  h_{sec}(\e, id)$, see (\ref{def:hsecMapping}). 

    Let $\e^* = (B, X)_{\walk}$ be an outgoing edge. Then there must be a corresponding ingoing edge $\e' = (Y,B)_{\walk'}$ with $\walk' = \concatvec{[(Y,B)]}{\walk}$. Then either $\e' \notin \partialEx$ and the previous reasoning applies or $\e' \in \partialEx$. In that case, we know from $Inv_{\text{secrets}}$ that $\funsecret(\e', id) \in \environment_{\channeledge{\e'}}.S_{\textit{rev}}$ and so from $Inv_{\text{in-schedule}}$ that \linebreak $\DecideCo( \CTLCcontract, h(e^*, id), h_{sec}(e^*, id)) \in \actions{\run}$ for $h(e^*, id) \in \CTLCcontract$. However, this implies 
    $$(h(e^*, id), h_{sec}(e^*, id)) \in \excontracts{R}$$
    and because of $\run \IDrelation \partialExFamily$ from point a), also $e^* \in \partialEx$. 
    This is a contradiction. 

    For the second case, we first show that for $\e \in \partialEx \backslash \partialEx'$, we have that $s_{\walk}^{id} = \funsecret(\e, id)$.
    If $s_{\walk}^{id} \neq \funsecret(\e, id)$ then $\funsecret(\e, id) \in \environmentchprimeRevealedSecrets$ 
    because for \linebreak $\run \overset{\action'}{\longrightarrow} \environmentvec^*$, it needs to hold that 
    for all $s \in h_{sec}(\e, id)$, $s \in \environmentchRevealedSecrets$, and so in particular, $\funsecret(\e, id) \in \environmentchRevealedSecrets$ (by \Cref{ctlc:DecideCo}). 
    However, since $\environmentchRevealedSecrets = \environmentchprimeRevealedSecrets \cup \{ s_{\walk}^{id} \} $ (by \Cref{ctlc:revealSecret}), if $s_{\walk}^{id} \neq \funsecret(\e, id)$ then it must hold that \linebreak 
    $\funsecret(\e, id) \in \environmentchprimeRevealedSecrets$. 
    In this case, however, by \linebreak $Inv_{\textit{in-secrets}}(\run', \partialEx', (id, \untree, t_0, \specalone))$, we also know that $\e \in \partialEx'$.
    By construction, we know that 
    $$\funowner(\funsecret(\e, id)) = \funreceiver(\e) = \honestuser$$
    and so by \Cref{ctlc:revealSecret}, we can conclude that $A = \honestuser$.
    Consequently, we know that $B: \, \revealSecretch \, s_{\walk}^{id} \in \Sigma_B^{\treeobj}(\run')$ (since only the honest strategy of $\honestuser$ can schedule such actions). 
    By definition of $\Sigma_B$, this only is the case if $\funnodupl (\e, \run')$ holds.
    If now there would be an edge $\e^{**} \in \partialEx : \spec{\e^{**}} = \spec{\e}$ 
    then $\funsender(e) = \funsender(e^{**})$ and $\funreceiver(\e) = \funreceiver(e^{**})$ which would immediately contradict $\funnodupl (\e, \run)$.
    
    \noindent c) To show that $Inv(\run, \partialEx, \fulltreeobj )$ holds, we go through the invariants individually. In the following $\environmentvec$ will always notate $\environmentvec = \funlastEnv(\run)$.
    Again, nothing changes for all $id$ with $\partialEx = \partialEx'$, so we look at the one specific $id$ with $\partialEx \neq \partialEx'$.

    \noindent \underline{$Inv_{\text{in-secrets}}(\run, \partialEx, \fulltreeobj)$}
    Let $\ein = (Z,B)_{\walk} \in \untree$ for some $Z$ and \mbox{$\funsecret(\ein, id) \in \environmentvecRevealedSecrets$}. 
    By the $\revealSecret$ rule we know that $\environmentchRevealedSecrets = \environmentchprimeRevealedSecrets \cup \{ s^{id}_{\walk} \}$. If $\channel \neq \channeledge{\ein}$ this invariant remains unaffected so let $\channel = \channeledge{\ein}$.  
    Either $\funsecret(\ein, id) \in \environmentchprimeRevealedSecrets$ or $s^{id}_{\walk} = \funsecret(\ein, id)$.
    In the first case, we know by the I.H. that $\ein \in \partialEx'$ and thus $\ein \in \partialEx$.
    
    In the second case, we know $A=B$ since $B = \funreceiver(\ein)$ and \eqref{def:edgesecret} from the honest user strategy. 
    Consequentially $\revealSecretch \, s^{id}_{\walk} \in \Bstrategy(\run')$. By the definition of the honest user strategy, this is only the case if $\chContract(\ein, \run') = 2$, see \linebreak \eqref{eq:CTLC-execution}. 
    We show that this implies all the preconditions for executing $\run \overset{\action'}{\longrightarrow} \environmentvec^*$ according to~\Cref{ctlc:DecideCo}. 
    $\chContract(\ein, \run') = 2$ ensures that the subcontract is enabled (there is some $\CTLCcontract \in \environmentchCTLCEnabled$ such that $h(\e, id)\in \CTLCcontract$) and that the secrets of all edges on the path to the root (but $h(\e, id)$) are available in $\environmentchprimeRevealedSecrets$, and so all relevant secrets are available $\environmentchRevealedSecrets$.
    From this, we can also conclude by \cref{lemma:enabledsubsetofadvertised} that a corresponding contract has been advertised, and (using $Inv_\textit{deposits}$) that the corresponding fund is reserved.
    We are hence left to show that $h(\e, id)$ is the top-level contract in $\CTLCcontract$.
    $\chContract(\ein, \run') = 2$ also gives us that there is no $\CTLCsubcontract{}' \in \CTLCcontract$ with a timeout smaller than the one of $h(\e, id)$. 
    By construction, this means that $h(\e, id)$ is the top-level contract in $\CTLCcontract$. 
    Correspondingly $\run \overset{\action'}{\longrightarrow} \environmentvec^*$ holds by \Cref{ctlc:DecideCo}.

    \noindent \underline{$Inv_{\text{secrets}}(\run, \partialEx, \fulltreeobj)$}
    Let $\e, \e'$ such that \linebreak $\e \in \partialEx$ and $\e' \in \funonPathtoRoot(\untree, \e)$. We need to show that 
    $$\funsecret(\e', id) \in \environment_{\channeledge{\e}}.S_{rev}.$$
    
    If $\e \in \partialEx'$ this holds by I.H., because once a secret is in \linebreak $\environment_{\channeledge{\e}}.S_{rev}$, it cannot be removed from there. 

    If $e \in \partialEx \backslash \partialEx'$ then we know by (\ref{eq:somehelperequationhere}) that 
    $\exists \environmentvec^* : \run \overset{\action'}{\longrightarrow} \environmentvec^*$
    for $\alpha' = \DecideCo (\CTLCcontract , h(\e, id), h_{sec}(\e, id)), s_{\walk}^{id} \in h_{sec}(\e, id)$. 
    Consequently, by the inference rule for claim~\Cref{ctlc:DecideCo}, we know that
    $$h_{sec}(\e, id) \subseteq \environment_{\channeledge{\e}}.S_{rev} .$$
     By definition of $h_{sec}$(\ref{def:hsecMapping}) we have
    \[
        \forall e' \in \funonPathtoRoot(\untree, \e): \funsecret(e', id) \in h_{sec}(\e, id)
    \]
    and so $\funsecret(e', id) \in \environment_{\channeledge{\e}}.S_{rev}$. 
    
    \noindent \underline{$Inv_{\text{in-schedule}}(\run, \partialEx, \fulltreeobj)$}
    Let $Z, \ein$ s.t. $\ein = (Z,B)_{\walk} \in \untree, \fundepth(\ein) > 1,  \funsecret(\ein) \in \environmentvecRevealedSecrets$ and let $\eout, X$ s.t. $\eout \in \untree, \eout = (B,X)_{\walk'}$, $\walk = \concatvec{\unitvec{\ein}}{\walk'}$. We need to show that 
    \begin{equation} \label{eq:somehleperthinkshereandthere}
        \DecideCo( \CTLCcontract,  h( \eout, id ), h_{sec}( \eout, id ) ) \in \funactions(\run).
    \end{equation}
    If $\funsecret(\ein, id) \in \environmentvecprimeRevealedSecrets$, then the claim immediately follows from I.H. since $\funactions(\run') \subseteq\funactions(\run)$.
    If $\funsecret(\ein, id) \notin \environmentvecprimeRevealedSecrets$ then $\funsecret(\ein, id) = s_{\walk}^{id}$ with $\funowner(s_{\walk}^{id}) = B$. 
    Therefore, this action is determined by the honest user strategy and hence \eqref{eq:CTLC-execution} and \eqref{eq:CTLC-chContract} imply \eqref{eq:somehleperthinkshereandthere}.

    \noindent \underline{$Inv_{\text{levels}}(\run, \partialEx, \fulltreeobj)$}
    Let $\e \in \untree, B = \funsender(\e),  \linebreak \environmentvectime > t_0 + \fundepth(\e) \Delta.$
    Since $\environmentvectime = \environmentvecprimetime$, $\environmentvecCTLCEnabled = \environmentvecprimeCTLCEnabled$ and $\environmentvecReservedFunds = \environmentvecprimeReservedFunds$ we get  
    $Inv_{\text{levels}}(\run, \partialEx, \fulltreeobj)$
    directly from \linebreak $Inv_{\text{levels}}(\run', \partialEx', \fulltreeobj)$. 
    
    \noindent \underline{$Inv_{\text{liveness}}(\run, \partialEx, \fulltreeobj)$}
    Let $X, \eout = (B,X)_{\walk} \in \untree$ and $j := \fundepth(\eout)$, $\DecideCo(\CTLCcontract, h(\eout, id), h_{sec}(\eout, id) \in \actions{\run}$.
    We need to show that for every $\ein (Z,B)_{\walk'} \in \untree$ with \newline $\walk' = \concatvec{\unitvec{(Y,B)_{\walk}}}{\walk}$:
    \begin{align}
        &\exists j' \leq j+1, \walk'' : (Y,B)_{\walk''} \in \partialEx \land \fundepth((Y,B)_{\walk''}) = j'  \label{eq:helpinliveness1} \\
        & \text{or} \nonumber \\
        &\exists \CTLCcontract \in \environment_{\channeledge{\ein}}.C_{en} : \label{eq:helpinliveness2}\\
        & \qquad h(\ein, id) \in \CTLCcontract \land \environmentvectime < t_0 + (j+1)\Delta \nonumber
    \end{align}
    By I.H. we know that \eqref{eq:helpinliveness1} or \eqref{eq:helpinliveness2} holds for $\run'$ with $\environmentvecprime$ and $\partialEx'$. In case $\eqref{eq:helpinliveness1}$ holds for $\run'$ it also holds for $\run$ and $\partialEx$ since
    \begin{align*}
        &\DecideCo{(\CTLCcontract, h(\eout, id), \funsecret_i(\CTLCsubcontract_{\iota}))} \in \actions{\run} \\
        &\Rightarrow \DecideCo{(\CTLCcontract, h(\eout, id), \funsecret_i(\CTLCsubcontract_{\iota}))} \in \actions{\run'}
    \end{align*}
    and $\partialEx' \subseteq \partialEx$.
    In case \eqref{eq:helpinliveness2} holds for $\run'$ and $\partialEx'$ it also holds for $\run$ and $\partialEx$ because $\environmentvecCTLCEnabled = \environmentvecprimeCTLCEnabled$, and $\environmentvectime = \environmentvecprimetime$. 

    \noindent \underline{$Inv_{\text{init-liveness}}(\run, \partialEx, \fulltreeobj)$} Assume $\e \in \partialEx \backslash \partialEx'$, as for all others the invariant is implied directly by I.H.. Since $\action'$ from (\ref{eq:somehelperequationhere}) is not yet in $\actions{\run}$ 
    \[
        \nexists \CTLCcontract : \DecideCo(\CTLCcontract, h(\e, id), h_{sec}(\e, id)) \in \funactions(\run) .
    \]
    Since $\funreceiver(\e) = B $ we also have $\funowner(s^{id}_{\walk}) = B$ by the $h_{sec}$ function (\ref{def:hsecMapping}). Therefore $\action \in \Sigma_B^{\treeobj}(\run')$ which implies 
    $$\environmentvectime = \environmentvecprimetime < \timeoutsc(h(\e, id)),$$
    see (\ref{eq:CTLC-execution}).
    
    \noindent \underline{$Inv_{\text{deposits}}(\run, \partialEx, \fulltreeobj)$}
    For all $\environmentch \in \environmentvec$ we have 
    \[
        \environmentchCTLCEnabled \cup \environmentchCTLCDecided = \environmentch'.C_{\textit{en}} \cup \environmentch'.C_{\textit{cla}} .
    \]
    Furthermore $\untree$ stays constant throughout and for all $\environmentch \in \environmentvec$ it holds $\environmentchReservedFunds = \environmentch'.F_{\textit{res}}$, thus $Inv_{\text{deposits}}(\run, \partialEx, \fulltreeobj)$ follows directly from I.H..
    
    \noindent \underline{$Inv_{\text{setup}}(\run, \partialEx, \fulltreeobj)$} \linebreak
    Let $X, Y, \eout = (B,X)_{\walk} \in \untree, \CTLCcontract, h(\eout, id) \in \CTLCcontract \in \environment_{\channeledge{\eout}}.C_{\textit{en}}$ or $(B, \advCTLCcontract) \in \environment_{\channeledge{\eout}}.C_{\textit{aut}} \, , h(\eout, id) \in \advCTLCcontract$. We need to show that $\forall \ein = (Y,B)_{\walk'} \in \untree \text{ with } \walk' = \concatvec{\unitvec{(Y,B)_{\walk}}}{\walk}:$
    \begin{align*} 
        &\exists (\CTLCcontract)' \in \environment_{\channeledge{\ein}}.C_{en} :  h(\ein, id) \in (\CTLCcontract)' \\ 
        \lor \, &\exists j' \leq \fundepth(\ein) : (Y,B)_{\walk''} \in \partialEx \\
        & \qquad \land \fundepth((Y,B)_{\walk''}) = j' .
    \end{align*}
    By I.H., we know this holds for $\run'$ with $\environmentvecprime$ and $\partialEx'$. Additionally 
    \begin{align*}
        \environment_{\channeledge{\ein}}.C_{en} &= \environment'_{\channeledge{\ein}}.C_{en}, \\
        \environment_{\channeledge{\eout}}.C_{\textit{aut}} &= \environment'_{\channeledge{\eout}}.C_{\textit{aut}}, 
    \end{align*}
     and $\partialEx' \subseteq \partialEx$, thus it also holds for $\run$ with $\environmentvec$ and $\partialEx$. 

    \noindent $Inv_{\text{tree}}(\run, \partialEx, \fulltreeobj)$ remains unaffected since 
    \begin{align*}
        \environmentvecCTLCAuthorized &= \environmentvecprimeCTLCAuthorized, \, \environmentvecCTLCEnabled = \environmentvecprimeCTLCEnabled, \\
        \environmentvecCTLCDecided &= \environmentvecprimeCTLCDecided. 
    \end{align*}

    \noindent \underline{$Inv_{\text{auth}}(\run, \partialEx, \fulltreeobj)$}
    remains unaffected since the honest user strategy does not rely on revealed secrets for its decision to enable a contract, see
    $newC(\e, R)$ (\ref{eq:CTLC-enable-check}).

    \noindent \textbf{Case $\action = A: \shareSecretch s^{id}_{\walk}$:}

    \noindent According to the inference rule of $\shareSecretch$ only $\, \environmentchprime$ is effected by this action, all other elements of $\environmentvecprime$ remain untouched, especially $\environment'_{\channel'}$ with $\channel' \neq \channel$. Furthermore, $\environmentchprimeRevealedSecrets \subseteq \environmentchRevealedSecrets$. 
    For showing $\run \IDrelation \partialExFamilyprime$ we first notice that 
    \begin{align*}
        \excontracts{R'} &= \excontracts{R}, \\
        \schecontracts{R'} & \subseteq \schecontracts{R}
    \end{align*}
    for the $id$ of $s^{id}_{\walk}$. In the following, we look at this specific $id$ only, since nothing changes for the others. 
    Assume towards contradiction that there is a 
    $$(\CTLCsubcontract, \funsecret_i(\CTLCsubcontract)) \in \schecontracts{R} \backslash \schecontracts{R'}.$$ 
    Then, there needs to be a 
    \begin{equation*}
    \scalebox{0.93}{$
    \action' = \DecideCo(\CTLCcontract,\CTLCsubcontract, \funsecret_i(\CTLCsubcontract)) \in \sactions{\run} \backslash \sactions{\run'} ,
    $}
    \end{equation*}
    so $\exists \environmentvec^* : \run \overset{\action'}{\longrightarrow} \environmentvec^*$. 
    By the inference rule of $\DecideCo$ (\Cref{ctlc:DecideCo}), this implies that $\CTLCsubcontract \in \CTLCcontract \in \environment_{\channel'}.S_{\textit{en}}$ and $\funsecret_i(\CTLCsubcontract) \subseteq \environment_{\channel'}.S_{\textit{rev}}$.
    Since $B = \funreceiver(\CTLCsubcontract)$ the invariant $Inv_{\text{tree}}(\run', \partialEx', \fulltreeobj)$ applies and hence there is a $\e \in \untree$ with $\CTLCsubcontract = h(\e, id)$ and hence by construction of $h$ there is an edge $\e' \in \untree$ such that $\CTLCsubcontract = h(\e', id)$ and $h_{sec}(\e', id) = \funsecret_i(\CTLCsubcontract)$ and $B = \funreceiver(\e')$. 
    Consequently, $\funsecret(\e', id) \in \environment_{\channel'}.S_{\textit{rev}}$.
    But then also $\funsecret(\e', id) \in \environment'_{\channel^*}.S_{\textit{rev}}$ for some $\channel^*$
    because either $\funsecret(\e', id) = s^{id}_{\walk}$ and then by \Cref{ctlc:shareSecret} $\funsecret(\e', id) \in \environment'_{\channel}.S_{\textit{rev}}$ or $\funsecret(\e', id) \neq s^{id}_{\walk}$ and then $\funsecret(\e', id) \in \environment'_{\channel'}.S_{\textit{rev}}$ 
    (since by \Cref{ctlc:shareSecret} $ \environment_{\channel'}.S_{\textit{rev}} =  \environment'_{\channel'}.S_{\textit{rev}} ~\cup~ \{ s^{id}_{\walk} \}$). 
    But then $Inv_{\text{in-secrets}}(\run', \partialEx', \fulltreeobj)$ implies $\e' \in \partialEx'$ and hence by $\run' \IDrelation \partialEx'$ we know that \linebreak $(\CTLCsubcontract, \funsecret_i(\CTLCsubcontract)) \in \schecontracts{\run'}$ (contradicting our initial assumption).

    We now go through the invariants individually. 
    Since $\environmentvecRevealedSecrets = \environmentvecprimeRevealedSecrets$ (so the set of revealed secrets over all channels stays constant) the invariant $Inv_{\text{in-secrets}}(\run, \partialEx', \fulltreeobj) $ is not affected. $Inv_{\text{secrets}}(\run, \partialEx', \fulltreeobj)$ is also not affected since \linebreak $\shareSecretch$ does not remove secrets from channels. For \linebreak $Inv_{\text{in-schedule}}(\run, \partialEx', \fulltreeobj)$ the same argument applies as for $Inv_{\text{in-secrets}}(\run, \partialEx', \fulltreeobj).$ The remaining invariants do not depend on any $\environmentchRevealedSecrets$ and so are implied directly by I.H..

    $Inv_{\text{auth}}(\run', \partialEx', \fulltreeobj)$
    remains unaffected since the honest user strategy does not rely on revealed secrets for its decision to enable a contract, see
    $newC(\e, R)$ (\ref{eq:CTLC-enable-check}).
    
    \noindent \textbf{Case $\action = \timeoutsc (\dotCTLCcontract, \dotCTLCsubcontract)$:} 

    \noindent We show that $\run  \IDrelation \partialExFamilyprime$. 
    Since no $\DecideCo$ or $\CoEx$ action is executed it holds immediately that
     $$
        \excontracts{\run} = \excontracts{\run'}
     $$
    for all $id$.
    We additionally show for all $id$ that 
    $$
        \schecontracts{\run} = \schecontracts{\run'}.
    $$
    For all $id$ with $\widehat{\dotCTLCcontract} \notin \batch$ it holds trivially, for the one $id$ with $\widehat{\dotCTLCcontract} \in \batch$
    we assume towards contradiction that 
    $$ \schecontracts{\run} \neq \schecontracts{\run'}$$
     and consider the cases 
    $$(\CTLCsubcontract, \funsecret_i(\CTLCsubcontract)) \in  \schecontracts{\run} \backslash  \schecontracts{\run'}$$ 
    and 
    $$(\CTLCsubcontract, \funsecret_i(\CTLCsubcontract)) \in  \schecontracts{\run'} \backslash  \schecontracts{\run} .$$

    If $(\CTLCsubcontract, \funsecret_i(\CTLCsubcontract)) \in  \schecontracts{\run} \backslash  \schecontracts{\run'}$ then (since no new $\DecideCo$ or $\CoEx$ action is executed)
    this means that $\run \overset{\action'}{\longrightarrow} \environmentvec^*$ for some $\environmentvec^*$ and $\action' = \DecideCo(\CTLCcontract, \CTLCsubcontract, \funsecret_i(\CTLCsubcontract))$ and $\funreceiver(\CTLCsubcontract) = \honestuser$.
    Since by the inference rule for $\DecideCo$ (\Cref{ctlc:DecideCo}) we know that $\CTLCsubcontract \in \CTLCcontract \in \environmentchCTLCEnabled$ 
    for some channel $\channel$ we can conclude using $Inv_\textit{tree}$ that there is some $\e \in \untree$ such that $\CTLCsubcontract = h(\e, id)$ and hence by construction also that there is some $\e' \in \untree$ with $h(\e', id) = \CTLCsubcontract$ and $h_{sec}(\e', id) = \funsecret_i(\CTLCsubcontract)$ and $\funreceiver(\CTLCsubcontract) = \funreceiver(\e) = \funreceiver(\e') = \honestuser$.
    If $\run \overset{\action'}{\longrightarrow} \environmentvec^*$ then by the inference rule for $\DecideCo$ (\Cref{ctlc:DecideCo}), we know that 
    $\funsecret_i(\CTLCsubcontract) \subseteq \environmentvecRevealedSecrets = \environmentvecprimeRevealedSecrets$ and so also $\funsecret(\e', id) \in \environmentvecRevealedSecrets$ since by construction $\funsecret(\e', id) \in h_{sec}(\e', id) = \funsecret_i(\CTLCsubcontract)$. 
    This allows us to conclude using $Inv_\textit{in-secrets}$ (for $\run'$ and $\partialEx'$) that $\e' \in \partialEx'$ and so by $\run'  \IDrelation \partialExFamilyprime$ that 
    $$(h(\e', id), h_{sec}(\e', id)) \in \schecontracts{\run'},$$
    contradicting the original assumption. 

    If $(\CTLCsubcontract, \funsecret_i(\CTLCsubcontract)) \in  \schecontracts{\run'} \backslash  \schecontracts{\run}$ then this means that $\run' \overset{\action'}{\longrightarrow} \environmentvec^*$ for some $\environmentvec^*$ and 
    $$\action' = \DecideCo(\CTLCcontract, \CTLCsubcontract, \funsecret_i(\CTLCsubcontract))$$
    and $\funreceiver(\CTLCsubcontract) = \honestuser$. 
    Since by the inference rule for $\DecideCo$ (\Cref{ctlc:DecideCo}) we know that $\CTLCsubcontract \in \CTLCcontract \in \environmentchprimeCTLCEnabled$ 
    for some channel $\channel$ we can conclude using $Inv_\textit{tree}$ that there is some $\e \in \untree$ such that $\CTLCsubcontract = h(\e, id)$ and hence by construction also that there is some $\e' \in \untree$ with $h(\e', id) = \CTLCsubcontract$ and $h_{sec}(\e', id) = \funsecret_i(\CTLCsubcontract)$ and $\funreceiver(\CTLCsubcontract) = \funreceiver(\e) = \funreceiver(\e') = \honestuser$.
    From $\run'  \IDrelation \partialExFamilyprime$ we hence also know that $\e' \in \partialEx'$ and consequently using \linebreak
    $Inv_{\text{init-liveness}}(\run', \partialEx', \fulltreeobj)$
    that either 
    $$\DecideCo(\CTLCcontract, h(\e', id), h_{sec}(\e', id)) \in \funactions(\run')$$ 
    or $\environmentvecprimetime < t_0 + \fundepth(\e') \Delta $.
    The first case gives us that 
    $$(h(\e', id), h_{sec}(\e', id)) \in \schecontracts{\run} ,$$ contradicting the original assumption.
    The second case immediately contradicts the precondition of \Cref{ctlc:timeout} 
    since 
    $$\funtimeout(h(\e, id)) = t_0 + \fundepth(\e') \Delta .$$

    To show $Inv(\run, \partialEx', \fulltreeobj)$, we look at the 4 invariants that are potentially affected, which are $Inv_{\text{levels}}$, $Inv_{\text{liveness}}$, \linebreak $Inv_{\text{deposits}}$ and $Inv_{\text{setup}}$ and $Inv_{\text{setup}}$. 
    The changes from \linebreak $\action = \timeoutsc (\dotCTLCcontract, \dotCTLCsubcontract)$ do not influence the other invariants.

    We first show $Inv_{\text{levels}}$:
    Assume that $\e \in \untree$ with $B = \funsender(\e)$ and $\environmentvec.t > t_0 + \fundepth(\e) \Delta $.  
    Since $\environmentvec.t = \environmentvec'.t$ we get from the inductive hypothesis that 
    $$ (*) \, \nexists \channel, \CTLCcontract \in \environmentchprimeCTLCEnabled : h(\e, id) \in \CTLCcontract \land \funfund(\CTLCcontract) \in \environmentchprimeReservedFunds .$$

    Since the inference rule for $\timeoutsc$ (\Cref{ctlc:timeout}) only removes $\dotCTLCsubcontract$ from $\dotCTLCcontract \in \environmentvecprimeCTLCEnabled$, if there would be a $\CTLCcontract \in \environmentchCTLCEnabled$ with $h(\e, id) \in \CTLCcontract$ and  $\funfund(\CTLCcontract) \in \environmentchReservedFunds$ then there would also be some $\CTLCcontract* \in \environmentchprimeCTLCEnabled$ with $h(\e, id) \in \CTLCcontract*$ and also  $\funfund(\CTLCcontract) \in \environmentchprimeReservedFunds$ (since $\environmentchReservedFunds =\environmentchprimeReservedFunds$ for all channels $\channel$). 
    This immediately contradicts $(*)$.
    
    To show $Inv_{\text{liveness}}$, let $\eout = (B,X)_{\walk} \in \untree, j = \fundepth(\eout)$ and $\ein = (Y,B)_{\walk'} \in \untree \text{ with } \walk' = \concatvec{\unitvec{(Y,B)_{\walk}}}{\walk}$ be given as stated in \linebreak $Inv_{\text{liveness}}(\run, \partialEx', \fulltreeobj)$ which are the same as in
    \linebreak $Inv_{\text{liveness}}(\run', \partialEx', \fulltreeobj)$. By I.H. we either have
    \begin{align}
        &\exists j' \leq j+1 : (Y,B)_{\walk''} \in \partialEx' \land \fundepth((Y,B)_{\walk''}) = j' \text{ \textbf{or}} \label{eq:timehelper2}\\
        &\exists \CTLCcontract \in \environmentvecprimeCTLCEnabled : h(\ein, id) \in \CTLCcontract \land \environmentvecprimetime < t_0 + (j+1)\Delta . \label{eq:timehelper3}
    \end{align}
    In the case of \eqref{eq:timehelper2}, the conclusion trivially holds. 
    In the case of \eqref{eq:timehelper3}, we make a case distinction on $\dotCTLCsubcontract = h (\ein, id)$. 
    If $\dotCTLCsubcontract = h (\ein, id)$ we immediately arrive at a contradiction because the precondition of \Cref{ctlc:timeout} requires that $\funtimeout(\dotCTLCsubcontract) \leq \environmentvecprimetime$ and by construction $\funtimeout(\dotCTLCsubcontract) = \funtimeout(h(\ein, id)) = t_0 + \fundepth(\ein) \Delta$ and $\fundepth(\ein) = j + 1$. 
    If $\dotCTLCsubcontract \neq h (\ein, id)$ then we know from  $\CTLCcontract \in \environmentvecprimeCTLCEnabled : h(\ein, id) \in \CTLCcontract$ that also there is some $\CTLCcontract* \in \environmentvecCTLCEnabled : h(\ein, id) \in \CTLCcontract*$
    since $\environmentvecCTLCEnabled$ coincides with $\environmentvecCTLCEnabled$ with the only exception of $\dotCTLCsubcontract$ being removed from one channel $\channel$. With $\environmentvectime = \environmentvecprimetime$ this shows the conclusion. 

    For showing $Inv_{\text{deposits}}(\run, \partialEx', \fulltreeobj)$ we notice that
    $\environmentvecReservedFunds = \environmentvecprimeReservedFunds$ and that 
    by the inference rule of $\timeoutsc$ (\Cref{ctlc:timeout}) it holds that if $\CTLCcontract \in \environmentchprimeCTLCEnabled$
    then also for some  $\CTLCcontract* \in \environmentchprimeCTLCEnabled$ with $\funfund(\CTLCcontract*) = \funfund(\CTLCcontract)$ 
    (because the rule removes at most one subcontract from $\CTLCcontract$, which leaves the contract's funds unchanged).
    Consequently, the invariant follows directly from the inductive hypothesis. 

    To show $Inv_{\textit{setup}}$, assume $\eout = (B,X)_{\walk} \in \untree$.
    We distinguish the cases $h(\eout, id) \in \CTLCcontract \in \environment_{\channeledge{\eout}}.C_{\textit{en}}$
    and 
    \linebreak
    $(B, \advCTLCcontract) \in \environment_{\channeledge{\eout}}.C_{\textit{aut}}$ such that  $h(\eout, id) \in \advCTLCcontract$. 

    Assume that $h(\eout, id) \in \CTLCcontract \in \environment_{\channeledge{\eout}}.C_{\textit{en}}$ (*).
    Then we know that there is also some $\CTLCcontract* \in \environment'_{\channeledge{\eout}}.C_{\textit{en}}$ such that $h(\eout, id) \in \CTLCcontract*$ (since the $\timeoutsc$ rule atmost removes $\dotCTLCsubcontract$ from $\dotCTLCcontract$). 
    Consequently, from $Inv_{\text{setup}}(\run', \partialEx', \fulltreeobj)$ we get that 
    \begin{align}
        & \exists (\CTLCcontract)' \in \environment'_{\channeledge{\ein}}.C_{en} :  h(\ein, id) \in (\CTLCcontract)' \label{eq:setuphelper10}\\
        & \lor \hspace{-1pt} \exists j' \hspace{-1pt} \leq \hspace{-1pt} \fundepth(\ein) \hspace{-1pt} : \hspace{-1pt} (Y,B)_{\walk''} \hspace{-2pt} \in \partialEx' \label{eq:setuphelper11} \\
        & \qquad \land \fundepth((Y,B)_{\walk''}) = j' \nonumber
    \end{align}

    In the case of \eqref{eq:setuphelper10} we could have $h(\ein, id) = \dotCTLCsubcontract$ and thus $\nexists (\CTLCcontract)' \in \environment_{\channeledge{\ein}}.C_{en} :  h(\ein, id) \in (\CTLCcontract)'$. If 
    $ \dotCTLCsubcontract =  h(\ein, id)$ 
    by the precondition of $\timeoutsc$ (\ref{ctlc:timeout}) and the definition of $h$ we have
    \begin{align*}
        \environmentvectime \hspace{-1pt} &\geq \hspace{-1pt} \timeout(h(\ein, id)) \hspace{-1pt} \\
        &> \hspace{-1pt} \timeout(h(\eout, id)) \hspace{-1pt} = \hspace{-1pt} t_0 + \fundepth(\eout) \Delta
    \end{align*}
    So we get from $Inv_{\textit{levels}}$ that 
    \[
        \nexists \CTLCcontract \in \environment_{\channeledge{\eout}}.C_{\textit{en}}: \, h(\eout, id) \in \CTLCcontract
    \]
    which contradicts our assumption (*).
    In the case of \eqref{eq:setuphelper11} the conclusion trivially holds. 

    Next, assume that $(B, \advCTLCcontract) \in \environment_{\channeledge{\eout}}.C_{\textit{aut}}$ such that \linebreak $h(\eout, id) \in \advCTLCcontract$.
    In this case we also know that \linebreak $(B, \advCTLCcontract) \in \environment'_{\channeledge{\eout}}.C_{\textit{aut}}$ (since the $\timeoutsc$ rule does not change authorizations).
    
    Consequently, from $Inv_{\text{setup}}(\run', \partialEx', \fulltreeobj)$ we again get that 
    \begin{align}
        & \exists (\CTLCcontract)' \in \environment'_{\channeledge{\ein}}.C_{en} :  h(\ein, id) \in (\CTLCcontract)' \text{ \textbf{or}}\label{eq:setuphelper12}\\
        &\exists j' \leq \fundepth(\ein) : (Y,B)_{\walk''} \in \partialEx'  \land \fundepth((Y,B)_{\walk''}) = j' \label{eq:setuphelper13}
    \end{align}

    Again, the claim immediately follows for \Cref{eq:setuphelper13}.
    For the case of \Cref{eq:setuphelper12} we only need to consider the case $h(\ein, id) = \dotCTLCsubcontract$ where have
    by the precondition of $\timeoutsc$ (\ref{ctlc:timeout}) and the definition of $h$:
    \begin{align*}
        \environmentvectime &\geq \timeout(h(\ein, id)) > \timeout(h(\eout, id)) 
        \\ &= t_0 + \fundepth(\eout) \Delta > t_0
    \end{align*}
    Therefore, $Inv_{\text{auth}}(\run, \partialEx', \fulltreeobj)$ contradicts the assumption which is proven in the next paragraph independently. This is because
    \begin{equation*}
    \scalebox{0.95}{$
        \environmentvectime \geq t_0 \Rightarrow \nexists \advCTLCcontract \in \environmentvecCTLCAdvertised:
        (B, \advCTLCcontract) \in \environmentvecCTLCAuthorized  \land B = \funsender(\advCTLCcontract)
        $}
    \end{equation*}
    is a direct implication of this invariant. In the invariant we implied $\environmentvectime < t_0$, which is the negation of $\environmentvectime \geq t_0$, the precondition here. Therefore, implying that such an $(B, \advCTLCcontract) \in \environmentvecCTLCAuthorized$ does not exist, which is the negation of the invariants precondition, follows directly. 
    
    $Inv_{\text{tree}}(\run, \partialEx', \fulltreeobj)$ only applies to less subcontracts as $Inv_{\text{tree}}(\run', \partialEx', \fulltreeobj)$ and so it is implied directly by I.H..
    For $Inv_{\text{auth}}(\run, \partialEx', \fulltreeobj)$ we look at $\environmentvectime = \environmentvecprimetime$. 
    For 
    $$(B, \widehat{\dot{\textit{c}}^x}) \in \environmentvecCTLCAuthorized = \environmentvecprimeCTLCAuthorized$$
    a substrategy (\ref{eq:enable-check}) of the honest user strategy implies that $B$ schedules 
    \begin{equation} \label{thingsthingsandevenmorethings}
         \enableCTLC \, \dotCTLCcontract
    \end{equation}
    right after $B: \authCTLC \,  \widehat{\dot{\textit{c}}^x}$. The only action removing authorizations is $\enableCTLC$ (\ref{ctlc:enableCTLC}) itself. Based on the same substrategy we know that $B: \authCTLC \, \widehat{\dot{\textit{c}}^x}$ only gets scheduled if $\environmentvectime < t_0$. 
    By \cref{def:adversarystrategy}, we know that no time elapses as long as $B$ does not agree, and so $\environmentvectime < t_0$ still holds. By \cref{def:hMapping} we have $\timeout(\dotCTLCsubcontract) > t_0$ and so $\environmentvectime > t_0$, which is a contradiction to the given action $\action$. 
    
    \noindent \textbf{Case $\action = \refund \, \dotCTLCcontract$:}

    To show that $\run  \IDrelation \partialExFamilyprime$ the same reasoning can be applied as in the $\timeoutsc$ case. 

    From the $\refund$ rule, we also know
    \begin{align} \label{somehelperrealitsations}
        &\environmentvecCTLCEnabled \subseteq \environmentvecprimeCTLCEnabled \land \environmentvecCTLCAdvertised \subseteq \environmentvecprimeCTLCAdvertised \\
        \land &\environmentvecReservedFunds \subseteq \environmentvecprimeReservedFunds \land \environmentvecAvailableFunds \supseteq \environmentvecprimeAvailableFunds . \nonumber
    \end{align}
    Similar to the previous case $\action = \timeoutsc (\dotCTLCcontract, \dotCTLCsubcontract)$ out of \linebreak
    $Inv(\run, \partialEx', \fulltreeobj)$ only 5 invariants are affected for the $id$ with $\widehat{\dotCTLCcontract} \in \batch$: 
    \begin{align*}
        &Inv_{\text{levels}}, Inv_{\text{liveness}}, Inv_{\text{deposits}}, Inv_{\text{setup}}, Inv_{\text{tree}} \text{, and } Inv_{\text{auth}}
    \end{align*}
    For proving $Inv_{\text{levels}}(\run, \partialEx', \fulltreeobj)$ we note that no time has passed, $\environmentvectime = \environmentvecprimetime$, and the set of enabled contracts as well as reserved funds have only gotten smaller, see (\ref{somehelperrealitsations}). Therefore, it follows immediately from I.H..

    The proof for $Inv_{\text{liveness}}(\run, \partialEx', \fulltreeobj)$ works analogously to its proof in the $\alpha = \timeoutsc(\dotCTLCcontract, \dotCTLCsubcontract)$ case.

     For showing $Inv_{\text{deposits}}(\run, \partialEx', \fulltreeobj)$ we look at the $\refund$ rule (\ref{ctlc:refund}) and notice that exactly $\funfund(\CTLCcontract)$ gets removed from $\environmentchprimeReservedFunds$ alongside $\CTLCcontract$ from $\environmentchprimeCTLCEnabled$. Thus \linebreak $Inv_{\text{deposits}}(\run, \partialEx', \fulltreeobj)$ applies to one less contract than $Inv_{\text{deposits}}(\run', \partialEx', \fulltreeobj)$. The fund for no other contract can be missing in $\environmentchReservedFunds$ since every contract has a unique fund according to \cref{def:initialconfig}. 

    Again, the proof for $Inv_{\text{setup}}(\run, \partialEx', \fulltreeobj)$ works analogously to its proof in the $\alpha = \timeoutsc(\dotCTLCcontract, \dotCTLCsubcontract)$ case considering that $\dotCTLCcontract = \{ \dotCTLCsubcontract \}$ (which is given since the preconditions of the $\refund$ rule require $\dotCTLCcontract$ to be a singleton set).

    $Inv_{\text{tree}}(\run, \partialEx', \fulltreeobj)$ is implied by I.H. as $\environmentchCTLCEnabled$  only got smaller or stayed equal compared to $\environmentchprimeCTLCEnabled$. Their implication in this invariant is unaffected.

    $Inv_{\text{auth}}(\run, \partialEx', \fulltreeobj)$ is implied by analogous reasoning as in the previous case $\action = \timeoutsc (\dotCTLCcontract, \dotCTLCsubcontract)$. 
    
    \noindent \textbf{Case $\action = \DecideCo (\dotCTLCcontract ,\dotCTLCsubcontract,  \funsecret_i(\dotCTLCsubcontract))$: } 

    \noindent 
    For the specific $id$ with $\widehat{\dotCTLCcontract} \in \batch$ we define 
    \begin{align*}
        \partialEx := \partialEx' \cup \{ \e \in \untree \mid & \e = (B,X)_{\walk} \land \dotCTLCsubcontract = h(\e, id) \\
        &\land \funsecret_i(\dotCTLCsubcontract) = h_{sec}(\e, id) \}
    \end{align*}
    and 
    \begin{align*}
        \partialExFamily := ( \partialExFamilyprime \backslash \{ \partialEx' \} ) \cup \{ \partialEx \}
    \end{align*}
    (only replacing this specific $\partialEx'$ for the $id$ of $\widehat{\dotCTLCcontract} \in \batch$)
    and show that
    \begin{enumerate}[a\hspace{1pt}]
        \item \hspace{-5pt}) $\run \IDrelation \partialExFamily$,
        \item \hspace{-5pt}) $\partialEx$ is consistent and
        \item \hspace{-5pt}) $Inv(\run, \partialEx, \fulltreeobj)$ holds.
    \end{enumerate}

    \noindent a) For showing that $\run \IDrelation \partialExFamily$, we first show that 
    $$ \schecontracts{\run} = \schecontracts{\run'}.$$
    Assume towards contradiction that (for the $id$ of $\batch$ specifically, for the others it holds by definition)
    $$\schecontracts{\run} \neq \schecontracts{\run'}.$$
    We consider the cases 
    \begin{align*}
    &(\CTLCsubcontract, \funsecret_i(\dotCTLCsubcontract)) \hspace{-1pt} \in \hspace{-1pt} \schecontracts{\run} \backslash \schecontracts{\run'} \\
    &\text{and} \\
    &(\CTLCsubcontract, \funsecret_i(\dotCTLCsubcontract)) \hspace{-1pt} \in \hspace{-1pt} \schecontracts{\run'} \backslash \schecontracts{\run}
    \end{align*}
    individually. 
    Assume that there is some 
    $$(\CTLCsubcontract, \funsecret_i(\dotCTLCsubcontract)) \hspace{-1pt} \in \hspace{-1pt} \schecontracts{\run} \backslash \schecontracts{\run'}.$$
    This means that $\run \overset{\action'}{\longrightarrow} \environmentvec^*$ for some $\environmentvec^*$ and 
    $$\action' = \DecideCo(\CTLCcontract, \CTLCsubcontract, \funsecret_i(\CTLCsubcontract))$$
     and $\funreceiver(\CTLCsubcontract) = \honestuser$.
    Since by the inference rule for $\DecideCo$ (\Cref{ctlc:DecideCo}) we know that $\CTLCsubcontract \in \CTLCcontract \in \environmentchCTLCEnabled$ 
    for some channel $\channel$ we can conclude using $Inv_\textit{tree}$ that there is some $\e \in \untree$ such that $\CTLCsubcontract = h(\e, id)$ and hence by construction also that there is some $\e' \in \untree$ with $h(\e', id) = \CTLCsubcontract$ and $h_{sec}(\e', id) = \funsecret_i(\CTLCsubcontract)$ and $\funreceiver(\CTLCsubcontract) = \funreceiver(\e) = \funreceiver(\e') = \honestuser$.
    If $\run \overset{\action'}{\longrightarrow} \environmentvec^*$ then by the inference rule for $\DecideCo$ (\Cref{ctlc:DecideCo}), we know that 
    $\funsecret_i(\CTLCsubcontract) \subseteq \environmentvecRevealedSecrets = \environmentvecprimeRevealedSecrets$ and so also $\funsecret(\e', id) \in \environmentvecRevealedSecrets$ since by construction $\funsecret(\e', id) \in h_{sec}(\e', id) = \funsecret_i(\CTLCsubcontract)$. 
    This allows us to conclude using $Inv_\textit{in-secrets}$ (for $\run'$ and $\partialEx'$) that $\e' \in \partialEx'$ and so by $\run'  \IDrelation \partialEx'$ that $\CTLCsubcontract = h(\e', id) \in \schecontracts{\run'}$, contradicting the original assumption. 

    If $(\CTLCsubcontract, \funsecret_i(\dotCTLCsubcontract)) \in  \schecontracts{\run'} \backslash  \schecontracts{\run}$ then this means that $\run' \overset{\action'}{\longrightarrow} \environmentvec^*$ for some $\environmentvec^*$ and 
    $$\action' = \DecideCo(\CTLCcontract, \CTLCsubcontract, \funsecret_i(\CTLCsubcontract))$$ and $\funreceiver(\CTLCsubcontract) = \honestuser$. 
    We further know that there is no $\environmentvec^\dagger$ such that $\run' \overset{\action'}{\longrightarrow} \environmentvec^\dagger$
    and also $\CTLCsubcontract \neq \dotCTLCsubcontract$ (since otherwise by definition $\CTLCsubcontract \in  \schecontracts{\run}$).
    By the inference rule of $\DecideCo$ (\Cref{ctlc:DecideCo}), we know that all environment components influencing the rules precondition but $\environmentchCTLCEnabled$ and $\environmentCTLCAdvertised$ stay unaffected.
    Further, for $\CTLCsubcontract \not \in \dotCTLCcontract$ we know that $\run' \overset{\action'}{\longrightarrow}$ since
    if $\CTLCcontract \neq \dotCTLCcontract$ then 
    $\CTLCcontract \in \environmentchprimeCTLCEnabled$ implies 
    $\CTLCcontract \in \environmentchCTLCEnabled$ and 
    $\CTLCcontract \in \environmentchprimeCTLCAdvertised$ implies 
    $\CTLCcontract \in \environmentchCTLCAdvertised$ for some channel $\channel$.
    We, hence, are left to consider the case that $\CTLCsubcontract \in \dotCTLCcontract$.
    By the inference rule for $\DecideCo$ (\Cref{ctlc:DecideCo}) we know that $\CTLCsubcontract \in \CTLCcontract \in \environmentchprimeCTLCEnabled$,
    so we can conclude using $Inv_\textit{tree}$ that there is some $\e \in \untree$ such that $\CTLCsubcontract = h(\e, id)$ and $\funreceiver(\CTLCsubcontract) = \funreceiver(\e)$ and $\funsender(\CTLCsubcontract) = \funsender(\e)$.
    Similarly, we can conclude that there is some $\e' \in \untree$ such that $\dotCTLCsubcontract = h(\e', id)$ and $\funreceiver(\dotCTLCsubcontract) = \funreceiver(\e')$ and $\funsender(\dotCTLCsubcontract) = \funsender(\e')$.
    Since all subcontracts in the same CTLC have the same sender and receiver, we can also conclude that $\funreceiver(\e) = \funreceiver(\e')$, $\funsender(\e) = \funsender(\e')$. 
    And consequently from \Cref{def:specification} that $\spec{\e} = \spec{\e'}$. 
    Since by definition, $\dotCTLCsubcontract \in \schecontracts{\run'}$ and by assumption $\CTLCsubcontract \in \schecontracts{\run'}$, from $\run' \IDrelation \partialEx'$, we know that $\e, \e' \in \partialEx'$. 
    However, since $\partialEx'$ is consistent (by I.H.) this is contradicting given that $\spec{\e} = \spec{\e'}$.

    For $\excontracts{\run}$ though we have
    \begin{align*}
        &\excontracts{\run} \\
        &= \excontracts{\run'} \cup \{ (\dotCTLCsubcontract, \funsecret_i(\dotCTLCsubcontract)) \}
    \end{align*}
    if $B = \funsender(\dotCTLCsubcontract)$, otherwise  
    $$\excontracts{\run} = \excontracts{\run'} .$$ 
    So if $\CTLCsubcontract \in \excontracts{\run}$ then either $\CTLCsubcontract \in \excontracts{\run'}$ and hence (by $\run' \IDrelation \partialEx'$) also $\e \in \partialEx' \subseteq \partialEx$ for some $\e \in \untree$ with $\CTLCsubcontract = h(\e, id)$.
    Or $\CTLCsubcontract  = \dotCTLCsubcontract$ with $B = \funsender(\CTLCsubcontract)$.   

    Since by the inference rule of the $\DecideCo$ action (\Cref{ctlc:DecideCo}) we know that $\dotCTLCsubcontract \in \dotCTLCcontract \in \environmentchprimeCTLCEnabled$ 
    for some channel $\channel$, \linebreak $Inv_{\text{tree}}(\run', \partialEx', \fulltreeobj)$ implies
    \[
        \exists e \in \untree : \e = (B,X)_{\walk} \land \dotCTLCsubcontract = h(\e, id) . 
    \]
     By construction of $h$ there then is also some $\e' \in \untree$ with $h(\e', id) = \dotCTLCsubcontract$ and $h_{sec}(\e', id) = \funsecret_i(\dotCTLCsubcontract)$ and $\funsender(\CTLCsubcontract) = \funsender(\e) = \funsender(\e') = \honestuser$.
     So by definition of $\partialEx$ also $\e' \in \partialEx$. 

     Correspondingly, if $\e \in \partialEx$ then either $\e \in \partialEx'$ and (by $\run' \IDrelation \partialEx'$) $h(e, id) \in \excontracts{\run'} \subseteq \excontracts{\run}$. 
     Or $h(\e, id) = \dotCTLCsubcontract$ and hence $h(\e, id) \in \excontracts{\run}$ by construction.

    \noindent b) To show that $\partialEx$ is consistent we assume towards contradiction that $\partialEx$ is not consistent meaning that there exists a predecessor $\e^{*} \in \funonPathtoRoot(\untree, \e) \cap \hatpartialEx$ of the one $\e \in \partialEx \backslash \partialEx'$ such that \linebreak (1) $\e^{*} \notin \partialEx$ or (2) 
    $
        \exists \e^{**} \in \partialEx' : \spec{\e^{**}} = \spec{\e}.
    $
    The situation for (1) is similar to the one we have covered for $$\alpha = A: \, \revealSecretch \, s_{\walk}^{id}.$$ Therefore, the argumentation is analogous.

    For (2), assume towards contradiction that there is $\e^{**} \in \partialEx'$ such that $\spec{\e^{**}} = \spec{\e}$.
    Since $\spec{\e^{**}} = \spec{\e}$ we know from~\Cref{def:specification} also that $\funsender(\e) = \funsender(\e^{**})$ and \linebreak $\funreceiver(\e) = \funreceiver(\e^{**})$. 
    Consequently, by construction, 
    $\e^{**}$ and $\e$ are part of the exact same \CTLC{}, see \cref{eq:treetoCTLCBadge}. This $\CTLCcontract$ has a unique identifier, and according to \cref{lemma:againanotherlemmainG}, cannot be enabled again after it has been claimed. Therefore, 
    $$\action = \DecideCo (\dotCTLCcontract ,h(\e, id),  h_{sec}(\e, id))$$
    is not possible, which is a contradiction. 
    
    \noindent c) Firstly, we note that for all $\environmentch \in \environmentvec$ all sets are the same as they are in $\environmentchprime \in \environmentvecprime$ except for one $\environmentch$, where we have 
    \begin{align*}
    & \environmentchCTLCAdvertised \subsetneq \environmentchprimeCTLCAdvertised ~\land~ \environmentchCTLCEnabled \subsetneq \environmentchprimeCTLCEnabled \\
    ~\land~ & \environmentchCTLCDecided \supsetneq \environmentchprimeCTLCDecided .
    \end{align*}

    $Inv(\run, \partialEx', \fulltreeobj)$ is shown by going through the invariants individually.
    
\noindent \underline{$Inv_{\text{in-secrets}}(\run, \partialEx', \fulltreeobj)$} immediately follows from $Inv_{\text{in-secrets}}(\run', \partialEx', \fulltreeobj)$ since $\partialEx' \subseteq \partialEx$ and $\environmentchRevealedSecrets = \environmentchprimeRevealedSecrets$.

\noindent \underline{$Inv_{\text{secrets}}(\run, \partialEx', \fulltreeobj)$} is implied for all $\e \in \partialEx'$ by I.H. and for $\e \in \partialEx \backslash \partialEx'$ we have $h_{sec}(\e, id) \subseteq \environment_{\channeledge{\e}}.S_{rev}$ by definition of $\partialEx$ for this case. The invariant is then implied by the definition of $h_{sec}$ in \cref{def:hsecMapping}. 

\noindent \underline{$Inv_{\text{in-schedule}}(\run, \partialEx', \fulltreeobj)$} is implied directly from I.H. as $\environmentvecRevealedSecrets = \environmentvecprimeRevealedSecrets$ and $\actions{\run'} \subsetneq \actions{\run}$.

\noindent \underline{$Inv_{\text{levels}}(\run, \partialEx', \fulltreeobj)$} directly follows from the inductive hypothesis, because the $\DecideCo$ rule (\Cref{ctlc:DecideCo}) only removes contracts from $\environmentvecCTLCEnabled$.

\noindent \underline{$Inv_{\text{liveness}}(\run, \partialEx, \fulltreeobj)$} Is proven by first setting 
$\eout = (B,X)_{\walk} \in \untree$  and $j = \fundepth(\eout)$ and $\ein = (Y,B)_{\walk'}$ with $\walk' = \concatvec{[(\ein)]}{\walk}$.
We distinguish the cases $h(\eout, id) \neq \dotCTLCsubcontract$ and $h(\eout, id) = \dotCTLCsubcontract$. 

        First consider $h(\eout, id) \neq \dotCTLCsubcontract$.
        We know from 
        $$Inv_{\text{liveness}}(\run', \partialEx', \fulltreeobj)$$ that either 
        $$
        \exists j' \leq j+1, \walk'' : (Y,B)_{\walk''} \in \partialEx' \land \fundepth((Y,B)_{\walk''}) = j'
        $$
        or 
        \begin{align*}
        &\exists \CTLCcontract \in \environment'_{\channeledge{\ein}}.C_{en} : \\
        &h(\ein, id) \in \CTLCcontract \land \environmentvecprimetime < t_0 + (j+1)\Delta.
        \end{align*}
        In the first case, the claim immediately follows since $\partialEx' \subseteq \partialEx$ and $\fundepth(\ein) = 1 + \fundepth(\eout)$.
        In the second case, we distinguish whether $h(\ein, id) \in \dotCTLCcontract$. 
        If $h(\ein, id) \not \in \dotCTLCcontract$ then we know that $\dotCTLCcontract \neq (\CTLCcontract)'$ and hence 
        $h(\ein, id) \in (\CTLCcontract)'$ and $\environmentvectime = \environmentvecprimetime < t_0 + (j+1)\Delta$.
        If  $h(\ein, id) \in \dotCTLCcontract$ 
        then we know that $\funsender(\dotCTLCsubcontract) = \funsender(h(\ein, id)) = \funsender(\ein) = Y$ and
        $$\funreceiver(\dotCTLCsubcontract) = \funreceiver(h(\ein, id)) = \funreceiver(\ein) = Y$$ (since they are subcontracts of the same CTLC).
        Further, by \linebreak $Inv_\textit{tree}(\run', \partialEx', \fulltreeobj)$ we know that there is some $\e \in \untree$ such that $h(\e, id) = \dotCTLCsubcontract$ (since by \cref{ctlc:DecideCo} we know that $\dotCTLCsubcontract \in \dotCTLCcontract \in \environment'_{\channeledge{\ein}}.C_{en}$).
        Consequently, $\e = (Y, B)_{\walk''}$ for some $\walk''$ (since $\funsender(\e) = \funsender(\dotCTLCsubcontract) = Y$ and $\funreceiver(\e) = \funreceiver(\dotCTLCsubcontract) = B$). 
        By definition of $\schecontracts{\cdot}$, \linebreak $(\dotCTLCsubcontract, h_{sec}(\e, id)) \in \schecontracts{\run'}$ and so by $\run' \IDrelation \partialEx'$, also $\e \in \partialEx' \subseteq \partialEx$. 
        So, we are left to show that $\fundepth(\e) \leq j+1$. 
        Assume towards contradiction that $\fundepth(\e) > j+1$.
        By construction, we know that $\funtimeout(\dotCTLCsubcontract) = t_0 + \fundepth(\e) \Delta$, so in this case $\funtimeout(\dotCTLCsubcontract)> t_0 + (j+1) \Delta$. 
        However, $\funtimeout(h(\ein, id)) = t_0 + \fundepth(\ein) \Delta = t_0 + j \Delta$, so 
        $\funtimeout(\dotCTLCsubcontract) >  \funtimeout(h(\ein, id))$. 
        But then, by construction 
        $\funposition(\dotCTLCsubcontract) > \funposition(h(\ein, id))$. 
        Since $h(\ein, id), \dotCTLCsubcontract \in \dotCTLCcontract \in \environment'_{\channeledge{\ein}}.C_{en}$ we know by 
        \cref{lemma:enabledsubsetofadvertised}
        that $h(\ein, id), \dotCTLCsubcontract \in \advCTLCcontract \in \environment'_{\channeledge{\ein}}.C_{adv}$.
        By the inference rule of $\DecideCo$ we know that if 
        $\funposition(\dotCTLCsubcontract) > 1$ then there is no $(\CTLCsubcontract)^* \in \advCTLCcontract$ with $\funposition((\CTLCsubcontract)^* ) < \funposition(\dotCTLCsubcontract)$. 
        This is contradicted by $h(\ein, id) \in  \advCTLCcontract$.
        
        Next, we consider the case $h(\eout, id) = \dotCTLCsubcontract$. 
        Then we know from $Inv_{\text{setup}}(\run', \partialEx', \fulltreeobj)$ that either 
        \begin{align*}
        &\exists (\CTLCcontract)' \in \environment'_{\channeledge{\ein}}.C_{en} :  h(\ein, id) \in (\CTLCcontract)' \text{ \textbf{or}} \\
        &\exists j' \hspace{-1pt} \leq \fundepth(\ein) \hspace{-1pt} : \hspace{-1pt} (Y,B)_{\walk''} \hspace{-1pt} \in \partialEx'  \land \fundepth((Y,B)_{\walk''}) \hspace{-1pt} = j'. 
        \end{align*}
        In the second case, the claim immediately follows since $\partialEx' \subseteq \partialEx$ and $\fundepth(\ein) = 1 + \fundepth(\eout)$.
        In the first case, we need to first show that 
        $\exists (\CTLCcontract)' \in \environment_{\channeledge{\ein}}.C_{en} :  h(\ein, id) \in (\CTLCcontract)'$.
        This trivially holds since $\environment_{\channeledge{\ein}}.C_{en}$ is the same as $\environment'_{\channeledge{\ein}}.C_{en}$ with the only exception that $\dotCTLCcontract$ got removed. 
        However, $\dotCTLCcontract$ could not have been $(\CTLCcontract)'$ (the contract containing $h(\ein, id)$) since 
        \begin{align*}
        \funsender(h(\ein, id)) &= \funsender (\ein) \\
        &\neq \funreceiver(\ein) = \honestuser = \funsender(\dotCTLCcontract)
        \end{align*}
        but if it would hold that $h(\ein, id) \in \dotCTLCcontract$ then we would need to have that $\funsender(h(\ein, id)) = \funsender(\dotCTLCcontract)$.
        We are hence, left to show that 
        $\environmentvectime < t_0 + (j+1) \Delta$. 
        Assume towards contradiction that 
        $\environmentvectime \geq t_0 + (j+1) \Delta$
        then also $\environmentvecprimetime = \environmentvectime > t_0 + j \Delta$
        and  because $j = \fundepth(\eout)$ consequently by 
        $Inv_{\textit{levels}}(\run', \partialEx', \fulltreeobj)$
        we get that 
        $$\nexists \channel, \CTLCcontract \in \environmentchprimeCTLCEnabled : h(\eout, id) \in \CTLCcontract \land \funfund(\CTLCcontract) \in \environmentchprimeReservedFunds$$
        However, since $\dotCTLCsubcontract = h(\eout, id)$ by inference rule of $\DecideCo$ (\Cref{ctlc:DecideCo}) we know that $h(\eout, id) \in \dotCTLCcontract \in \environmentchprimeCTLCEnabled$ for some channel $\channel$. By $Inv_{\text{deposits}}(\run', \partialEx', \fulltreeobj)$ we also have $\funfund(\dotCTLCcontract) \in \environmentchprimeReservedFunds$ which leads to a contradiction.
        
\noindent \underline{$Inv_{\text{deposits}}(\run, \partialEx', \fulltreeobj)$} follows directly from I.H. as \linebreak $\environmentchCTLCEnabled \cup \environmentchCTLCDecided $ only shrinks while $\environmentchReservedFunds$ remains unchanged.    

\noindent \underline{$Inv_{\text{setup}}(\run, \partialEx', \fulltreeobj)$} is shown by setting \newline $\eout = (B,X)_{\walk} \in \untree$  and $j = \fundepth(\eout)$ and $\ein = (Y,B)_{\walk'}$ with $\walk' = \concatvec{[(\ein)]}{\walk}$.
        Further, let either be
        \begin{equation}
            h(\eout, id) \in \CTLCcontract \in \environment_{\channeledge{\eout}}.C_{\textit{en}} 
            \label{eqn:claim-setup-one}
        \end{equation}
        or
        \begin{equation}
            (B, \advCTLCcontract) \in \environment_{\channeledge{\eout}}.C_{\textit{aut}} \, , h(\eout, id) \in \advCTLCcontract
            \label{eqn:claim-setup-two}
        \end{equation}
        If \Cref{eqn:claim-setup-one} holds then also 
        $$h(\eout, id) \in \CTLCcontract \in \environment'_{\channeledge{\eout}}.C_{\textit{en}} $$ 
        (since $\DecideCo$ only removes CTLCs from $ \environment'_{\channeledge{\eout}}.C_{\textit{en}}$)
        and similarly if \Cref{eqn:claim-setup-two} holds also 
        $$(B, \advCTLCcontract) \in \environment'_{\channeledge{\eout}}.C_{\textit{aut}} \, , h(\eout, id) \in \advCTLCcontract.
        $$
        Hence, in both cases we know by $Inv_{\text{setup}}(\run', \partialEx', \fulltreeobj)$ that either 
        \begin{equation}
            \exists (\CTLCcontract)' \in \environment'_{\channeledge{\ein}}.C_{en} :  h(\ein, id) \in (\CTLCcontract)'
            \label{eqn:claim-setup-three} 
        \end{equation}
        or 
        \begin{equation}
            \exists j' \leq \fundepth(\ein) : (Y,B)_{\walk''} \in \partialEx'  \land \fundepth((Y,B)_{\walk''}) = j'.
            \label{eqn:claim-setup-four}
        \end{equation}
        If \Cref{eqn:claim-setup-four} holds then the claim follows immediately since $\partialEx' \subseteq \partialEx$. 
        If \Cref{eqn:claim-setup-three} holds we distinguish whether $h(\ein, id) \in \dotCTLCcontract$. 
        If $h(\ein, id) \not \in \dotCTLCcontract$ then we know that $\dotCTLCcontract \neq (\CTLCcontract)'$ and hence 
        $h(\ein, id) \in (\CTLCcontract)' \in \environment_{\channeledge{\ein}}.C_{en}$.
        If $h(\ein, id) \in \dotCTLCcontract$ 
        then we know that $\funsender(\dotCTLCsubcontract) = \funsender(h(\ein, id)) = \funsender(\ein) = Y$ and
        $\funreceiver(\dotCTLCsubcontract) = \funreceiver(h(\ein, id)) = \funreceiver(\ein) = Y$ (since they are subcontracts of the same CTLC).
        Further, by \linebreak $Inv_\textit{tree}(\run', \partialEx', \fulltreeobj)$ we know that there is some $\e \in \untree$ such that $h(\e, id) = \dotCTLCsubcontract$ and $\funsecret_i(\dotCTLCsubcontract) = h_{sec}(\e, id)$ (since by \cref{ctlc:DecideCo} we know that $\dotCTLCsubcontract \in \dotCTLCcontract \in \environment'_{\channeledge{\ein}}.C_{en}$).
        Consequently, $\e = (Y, B)_{\walk''}$ for some $\walk''$ (since $\funsender(\e) = \funsender(\dotCTLCsubcontract) = Y$ and $\funreceiver(\e) = \funreceiver(\dotCTLCsubcontract) = B$). 
        By definition of \linebreak $\schecontracts{\cdot}$, $(\dotCTLCsubcontract, h_{sec}(\e, id)) \in \schecontracts{\run'}$ and so by $\run' \IDrelation \partialEx'$, also $\e \in \partialEx' \subseteq \partialEx$. 
        So, we are left to show that $\fundepth(\e) \leq j+1$. 
        Assume towards contradiction that $\fundepth(\e) > j+1$.
        By construction, we know that $\funtimeout(\dotCTLCsubcontract) = t_0 + \fundepth(\e) \Delta$, so in this case $\funtimeout(\dotCTLCsubcontract) >  t_0 + (j+1) \Delta$. 
        However, $\funtimeout(h(\ein, id)) = t_0 + \fundepth(\ein) \Delta = t_0 + j \Delta$, so 
        $\funtimeout(\dotCTLCsubcontract) >  \funtimeout(h(\ein, id))$. 
        But then, by construction 
        $\funposition(\dotCTLCsubcontract) > \funposition(h(\ein, id))$. 
        Since $h(\ein, id), \dotCTLCsubcontract \in \dotCTLCcontract \in \environment'_{\channeledge{\ein}}.C_{en}$ we know by 
        \cref{lemma:enabledsubsetofadvertised}
        that $h(\ein, id), \dotCTLCsubcontract \in \advCTLCcontract \in \environment'_{\channeledge{\ein}}.C_{adv}$.
        By the inference rule of $\DecideCo$ we know that if 
        $\funposition(\dotCTLCsubcontract) > 1$ then there is no $(\CTLCsubcontract)^* \in \advCTLCcontract$ with $\funposition((\CTLCsubcontract)^* ) < \funposition(\dotCTLCsubcontract)$. 
        This is contradicted by $h(\ein, id) \in  \advCTLCcontract$.
        
\noindent \underline{$Inv_{\text{tree}}(\run, \partialEx', \fulltreeobj)$} is not affected 
as the rule only removes elements from $\environmentvecprimeCTLCEnabled$ and only adds an element to set $\environmentvecCTLCDecided$, which previously was in $\environmentvecprimeCTLCEnabled$ and hence is guaranteed to satisfy the condition. 

\noindent \underline{$Inv_{\text{init-liveness}}(\run, \partialEx', \fulltreeobj)$} holds since we only need to consider the case of the newly added $\e \in \partialEx \, \backslash \, \partialEx'$.
        In this case by construction of $\partialEx$, we know that 
        \[
            \DecideCo(\dotCTLCcontract, h(\e, id), h_{sec}(e, id)) \in \actions{\run}. 
        \]

\noindent \underline{$Inv_{\text{auth}}(\run, \partialEx', \fulltreeobj)$}
is implied by \newline $Inv_{\text{init-liveness}}(\run, \partialEx', \fulltreeobj)$ combined with 
the fact that 
$$\exists j' \leq \fundepth(\ein) : (Y,B)_{\walk''} \in \partialEx  \land \fundepth((Y,B)_{\walk''}) = j')$$
from this invariant cannot happen since this implies that $B$ revealed $\funsecret((Y,B)_{\walk''})$, see $Inv_{\text{tree}}$. This only happens if $t_0 < \environmentvectime$ according to \cref{eq:CTLC-execution} which would be a contradiction based on the reasoning we applied in \cref{thingsthingsandevenmorethings}. 

    \noindent \textbf{Case $\action = \CoEx ( \CTLCcontract, \CTLCsubcontract )$ :} 
    
    \noindent We show that 
    \begin{enumerate}[a\hspace{1pt}]
        \item \hspace{-5pt}) $\run \IDrelation \partialExFamilyprime$ and
        \item \hspace{-5pt}) $Inv(\run, \partialEx', \fulltreeobj)$ hold.
    \end{enumerate}
    As $\partialExFamilyprime$ stays the same for $\run'$ and $\run$ we know by I.H. that all $\partialEx'$ are consistent. 

    \noindent a) The relation $\run \IDrelation \partialExFamilyprime$ also follows from I.H. given that we can show for all $id$
    \begin{align*}
        \excontracts{\run} &= \excontracts{\run'} \text{ and} \\
        \schecontracts{\run} &= \schecontracts{\run'} .
    \end{align*}
    We look at the $id$ with $\advCTLCcontract \in \batch$ specifically since this holds for all others by I.H..
    By definition we have $\excontracts{\run'} \subseteq \excontracts{\run}$. We show that also $\excontracts{\run} \subseteq $ \linebreak $\excontracts{\run'}$ holds.

For $\environmentvecprime = \funlastEnv(\run')$ all $\environmentchprime$ remain untouched except for one specific $\channel$. Let this $\channel$ be fixed then the following statements are true:
    \begin{align*}
         \environmentchRevealedSecrets = \environmentchprimeRevealedSecrets, \,
         \environmentchCTLCEnabled = \environmentchprimeCTLCEnabled, \,
         \environmentchCTLCAdvertised &= \environmentchprimeCTLCAdvertised, \\
         \environmentchCTLCDecided \subsetneq \environmentchprimeCTLCDecided, \,
         \environmentchReservedFunds \subsetneq \environmentchprimeReservedFunds, \,
         \environmentchtime &= \environmentchprimetime
    \end{align*}
    Since 
    \begin{align*}
    &\excontracts{\run} \\
    & = \excontracts{\run'} \cup \{ (\CTLCsubcontract, \funsecret_i(\CTLCsubcontract)) \}
    \end{align*}
    we only need to show that $(\CTLCsubcontract, \funsecret_i(\CTLCsubcontract)) \in \excontracts{\run'}$. 
    From the $\CoEx$ rule (\ref{ctlc:CoEx}) combined with \cref{lemma:claimedcontractsoneitem} we know that 
    $$ \CTLCcontract = \{ \CTLCsubcontract \} \in \environmentvecprimeCTLCDecided $$ 
    and thus,
    $
        \DecideCo (\CTLCcontract ,\CTLCsubcontract,  \funsecret_i(\CTLCsubcontract)) \in \actions{\run'} 
    $
    is implied. 
    Consequently, $\CTLCsubcontract \in \excontracts{\run'}$. 

    We next show that $\schecontracts{\run} = \schecontracts{\run'}$. 
    To this end, we first show that 
    $\schecontracts{\run'} \subseteq \schecontracts{\run}$.
    Assume towards contradiction that there is some $$(\dotCTLCsubcontract, \funsecret_i(\dotCTLCsubcontract)) \in \schecontracts{\run'} \backslash \schecontracts{\run}.$$
    Then this can only be the case if $\run' \overset{\action'}{\longrightarrow} \environmentvec^*$ for some $\environmentvec^*$ and $\action' = \DecideCo(\dotCTLCcontract, \dotCTLCsubcontract, \funsecret_i(\dotCTLCsubcontract))$ 
    but there is no $\environmentvec^*$ such that $\run \overset{\action'}{\longrightarrow} \environmentvec^*$. 
    By the definition of the $\CoEx$ rule (\Cref{ctlc:CoEx}) this could only be the case if $\funfund(\dotCTLCcontract) = \funfund(\CTLCcontract)$ (so the action is not possible anymore because the required funds got removed from $\environmentvecprimeAvailableFunds$).
    This is ruled out by the uniqueness of funds from \cref{def:initialconfig}.
    
    We next show that  
    $\schecontracts{\run} \subseteq \schecontracts{\run'}$.
    Assume towards contradiction that there is some $$(\dotCTLCsubcontract, \funsecret_i(\dotCTLCsubcontract)) \in \schecontracts{\run} \backslash \schecontracts{\run'}.$$ 
    This could only be the case if $\run \overset{\action'}{\longrightarrow} \environmentvec^*$ for some $\environmentvec^*$ and 
    $$\action' = \DecideCo(\dotCTLCcontract, \dotCTLCsubcontract, \funsecret_i(\dotCTLCsubcontract))$$
    but there is no $\environmentvec^*$ such that $\run' \overset{\action'}{\longrightarrow} \environmentvec^*$. 
    However, this cannot be the case since the $\CoEx$ rule only removes elements from $\environmentvecprimeCTLCDecided$ and $\environmentvecprimeAvailableFunds$.
    Note in particular 
    $$(\CTLCsubcontract, \funsecret_i(\CTLCsubcontract)) \in \schecontracts{\run'}$$
     by \cref{lemma:anothersomelemmainG}. 

    \noindent b) From the aforementioned changes in $\environmentvec$ compared to $\environmentvecprime$ only $Inv_{levels}$, $Inv_{deposits}$ and $Inv_{tree}$ are affected.
    \begin{itemize}
        \item $Inv_{\text{levels}}(\run, \partialEx', \fulltreeobj)$ holds since the preconditions have not changed and $\environmentchReservedFunds \subseteq \environmentchprimeReservedFunds$.
        \item $Inv_{\text{deposits}}(\run, \partialEx', \fulltreeobj)$ is implied by the $\CoEx$ rule (\ref{ctlc:CoEx}). Let $\CTLCcontract \in \environmentchprimeCTLCDecided \backslash \environmentchCTLCDecided$ then
        \[
            \environmentchprimeReservedFunds \backslash \environmentchReservedFunds = \{ \funfund(\CTLCcontract) \} .
        \]
        Therefore, it still holds for all other contracts based on \linebreak $Inv_{\text{deposits}}(\run', \partialEx', \fulltreeobj)$ and the uniqueness of funds (\cref{def:initialconfig}).
        \item $Inv_{\text{tree}}(\run, \partialEx', \fulltreeobj)$ is also implied by I.H. because only $\environmentvecCTLCDecided$ got smaller. With $\environmentvecCTLCDecided \subseteq \environmentvecprimeCTLCDecided$ this invariant applies to fewer contracts now and is unchanged for the remaining. Therefore, it still holds. 
    \end{itemize}
    
    \noindent \textbf{Case $\action = \elapse \, \delta$:} 
    
    \noindent We show that 
    \begin{enumerate}[a\hspace{1pt}]
        \item \hspace{-5pt}) $\run \IDrelation \partialExFamilyprime$,
        \item \hspace{-5pt}) $Inv(\run, \partialEx', \fulltreeobj)$ hold.
    \end{enumerate}
    As $\partialExFamilyprime$ stays the same for $\run'$ and $\run$ we know by I.H. that all $\partialEx'$ are consistent. 
    
    \noindent a) For all $id$ the equality 
    \[
        \excontracts{\run} = \excontracts{\run'}.
    \]
    directly follows from its definition since $\action$ is not a $\DecideCo$ action. To show 
    \[
        \schecontracts{\run} = \schecontracts{\run'}.
    \]
    for all $id$
    we look at the rule for $\DecideCo$ (\ref{ctlc:DecideCo}).
    Here, we see that it does not depend on $\environmentvectime$.
    Hence, since none of the actions relevant for $\schecontracts{\run}$ are time-dependent, this set is not influenced by $\action = \elapse \, \delta$. 
    
    \noindent b) The only invariants affected by $\environmentvectime > \environmentvecprimetime$
    are
    $Inv_{\text{levels}}$, $Inv_{\text{liveness}}$, $Inv_{\text{init-liveness}}$, and $Inv_{\text{auth}}$. The following arguments apply to all $\partialEx' \in \partialExFamilyprime$. 
    
    \noindent \underline{$Inv_{\text{levels}}(\run, \partialEx', \fulltreeobj)$}: 
    By Definition \ref{def:adversarystrategy} of adversarial strategies, we know that an $\elapse \, {\delta}$ action can only be executed if all honest users proposed an action $\elapse \, {\delta_i}$ such that $\delta_i \geq \delta$. 
    So, in particular, we know that $\Bstrategy(\run') = \elapse \, \delta_B $ with $\delta_B \geq \delta$ from the honest user strategy (\ref{def:honeststrategy}). 
    Now let $e \in \untree$ with $B = \funsender(\e)$ and 
    \begin{equation} \label{eq:starstarstar}
    \environmentvectime > t_0 + \fundepth(\e) \Delta 
    \end{equation}
    We show that $\nexists \channel, \CTLCcontract \in \environmentchCTLCEnabled : h(\e, id) \in \CTLCcontract$ with $\funfund(\CTLCcontract) \in \environmentchReservedFunds$.
We proceed by case distinction on $\environmentvecprimetime > t_0 + \fundepth(\e) \Delta$.

\noindent \underline{$\environmentvecprimetime > t_0 + \fundepth(\e) \Delta$ :} In this case, the claim immediately follows from 
$Inv_{\text{levels}}(\run', \partialEx', \fulltreeobj)$, nothing has changed.

\noindent \underline{$\environmentvecprimetime < t_0 + \fundepth(\e) \Delta$ :} 
In this case, by definition of $\Bstrategy$, we know that 
$\delta_B = t_0 + j \Delta - \environmentvecprimetime$
for the minimal $j$ such that $t_0 + j \Delta - \environmentvecprimetime > 0$ (since the honest user strategy makes time progress only to the next time step).
If $j \leq \fundepth(\e)$, then $\environmentvectime \leq t_0 + \fundepth(\e) \Delta$ since 
$\environmentvectime 
= \environmentvecprimetime + \delta 
\leq \environmentvecprimetime + \delta_B 
= \environmentvecprimetime + t_0 + j \Delta - \environmentvecprimetime 
\leq t_0 + \fundepth(\e) \Delta$.
With this, we have a contradiction to (\ref{eq:starstarstar}).
If $j > \fundepth(\e)$ then 
$t_0 + \fundepth(\e) \Delta - \environmentvecprimetime \leq 0$
and so $\environmentvecprimetime \geq t_0 + \fundepth(\e) \Delta$ immediately contradicting the assumption of the case. 

\noindent \underline{$\environmentvecprimetime = t_0 + \fundepth(\e) \Delta$ :} 
In this case, we show that $\Bstrategy$ would schedule an action different from $\elapse~ \delta$ if 
$$\exists \channel, \CTLCcontract \in \environmentchCTLCEnabled : h(\e, id) \in \CTLCcontract$$
with $\funfund(\CTLCcontract) \in \environmentchReservedFunds$ holds.
Assume towards contradiction that 
$\exists \channel, \CTLCcontract \in \environmentchCTLCEnabled : h(\e, id) \in \CTLCcontract$ with $\funfund(\CTLCcontract) \in \environmentchReservedFunds$. 
By definition of $h(\e, id)$ (\ref{def:hMapping}) we have 
\[
    \timeout(h(\e, id)) = t_0 + \fundepth(\e) \Delta .
\]
Consequently it holds that $\timeout(h(\e, id)) \leq \environmentvecprimetime$.
Based on \cref{lemma:atleastoneitemincontracts} we distinguish the cases whether $\size{\CTLCcontract} = 1$ or $\size{\CTLCcontract} > 1$.

If $\size{\CTLCcontract} = 1$ by definition of $\widetilde{\Sigma}_{B}^{\e}$ (\Cref{eq:CTLC-timeout}), we would have that $\widetilde{\Sigma}_{B}^{\e} (\run') \ni \refund \, \CTLCcontract$
and so $\Bstrategy(\run') \not \ni \elapse(\delta_B)$.

If $\size{\CTLCcontract} > 1$, 
we show that also there is no $\dotCTLCsubcontract \in \CTLCcontract$ with 
$$\funposition(\dotCTLCsubcontract) < \funposition(h(\e, id)) .$$
If there would be such a contract, we would know by $Inv_\textit{tree}$ that there is some $\e' \in \untree$ such that $\dotCTLCsubcontract = h(\e', id)$. 
Further, we would know that 
$$\funsender(\e') = \funsender(\dotCTLCsubcontract) = \funsender(\e) = B .$$ 
By well-formedness of CTLCs (\cref{ctlc:well-formedness}) we would further know that also 
$\timeout(\dotCTLCsubcontract) <  \timeout(h(\e, id))$
and so consequently that $\fundepth(\e') < \fundepth(\e)$. 
So also $\environmentvecprimetime > t_0 + \fundepth(\e') \Delta$
Applying $Inv_{\text{levels}}(\run', \partialEx', \fulltreeobj)$ would immediately contradict that $h(\e', id) = \dotCTLCsubcontract \in \CTLCcontract \in \environmentchprimeCTLCEnabled = \environmentchCTLCEnabled $.
Consequently, by the definition of $\widetilde{\Sigma}_{B}^{\e'}$ (\Cref{eq:CTLC-timeout}), we would have that $\widetilde{\Sigma}_{B}^{\e'} (\run') \ni \timeoutsc(\CTLCcontract, h(\e', id))$
and so $\Bstrategy(\run') \not \ni \elapse(\delta_B)$.

\noindent \underline{$Inv_{\text{liveness}}(\run, \partialEx', \fulltreeobj)$}: 
    
\noindent Let 
    \begin{align*}
        &\eout = (B,X)_{\walk} \in \untree, j = \fundepth(\eout), \\
        &\DecideCo(\CTLCcontract, h(\eout, id), h_{sec}(\eout, id)) \in \funactions(\run) \text{ and } \\
        &\ein = (Y,B)_{\walk'} \in \untree \text{ with } \walk' = \concatvec{\unitvec{(Y,B)_{\walk}}}{\walk}.
    \end{align*}
According to the preconditions of $Inv_{\text{liveness}}(\run, \partialEx', \fulltreeobj)$. 
We show that either
    \begin{align*}
        (i) \, &\exists j' \leq j+1, \walk'' : (Y,B)_{\walk''} \in \partialEx' \\
                &\land \fundepth((Y,B)_{\walk''}) = j' \text{ \textbf{or} }\\
        (ii)\, &\exists \CTLCcontract \in \environment_{\channeledge{\ein}}.C_{en} : h(\ein, id) \in \CTLCcontract \\
                &\land \environmentvectime < t_0 + (j+1)\Delta.
    \end{align*}
From $Inv_{\text{liveness}}(\run', \partialEx', \fulltreeobj)$ we know that either 
    \begin{align*}
    \text{(a) } &\exists j' \leq j+1, \walk'' : (Y,B)_{\walk''} \in \partialEx' \\
        &\land \fundepth((Y,B)_{\walk''}) = j' \text{ or }\\
    \text{(b) } &\exists \CTLCcontract \in \environment'_{\channeledge{\ein}}.C_{en} : h(\ein, id) \in \CTLCcontract \\
        &\land \environmentvecprimetime < t_0 + (j+1)\Delta.
    \end{align*}
In the case of (a), (i) is implied immediately as they are the same.  
In the case of (b), we also have 
\[
\forall \environmentch \in \environmentvec: \environmentchCTLCEnabled = \environmentchprimeCTLCEnabled ,
\]
from the \textit{elapse} $\delta$ rule (\ref{ctlc:elapsetime}). Thus (ii) holds whenever $$\environmentvectime < t_0 + (j+1)\Delta .$$
We now show that whenever $\environmentvectime \geq t_0 + (j+1)\Delta$ statement (i) is implied, which concludes the proof. 
For this, we argue that $\ein \in \partialEx'$ which fulfills (i) with $j' = j + 1$ and $\walk' = \walk''$ except if $\funnodupl (\ein, \run)$ is not true. The case in which $\funnodupl (\ein, \run)$ is not true will be dealt with individually at the end of the proof. 
    
To do so, due to $Inv_{\text{in-secrets}}(\run, \partialEx', \fulltreeobj)$, it is sufficient to show that $\funsecret(\ein) \in \environmentvecRevealedSecrets$. 

Assume towards contradiction that $\funsecret(\ein) \notin \environmentvecRevealedSecrets$. 
We will show that in this case $\Bstrategy(\run') \not \ni \elapse(\delta_B)$, which would contradict that $\action = \elapse(\delta)$.
From (b) we know that $$\exists \CTLCcontract \in \environment'_{\channeledge{\ein}}.C_{en} : h(\ein, id) \in \CTLCcontract .$$
For this $\CTLCcontract$ we continue by case distinction based on the existence of another sub-contract with lower timeout, in other words we distinguish the cases in which $h(\ein, id)$ is the current top-contract in $\CTLCcontract$ and in which it is not. 

Assume that there \underline{exists} a subcontract $\CTLCsubcontract \in \CTLCcontract$ with 
\begin{equation} \label{somehelperthingshereandthere}
    \timeout(\CTLCsubcontract) < \timeout(h(\ein, id)).
\end{equation}
From $Inv_{\text{tree}}(\run', \partialEx', \fulltreeobj)$ we know $\exists \e \in \untree$ with $\CTLCsubcontract = h(\e, id)$ and consequently $\fundepth(\e)=: j_{\e} < j + 1$
since we know by construction of $h$ (\ref{def:hMapping}) that $\funtimeout(\CTLCsubcontract) = t_0 + j_{\e} \Delta$.
Further, we have that $\environmentvectime = \environmentvecprimetime + \delta$ and $\delta \leq \Delta$ and $\environmentvectime \geq t_0 + (j+1)\Delta$.
So it holds that $\environment'.t = \environmenttime - \delta \geq  \environmenttime - \Delta \geq  t_0 + \Delta j \geq t_0 + j_{\e} \Delta$. 
Consequently, we would have that $\widetilde{\Sigma}_{B}^{\eout} (\run') \ni \timeoutsc(\CTLCcontract, h(\eout, id))$
and so $\Bstrategy(\run') \not \ni \elapse(\delta_B)$, see (\ref{eq:CTLC-timeout}).

Assume that there \underline{exists no} subcontract $\CTLCsubcontract \in \CTLCcontract$ fulfilling \eqref{somehelperthingshereandthere}.
Since $\DecideCo(h(\eout, id)) \in \funactions(\run)$, we know that $\DecideCo(h(\eout, id)) \in \funactions(\run')$ and hence also (since $\run' \IDrelation \partialEx'$) that $\eout \in \partialEx'$.
From $Inv_{\text{secrets}}(\run', \partialEx', \fulltreeobj)$, we know that for all 
$\e' \in \funonPathtoRoot(\untree, \eout)$ it holds
$$\funsecret(\e', id) \in \environment_{\channeledge{\eout}}.S_{rev} .$$ 
Therefore $\chContract(\ein, \run) = 2$ and if \linebreak $\funnodupl (\ein, \run)$ is true all preconditions for $\widetilde{\Sigma}_{B}^{\ein}(\run') = \revealSecret_{\channel}(\funsecret(\ein, id))$ with $\channel = \channeledge{\ein}$, see (\ref{eq:CTLC-execution}), are fulfilled. Hence $\Bstrategy(\run')$ outputs this action instead of $\elapse(\delta_B)$. 

If $\funnodupl (\ein, \run)$ is not true we have
\begin{align*}
    \exists \e'' \intree \untree : \e'' \neq \ein
    & \land \funsender(\e'') = \funsender(\ein) \\
    & \land \funreceiver(\e'') = \funreceiver(\ein) \\
    & \land \funsecret(\e'', id) \in \environmentvecRevealedSecrets . 
\end{align*}
From $Inv_{\text{in-schedule}}(\run', \partialEx', \fulltreeobj)$ we know then $\e'' \in \partialEx'$.
        Since $\run' \IDrelation \partialEx'$, we know that either $\alpha' \in \run'$ or $\run' \overset{\alpha'}{\longrightarrow} \environmentvec^*$ for some $\environmentvec^*$ for $\alpha' = \DecideCo(\dotCTLCcontract, h(\e'', id), h_{sec}(\e'', id))$. 
        If $\alpha' \in \run'$ then we know that there was some prefix $\run^*$ of $\run'$ such that \linebreak $\run^* \overset{\alpha'}{\longrightarrow} \environmentvec^\dagger$ for some $\environmentvec^\dagger$ and hence by the inference rule of $\DecideCo$ (\Cref{ctlc:DecideCo}), 
        also $h(\e'', id) \in \dotCTLCcontract$. Since $\CTLCcontract$ has the same identifier as $\dotCTLCcontract$, it cannot be enabled after $\alpha'$ due to \cref{lemma:againanotherlemmainG}. 
        So we are left to consider the case that $\run' \overset{\alpha'}{\longrightarrow} \environmentvec^*$. 
        In this case by the inference rule of $\DecideCo$ (\Cref{ctlc:DecideCo}), we know that all preconditions for the execution of $h(\e'', id)$ are satisfied.
        In addition, using $Inv_{in-schedule}$ for $\funsecret(\e'', id) \in \environmentvecRevealedSecrets = \environmentvecprimeRevealedSecrets$, we can conclude that the corresponding outgoing edges of $\e''$ have been claimed in $\run'$.
        We further can show that $\environmentvecprimetime < \timeout(h(\e'', id))$ because by $Inv_\textit{init-liveness}$ we know that either $\alpha' \in \run'$ (leading to a contradiction as shown above) or $\environmentvecprimetime < \timeout(h(\e'', id))$.
        Therefore, $\chContract(\e'', \run) = 2$ and all preconditions for $\widetilde{\Sigma}_{B}^{\e''}(\run') = \alpha'$, see (\ref{eq:CTLC-execution}), are fulfilled. Hence, $\Bstrategy(\run')$ outputs this $\alpha'$ instead of $\elapse(\delta_B)$. 

\noindent \underline{$Inv_{\text{init-liveness}}(\run, \partialEx', \fulltreeobj)$}:

\noindent Let $\e \in \partialEx'$. 
From the I.H. we know that either 
\begin{equation} \label{eq:elapse-init-liveness-one}
    \exists \CTLCcontract : \DecideCo(\CTLCcontract, h(\e, id), h_{sec}(\e, id)) \in \funactions(\run') \text{ or}
\end{equation}
\begin{equation} \label{eq:elapse-init-liveness-two}
    \environmentvecprimetime < \timeoutsc(h(\e, id)). 
\end{equation}
If \cref{eq:elapse-init-liveness-one} holds, the claim immediately follows because if 
$\DecideCo(\CTLCcontract, h(\e, id), h_{sec}(\e, id)) \in \funactions(\run')$
 also 
 $$\DecideCo(\CTLCcontract, h(\e, id), h_{sec}(\e, id)) \in \funactions(\run)$$ (since $\actions{\run'} \subseteq \actions{\run}$).
Assume that \cref{eq:elapse-init-liveness-two} holds. 
By I.H. we also know that $\run'  \IDrelation \partialEx'$. 
Consequently, 
\begin{equation*}
\scalebox{0.9}{$
(h(\e, id), h_{sec}(\e, id)) \in \excontracts{\run'} \cup \schecontracts{\run'} .
$}
\end{equation*}
We distinguish the cases 
$$(h(\e, id), h_{sec}(\e, id)) \in \excontracts{\run'} \text{ and}$$ 
$$(h(\e, id), h_{sec}(\e, id)) \in \schecontracts{\run'} .$$ 
If $(h(\e, id), h_{sec}(\e, id)) \in \excontracts{\run'}$ then by Definition (\ref{eq:excontracts})
$$\DecideCo(\CTLCcontract, h(\e, id), h_{sec}(\e, id)) \in \actions{\run'} .$$
Next, we consider the case that 
$$(h(\e, id), h_{sec}(\e, id)) \in \schecontracts{\run'} .$$ 
By the definition of $\schecontracts{\run'}$ we hence know that either 
$\DecideCo(\CTLCcontract, h(\e, id), h_{sec}(\e, id)) \in \funactions(\run')$ or 
$\run' \overset{\alpha'}{\longrightarrow} \environmentvec^*$ for some $\environmentvec^*$,
$\alpha' = \DecideCo(\CTLCcontract, h(\e, id), h_{sec}(\e, id))$ 
 and some $\CTLCcontract$.
For the first case, it again follows from $\actions{\run'} \subseteq \actions{\run}$. 

In the second case, we know that all preconditions for executing $\action'$ are met. 
We show that in this case, the honest user strategy would schedule $\alpha'$ instead of $\elapse~ \delta$. 
Using $Inv_{in-schedule}$ for $\funsecret(\e, id) \in h_{sec}(\e, id) \subseteq \environmentvecprimeRevealedSecrets$, we can conclude that the corresponding outgoing edges of $\e$ have been claimed in $\run'$.
Therefore, $\chContract(\e'', \run) = 2$ and all preconditions for $\widetilde{\Sigma}_{B}^{\e}(\run') = \alpha'$, see (\ref{eq:CTLC-execution}), are fulfilled. Hence, $\Bstrategy(\run')$ outputs this $\alpha'$ instead of $\elapse(\delta_B)$. 

\noindent \underline{$Inv_{\text{auth}}(\run, \partialEx', \fulltreeobj)$} is implied by I.H. combined with the reasoning we applied in \cref{thingsthingsandevenmorethings}. 

\end{proof}

\paragraph{Protocol Security}

\begin{theorem}[Protocol Security] \label{th:protocolsecurity}
Let $B$ be an honest user, $\treeobj$ be a set of tuples of the form $\fulltreeobj$, which is well-formed, and $\Bstrategy$ the honest user strategy for $B$ executing $\treeobj$. Let $\Astrategy$ be an arbitrary adversarial strategy. Then for all final runs $\run$ with $\Bstrategy , \Astrategy \results \run$, starting from an initial environment, with $\environmentvec := \funlastEnv(\run)$, and for all $\fulltreeobj \in \treeobj$ there exists $\widetilde{\outcome}_{\treeid} \in \outcomeset{\untree} \cup \{ \emptyset \}$ s.t. for their family 
$$
    \wideoutcomeFamily := \{ \widetilde{\outcome}_{\treeid} \mid \fulltreeobj \in \treeobj \}
$$
it holds
    \begin{align*}
         & \forall \treeid, \e \in \widetilde{\outcome}_{id}: \e \in \hatpartialEx \\
         & \Rightarrow \exists \CTLCcontract: \DecideCo(\CTLCcontract, h(\e, id), h_{sec}(\e, id)) \in \actions{\run}
    \end{align*}
and
    \begin{align*}
         &\forall \DecideCo (\CTLCcontract ,\CTLCsubcontract,  \funsecret_i(\CTLCsubcontract)) \in \actions{\run}: \\
         &B \in \funusers(\CTLCcontract) \Rightarrow \exists \treeid, \e \in  \widetilde{\outcome}_{id}: \CTLCsubcontract = h(\e, id) \\
         & \qquad \qquad \qquad \qquad \land \funsecret_i(\CTLCsubcontract) =  h_{sec}(\e, id).
    \end{align*}
\end{theorem}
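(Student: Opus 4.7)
The strategy is to reduce the theorem to \Cref{th:mainSecurity}, which has already been established with considerable technical effort, and then lift the partial-execution witnesses $\partialEx$ it provides to outcomes in $\outcomeset{\untree}$. Concretely, for a final run $\run$ with $\Bstrategy,\Astrategy \results \run$ and $\environmentvec := \funlastEnv(\run)$, I would first invoke \Cref{th:mainSecurity} to obtain, for every $\treespec = \fulltreeobj \in \treeobj$, a consistent $\partialEx \subseteq \hatpartialEx$ such that $\run \IDrelation \partialExFamily$ and all ten invariants of $Inv(\run,\partialEx,\treespec)$ hold. By the very definition of $\IDrelation$ (\Cref{eq:IDrelation1,eq:IDrelation2}) together with the definitions of $\excontracts{\run}$ and $\schecontracts{\run}$, the edges in $\partialEx$ that involve $B$ stand in one-to-one correspondence with the $\DecideCo$ actions in $\actions{\run}$ that touch $B$, once we know that every ``schedulable'' claim in $\schecontracts{\run}$ was actually executed. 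This last point is where the finality of $\run$ is critical: since $\environmentvectime > t_0 + \fundepth(\untree)\Delta$ and honest claims are persistent by \Cref{def:participantstrategy}, the honest user cannot leave a pending claim unresolved; combined with $Inv_{\text{init-liveness}}$, every $(h(\e,id),h_{sec}(\e,id)) \in \schecontracts{\run}$ with $\funreceiver(\e) = B$ must in fact have been played as a $\DecideCo$ in $\run$.

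Next I would define the candidate outcome $\widetilde{\outcome}_{id}$ by taking $\partialEx$ and saturating it to a partial-tree outcome: for every $\e \in \partialEx$, add all edges on the walk from $\e$ to the root of $\untree$ (this places $\widetilde{\outcome}_{id}$ into $\partialtreeoutcomes(\untree)$ as required by \Cref{def:outcomeset}), and for tree identifiers where $\partialEx = \emptyset$ and $B$ is not forced to play the root role in a non-trivial way, set $\widetilde{\outcome}_{id} := \emptyset$ (this is exactly the second disjunct of the theorem, covering the case where the adversary aborted the setup before any ingoing edge of $B$ was claimed). I then need to verify that $\widetilde{\outcome}_{id} \in \outcomeset{\untree}$ whenever it is nonempty, i.e. check the three predicates from \Cref{def:outcomeset}. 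Predicate $\predNoDupP$ is immediate from the consistency of $\partialEx$ (no two edges in $\partialEx$ touching $B$ have the same sender/receiver pair). Predicate $\predEagerPullP$ follows from $Inv_{\text{liveness}}$ together with finality: whenever an outgoing edge $(B,Y)_{\walk} \in \widetilde{\outcome}_{id}$ is present, the invariant's disjunction collapses—its second branch $\environmentvectime < t_0 + (j+1)\Delta$ is ruled out by $\environmentvectime > t_0 + \fundepth(\untree)\Delta$, forcing the first branch, which is exactly the eager-pull conclusion. Predicate $\predHonestRootP$ is the most delicate and handled by case distinction on whether any depth-one ingoing edge of $B$ was claimed; if yes, then by the structure of $\Bstrategy$ (specifically $\funisIngoing$ in \Cref{fun:isingoing23} and the $\elapse$ discipline of \Cref{def:honeststrategy}) together with $Inv_{\text{setup}}$ and $Inv_{\text{init-liveness}}$, all depth-one ingoing edges must also have been claimed; if no, we take $\widetilde{\outcome}_{id} = \emptyset$.

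Finally, I would close the argument by verifying the two bi-implications of the theorem statement. The forward direction ($\e \in \widetilde{\outcome}_{id}$ with $B \in \funusers(\e)$ implies the corresponding $\DecideCo$ is in $\actions{\run}$) follows because such an $\e$ lies in $\partialEx$ by construction of $\widetilde{\outcome}_{id}$, and then $\run \IDrelation \partialExFamily$ together with the finality argument above places the corresponding claim into $\excontracts{\run} \cup \schecontracts{\run} = \actions{\run}\text{-}\mathrm{claims}$. The backward direction follows by unfolding $\excontracts{\run}$ and $\schecontracts{\run}$: every $\DecideCo(\CTLCcontract,\CTLCsubcontract,\funsecret_i(\CTLCsubcontract)) \in \actions{\run}$ with $B \in \funusers(\CTLCcontract)$ sits in $\comcontracts{\run}$, so by $\run \IDrelation \partialExFamily$ there is some $\e \in \partialEx \subseteq \widetilde{\outcome}_{id}$ with $\CTLCsubcontract = h(\e,id)$ and $\funsecret_i(\CTLCsubcontract) = h_{sec}(\e,id)$.

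The main obstacle I expect is the interaction between $\predHonestRootP$ and setup-abort scenarios: making the $\emptyset$ fallback watertight requires showing that whenever some depth-one ingoing edge of $B$ was \emph{not} claimed in a final run, then \emph{no} $\DecideCo$ touching $B$ occurred on this tree at all (so that $\widetilde{\outcome}_{id} = \emptyset$ still satisfies the backward direction). This uses $Inv_{\text{setup}}$ and the fact that honest $B$ only authorizes or enables outgoing contracts after all corresponding ingoing contracts have been enabled, so no outgoing edge of $B$ can be claimed either—yielding $\partialEx = \emptyset$ and hence the trivial correspondence.
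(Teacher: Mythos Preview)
Your plan is essentially the paper's proof: invoke \Cref{th:mainSecurity}, saturate each $\partialEx$ to a partial-tree outcome $\widetilde{\outcome}_{id}$, fall back to $\emptyset$ when $\partialEx=\emptyset$, verify the three predicates of \Cref{def:outcomeset}, and read off the two implications from $\run \IDrelation \partialExFamily$ together with $Inv_{\text{init-liveness}}$ and finality. Two points where your write-up drifts from what actually works:

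\emph{The $\predHonestRootP$ case.} You cite $Inv_{\text{setup}}$ and $Inv_{\text{init-liveness}}$, but neither gives you that \emph{all} depth-$1$ ingoing edges of $B$ land in $\partialEx$ once one of them does. The paper's argument is more operational: from a depth-$1$ edge $\e^*\in\partialEx$ one gets (via $Inv_{\text{secrets}}$ and \Cref{lemma:somelemmaaboutrevealsecret}) a prefix $\run^\dagger$ at which $B$ actually scheduled $\revealSecret$ for $\e^*$; at that same prefix one checks $\chContract(\e,\run^\dagger)=2$ for every other depth-$1$ edge $\e$ (your observation about $\funisIngoing$ is exactly right here), then argues $\funnodupl(\e,\run^\dagger)$ via \Cref{lemma:only-top-lvl-disabled}, and finally uses persistence plus the $\elapse$ discipline to force $\funsecret(\e,id)$ into $\environmentvecRevealedSecrets$, whence $\e\in\partialEx$ by $Inv_{\text{in-secrets}}$. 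The invariant you want is $Inv_{\text{in-secrets}}$, not $Inv_{\text{setup}}$.

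\emph{The $\emptyset$ fallback.} Your ``main obstacle'' paragraph is over-engineered. The paper does not case-split on whether a depth-$1$ edge was claimed; it case-splits on $\partialEx=\emptyset$ versus $\partialEx\neq\emptyset$. Consistency of $\partialEx$ (closure under paths to the root) already guarantees that $\partialEx\neq\emptyset$ forces some depth-$1$ edge into $\partialEx$, and conversely $\partialEx=\emptyset$ together with $\run\IDrelation\partialExFamily$ immediately rules out any $\DecideCo$ touching $B$ on that tree---no detour through $Inv_{\text{setup}}$ is needed.
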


\begin{proof}
    By \cref{th:mainSecurity}, for every $\fulltreeobj \in \treeobj$ it exists a consistent $\partialEx \subseteq \untree$ s.t. for their family $\partialExFamily$ we have \linebreak $\run \IDrelation \partialExFamily$. By \cref{eq:IDrelation1} we then have 
    \begin{align} \label{somestuffweneedsomewhereelse}
        \forall & \DecideCo (\CTLCcontract ,\CTLCsubcontract,  \funsecret_i(\CTLCsubcontract)) \in \actions{\run}: \\
        &\exists \partialEx, \e \in \partialEx : \nonumber \\
        &\CTLCsubcontract = h(\e, id) \land \funsecret_i(\CTLCsubcontract) = h_{sec}(\e, id). \nonumber
    \end{align}
    Since $\run$ is final, the $\partialEx$ stay constant for all extensions of $\run$. 
    Let $\fulltreeobj \in \treeobj$ be given. For this, we now show
    \begin{align*}
        \exists \widetilde{\outcome}_{id}\in \outcomeset{\untree} \cup \{ \emptyset \} : \hatpartialEx \cap \widetilde{\outcome}_{id}= \partialEx .
    \end{align*}
    For this, we construct 
    \begin{align*}
        \widetilde{\outcome}_{id}:= \{ \e \in \untree \mid \exists \e' \in \partialEx : \e \in \funonPathtoRoot(\untree, \e') \} ,
    \end{align*}
    which, since $\partialEx$ is consistent, fulfills $\hatpartialEx \cap \widetilde{\outcome}_{id}= \partialEx$. 
    Therefore, it is left to show that
    $
        \widetilde{\outcome}_{id}\in \outcomeset{\untree} \cup \{ \emptyset \}
    $
    holds. 
    In case $\partialEx = \emptyset$ we also get $\widetilde{\outcome}_{id}= \emptyset$ by the above definition. 
    
    In case $\partialEx \neq \emptyset$ it follows immediately $\widetilde{\outcome}_{id}\neq \emptyset$. 
    Hence, it is left to show that if $\widetilde{\outcome}_{id} \neq \emptyset$ we have $\widetilde{\outcome}_{id} \in \outcomeset{\untree}$.   
    We recall from \cref{def:outcomes}:
    \begin{align*}
        \outcomeset{\untree} &:= \{ 
            \outcome \in \partialtreeoutcomes (\untree)  \mid 
            & \predNoDup{\untree}{B}{\outcome}  \\
            && ~\land~\predHonestRoot{\untree}{B}{\outcome}  \\
            && ~\land~\predEagerPull{\untree}{B}{\outcome} \}
    \end{align*}
    $\predNoDup{\untree}{B}{\widetilde{\outcome}_{id}}$ follows from $\hatpartialEx \cap \widetilde{\outcome}_{id}= \partialEx$ and the fact that $\partialEx$ is consistent and hence does not contain any duplicate edges (so since all edges involving $B$ in $\widetilde{\outcome}_{id}$ are also contained in $\partialEx$ also $\widetilde{\outcome}_{id}$ cannot contain any duplicate edges involving $B$).

    For $\predHonestRoot{\untree}{B}{\widetilde{\outcome}_{id}}$, we look at the situation where $B$ sits in the root of $\untree$.
    Assume that there is an edge $\e = (X,B)_{[X,B]} \intree \untree$.
    We show that then also $\e \in \widetilde{\outcome}_{id}$.
    Since $\partialEx$ is consistent and not empty there also needs to be an $\e^* :=(Y,B)_{[(Y,B)]} \in \untree \cap \partialEx$ for some user $Y$. 
    The edges $\e$ and $\e^*$ could coincide but don't need to. 
    Assume towards contradiction that $\e \not \in \widetilde{\outcome}_{id}$. 

    Since $\e^* \in \partialEx$ by 
    $$
        Inv_{\text{secrets}}(\run, \partialEx, \fulltreeobj)
    $$
    we know that
    $$
        s= \funsecret(\e^*, id) \in \environment_{\channeledge{\e^*}}.S_{rev}.   
    $$
    Then \cref{lemma:somelemmaaboutrevealsecret} implies 
    $$(\honestuser : \revealSecretch \, s) \in \actions{\run}$$ 
    with $\channel = \channeledge{\e}$. 
    Therefore it exist $\run^{\dagger}$ and $\run^*$ with 
    \[
        \run = \run^{\dagger} \overset{\honestuser : \revealSecretch \, s}{\longrightarrow} \run^*
    \]
    and so $(\honestuser : \revealSecretch \, s) \in \Bstrategy(\run^{\dagger})$. 
    By definition of $\Bstrategy$, see \cref{eq:CTLC-execution}, we know that $\chContract(\e^*, \run^{\dagger}) = 2$ holds in this case.
    We show that $\chContract(\e, \run^{\dagger}) = 2$
    also holds.
     The condition $\funsecretsAv(\e)$ follows immediately from the fact that $h_{sec}(\e, id)$ has only one secret owned by $B$. 
    The condition $\funisIngoing(\e, \run^{\dagger})$ follows immediately from $\funisIngoing(\e^*, \run^{\dagger})$
    and since $$\funtimeout(h(\e, id)) = \funtimeout(h(\e^*, id))$$ the $\timeout$ condition is also implied. 
    Thus it is left to show that $\funenabled(\e)$ holds. Because 
    \[
        \fundepth(\e) = \fundepth(\e^*) = 1
    \]
    and it is dictated by the definition of $h$ (\ref{def:hMapping}) that no smaller $\timeout$ is possible, we can imply that there can be no subcontract with a smaller timelock than the one of $h(e, id)$ in the same $\CTLCcontract$.

    Further, from $\funisIngoing(\e^*, \run^{\dagger})$ we know that either $h(\e, id)$ must be enabled (in $h(\e, id) \in \CTLCcontract \in \funlastEnv(\run^\dagger).C_{en}$) or \linebreak $\DecideCo(\CTLCcontract, h(\e, id), h_{sec}(\e, id)) \in \actions{\run^\dagger}$. 
    In the second case, we immediately know from $\run \IDrelation \partialEx$ that $\e \in \partialEx$ and hence $\e \in \widetilde{\outcome}_{id}$, which would contradict the assumption. 
    If $h(\e, id) \in \CTLCcontract \in \funlastEnv(\run^\dagger).C_{en}$ is enabled then $\chContract(\e, \run^{\dagger}) = 2$ holds. 
    
    We next show that $\funnodupl (\e, \run^{\dagger})$. 
    For this, we assume towards contradiction that $\funnodupl (\e, \run^{\dagger})$ does not hold. This means 
    \begin{align*}
    \exists \e' \intree \untree : \e' \neq \e &\land \funsender(\e') = \funsender(\e) \\
    & \land \funreceiver(\e') = \funreceiver(\e) \\
    & \land s' := \funsecret(\e', id) \in \environmentvec^{\dagger}.S_{rev}  
    \end{align*}
    where $\environmentvec^{\dagger} = \funlastEnv(\run^{\dagger})$. 

    By \cref{lemma:somelemmaaboutrevealsecret} we then have
    \begin{align} \label{somehelpersmallstuffneededhereorthere}
        \run^{\dagger} = \dot{\run} \overset{\honestuser : \revealSecretch \, s'}{\longrightarrow}\widetilde{\run}
    \end{align}
    for some $\dot{\run}$ and $\widetilde{\run}$.
    From \eqref{fun:isingoing23} of the honest user strategy we know $\exists \CTLCcontract: \,  h(\e, id) \in \CTLCcontract \in \environmentvec^{\dagger}.C_{en}$. 
    Hence, if there exists some $\dotCTLCcontract$ (so a contract with the same id as $\CTLCcontract$) with $ h(\e', id) \in \dotCTLCcontract \in \funlastEnv(\dot{\run}).C_{en}$, which needs to be the case for the action in (\ref{somehelpersmallstuffneededhereorthere}), we also have
    \[
        h(\e, id) \in \dotCTLCcontract \in \funlastEnv(\dot{\run}).C_{en},
    \]
    by \cref{lemma:only-top-lvl-disabled} because 
    $$\funposition (h(\e, id)) > \funposition(h(\e', id))$$ 
    by construction of $h$.
    This contradicts
    $$(\honestuser : \revealSecretch \, s') \in \Bstrategy(\dot{\run})$$
    because $\Bstrategy$ would not schedule $\honestuser : \revealSecretch \, s'$ if $h(\e', id)$ is not the top-level contract in $\dotCTLCcontract$ and hence cannot be executed. 

    Consequently, $\funnodupl (\e, \run^{\dagger})$ holds. 
    And so that $\Bstrategy$ would schedule $\honestuser : \revealSecretch \, s^*$ on $\run^{\dagger}$ given that $s^*$ has not yet been revealed.

    More formally, for $\channel = \channeledge{\e}$ we have that either 
    \begin{align*}
        &(a) \, s^* = \funsecret(\e, id) \in \environmentch^{\dagger}.S_{rev}  \text{ \textbf{or} } \\
        &(b) \, (\honestuser : \revealSecretch \, s^*) \in \Bstrategy(\run^{\dagger}). 
    \end{align*}
    In case $(a)$ from $Inv_{\text{in-schedule}}$ we get immediately that $\e \in \partialEx$ (since also $s^*  \in \environmentch.S_{rev}$ because the secrets sets monotonically increase, see \cref{lemma:secsetsmonotonic}), which would contradict our assumption $\e \not \in \widetilde{\outcome}_{id}$.

    In case (b) we argue why during the run $\run^*$ an environment must be reached where $\revealSecretch \, s^*$ is finally executed and hence we can use $Inv_{\text{in-schedule}}$ again to show that $\e \in \partialEx$ (and thus arrive at a contradiction).
    To this end,  we show that there must be $\run'$ and $\run''$ with
    \[
        \run^* = \run' \overset{\elapse \, \delta}{\longrightarrow} \run''
    \]
    and $s^* \in \environmentchprimeRevealedSecrets$. 
    Consequently $$\funsecret(\e, id) \in \environmentch''.S_{rev} = \environmentchRevealedSecrets .$$ 
    Thus $\e \in \partialEx$.
    Clearly, such $\run'$ and $\run''$ must exist since $\environmentvec''.t > \environmentvec^\dagger.t$ (because $\environmentvec^\dagger.t \leq \timeout(h(\e, id))$) and elapse is the only rule to increase time.
    Assume towards contradiction that $s^* \notin \environmentchprimeRevealedSecrets$.
    Then 
    $$(\honestuser : \revealSecretch \, s^*) \in \Bstrategy(\run')$$
    by the persistence of the honest user strategy (\ref{def:honeststrategy}) and the fact that for any $\widetilde{\environmentvec}$ and any channel $\channel$ the set $\widetilde{\environment}_{\channel}.S_{rev}$ can only increase in size (if there would have been a point in $\run'$ where $\honestuser : \revealSecretch \, s^*$ was not possible anymore this would imply starting from this point, $s^*$ would reside in the $S_{rev}$ set, contradicting $s^* \notin \environmentchprimeRevealedSecrets$).

    However, in order to execute the $\elapse \, \delta$ action, we would need to have 
    $$\{ \elapse \, \delta' \} = \Bstrategy(\run')$$
    for some $\delta'$ (since $\Bstrategy$ only schedules an $\elapse$ action if no other action is scheduled).
    This gives us the final contradiction.

    For $\predEagerPull{\untree}{B}{\widetilde{\outcome}_{id}}$ let 
    \[
    (X, B)_{\walk_1} \intree \untree ~\land~ (B,Y)_{\walk_2} \in \widetilde{\outcome}_{id} ~\land~ \walk_1 = \concatvec{\unitvec{(X, B)}}{\walk_2}
    \]
    be given. We set $j := \fundepth((B,Y)_{\walk_2})$. 
    From $(B,Y)_{\walk_2} \in \widetilde{\outcome}_{id}$, we know that also $(B,Y)_{\walk_2} \in \partialEx$
    and hence invariant \\ $Inv_{\text{liveness}}(\run, \partialEx, \fulltreeobj)$ applies. 
    Thus either
    \begin{align*}
        &1) \, \exists j' \leq \vert \walk_1 \vert, \walk_3 : (X,B)_{\walk_3} \in \partialEx \land \vert \walk_3 \vert = j' \text{ \textbf{or} } \\
        &2) \, \exists \CTLCcontract \in \environment_{\channeledge{(X, B)_{\walk_1}}}.C_{en} : h((X, B)_{\walk_1}, id) \in \CTLCcontract \\
        & \hspace{6pt} \land \environmentvectime < t_0 + (j+1)\Delta
    \end{align*}
    In the first case, the claim holds immediately, since by construction $(X,B)_{\walk_3} \in \partialEx$ implies $(X,B)_{\walk_3} \in \widetilde{\outcome}_{id}$.
    In the second case,  we have a contradiction since we know that $\run$ is final and so 
    \[
        \environmentvectime > t_0 + (j+1)\Delta = \funtimeout((X, B)_{\walk_1}).
    \]

    With that, we showed that $\widetilde{\outcome}_{id} \in \outcomeset{\untree}$ for $\widetilde{\outcome}_{id} \neq \emptyset$. 
    To show the first implication of the final statement, we still need to show that if $\e \in \widetilde{\outcome}_{id} ~\cap~ \hatpartialEx$ then also 
    $\DecideCo(\CTLCcontract, h(\e, id), h_{sec}(\e, id)) \in \actions{\run}$.
    If $\e \in \widetilde{\outcome}_{id} ~\cap~ \hatpartialEx$ then $\e \in \partialEx$ (by construction) and hence from 
    $\textit{Inv}_\textit{init-liveness}$, we know that either \linebreak $\DecideCo(\CTLCcontract, h(\e, id), h_{sec}(\e, id)) \in \actions{\run}$
    or $\environmentvectime < \timeout(h(\e, id))$
    In the first case, the claim follows immediately. 
    In the second case, we immediately arrive at a contradiction to $\run$ being final.

    The second implication of the final statement immediately follows from~ (\ref{somestuffweneedsomewhereelse}) and $\partialEx \subseteq \widetilde{\outcome}_{id}$.
    This concludes the proof.
\end{proof}

\paragraph{End-to-end Security}
When $\untree = \fununfold(\graphsymbol, A)$ is given by the unfolding process of a digraph $\graphsymbol$, which is in-semiconnected w.r.t. $A \in \nodesymbol$, 
the execution of the protocol never leaves an honest party \textit{underwater} in $\graphsymbol$. We use the same notion of \textit{underwater} as in \cref{thm:security-graph}. 
To formally state our end-to-end security statement, we assume to be provided a set $\graphobj$ of graph specifications of the form $\fullgraphobj$
where $id$ denotes a unique identifier, $\graphsymbol$ the graph to be executed, $t_0$ the execution starting time and $\graphspecalone$ is a graph specification that maps arcs to a pair $(f^{\zeta}_X, \channel)$ of funds and channels. 

We will denote with $\specFromGraph$ the tree specification induced by the graph specification $\graphspecalone$, formally defined as 
\begin{align*}
    \specFromGraph(\e) := \graphspecalone(\funsender(\e), \funreceiver(\e))
\end{align*}

Note that $\specFromGraph$ is by construction a valid tree specification. 

We now formally state our end-to-end security statement: 

\begin{theorem}[End to End Security] \label{th:endtoendsecurityhereintheappendix}
    Let $B$ be an honest user, $\graphobj$ be a set of tuples of the form $\fullgraphobj$ with digraph $\graphsymbol$ in-semiconnected w.r.t. $A \in \nodesymbol$.
    Be $\treeobj$ well-formed, see \cref{def:wellformedtree}, and given as 
    \begin{align*}
        \treeobj := \{ & \fulltreeobj ~|~ \fullgraphobj \in \graphobj ~\land~ \\
        & \qquad \qquad \specalone = \specFromGraph ~\land~ \untree = \fununfold(\graphsymbol, A) \}
    \end{align*}
    Let $\Bstrategy$ be the honest user strategy for $B$ executing $\treeobj$. Let $\Astrategy$ be an arbitrary adversarial strategy. Then, for all final runs $\run$ with $\Bstrategy, \Astrategy \results \run$, starting from an initial environment it holds:
    \begin{align*}
    \forall &\DecideCo (\CTLCcontract ,\CTLCsubcontract,  \funsecret_i(\CTLCsubcontract)) \in \actions{\run}, X \in \nodesymbol: \\ 
        & \funsender(\CTLCcontract) = \honestuser \, \land \, \funreceiver(\CTLCcontract) = X \\
        & \Rightarrow \exists \fullgraphobj \in \graphobj: ~(B,X) \in \graphsymbol \\
        & \qquad \land x = \ctlcIDtree{id}{B}{X}\\
        & \qquad \land \forall (Y,B) \in \graphsymbol \, 
        \exists \dot{c}^{x'}, \dot{sc}^{x'}, j : \ctlcID' = \ctlcIDtree{id}{Y}{B} \\
        & \qquad  \land \DecideCo (\dot{c}^{x'} ,\dot{sc}^{x'},  \funsecret_j(\dot{sc}^{x'})) \in \actions{\run}
    \end{align*}    
    Intuitively, every claim action taking money away from $\honestuser$ corresponds to an outgoing arc in $\graphsymbol$ for which all ingoing arcs were claimed.
\end{theorem}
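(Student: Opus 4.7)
The plan is to chain together the two end results already proved in the paper: protocol security with respect to tree outcomes (\cref{th:protocolsecurity}), and security of tree unfolding with respect to the underlying graph (\cref{thm:unfolding-security}). The bijection between $\graphobj$ and $\treeobj$ induced by $\specFromGraph$ and the unfolding operation means that every $\treespec \in \treeobj$ corresponds to exactly one $\fullgraphobj \in \graphobj$ with $\untree = \fununfold(\graphsymbol, A)$, so I may freely transport the identifier $\treeid$ between the two sets.

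First I would instantiate \cref{th:protocolsecurity} on $\run$, $\treeobj$, and $\Bstrategy$ to obtain a family $\wideoutcomeFamily = \{\widetilde{\outcome}_{\treeid}\}$ with $\widetilde{\outcome}_{\treeid} \in \outcomeset{\untree} \cup \{\emptyset\}$ capturing exactly the claim actions of $\run$ that involve $B$. Now fix any $\DecideCo(\CTLCcontract, \CTLCsubcontract, \funsecret_i(\CTLCsubcontract)) \in \actions{\run}$ with $\funsender(\CTLCcontract) = B$ and $\funreceiver(\CTLCcontract) = X$. Since $B \in \funusers(\CTLCcontract)$, the second conjunct of \cref{th:protocolsecurity} yields some $\treeid$ and some $\e \in \widetilde{\outcome}_{\treeid}$ with $\CTLCsubcontract = h(\e, \treeid)$ and $\funsecret_i(\CTLCsubcontract) = h_{sec}(\e, \treeid)$. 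By the construction of $h$ (see \cref{def:hMapping,eq:treetoCTLCBadge}), the sender and receiver of $\e$ coincide with those of $\CTLCsubcontract$, so $\e = (B, X)_{\walk}$ for some walk $\walk$. In particular $\widetilde{\outcome}_{\treeid} \neq \emptyset$, hence $\widetilde{\outcome}_{\treeid} \in \outcomeset{\untree}$. Since every edge of $\untree = \fununfold(\graphsymbol, A)$ arises from an arc of $\graphsymbol$, we get $(B, X) \in \arcsymbol$, which identifies the graph spec $\fullgraphobj \in \graphobj$ required by the conclusion, with $\ctlcID = \ctlcIDtree{\treeid}{B}{X}$ matching by definition of $h$.

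Next I would apply \cref{thm:unfolding-security} to the outgoing edge $(B,X)_{\walk} \in \widetilde{\outcome}_{\treeid}$. For every ingoing arc $(Y, B) \in \arcsymbol$ the theorem produces a walk $\walk'$ with $(Y, B)_{\walk'} \in \widetilde{\outcome}_{\treeid}$. Each such edge lies in $\hatpartialEx$, so the first conjunct of \cref{th:protocolsecurity} gives a concrete action $\DecideCo(\dotCTLCcontract, h((Y,B)_{\walk'}, \treeid), h_{sec}((Y,B)_{\walk'}, \treeid)) \in \actions{\run}$ with $\funposition$ and identifier $\ctlcID' = \ctlcIDtree{\treeid}{Y}{B}$ determined by the construction of $h$. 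This furnishes exactly the witnesses required by the theorem statement.

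The only real obstacle is bookkeeping rather than mathematics: one must check that the identifiers $\treeid$, $\ctlcID = \ctlcIDtree{\treeid}{B}{X}$, and $\ctlcID' = \ctlcIDtree{\treeid}{Y}{B}$ all line up with the graph spec chosen in $\graphobj$, and that a single $\treeid$ suffices for \emph{all} ingoing arcs $(Y,B)$ at once. Both points follow from well-formedness of $\treeobj$ (\cref{def:wellformedtree}), which prevents two distinct graphs in $\graphobj$ from sharing funds owned by $B$, together with the fact that \cref{th:protocolsecurity} returns a single outcome per tree; thus all incoming claims produced in the previous paragraph live in the same $\widetilde{\outcome}_{\treeid}$ and therefore refer to the same graph spec as the outgoing claim we started from.
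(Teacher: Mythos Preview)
Your proof is correct and follows essentially the same route as the paper's: apply \cref{th:protocolsecurity} to obtain the outcome $\widetilde{\outcome}_{\treeid}$, use its second conjunct to locate the outgoing edge $(B,X)_{\walk}\in\widetilde{\outcome}_{\treeid}$, invoke \cref{thm:unfolding-security} (equivalently \cref{thm:security-graph}) to produce the ingoing edges $(Y,B)_{\walk'}\in\widetilde{\outcome}_{\treeid}$, and then apply the first conjunct of \cref{th:protocolsecurity} to recover the corresponding $\DecideCo$ actions with the right identifiers via the definition of $h$. Your explicit bookkeeping paragraph about identifiers lining up within a single $\treeid$ is a helpful addition that the paper leaves implicit.
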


\begin{proof}
    Let $\DecideCo (\CTLCcontract ,\CTLCsubcontract,  \funsecret_i(\CTLCsubcontract)) \in \actions{\run}$
    with \linebreak $\funsender(\CTLCcontract) = \honestuser$ and $\funreceiver(\CTLCcontract) = X$, for some user $X$. 
    From~\cref{th:protocolsecurity} we have that for every $\fulltreeobj \in \treeobj$ there is some $ \widetilde{\outcome}_{id} \in \outcomeset{\untree} \cup \{ \emptyset \}$
    such that 
    \begin{align} \label{helper:prot-sec-1}
        \forall &\e \in \widetilde{\outcome}_{id}: \e \in \hatpartialEx \\
        & \qquad \Rightarrow \exists \CTLCcontract: \DecideCo(\CTLCcontract, h(\e, id), h_{sec}(\e, id)) \in \actions{\run} \nonumber
    \end{align}
    and 
    \begin{align}\label{helper:prot-sec-2}
        &\forall \DecideCo (\CTLCcontract ,\CTLCsubcontract,  \funsecret_i(\CTLCsubcontract)) \in \actions{\run} : B \in \funusers(\CTLCcontract)\\
        & \Rightarrow \hspace{-2pt} \exists \treeid, \e \hspace{-1pt} \in \widetilde{\outcome}_{id} \hspace{-1pt} : \CTLCsubcontract \hspace{-1pt} = \hspace{-1pt} h(\e, id) \hspace{-1pt} \land \hspace{-1pt} \funsecret_i(\CTLCsubcontract) \hspace{-1pt} = \hspace{-1pt}  h_{sec}(\e, id). \nonumber
    \end{align}
    So consequently, (\ref{helper:prot-sec-2}) immediately gives us for $\fulltreeobj \in \treeobj$ that there is some $\e \in  \widetilde{\outcome}_{id} $ which fulfills 
    \begin{align}
         \CTLCsubcontract = h(\e, id) \, \land \, \funsecret_i(\CTLCsubcontract) =  h_{sec}(\e, id) 
    \end{align}
    By definition of $h$, see \cref{def:hMapping}, $\e = (B,X)_{\walk}$ for some walk $\walk$ and $\ctlcID = \ctlcIDtree{id}{B}{X}$. Since $\e \intree \untree$ we have $(B,X) \in \graphsymbol$ for 
    $$\fullgraphobj \in \graphobj .$$ 

    Let now $(Y,B) \in \graphsymbol$.
    Then \Cref{thm:security-graph} immediately gives us that there is also some $\walk'$ such that $\e' = (Y,B)_{\walk'} \in \widetilde{\outcome}_{id}$. 
    From (\ref{helper:prot-sec-1}) we then also have that $\DecideCo(\dotCTLCcontractid{\ctlcID'}, h(\e', id), h_{sec}(\e', id)) \in \actions{\run}$ for some $\dotCTLCcontractid{\ctlcID'}$ and so by definition of $h$, we know that $\ctlcID' = \ctlcIDtree{id}{Y}{B}$ 
    and $h_{sec}(\e', id) = \funsecret_\iota(h(\e', id)) $ for some $\iota$.
    This concludes the proof. 

\end{proof}

\begin{remark} \label{remark:uniqueness}
    If we assume the preconditions from \cref{th:endtoendsecurityhereintheappendix} and 
    \[
        \action = \DecideCo (\CTLCcontract ,\CTLCsubcontract,  \funsecret_i(\CTLCsubcontract)) \in \actions{\run} 
    \]
    with $\funsender(\CTLCcontract) = \honestuser$ and $\funreceiver(\CTLCcontract) = X$ then there exist $\run'$, $\run''$ s.t.
    \[
        \run = \run' \overset{\action}{\longrightarrow} \run'' .
    \]
    By applying \cref{lemma:againanotherlemmainG} we get that there are no 
    $\dotCTLCcontract$, $\dotCTLCsubcontract$, $\iota$ for which
    \[
        \action' \in \actions{\run'} \lor \action' \in \actions{\run''}
    \]
    with $\action' = \DecideCo (\dotCTLCcontract ,\dotCTLCsubcontract,  \funsecret_{\iota}(\dotCTLCsubcontract))$ holds. 
    This means that a CTLC (with sender $\honestuser$) with the same identifier can never be claimed twice during a run. 
    By construction of $h$, contracts with different sender or receiver get assigned different identifiers. 
    Therefore, in \cref{th:endtoendsecurityhereintheappendix} two different \textit{claim} actions in $\actions{\run}$ can never be linked to the same arc in $\graphsymbol$.
    This rules out that there could be several claim actions that are mapped to the same arc $(B,X)$ in $\graphsymbol$. 
\end{remark}

\paragraph{Protocol Correctness}
Next, we will prove the correctness of the tree protocol. 
Intuitively, the protocol for the tree $\untree$ is correct if, given that all users of the tree $\untree$ are honest, the protocol computes edges corresponding to the 'optimal' tree that lies in the intersection of the outcome sets of all users.

To ensure protocol correctness, some additional prerequisites need to be satisfied (beyond the users being honest)
\begin{itemize}
    \item The protocol setup phase needs to be started in time. The time $t_0$ is part of the protocol specification but denotes the time starting from when the execution of the tree can start. However, to ensure that all tree edges are set up and enabled by $t_0$, the setup process needs to start before $\Delta * \fundepth(\untree)$ since setting up each level of the tree may take up to time $\Delta$. 
    \item The funds of users (according to $\specalone$) needs to be available in the initial environment.
\end{itemize}

We further define the user set of a tree set as follows: 
\begin{align*}
    \funusers(\treeobj) := \bigcup_{ \e \in \{ \e ~|~  \exists \fulltreeobj \in \treeobj: \e \intree \untree \}} \funusers(\e) 
\end{align*}

Intuitively, liquidity states that for all edges (without duplicates) of all trees (identified by $(\funsender(\e), \funreceiver(\e), id)$) there is a unique fund as specified in the spec $\specalone$.

To prove protocol correctness, we first show that the setup process is correct, meaning that if the protocol is started on time, all tree edges will be enabled by the time the protocol's starting time $t_0$ is reached. 

\begin{theorem}[Setup Correctness] \label{th:setupcorrectness}
    Let $\honestusers =  \{ \honestuser_1, \dots, \honestuser_k\}$ be a set of honest users. 
    Let $\treeobj$ be well-formed, see \cref{def:wellformedtree}, and $\funusers(\treeobj) \subseteq \honestusers$.
    Let $\honestuserstrategies^{\treeobj} = \{ \treestrategy{\honestuser_1}, \dots, \treestrategy{\honestuser{_k}} \}$ a set of honest user strategies executing $\treeobj$.
    Let $\Astrategy$ be an arbitrary adversarial strategy (for $\honestusers$). 
    Let $\run$ with $(\honestuserstrategies^{\treeobj}, \Astrategy) \results \run$ be a final run starting from a liquid initial environment $\environmentvec_0$ for $\treeobj$, see \cref{def:liquidenv}. 
    Further, let $\environmentvec =\funlastEnv(\run)$.
    We recall from \cref{eq:depthoftreedefinition}
    $$
        \fundepth(\untree) = \textit{max} \{ \fundepth(\e) \mid \e \intree \untree \} .
    $$
    Then if the run $\run$ started on time, meaning
    \[
        \textit{min} \{ t_0 - \fundepth(\untree)\Delta \mid \fulltreeobj \in \treeobj \} > \environmentvec_0.t
    \]
    we have for all $\fulltreeobj \in \treeobj$ and $m \in \N$ with
    \[
        \fundepth(\untree) \geq m \, \land \, t_0 > \environmentvectime \geq t_0 - (\fundepth(\untree) - m) \Delta
    \]
    that the following implication holds for all $\e \intree \untree$:
    \begin{align*}
        &\fundepth(\e) + m \geq \fundepth(\untree)\\
        &\Rightarrow \exists \CTLCcontract \in \environment_{\channeledge{\e}}.C_{en} : h(\e, id) \in \CTLCcontract
    \end{align*}
    
\end{theorem}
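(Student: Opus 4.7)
The plan is to prove this by induction on $m$, with a preliminary claim covering the preparation phase. The intuition is that the honest setup proceeds bottom-up through the tree: once a level is fully enabled, the $\ingoing$ check in~\Cref{eq:honingoing} is satisfied for every edge at the next level up, so the honest strategies $\overline{\Sigma}_{B}^{\e}$ of the sender and receiver schedule the corresponding advertise/authorize/enable actions, which are executed before the adversary can push $\environmentvec.t$ past the next critical timestamp $t_0 + j\Delta$.

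The preliminary claim states that by the time $\environmentvec.t$ reaches $t_0 - \fundepth(\untree)\Delta$, for every $\fulltreeobj \in \treeobj$ the batch $\batch = \treetoCTLCBadge(\fulltreeobj, \mathcal{S})$ lies in $\environmentvec.\batchset$ and every $A \in \funusers(\batch)$ satisfies $S_A(\batch) \subseteq \environmentvec.S_{com}$. This follows from $\widehat{\Sigma}_{B}^{\treeobj}$ (see~\Cref{eq:honprep}), the persistence condition on honest strategies (\Cref{def:participantstrategy}), the liquidity hypothesis, and~\Cref{th:timeelapseaftern}: honest users only schedule $\elapse \, \delta$ when no other local action is pending and the adversary cannot advance time unilaterally; hence every scheduled $\advBatch$ or $\commitBatch$ is eventually executed, and the time margin $\environmentvec_0.t < t_0 - \fundepth(\untree)\Delta$ is strict enough for all global preparation actions to complete.

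For the base case $m = 0$, any edge $\e$ with $\fundepth(\e) = \fundepth(\untree)$ is a leaf of the unfolding: by the definition of $\fununfold$ no $\e' \intree \untree$ with $\funonpath{\e}{\e'}$ and $\fundepth(\e') = \fundepth(\e) + 1$ exists, so $\ingoing(\e, \run) = 1$ holds vacuously. Combined with the preliminary claim, the strategies $\overline{\Sigma}_{\funsender(\e)}^{\e}$ and $\overline{\Sigma}_{\funreceiver(\e)}^{\e}$ schedule the sequence advertise-CTLC, authorize by both parties, and enable (see~\Cref{eq:enable-check}), all constrained only by $\environmentvec.t < t_0$. By persistence and~\Cref{th:timeelapseaftern}, together with the fact that $\Sigma_B^{\treeobj,local}$ only proposes an $\elapse$ action of length up to the next critical timestamp (see~\Cref{def:timetimetime}), these actions all execute before $\environmentvec.t$ crosses the next $\Delta$-boundary, so the contract is enabled by the moment $\environmentvec.t$ first reaches $t_0 - \fundepth(\untree)\Delta$.

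The inductive step is structurally identical, lifted one level: the inductive hypothesis gives $\ingoing(\e, \run) = 1$ for every $\e$ at depth $\fundepth(\untree) - (m+1)$, and the same setup sequence fires within the next $\Delta$-window. I expect the main obstacle to lie in handling duplicate edges: by construction of $h$, two edges on different levels that share sender and receiver are represented by distinct subcontracts inside the same CTLC, enabled one-by-one by $\enableSubC$ actions on top of the initial $\enableCTLC$. The proof must verify that when an edge $\e$ at the current level has a duplicate at a strictly greater depth already covered by the inductive hypothesis, the subcontract for $\e$ is reached by an $\enableSubC$ action from $\overline{\Sigma}_{\funsender(\e)}^{\e}$ within the same $\Delta$-window, and that all $\enableSubC$ actions required for one CTLC (potentially several, if the same sender-receiver pair appears at multiple levels) complete before the adversary can schedule the next $\elapse$, again using persistence together with~\Cref{def:timetimetime}.
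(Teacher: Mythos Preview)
Your approach is correct but decomposes the argument differently from the paper. The paper proceeds by induction on the run length $|R|$ with a case analysis on the last action $\alpha$; only $\alpha = \elapse\,\delta$ is nontrivial, and there the inductive hypothesis (applied to the prefix $R'$ with parameter $m-1$) shows the children of $\e$ are already enabled, so---because every honest user scheduled $\elapse$ on $R'$---one gets $\ingoing(\e, R') = 0$, forcing the ``not yet enabled'' clause of $\ingoing$ to be the one that fails. Your induction on $m$ inverts this: for each level you locate the prefix just before the elapse crossing the relevant threshold and run the same contrapositive. Both rest on the identical core observation that setup actions must fire before any $\elapse$ is permitted. The run-length induction buys automatic preservation under non-$\elapse$ actions (in particular that enabled subcontracts cannot be removed before $t_0$, since $\timeoutsc$, $\refund$ and $\DecideCo$ are blocked by their time/secret preconditions), which your outline needs but leaves implicit; conversely, your discussion of duplicate edges and the $\enableSubC$ chain is more explicit than the paper's, which absorbs that case into the generic negation of the ``not yet enabled'' condition.
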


\begin{proof}
    Let $\honestusers$, $\treeobj$, $\honestuserstrategies$ and $\run$ as stated in the Theorem with $\vert \run \vert = n$. 
    The Theorem is proven by induction on $n$.

    \noindent \underline{\textbf{Case $n = 0$:}}

    \noindent 
    Let $\fulltreeobj \in \treeobj$ be given. By definition of $\run$ and $\vert \run \vert = 0$ we have
    $\environmentvec =\funlastEnv(\run) = \environmentvec_0$.
    Let 
    \begin{align}
        &t_0 > \environmentvectime = \environmentvec_0.t \geq t_0 - (\fundepth(\untree) - m) \Delta \label{helper11112}\\
        \land \, &t_0 - \fundepth(\untree)\Delta > \environmentvec_0.t \label{helper11114}
    \end{align}
    according to the preconditions. 
    Combining these inequalities yields:
    \begin{align*}
        t_0 \hspace{-1pt} - \hspace{-1pt} (\fundepth(\untree) \hspace{-1pt} - \hspace{-1pt} m) \Delta \hspace{-1pt} \overset{(\ref{helper11112})}{\leq} \hspace{-1pt} \environmentvec_0.t = \environmentvectime \hspace{-1pt} \overset{(\ref{helper11114})}{<} \hspace{-1pt} t_0 \hspace{-1pt} - \hspace{-1pt} \fundepth(\untree)\Delta
    \end{align*}
    Thus, $t_0 - (\fundepth(\untree) - m) \Delta = t_0 - \fundepth(\untree) \Delta + m \Delta < t_0 - \fundepth(\untree)\Delta$ has to hold for a potential $m \in \N$, which therefore cannot exist. 
    Hence, the claim trivially holds. 
    
    \noindent \underline{\textbf{Case $n > 0$:}}

    \noindent 
    With $\vert \run \vert > 0$ we know that it exists a run $\run'$ with 
    $$\run = \run' \overset{\action}{\longrightarrow} \Gamma .$$ 
    By induction hypothesis (I.H.), the Theorem holds for $\run'$ and $\environmentvecprime = \funlastEnv(\run')$.

    We proceed by case distinction on $\alpha$.
    Out of the components of $\environmentvec$, only $\environmentvectime$ and $\environmentvecCTLCEnabled$ are relevant for the Theorem. According to the \CTLC{} Semantics, see Appendix~\ref{sec:inferenceRules}, the only actions influencing these components are 
    \[
        \enableCTLC, \, \enableSubC, \, \timeoutsc, \, \refund, \,\DecideCo, \, \elapse \, \delta .
    \]
    Since all users are honest, all actions are determined by the honest user strategy. 
    By $\enableCTLC$ and $\enableSubC$ only additional contracts or subcontracts get added to $\environmentchCTLCEnabled$. Thus, the claim still follows from I.H..
    According to the honest user strategy, see \cref{eq:CTLC-timeout}, $\timeoutsc$, and $\refund$ are only executed if there is an $\e \intree \untree$ with 
    $$\timeout(h(\e, id)) \leq \environmentvectime .$$
    By definition of $h$, see \cref{def:hMapping}, we have 
    $$\timeout(h(\e, id)) = t_0 + \fundepth(\e) \Delta .$$
    As it is a precondition for this Theorem that $\environmentvectime < t_0$, the actions $\timeoutsc$, and $\refund$ are not performed by honest users yet. 
    For $\DecideCo$ it is also a precondition that $t_0 \leq \environmentvectime$ holds. Thus, honest users do not schedule this action yet. 
    The only relevant action remaining is
    $
        \action = \elapse \, \delta .
    $
    As $\action = \elapse \, \delta$ does only get executed if all parties, in this case, all parties from $\honestusers$, agree. Hence for all $B \in \honestusers$ we have $\Bstrategy(\run') = \{ \elapse \, \delta \}$. According to \cref{def:timetimetime}, this is only the case if no other action is possible based on the rules of the honest user strategy. 
    In particular, this includes $newC(\e, \run) = 0$ for all $\e \intree \untree$ and hence
    $\ingoing(\e, \run) = 0$, 
    see \cref{eq:honingoing}. 

    So let $\e \intree \untree$ be given. Then there exist users $B,X \in \honestusers$ such that $\e = (B,X)_{\walk}$, which is an outgoing edge for the honest user $B$. Hence for $\ingoing(\e, \run) = 0$ inside of the honest user strategy of $B$ we need to rule out that $\ingoing(\e, \run) = 1$, see \cref{eq:honingoing}. Since $\e \neq (Y,B)_{\walk}$, i.e. it is not an ingoing edge for $B$, there are 3 remaining conditions which are all concatenated with a logical 'and' condition. 
    For the whole statement to be false, and thus implying $\ingoing(\e, \run) = 0$, at least one of them needs to be false. These conditions are:
    \begin{align*}
        & \\
        &(a)\, \forall \e' \intree \untree \text{ with } \funonpath{\e}{\e'}, \fundepth(\e') = \fundepth(\e) + 1 \\
        & \exists \widetilde{\CTLCcontract} \in \environment_{\channeledge{\e'}}.C_{en}: h(\e',id) \in \widetilde{\CTLCcontract}  \\
        & \\
        &(b)\, \environmentvec.t < t_0, \\
        & \exists \batch := \treetoCTLCBadge(\fulltreeobj, \mathcal{S}) \in \environmentvec.\batchset : \\
        & h(\e,id) \in \CTLCcontract \in \batch \land h(\e',id) \in \widetilde{\CTLCcontract} \in \batch \\
        & \\
        &(c)\, \nexists \channel : h(\e,id) \in \CTLCcontract \in \environmentchCTLCEnabled \\
    \end{align*}
    We now argue that $(a)$ and $(b)$ cannot be false, and thus $(c)$ is. The negation of $(c)$ then gives us the desired property for the statement to hold.
    
    For $(a)$ we look at the
    case where $m \in \N$ is such that
    \begin{align}
        \environmentvecprimetime &< t_0 - (\fundepth(\untree) - m) \Delta \text{ but} \label{idfjghkdfjghkldjg}\\
        \environmentvectime = \environmentvecprimetime + \delta &\geq t_0 - (\fundepth(\untree) - m) \Delta \label{idfjghkdfjghkldjg2}
    \end{align}
    since for other $m' \in \N$
    the claim follows immediately from I.H.. For these $m'$, the preconditions have not changed. 
    Since $\delta \leq \Delta$, the difference between $\environmentvecprimetime + \delta$ and $\environmentvecprimetime$ is smaller than $1 \Delta$.
    Thus, there can only be one $m \in \N$ fulfilling both (\ref{idfjghkdfjghkldjg}) and (\ref{idfjghkdfjghkldjg2}) at the same time.  So, let $m \in \N$ be fixed in this way.
    
    Additionally, we only need to consider edges $\e$ with 
    \[
        \fundepth(\e) + m = \fundepth(\untree)
    \]
    since for an edge $\e'$ with
    \[
        \fundepth(\e') + m > \fundepth(\untree)
    \]
    we know that 
    \[
        \fundepth(\e') + m' \geq \fundepth(\untree)
    \]
    holds for some $m' < m$ and thus the I.H. applies. 
    Therefore, only if the given $\e \intree \untree$ and $m \in \N$ fulfill these properties, this case is not implied directly by I.H.
    
    So let $\e$ and $m$ be this way. 
    For (a) to be false, there would need to be an $\e'$ with $\fundepth(\e') = \fundepth(\e) + 1$ for which
    \[
        \nexists \widetilde{\CTLCcontract} \in \environment_{\channeledge{\e'}}.C_{en}: h(\e',id) \in \widetilde{\CTLCcontract}. 
    \]
    This is a contradiction to the I.H. since
    \[
        \fundepth(\e) + m \hspace{-1pt} = \hspace{-1pt} \fundepth(\untree) \hspace{-1pt} \Rightarrow \hspace{-1pt} \fundepth(\e') + (m-1) \hspace{-1pt} \geq \hspace{-1pt} \fundepth(\untree)
    \]
    and so the I.H. applies for $m-1$
    and hence, $h(\e',id)$ would be enabled. 
    Therefore, this cannot be the case. 

    For $(b)$ to be false, we either have $\environmentvec.t \geq t_0$, here the assumption would be contradicted, or 
    \begin{align*}
        &\nexists \batch := \treetoCTLCBadge(\fulltreeobj, \mathcal{S}) \in \environmentvec.\batchset : \\
        & h(\e,id) \in \CTLCcontract \in \batch \land h(\e',id) \in \widetilde{\CTLCcontract} \in \batch .
    \end{align*}
    This means that there is no batch representing all edges in $\untree$. In this case, we have $newBatch(\untree, \run') = 2$, since $\environmentvecBatches = \environmentvecprimeBatches$ and \cref{lemma:fundsAvailable} ensures that all necessary funds are still available. Therefore we have 
    \[ 
        \widehat{\Sigma}_{B}^{\untree} (\run') = \Bigl\{ \advBatch \, \batch \Bigr\},
    \]
    see \cref{eq:honprep}. By \cref{def:timetimetime}, this implies that the honest user strategy outputs this action instead of $\elapse \, \delta$:
    \[
        \advBatch \, \batch \, \in \, \Bstrategy(\run')
    \]
    And since all users need to agree in order to elapse time, we reach a contradiction. 

    Thus, the only possibility for $\ingoing(\e, \run) = 0$ to hold is that $(c)$ is false. This means
    \[
         \exists \channel : h(\e,id) \in \CTLCcontract \in \environmentchCTLCEnabled .
    \]
    Since all users are honest, \cref{eq:enable-check} dictates $\channel = \channeledge{\e}$, which concludes the proof.
\end{proof}

Using the correctness of the setup, we can prove protocol correctness. 
Intuitively, protocol correctness states that if all users are honest, and the protocol specification is consistent with the blockchain state, and execution is started in time, then the final run will reflect an execution corresponding to the ideal outcome $\outcome^{*}_{\treeid}$ that lies in the intersection of the outcome sets of all honest users.

\begin{theorem}[Protocol Correctness]
    \label{th:protocolcorrectness}
    Let $\honestusers =  \{ \honestuser_1, \dots, \honestuser_k\}$ be a set of honest users. 
    Let $\treeobj$ be a well-formed set of tuples of the form $\fulltreeobj$ and $\funusers(\treeobj) \subseteq \honestusers$.
    Let $\honestuserstrategies^{\treeobj} = \{ \treestrategy{\honestuser_1}, \dots, \treestrategy{\honestuser{_k}} \}$ a set of honest user strategies executing $\treeobj$.
    Let $\Astrategy$ be an arbitrary adversarial strategy (for $\honestusers$). 
    Let $\run$ with $(\honestuserstrategies^{\treeobj}, \Astrategy) \results \run$ be a final run starting from an initial configuration $\environmentvec_0$ that is liquid w.r.t. $\treeobj$. 
    Further, let 
    $
        \environmentvec_0.t < t_0 - \fundepth(\untree)\Delta ,
    $
    then for all $\fulltreeobj \in \treeobj$ it exists $\outcome^{*}_{\treeid} \in \bigcap_{\honestuser \in \honestusers} \outcomeset{\untree}$  s.t. for their family \linebreak
    $
        \{ \outcome^{*}_{\treeid} \}_{\treeobj} := \{ \outcome^{*}_{\treeid} \mid \fulltreeobj \in \treeobj \}
    $
    it holds
    \begin{align*}
      & \bigl ( \forall \treeid, \e \in \outcome^{*}_{\treeid} \\
      & ~\exists \CTLCcontract: \DecideCo(\CTLCcontract, h(\e, id), h_{sec}(\e, id)) \in \actions{\run} \bigr ) \\
     \land~ &  \bigl ( \forall \DecideCo (\CTLCcontract ,\CTLCsubcontract,  \funsecret_i(\CTLCsubcontract)) \in \actions{\run}: \\
     & ~\funusers(\CTLCcontract) \cap \honestusers \neq \emptyset \\
     & \qquad \Rightarrow \exists \treeid, \e \in \outcome^{*}_{\treeid}: \CTLCsubcontract = h(\e, id) \\
     & \qquad \qquad \hspace{3pt} \land \, \funsecret_i(\CTLCsubcontract) =  h_{sec}(\e, id)  \bigr ) .
    \end{align*}
\end{theorem}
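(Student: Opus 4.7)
I would prove the theorem by combining Theorem~\ref{th:setupcorrectness} (Setup Correctness) with Theorem~\ref{th:protocolsecurity} (Protocol Security) applied separately to each honest user, then aggregating the resulting per-user outcomes into a single outcome in the intersection.

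First, because $\environmentvec_0.t < t_0 - \fundepth(\untree)\Delta$, Theorem~\ref{th:setupcorrectness} applies and guarantees that by the time the environment clock reaches $t_0$, every subcontract $h(\e, id)$ with $\e \intree \untree$ is enabled in the channel $\channeledge{\e}$. This rules out the empty-outcome case in what follows. For each $\fulltreeobj \in \treeobj$ and each $\honestuser \in \honestusers$, Theorem~\ref{th:protocolsecurity} then yields an outcome $\widetilde{\outcome}^{\honestuser}_{\treeid} \in \outcomeset{\untree} \cup \{\emptyset\}$ whose edges correspond bijectively to the $\DecideCo$ actions in $\actions{\run}$ involving $\honestuser$.

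I would then define $\outcome^*_{\treeid}$ to be the set of edges $\e \intree \untree$ for which some $\DecideCo(\CTLCcontract, h(\e, id), h_{sec}(\e, id))$ occurs in $\actions{\run}$. Since $\funusers(\treeobj) \subseteq \honestusers$, every such claim involves two honest users, hence $\outcome^*_{\treeid}$ coincides with $\bigcup_{\honestuser \in \honestusers} \widetilde{\outcome}^{\honestuser}_{\treeid}$, and for each $\honestuser$ the edges of $\outcome^*_{\treeid}$ incident to $\honestuser$ equal those of $\widetilde{\outcome}^{\honestuser}_{\treeid}$. This immediately yields the two biconditional clauses in the theorem's conclusion from Theorem~\ref{th:protocolsecurity}, so what remains is to show $\outcome^*_{\treeid} \in \bigcap_{\honestuser \in \honestusers} \outcomeset{\untree}$.

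Membership requires verifying the three predicates of Definition~\ref{def:outcomeset} relative to each $\honestuser$. The no-duplicates predicate $\predNoDupP$ follows from the no-duplicate guard built into the reveal-secret branch of the honest strategy in~\eqref{eq:CTLC-execution}, together with Lemma~\ref{lemma:againanotherlemmainG}: once one representative of a duplicate edge pair is claimed, no honest user will reveal a second secret for that pair. The honest-root and eager-pull predicates $\predHonestRootP$ and $\predEagerPullP$ follow from a cascade argument seeded by setup correctness. At time $t_0$, every depth-$1$ subcontract is enabled with its required secrets locally owned and no outgoing-edge prerequisite, so the root user's honest strategy schedules the corresponding $\DecideCo$ actions and the finality of $\run$ forces them into $\actions{\run}$. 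Each such claim reveals a new secret, which the honest strategy then shares across channels and uses to schedule the next layer of claims, cascading down the tree.

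\textbf{Main obstacle.} The cascade argument is subtle because the adversarial scheduler can interleave actions across channels and across trees in $\treeobj$, so claims need not appear in $\run$ in tree-topological order and cannot be analyzed by a direct induction on tree depth. Instead the argument rests on (i)~the persistence clause of the honest strategy in~\eqref{def:honeststrategy}, which keeps a scheduled action scheduled until it executes, and (ii)~the constraint in Definition~\ref{def:adversarystrategy} that forbids time from advancing while any honest user has a non-elapse action pending. Combined with the finality of $\run$, which passes every subcontract timeout, these force every $\DecideCo$, reveal-secret, and share-secret action scheduled by an honest strategy to appear eventually in $\actions{\run}$, closing the cascade and yielding $\outcome^*_{\treeid} \in \bigcap_{\honestuser \in \honestusers} \outcomeset{\untree}$.
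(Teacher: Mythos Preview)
Your plan is essentially correct and uses the same definition of $\outcome^{*}_{\treeid}$ as the paper (the set of edges whose subcontract was claimed in $\run$). The two biconditional clauses of the conclusion then follow exactly as you say. The difference lies in how you verify $\outcome^{*}_{\treeid} \in \bigcap_{\honestuser \in \honestusers} \outcomeset{\untree}$.

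You plan a fresh cascade argument for $\predEagerPullP$ and a direct argument from the $\funnodupl$ guard for $\predNoDupP$. The paper instead exploits the observation you already made---that the edges of $\outcome^{*}_{\treeid}$ incident to $\honestuser$ coincide with those of $\widetilde{\outcome}^{\honestuser}_{\treeid}$---to transfer these two predicates directly from $\widetilde{\outcome}^{\honestuser}_{\treeid}$ (which already satisfies them, being in $\outcomeset{\untree}$). Since both $\predNoDupP$ and $\predEagerPullP$ only constrain edges incident to $\honestuser$, the transfer is immediate, and no cascade is needed. Only $\predHonestRootP$ requires the Setup-Correctness-seeded argument you sketch; here the paper also invokes the invariants $Inv_{\text{in-secrets}}$ and $Inv_{\text{init-liveness}}$ of Theorem~\ref{th:mainSecurity} to close the step from ``secret eventually revealed'' to ``edge in $\outcome^{*}_{\treeid}$''. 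One further point: you do not mention verifying that $\outcome^{*}_{\treeid} \in \partialtreeoutcomes(\untree)$ (closure of claimed edges under path-to-root). The paper proves this separately, again via Protocol Security, and your top-down cascade does not cover it since $\outcome^{*}_{\treeid}$ is defined as \emph{all} claimed edges, not just those reached by the cascade.
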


\begin{proof}
    Let $\fulltreeobj \in \treeobj$ be given. We define
    \begin{equation*}
    \scalebox{0.95}{$
    \begin{aligned}
        &\outcome^{*}_{\treeid} := \\
        &\{ \e \intree \untree \mid \exists \CTLCcontract : \DecideCo(\CTLCcontract, h(\e, id), h_{sec}(\e, id)) \in \actions{\run} \}.
    \end{aligned}
    $}
    \end{equation*}
    For the statement, it is sufficient to show that 
    \begin{align*}
        \outcome^{*}_{\treeid} \in \bigcap_{\honestuser \in \honestusers} \outcomeset{\untree}. 
    \end{align*}
    To this end, we need to show the following properties:
    \begin{enumerate}[a)\hspace{1pt}]
        \item $\forall \e \in \outcome^{*}_{\treeid} \forall \e' \in \funonPathtoRoot(\untree, \e): \, \e' \in \outcome^{*}_{\treeid} $
        \item $\forall \honestuser \in \honestusers: \, \predNoDup{\untree}{B}{\outcome^{*}_{\treeid}}$
        \item $\forall \honestuser \in \honestusers: \, \predHonestRoot{\untree}{B}{\outcome^{*}_{\treeid}}$
        \item $\forall \honestuser \in \honestusers: \, \predEagerPull{\untree}{B}{\outcome^{*}_{\treeid}}$
    \end{enumerate} 
    
    a) 
    Assume towards contradiction 
    \[
        \exists \e \in \outcome^{*}_{\treeid} \exists \e' \in \funonPathtoRoot(\untree, \e): \, \e' \notin \outcome^{*}_{\treeid} .
    \]
    Then there needs to be an $\e^* = (X,Y)_{\walk^*} \in \outcome^{*}_{\treeid}$ and \linebreak $\e^{\dagger} = (Y,Z)_{\walk^{\dagger}} \in \funonPathtoRoot(\untree, \e)$ with $\e^{\dagger} \notin \outcome^{*}_{\treeid}$ and $\walk^* = \concatvec{\unitvec{(X,Y)}}{\walk^{\dagger}}$. 
    Since $\e^* \in \outcome^{*}_{\treeid}$ implies 
    \[
        \DecideCo(\CTLCcontract, h(\e^*, id), h_{sec}(\e^*, id)) \in \actions{\run}.
    \]
    we can use \cref{th:protocolsecurity} with $\e^* \in \hatpartialExY$ (and hence \linebreak $Y \in \funusers(h(\e^*, id)) = \funusers(\CTLCcontract)$) to conclude that there is some $\widetilde{\outcome}_Y \in \outcomesetvar{Y}{\untree}$ such that $\e^* \in \widetilde{\outcome}_Y$.

    By \cref{def:outcomes} of outcome sets, we then have \linebreak $\e^{\dagger} \in \widetilde{\outcome}_Y$ (since outcome sets are constructed from partial trees) which implies (by~\cref{th:protocolsecurity})
    \[
        \DecideCo(\CTLCcontract, h(\e^{\dagger}, id), h_{sec}(\e^{\dagger}, id)) \in \actions{\run}.
    \]
    Therefore $\e^{\dagger} \in \outcome^{*}_{\treeid}$ by construction of $\outcome^{*}_{\treeid}$. 

    b) 
    Let $\honestuser \in \honestusers$.
    Assume towards contradiction that 
    $\e = (X,Y)_{\walk} \in \outcome^{*}_{\treeid}$ and $\e' = (X,Y)_{\walk'} \in \outcome^{*}_{\treeid}$ and $B \in \{ X, Y \}$ and $\walk \neq \walk'$. 
    By the definition of $\outcome^{*}_{\treeid}$ we hence know that 
    $\DecideCo(\CTLCcontract, h(\e, id), h_{sec}(\e, id)) \in \actions{\run} $
    and 
    $\DecideCo(\CTLCcontract, h(\e', id), h_{sec}(\e', id)) \in \actions{\run}$.
    From \cref{th:protocolsecurity} we know that there is some $\widetilde{\outcome}_B \in \outcomesetvar{B}{\untree}$
    such that 
    \begin{align} \label{helper-sec-dup-1}
        &\forall \e \in \hatpartialEx: \e \in \widetilde{\outcome}_B\\
        &\qquad \Rightarrow \exists \CTLCcontract: \DecideCo(\CTLCcontract, h(\e, id), h_{sec}(\e, id)) \in \actions{\run} \nonumber
    \end{align}
    and 
    \begin{align} \label{helper-sec-dup-2}
        &\forall \DecideCo (\CTLCcontract ,\CTLCsubcontract,  \funsecret_i(\CTLCsubcontract)) \in \actions{\run}: B \in \funusers(\CTLCcontract) \\
        &\Rightarrow \exists \e \in  \widetilde{\outcome}_B: \CTLCsubcontract = h(\e, id) \, \land \, \funsecret_i(\CTLCsubcontract) =  h_{sec}(\e, id) \nonumber
    \end{align}

    Consequently, we can conclude with \cref{helper-sec-dup-1} that 
    also $\e \in \widetilde{\outcome}_B$ and $\e' \in \widetilde{\outcome}_B$ (by definition of $h$ we know that $\funusers(\e) = \funusers(h(\e, id))$).
    However, from $\widetilde{\outcome}_B \in \outcomesetvar{B}{\untree}$ we know that also 
    $\predNoDup{\untree}{B}{\widetilde{\outcome}_B}$ and so $\walk \neq \walk'$, giving a contradiction. 

    c)
    Assume towards contradiction 
    \[
        \exists \e = (X,B)_{\unitvec{(X,B)}} \in \untree \, \land \, \e \notin \outcome^{*}_{\treeid}.
    \]
    Hence $\DecideCo(\CTLCcontract, h(\e, id), h_{sec}(\e, id)) \notin \actions{\run}$.

    Let $\funlastEnv(\run) =: \environmentvec$. Since $\run$ is final, see \cref{def:final run}, we have
    $
        t_0 + \fundepth(\untree) \Delta < \environmentvectime .
    $
    By the precondition, we also have
    \[
        \environmentvec_0.t \leq t_0 - \fundepth(\untree)\Delta 
    \]
    for the initial environment $\environmentvec_0$. 

    Then we know that there must have be runs $\run^1$ and $\run^2$ such that  $\run = \run^1 \overset{\elapse \, \delta }{\longrightarrow} \run^2$
    for some $\delta$ and 
    for $\run^{1a} :=  \run^1 \overset{\elapse \, \delta }{\longrightarrow} \environmentvec^2$
    and $\environmentvec^1 = \funlastEnv(\run^1)$ it holds that 
    $\environmentvec^1.t < t_0$ and $\environmentvec^2.t \geq t_0$. 
    Since we know from the definition of the honest user strategy that $\delta \leq \Delta$, we also have 
    that $\environmentvec^1.t = \environmentvec^2.t - \delta \geq t_0 - \Delta$. 

    Consequently, we can apply \cref{th:setupcorrectness} using $\fundepth(\e) = 1$ and $m:= (\fundepth(\untree) - 1)$ (then $\fundepth(\untree) - m = 1$ and $\fundepth(\e) + m = \fundepth(\untree)$) which results in 
    \begin{equation} \label{someequationsomewhereattheend}
        \exists \CTLCcontract \in \environment^1_{\channeledge{\e}}.C_{en} : h(\e, id) \in \CTLCcontract.
    \end{equation}

    Then by definition of the $\elapse$ rule, we also know that 
    \begin{equation} \label{helper:e-enabled}
        \CTLCcontract \in \environment^2_{\channeledge{\e}}.C_{en} : h(\e, id) \in \CTLCcontract.
    \end{equation}

    We now show that then also $\action := X: \revealSecretch \, s_{\unitvec{(X,B)}}^{id} \, \in \treestrategy{X}(\run^{1a})$.
    By the definition of $\treestrategy{X}$, for this we need to show that 
    \begin{enumerate}
        \item $s_{\unitvec{(X,B)}}^{id} \not \in \environment^1_{\channeledge{\e}}.S_{rev}$
        \item  $\funnodupl (\e, \run^{1a})$
        \item  $\chContract(\e, \run^{1a}) = 2$.
    \end{enumerate}
    To show the first two conditions, it is sufficient to show that the secret of no edge $\e'$ with $\funsender(\e') = X$ is revealed yet.
    Assume towards contradiction that $\funsecret(\e', id) \in \environment^1_{\channeledge{\e}}.S_{rev}$.
    Then using ~\Cref{lemma:somelemmaaboutrevealsecret} it follows that
    $X: \revealSecretch \, \funsecret(\e', id) \in \run^1$ and so there must be a prefix $\run^{1b}$ of $\run^1$ such that 
    $$\run^{1b} \overset{X: \revealSecretch \, \funsecret(\e', id) }{\longrightarrow}$$ 
    is a prefix of $\run^1$. 
    Since the action is restricted by $X$, we know that then also 
    $X: \revealSecretch \, \funsecret(\e', id) \, \in \treestrategy{X}(\run^{1b})$. 
    However, this would imply that $\funlastEnv(\run^{1b}).t \geq t_0$ (according to the definition of $\treestrategy{X}$ secrets are only revealed after $t_0$).
    This immediately gives a contradiction, because from $\environmentvec^1.t < t_0$ we have that also $\funlastEnv(\run^{1b}).t \geq t_0$ since time only increases over a run.  

    To show the last condition, we are left to show that 
    \begin{itemize}
        \item $\funenabled(\e)$
        \item  $\funsecretsAv(\e)$
        \item $\funisIngoing(\e, \run^{1b})$
        \item $t_0 \leq \environmentvec^2.t < \funtimeout(h(\e, id))$ 
    \end{itemize}

    $\funenabled(\e)$ follows from~\cref{helper:e-enabled} and construction of $h$ since $\fundepth(\e) = 1$ and hence $h(\e, id)$ is the top-level contract in $\CTLCcontract$. 
    We get $\funsecretsAv(\e)$ since by construction $h_{sec}(\e, id)$ contains only the single secret $\funsecret(\e', id)$.
    Further, $\funisIngoing(\e, \run^{1b})$ holds since $\fundepth(\e) = 1$ and we can use 
    \cref{th:setupcorrectness} with $m:= (\fundepth(\untree) - 1)$ to show for all $\ein$ with $\fundepth(\ein) = 1$  that 
    \begin{equation}
        \exists \CTLCcontract \in \environment^1_{\channeledge{\ein}}.C_{en} : h(\ein, id) \in \CTLCcontract.
    \end{equation}
    and so also 
    \begin{equation}
        \exists \CTLCcontract \in \environment^2_{\channeledge{\ein}}.C_{en} : h(\ein, id) \in \CTLCcontract.
    \end{equation}

    Finally, we already know that 
    $t_0 \leq \environmentvec^2.t$.
    This leaves us with showing that  $\environmentvec^2.t < \funtimeout(h(\e, id))$. 
    Since by definition of $h$, we have that $\funtimeout(h(\e, id)) = t_0 + \Delta$ this follows immediately from 
    $\environmentvec^2.t = \environmentvec^1.t + \delta < t_0 + \delta < t_0 + \Delta$ (since $\delta < \Delta$ and $\environmentvec^1.t < t_0$). 

    With this, we know that $X: \revealSecretch \, s_{\unitvec{(X,B)}}^{id} \, \in \treestrategy{X}(\run^{1b})$

    Since we know that $t_0 + \fundepth(\untree) \Delta < \environmentvectime$
    and $\environmentvec^2.t < t_0 + \Delta$ and $\fundepth(\untree) \geq 1$, we know that there must be $\run^3$ and $\run^4$ such that 
    $\run^2 = \run^3 \overset{\elapse \delta'}{\longrightarrow} \run^4$. 
    
    From the persistence of $\treestrategy{X}$ and the monotonicity (\Cref{lemma:secsetsmonotonic}) of the revealed secrets, we can conclude that either 
    $$X: \revealSecretch \, s_{\unitvec{(X,B)}}^{id} \, \in \treestrategy{X}(\run \overset{\elapse \, \delta }{\longrightarrow} \run^3)$$ 
    or 
    $s_{\unitvec{(X,B)}}^{id} \in \funlastEnv(\run^3).S_\textit{rev}$
    
    The first case would immediately lead to a contradiction since then it could not be (by definition of $\treestrategy{X}$) that $\treestrategy{X}$ schedules an $\elapse$ action. 
    In the second case, we know (by \Cref{lemma:secsetsmonotonic}) that also 
    $s_{\unitvec{(X,B)}}^{id} \in \environmentvec.S_\textit{rev}$. 
    Using \Cref{th:mainSecurity}, we know that there exists some $\partialEx$ such that $\run \IDrelation \partialEx$ and by $Inv_\textit{insecrets}$ that $\e in \partialEx$.
    Further, $Inv_\textit{init-liveness}$ gives us with $t_0 + \funtimeout(h(\e, id) \leq t_0 + \fundepth(\untree) \Delta < \environmentvectime$
    that $\DecideCo(\CTLCcontract, h(\e, id), h_{sec}(\e, id)) \notin \actions{\run}$, contradicting the original assumption. 

    d) 
    Let $\e = (X, B)_{\walk_1} \intree \untree $ and $\e' = (B,Y)_{\walk_2} \in \outcome^{*}_{\treeid}$ with 
    $\walk_1 = \concatvec{\unitvec{(X, B)}}{\walk_2} .$

    By the definition of $\outcome^{*}_{\treeid}$ we hence know that 
    $$\DecideCo(\CTLCcontract, h(\e', id), h_{sec}(\e', id)) \in \actions{\run}.$$
    From \cref{th:protocolsecurity} we know that there is some $\widetilde{\outcome}_B \in \outcomesetvar{B}{\untree}$
    such that 
    \begin{align}  \label{helper-eagerpull-prot-sec-1}
        &\forall \e \in \hatpartialEx: \e \in \widetilde{\outcome}_B\\
        &\Rightarrow \exists \CTLCcontract: \DecideCo(\CTLCcontract, h(\e, id), h_{sec}(\e, id)) \in \actions{\run} \nonumber
    \end{align}
    and 
    \begin{align} \label{helper-eagerpull-prot-sec-2}
        &\forall \DecideCo (\CTLCcontract ,\CTLCsubcontract,  \funsecret_i(\CTLCsubcontract)) \in \actions{\run}: B \in \funusers(\CTLCcontract) \\
        & \Rightarrow \exists \e \in  \widetilde{\outcome}_B: \CTLCsubcontract = h(\e, id) \, \land \, \funsecret_i(\CTLCsubcontract) =  h_{sec}(\e, id) \nonumber
    \end{align}

    Consequently, we can conclude with \cref{helper-eagerpull-prot-sec-2} that also $\e' \in \widetilde{\outcome}_B$.
    Since $\widetilde{\outcome}_B \in \outcomesetvar{B}{\untree}$ we know in particular that $\predEagerPull{\untree}{B}{\widetilde{\outcome}_B}$ holds and consequently also that there exists some $\walk_3$ such that $\e'' = (X,B)_{\walk_3} \in \widetilde{\outcome}_B$ and $\fundepth(\e'') \leq \fundepth(\e)$. 
    With this, we can use~\cref{helper-eagerpull-prot-sec-1} to obtain that 
    $$\DecideCo(\dotCTLCcontract, h(\e'', id), h_{sec}(\e'', id)) \in \actions{\run}$$ 
    for some $\dotCTLCcontract$ and so by construction of $\outcome^{*}_{\treeid}$ also $\e'' \in \outcome^{*}_{\treeid}$ what concludes the case.  
    
\end{proof}

\paragraph{End-to-end Correctness}
Similar to our end-to-end security statement, we formulate an end-to-end correctness statement that ensures that if a tree protocol resulting from a graph is executed by honest users then the final executions of CTLC contracts involving honest users exactly correspond to the arcs in the graph. 

\begin{theorem}[End-to-end Correctness]
    \label{th:fullprotocolcorrectness}
    \hspace{10pt} \newline Let $\honestusers =  \{ \honestuser_1, \dots, \honestuser_k\}$ be a set of honest users. 
    Let $\graphobj$ be a set of tuples of the form $\fullgraphobj$ with digraph $\graphsymbol$ in-semiconnected w.r.t. $A \in \nodesymbol$
    and $\nodesymbol \subseteq \honestusers$. 
    Be $\treeobj$ well-formed and given as 
    \begin{align*}
        \treeobj := \{ &\fulltreeobj ~|~ \fullgraphobj \in \graphobj \\
        &  \qquad ~\land~ \specalone = \specFromGraph ~\land~ \untree = \fununfold(\graphsymbol, A) \}.
    \end{align*}
    Let $\honestuserstrategies^{\treeobj} = \{ \treestrategy{\honestuser_1}, \dots, \treestrategy{\honestuser{_k}} \}$ a set of honest user strategies executing $\treeobj$.
    Let $\Astrategy$ be an arbitrary adversarial strategy (for $\honestusers$). 
    Let $\run$ with $(\honestuserstrategies^{\treeobj}, \Astrategy) \results \run$ be a final run starting from an initial configuration $\environmentvec_0$ that is liquid w.r.t. $\treeobj$. 
    Further, let
    $
        \environmentvec_0.t < t_0 - \fundepth(\untree)\Delta 
    $
    for all $\fulltreeobj \in \treeobj$. 
    Then it holds 
    \begin{equation*}
    \scalebox{0.9}{$
    \begin{aligned}
        &\bigl( \forall \DecideCo (\CTLCcontract ,\CTLCsubcontract,  \funsecret_i(\CTLCsubcontract)) \in \actions{\run}, X, Y \in \nodesymbol: \\ 
        & \funsender(\CTLCcontract) = X \, \land \, \funreceiver(\CTLCcontract) = Y \, \land \, \{ X, Y \} \cap \honestusers \neq \emptyset \\
        & \Rightarrow \exists \fullgraphobj \in \graphobj : (X,Y) \in \graphsymbol \\
        & \qquad \land x = \ctlcIDtree{id}{X}{Y} \bigr ) \\
        & \land \bigl ( \forall \fullgraphobj \in \graphobj ~\forall (X,Y) \in \graphsymbol \, 
        \exists \dot{c}^{x'}, \dot{sc}^{x'}, j : \\
        &  \ctlcID' = \ctlcIDtree{id}{X}{Y} \hspace{-1pt} \land \hspace{-1pt} \DecideCo (\dot{c}^{x'} ,\dot{sc}^{x'}, \funsecret_j(\dot{sc}^{x'})) \hspace{-1pt} \in \hspace{-1pt} \actions{\run} \bigr )
    \end{aligned}
    $}
    \end{equation*}
\end{theorem}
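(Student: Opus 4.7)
The plan is to derive this end-to-end statement by chaining two earlier results: Protocol Correctness (\cref{th:protocolcorrectness}) and Correctness of Tree Unfolding (\cref{theorem:correctness-graph-tree}). The hypotheses are arranged so that both results apply directly: the graph preconditions ($\graphsymbol$ is in-semiconnected w.r.t. $A$ with $\nodesymbol \subseteq \honestusers$) ensure $\untree = \fununfold(\graphsymbol, A)$ is well-defined and that \emph{every} user occurring in $\untree$ is honest, while the environment and timing preconditions are exactly those demanded by \cref{th:protocolcorrectness}.

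First, I would invoke \cref{th:protocolcorrectness} on $\treeobj$, $\honestuserstrategies^{\treeobj}$, $\Astrategy$, and $\run$ to obtain, for each $\treespec = \fulltreeobj \in \treeobj$, an outcome $\outcome^*_{\treeid} \in \bigcap_{\honestuser \in \honestusers} \outcomeset{\untree}$ whose edges correspond exactly, via $h$ and $h_{sec}$, to the $\DecideCo$ actions in $\actions{\run}$ that involve an honest user. Since $\funusers(\treeobj) \subseteq \nodesymbol \subseteq \honestusers$ for every $\treespec \in \treeobj$, in particular $\outcome^*_{\treeid} \in \bigcap_{B_i \in \nodesymbol} \outcomesetBi{\untree}$, so \cref{theorem:correctness-graph-tree} applies and yields
\begin{align*}
    (X,Y) \in \arcsymbol \Longleftrightarrow \exists \walk :~ (X,Y)_{\walk} \in \outcome^*_{\treeid}.
\end{align*}

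For the first conjunct of the conclusion, fix an action $\DecideCo(\CTLCcontract, \CTLCsubcontract, \funsecret_i(\CTLCsubcontract)) \in \actions{\run}$ with $\funsender(\CTLCcontract) = X$, $\funreceiver(\CTLCcontract) = Y$ and $\{X,Y\} \cap \honestusers \neq \emptyset$. By \cref{th:protocolcorrectness} there exist $\treeid$ and $\e \in \outcome^*_{\treeid}$ with $\CTLCsubcontract = h(\e, id)$; by construction of $h$ (\cref{def:hMapping}) we have $\e = (X,Y)_{\walk}$ for some $\walk$ and the associated contract identifier is exactly $\ctlcIDtree{id}{X}{Y}$. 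Applying the $(\Rightarrow)$ direction of \cref{theorem:correctness-graph-tree} above to $\e$ gives $(X,Y) \in \arcsymbol$ in the corresponding graph $\fullgraphobj \in \graphobj$, which is the required conclusion. For the second conjunct, fix any $\fullgraphobj \in \graphobj$ and any $(X,Y) \in \arcsymbol$. The $(\Leftarrow)$ direction of \cref{theorem:correctness-graph-tree} supplies $\walk$ with $\e := (X,Y)_{\walk} \in \outcome^*_{\treeid}$; the ``$\forall \e \in \outcome^*_{\treeid}$'' clause of \cref{th:protocolcorrectness} then yields a $\DecideCo$ action on $h(\e, id)$ in $\actions{\run}$, whose contract identifier is $\ctlcIDtree{id}{X}{Y}$ by construction of $h$.

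The only subtlety worth spelling out is the treatment of duplicate edges: different edges $\e, \e' \in \untree$ with the same sender/receiver pair are collapsed by $h$ to a single CTLC with identifier $\ctlcIDtree{id}{X}{Y}$, so the identifier in the conclusion is well-defined, and \cref{theorem:correctness-graph-tree} together with the uniqueness argument in \cref{remark:uniqueness} ensures that no spurious extra $\DecideCo$ on the same contract appears. No new reasoning about runs, strategies, or invariants is needed—this theorem is a clean composition of the two earlier results, and I do not anticipate a genuine obstacle beyond bookkeeping the identifier correspondence.
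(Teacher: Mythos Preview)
Your proposal is correct and follows essentially the same route as the paper's proof: invoke \cref{th:protocolcorrectness} to obtain $\outcome^*_{\treeid} \in \bigcap_{\honestuser \in \honestusers}\outcomeset{\untree}$ together with the two-way correspondence between $\DecideCo$ actions and edges of $\outcome^*_{\treeid}$, then apply \cref{theorem:correctness-graph-tree} to translate between edges of $\outcome^*_{\treeid}$ and arcs of $\graphsymbol$. One cosmetic slip: you have the two directions of the biconditional swapped---deriving $(X,Y)\in\arcsymbol$ from $\e\in\outcome^*_{\treeid}$ is the $(\Leftarrow)$ direction as you wrote it, and producing $\e$ from $(X,Y)\in\arcsymbol$ is $(\Rightarrow)$; the appeal to \cref{remark:uniqueness} is not needed here.
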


\begin{proof}
From~\cref{th:protocolcorrectness}, we know that for all $\fulltreeobj \in \treeobj$ there exists some $\outcome^{*}_{\treeid} \in \bigcap_{\honestuser \in \honestusers} \outcomeset{\untree}$
such that 
\begin{align} \label{helper:final-prot-correct-1}
    \forall \e \in \outcome^{*}_{\treeid} \hspace{-2pt} : \hspace{-2pt} \exists \CTLCcontract: \DecideCo(\CTLCcontract, h(\e, id), h_{sec}(\e, id)) \in \actions{\run}
\end{align}
and 
\begin{align} \label{helper:final-prot-correct-2}
    &\forall \DecideCo (\CTLCcontract ,\CTLCsubcontract, \funsecret_i(\CTLCsubcontract)) \in \actions{\run}: \nonumber \\ 
    &\funusers(\CTLCcontract) \cap \honestusers \neq \emptyset \nonumber \\
    & \Rightarrow \exists \e \in \outcome^{*}_{\treeid}: \CTLCsubcontract = h(\e, id) \, \land \, \funsecret_i(\CTLCsubcontract) =  h_{sec}(\e, id)
\end{align}

To show the first conjunction of the statement assume that $\DecideCo (\CTLCcontract ,\CTLCsubcontract,  \funsecret_i(\CTLCsubcontract)) \in \actions{\run}$,  $X, Y \in \nodesymbol$, $\funsender(\CTLCcontract) = X$, $\funreceiver(\CTLCcontract) = Y$, and $\{ X, Y \} \cap \honestusers \neq \emptyset$. 
Using~\cref{helper:final-prot-correct-2}, we can conclude that there exists some $\e \in \outcome^{*}_{\treeid}$ such that 
$\CTLCsubcontract = h(\e, id)$ and $\funsecret_i(\CTLCsubcontract) =  h_{sec}(\e, id)$.
From \Cref{theorem:correctness-graph-tree-appendix}, we immediately get that $(X,Y) \in \graphsymbol$ and by definition of $h$ that $\ctlcID' = \ctlcIDtree{id}{X}{Y}$. 

To show the second conjunction, assume that $(X,Y) \in \graphsymbol$ for some $\fullgraphobj \in \graphobj$. 
Then from~\Cref{theorem:correctness-graph-tree-appendix}, we have that there must be some $\walk$ such that $\e = (X, Y)_{\walk} \in \outcome^{*}_{\treeid}$. 
Using~\Cref{helper:final-prot-correct-1}, we can conclude that there exists some $\CTLCcontract$ with \linebreak $\DecideCo(\CTLCcontract, h(\e, id), h_{sec}(\e, id)) \in \actions{\run}$
This closes the case because we know that by construction $h(\e, id) = \CTLCsubcontract$, $h_{sec}(\e, id) = \funsecret_j(\CTLCsubcontract)$ and 
$\ctlcID = \ctlcIDtree{id}{X}{Y}$ for some $\CTLCsubcontract$ and $j \in \mathbb{N}$. 
\end{proof}

\begin{remark}
The proposed protocol can easily be extended to allow for multiple arcs in the same direction between two users. This means we would have graphs like:
    \begin{equation*}
        \xymatrix@C=3pc{
        A \ar@/^0.5pc/[r] \ar@/_0.5pc/[r] & B }
    \end{equation*}
In the unfolding process, this implies multiple, identical edges on the same level. 
The only differentiating factor is then the underlying fund. Since their fund is different, they belong to two different $\CTLC{}$s. In the presented constructions and proofs, nothing would change systematically besides adding another index to differentiate the objects. 
\end{remark}

\else 
\fi 

\end{appendices}

\end{document}

\typeout{get arXiv to do 4 passes: Label(s) may have changed. Rerun}